\begin{document}
\title[Extensional and Non-extensional Functions as Processes]{Extensional and Non-extensional \texorpdfstring{\\}{} Functions as Processes}

\author[K.~Sakayori]{Ken Sakayori\lmcsorcid{0000-0003-3238-9279}}[a]
\author[D.~Sangiorgi]{Davide Sangiorgi\lmcsorcid{0000-0001-5823-3235}}[b,c]

\address{The University of Tokyo, Japan}

\address{Universit\`a di Bologna, Italy}

\address{Inria, France}	%

\begin{abstract}
Following Milner's seminal paper,
the representation of functions as
processes  has received considerable attention.
For  pure $\lambda$-calculus, the process representations yield (at
best)
 \emph{non-extensional}
 $\lambda$-theories (i.e., $\beta$ rule holds, whereas $\eta$
 does not).

In the paper, we study how  to obtain  \emph{extensional}
representations, and how to move  
 between 
extensional and
 non-extensional representations. 
Using Internal $\pi$, \piI\ (a subset of the $\pi$-calculus in which
all outputs are bound),
we develop a refinement of Milner's original encoding of functions as
processes that is \emph{parametric} on certain abstract components called
\emph{wires}. These are, intuitively, processes whose task is to
connect two end-point channels.
We show that when  a few algebraic properties
of wires hold,
 the encoding yields a
$\lambda$-theory.
Exploiting the symmetries and dualities of \piI, we
isolate three main classes of wires.  The first two have a sequential
behaviour and are dual of each
other; the third has a parallel behaviour and is the dual of
itself.
We show the  adoption of the parallel  wires yields an
extensional $\lambda$-theory; in fact, it yields an equality that
coincides with that of
Böhm trees with infinite $\eta$.
In contrast,  the other two classes of wires yield
non-extensional $\lambda$-theories whose
 equalities 
are those  of the  Lévy-Longo   and Böhm trees.

\end{abstract}

\maketitle

\section{Introduction}
\label{s:intro}

 Milner's  work~\cite{Milner90,Milner92}  on the  encoding of
the pure \( \lambda \)-calculus  into the \( \pi
\)-calculus is  generally considered a landmark paper  in the area of
semantics and programming languages. 
The encoding of the $\lambda$-calculus is a significant test of
expressiveness for the $\pi$-calculus. 
The encoding  also gives  an interactive semantics to the \( \lambda
\)-calculus, which allows one to analyse it using
the instruments available in the \( \pi \)-calculus.
After Milner's
seminal work, a number of  encoding variants have been
put forward (e.g.~\cite{SangiorgiWalker01} and references therein)
 by modifying the target language (often to a
subcalculus of the \( \pi \)-calculus) or the encoding itself.  The
correctness of these encodings is usually supported by the operational
correspondence against a certain evaluation strategy of the \( \lambda
\)-calculus and by the validity of the \( \beta \)-rule, $(\lambda x.
M) N = M \sub N x$. (In this paper, by validity of a
$\lambda$-calculus rule with respect to a certain process encoding
$\enco{\,\cdot\, }$, we mean that $\enco M \wb \enco N$ for all
instances $M =N$ of (the congruence closure of) the rule, where $\wb$
is a basic behavioural equivalence for the pure processes, such as
  ordinary bisimilarity.)

The equality on \( \lambda \)-terms induced by the encoding has also been investigated; in
this equality two  $\lambda$-terms $M$ and $N$ are equal when their images are
behaviourally equivalent processes.  
For   Milner's original (call-by-name) encoding, such an equality  coincides with
the Lévy-Longo tree (\qLT{}) equality~\cite{Sangiorgi93,Sangiorgi00} %
(the result is by large independent of the behavioural
 equivalence adopted for the processes~\cite{SangiorgiXu18}).
It has also been shown how to recover 
the Böhm tree (\qBT{}) equality~\cite{SangiorgiWalker01}, by
 modifying Milner's encoding~--- allowing reductions underneath a  $\lambda$-abstraction~---
 and selecting divergence-sensitive behavioural equivalences on processes such as 
 must-testing.

Tree structures play a pivotal role in the $\lambda$-calculus.
For instance,   trees allow one to  unveil the computational
content hidden in a $\lambda$-term, with respect to some relevant
minimal information. 
In \qBTs{} the  information is the head normal forms, whereas in
\qLTs{} it is
the weak head normal forms.
\qBT{} and \qLT{} equalities coincide with the local structures of well-known models  of the
$\lambda$-calculus, such as Plotkin and Scott's $P_\omega$~\cite{Plo72,Scott76}, and
the \emph{free lazy Plotkin-Scott-Engeler  models}~\cite{Levy76,Eng81,Lon83}.

In \qBT{}s and \qLT{}s, the  computational content of a $\lambda$-term is unveiled using
 the $\beta$-rule alone. Such
structures are sometimes called \emph{non-extensional}, as opposed to
the \emph{extensional} structures, in which the $\beta$-rule   is coupled
with the  $\eta$-rule, 
$M = \lambda x. \app M x$  (for $x$ not free in $M$).
In extensional theories two functions are equated
if, whenever applied to the same argument,  they yield equal results.
A well-known
extensional tree-structure are \qBTs{} with infinite $\eta$, shortly
 \qBTinfs{}.
The equality of \qBTinf{}s, also known as the equational theory \( \mathcal{H}^*\), coincides with
that  of 
  Scott's $\Dinf$ model~\cite{Scott76}, historically the first model
of the untyped $\lambda$-calculus.  
A seminal result by Wadsworth~\cite{Wadsworth76} shows that the \qBTinfs{} are intimately related to the
head normal forms, as 
the \qBTinf{} equality coincides with contextual equivalence in which
the  head normal forms are the
observables.

In  representations of functions as processes, extensionality and
the $\eta$-rule, even in their most
basic 
form,  have always appeared out of reach. For instance, in Milner's
 encoding, 
 $x$ and
$\lambda y . \app x y$ have quite different behaviours: the former process
is a single output particle, whereas the latter has an
infinite behaviour and, moreover, the initial action is an input.

The general goal of this paper is to study extensionality in the
representation of functions as processes. In particular, we wish to
understand if and how
one can derive extensional representations, and the difference between
extensional and non-extensional representations from a process perspective.

We outline the main technical contributions.
We develop a refinement of Milner's original encoding of functions, using
Internal $\pi$ (\piI), a subcalculus of the $\pi$-calculus in which
only bound names may be exported. 
The encoding makes use of  certain abstract components called
\emph{wires}. These are, intuitively, processes whose task is to
connect two end-point channels; and when one of the two end-points is
restricted, the wires  behave as substitutions.
In the encoding, wires are called `abstract' because their definitions are not made
explicit. 
We show that assuming
a few basic algebraic properties
of wires
(having to do with transitivity of wires and substitution) is
sufficient to obtain a $\lambda$-theory, i.e.~the validity of the $\beta$-rule.

We then delve into the  impact of the concrete definition of the wires, notably on the
equivalence on $\lambda$-terms induced by the encoding. 
In the $\pi$-calculus literature, the
most common form of wire between two channels $a$ and $b$ is written
$!a(u). \out b u$ (or $a(u). \out b u$, if only needed once), and
sometimes called  a \emph{forwarder}~\cite{HoYo95,Mer00thesis}.
 In \piI, free outputs
are forbidden and such a wire
becomes a  recursively-defined process. 
We call this kind of wires \emph{\IOwires{}}, because of their
`input before output' behaviour. 
Exploiting the properties of \piI, e.g., its symmetries and dualities, we identify two
other main kinds of wires: the \emph{\OIwires{}},  with an `output before input' behaviour
and 
which are  thus the 
 dual of the  \IOwires{};    and the \emph{\Pwires{}}, or \emph{parallel wires},
 where input and output can fire concurrently (hence such wires are behaviourally the same as their dual).

We show that  moving among these three kinds of wire 
corresponds to moving among the three above-mentioned tree structures of the
 $\lambda$-calculus, namely \qBT{}s, \qLT{}s, \qBTinf{}s.
Precisely, we obtain \qBT{}s
when adopting the ordinary \IOwires{};
\qLT{}s when adopting the \OIwires{};
and  \qBTinf{}s when adopting the \Pwires{}.
This also implies that  \Pwires{} allow us  to validate
the $\eta$-rule (in fact both $\eta$ and   infinite $\eta$).
The results are summarised in Table~\ref{table:instances-of-abs-enc}, where $\QencoX$ is the concrete encoding in which the \texttt{X} wires are used.
\par
\begin{table}[tb]
  \centering
  \caption{Instances of the abstract encoding}
  \begin{tabular}{c c c}
    \toprule
    Encoding & Parameter (wires)   & Characterises  \\
    \midrule
    \( \QencoIO \) &\IOwires&  \( \qBT \)   \\
    \hline
    \( \QencoP \) & \Pwires  &\( \qBTinf \)   \\
    \hline
    \( \QencoOI \) & \OIwires & \( \qLT \)   \\
    \bottomrule
  \end{tabular}
  \label{table:instances-of-abs-enc}
\end{table}
We are not aware of results in
 the literature that produce an \emph{extensional} $\lambda$-theory from a
 processes model, let alone that derive the 
\qBTinf{} equality.
We should also stress that the choice of the  wire
    is the  \emph{only} modification needed  for switching among the three tree structures: 
 the encoding of the $\lambda$-calculus is otherwise
 the same, nor does it change the underlying calculus and its behavioural equivalence
 (namely, \piI\  and bisimilarity).  

There are various reasons for using \piI{} in our study. The first
and most important
 reason has to do with the symmetries
 and dualities of \piI, as hinted above.
 The second reason is proof techniques: in the paper  we  use
 a wealth of proof techniques, ranging from algebraic
laws to  forms of `up-to bisimulation' and  to  unique solutions of equations; not all of
them are available in the ordinary $\pi$-calculus.
The third reason has to do with $\eta$-rule. In  studies of the expressiveness of \piI{} in the literature~\cite{Boreale98}
the encoding  of the free-output construct
 into \piI{} %
resembles an (infinite) \( \eta \)-expansion.
The essence of the encoding is the following transformation (which needs to be recursively applied to eliminate all free outputs):
\begin{equation}
  \label{eq:intro:fout-out-capability}
  \out a p \mapsto \res q \left( \out a q \mid \iproc q {\seq y} {\out p {\seq y}} \right).
\end{equation}
A free output of \( p \) is replaced by a bound output, that is, an
output of a freshly created name \( q \) (for simplicity, we assume
  that $p$ is meant to be used only once by the recipient).
The transformation requires 
 \emph{localised} calculi~\cite{MerroSangiorgi04}, in which
the recipient of a name 
may only  use it in output, 
and
 resembles an $\eta$-expansion of a variable of the
$\lambda$-calculus in that, intuitively,  direct access to the name  
 \( p \) is replaced by  access
   to the function \( \lambda \seq y . \out p {\seq y} \).

A possible 
 connection between \piI{} and \( \eta \)-expansion
may also be found in papers such as~\cite{CairesPT16}, where
 \( \eta \)-expanded proofs (proofs in which the identity rule
 is only applied to  atomic formulas) are related to 
(session-typed) processes  with   bound outputs only.
Yet, the technical link with our works appears weak because the wires
that we  use to  achieve extensionality (the \Pwires{} of
Table~\ref{table:instances-of-abs-enc}) are behaviourally quite different from  the process
structures mentioned above.

We derive the encoding into \piI\
  in two steps. The first step consists,
intuitively, in
  transplanting Milner's encoding into \piI, by 
 replacing  free outputs  with  bound outputs plus 
wires, following the idea in  (\ref{eq:intro:fout-out-capability})
above.  However,
 (\ref{eq:intro:fout-out-capability}) is only valid in  
localised calculi, whereas 
  Milner's encoding also requires
 the \emph{input}  capability of names to be transmitted.
Therefore
we have to modify the wire in
(\ref{eq:intro:fout-out-capability}), essentially inverting the two
names $p$ and $q$. The correctness of the resulting
transformation relies on properties about the usage  of names that are specific to the
representation of functions.
The second step adopted
to derive the encoding  consists of allowing reductions underneath a $ \lambda $
abstraction; that is, implementing a \emph{strong} reduction strategy.
This transformation is necessary in order to mimic the computation
required to obtain head normal forms. 

Encodings of strong
reduction strategies have appeared in the literature; they rely on the possibility of
encoding non-blocking prefixes (sometimes called \emph{delayed} in the literature)~\cite{Abramsky94,BellinScott94,Fu97,PaVi97b,Mer00thesis,MerroSangiorgi04,SangiorgiWalker01}, i.e.,
prefixes $\mu\!:\! P$ in which actions from $P$ may fire before $\mu$, as long as $\mu$
does not bind names of the action.
The encodings of non-blocking prefixes in the literature
require the names bound in $\mu$ to be localised.
Here again, the difficulty was to adapt the schema to
non-localised names. 
Similar issues arise within  wires, as their definition also requires
certain prefixes to be non-blocking.

\noindent \emph{Structure of the paper}
Section~\ref{sec:background} recalls background material on $\lambda$-calculus and
\piI{}.
Section~\ref{sec:EpiI} introduces wires and permeable prefixes.
In Section~\ref{sec:abs-enc}, we present
the abstract encoding, using the abstract wires, and the assumptions we
make on wires; we then verify that  such assumptions are sufficient to
obtain a $\lambda$-theory.
Section~\ref{sec:oenc} defines an optimised abstract encoding,
which will be useful for later proofs. 
In Section~\ref{sec:wires}, we introduce the three classes of concrete wires,
and show that they satisfy the required assumptions for wires. 
In Section~\ref{sec:BT-LT}, we pick the \IOwires and \OIwires, and  prove full abstraction for
  \qLT{}s and \qBT{}s.
In Section~\ref{sec:Dinf}, we do the same for the \Pwires{}
and   \qBTinf{}s.
Section~\ref{sec:related-work} discusses further related work and possible  future developments.
For readability, some proofs are only given in the Appendices.

\section{Background}
\label{sec:background}

A tilde  represents  a tuple.
The $i$-th element of a tuple $\til P$ is  referred to as $P_i$. 
All notations are extended to tuples componentwise.
\subsection{The \texorpdfstring{\( \lambda \)}{lambda}-calculus}
\label{sec:lambda}

We let $x$ and $y$ range over the set of $\lambda$-calculus  variables.
The set   $\Lao$  of $\lambda$-terms is  
 defined by the  grammar  
\begin{center}
$ M \Coloneqq \;   x    \midd   \lambda   x. M  \midd
  \app{M_1} {M_2} \, .$
\end{center}
Free variables, closed terms, substitution, $\alpha$-conversion etc.\ are  defined as usual~\cite{Barendregt84}; the set of free variables of $M$  is  $\fv M$.
Here and in the rest of the paper (including when reasoning about
 processes), we adopt the usual
`Barendregt convention'.
This will allow us to assume freshness
of bound variables and names  whenever needed.
 We group brackets on the left; therefore $M N L $ is $(M N ) L$.
 We  abbreviate $\lambda  x_1. \cdots. \lambda  x_n.M $   as 
 $\lambda  x_1 \cdots x_n.M $, or $\lambda  \tilde{x}. M$.

A number  of reduction relations are mentioned in this paper.
The (standard) \emph{\( \beta \)-reduction} relation \( M \red N \) is the relation on \( \lambda \)-terms induced by the following rules:
{\small \[
\begin{array}{ll}
[\beta]\; \infer%
{ }{\app {(\lambda x.M)} N \red M \sub N x}
&\qquad [ \mu ] \; \infer{N \red N'}{\app M N \red \app M {N'}} \\
{[ \nu ]}\;  \infer{M \red M'}{\app M N \red \app {M'} N}
&\quad [ \xi ] \; \infer{M \red M'}{\lambda x. M \red \lambda x. M'}
\end{array}
\]
}
The (weak) \emph{call-by-name reduction} relation uses only
the \( \beta \) and \( \nu \) rules,  
whereas  \emph{strong call-by-name}, written \( \snred \), also has \( \xi \);
the \emph{head reduction}, written \( \hred \), is a deterministic  variant of \( \snred \) in which
the redex contracted
 is the head one, i.e., \( \app {(\lambda y. M_0)} { M_1} \) of \( \lambda \seq x . \app {(\lambda y. M_0)} {M_1 \cdots M_n} \).
\emph{Head normal forms} are of the form  \( \lambda \seq x . \app y \seq M
\).
As usual, we use a double arrow to indicate the reflexive and
transitive closure of a reduction relation, as in \( \wred \)
and $\Hred$.
A term \( M \) \emph{has a head normal form} \( N \) if \( M \Hred N \) and \( N \) is the (unique) head normal form.
Terms that do not have a head normal form are called \emph{unsolvable}.
An unsolvable \( M \) has an \emph{order of unsolvability \( n
  \)}, if \( n \) is the largest natural number such that \( M \Hred
\lambda x_1\ldots x_n. M' \), for $n \geq 0$, and some
$x_1,\ldots,x_n,M'$.
If there is no such largest number, then \( M \) is \emph{of order \( \omega \)}.
For instance,  if
  \( \Delta\) is  \(\lambda x. \app x  x \), and  \( \Omega\) is \( \app
  \Delta \Delta \), and \( \ogre \) is \( \app {(\lambda x y. \app x x)}
  {(\lambda x y. \app x x)} \), 
then we have $  \Omega
 \hred \Omega$,  therefore \( \Omega \) is an unsolvable of order \( 0 \), and 
 \( \lambda x. \Omega \)  is of order \( 1 \); 
whereas 
\( \ogre   \hred \lambda  y.   \ogre \), indeed for any $n$ we have 
\( \ogre
 \Hred
\lambda y_1\ldots y_n. \ogre\), therefore  
\( \ogre \) is an unsolvable of order $\omega$.

We recall the (informal) definitions of  Lévy-Longo trees and Böhm trees, and of
Böhm trees up-to infinite \texorpdfstring{\( \eta \)}{eta}-expansion.
The \emph{Lévy-Longo tree} of \( M \) is the labelled tree, \( \LT M \), defined coinductively as follows:
\begin{enumerate}
\item
 \( \LT M = \top \) if \( M \) is an unsolvable of order \( \omega \);
\item
 \( \LT M = \lambda x_1 \ldots x_n. \bot \) if \( M \) is an unsolvable of order \( n < \omega \);
\item
\( \LT M  \) is the tree with $\lambda \seq x.y$ as the root and \( \LT {M_1} \ldots \LT {M_n} \) as the children, if \( M \)  has head normal form \( \lambda \seq x.\app y {M_1 \cdots  M_n} \) with \( n \geq 0 \).
\end{enumerate}
The definition of Böhm trees (\qBT{}s) is obtained from that of
\qLT{}s using \qBT{} in place of \qLT{}, and
demanding that \( \BT M = \bot \) whenever \( M \) is unsolvable (in
place of clauses  (1) and (2)). 

An \( \eta \)-expansion of a \qBT{}, whose root is \( \lambda \seq{x}. y \) and children are \( \BT{M_1}, \ldots \BT{M_n}\), is given by a tree whose root is \( \lambda \seq {x} z. y \) and children are \( \BT{M_1}, \ldots \BT{M_n}, z \).
Intuitively, an infinite \( \eta \)-expansion of a \qBT{} is obtained by allowing this
expansion at each step of the clause~(3). 
Thus, an  (informal) coinductive definition of the \emph{Böhm trees up-to infinite \texorpdfstring{\( \eta \)}{eta}-expansion} of
$M$,
\( \BTinf M \), is given as follows:
\begin{enumerate}
  \item \( \BTinf M = \bot \) if \( M \) is unsolvable, and
  \item if \( M \Hred \lambda x_1 \ldots x_m. \app y {M_1 \cdots M_n} \) for \( m \ge 0 \) and \( n \ge 0 \), then
        \begin{enumerate}
          \item \( \BTinf M = \BTinf {\lambda x_1 \ldots x_{m - 1}. \app y {M_1 \cdots
                M_{n - 1}}} \), if \( x_m \notin \fv{\app y {M_1 \cdots M_{n - 1}}}\) and
            \( \BTinf {M_n} = \BTinf {x_m} \);
and otherwise
          \item \( \BTinf M = \)
                \begin{tikzpicture}[level distance=10mm,sibling distance=4mm,baseline={([yshift={-\ht\strutbox}]current bounding box.north)}]
                  \node {$\lambda x_1 \ldots x_m.y$} [grow=down]
                  child {node{\( \BTinf {M_1}\)}}
                  child[missing]   {node{2}}
                  child[missing]   {node{3}}
                  child { node {$\cdots$} edge from parent[draw=none]  node {}}
                  child[missing] {node{5}}
                  child[missing] { node {6} }
                  child {node{\( \BTinf {M_n} \)}};
                \end{tikzpicture}
        \end{enumerate}
\end{enumerate}

In the
equality induced by the above  trees, two terms  are related if their trees
are the same (as usual modulo $\alpha$-conversion). These equalities may be defined
coinductively as forms of bisimilarity on $\lambda$-terms (\cite{Sangiorgi93}), in the expected
manner.
We only present   \emph{\qBTinf{}-bisimilarity}, because it is the most delicate one and also because it will be used in a few important proofs.
In such a bisimilarity,   
first introduced by Lassen~\cite{Lassen99},  a (finite) \( \eta \)-expansion is allowed at each step of the bisimulation game.

\begin{defiC}[\cite{Lassen99}]
\label{def:lassen-bisim}
 A relation \( \relR \) on \( \lambda \)-terms is a
 \emph{\qBTinf{}-bisimulation} if, whenever \( M \relR N \), either
 one of the following holds: 
 \begin{enumerate}
   \item \( M \) and \( N \) are unsolvable
   \item \( M \Hred \lambda x_1 \ldots x_{l+m}. \app y {M_1 \cdots M_{n + m} } \) and \( N \Hred \lambda x_1 \ldots x_l. \app y {N_1 \cdots N_n} \),
     where the variables \( x_{l + 1}, \ldots, x_{l + m }  \) are not free in \( \app y {N_1 \cdots N_n} \), and
 \( M_i \relR N_i\) for \( 1 \le i \le n \),  and also \( M_{n + j} \relR x_{l + j } \) for \( 1 \le j \le m\)
    \item the symmetric case, where \( N \) reduces to a head normal
    form with more leading \( \lambda \)s. 
 \end{enumerate}
 The largest \qBTinf{}-bisimulation is called \emph{\qBTinf-bisimilarity}.
 We also write \( \BTinf M = \BTinf N \) when \( M \) and \( N \) are \(\qBTinf \)-bisimilar.
\end{defiC}
Two terms are indeed  \qBTinf{}-bisimilar precisely when
they have the same Böhm tree up-to infinite \texorpdfstring{\( \eta \)}{eta}-expansion.

\begin{exa}
\label{ex:BT-LT}
We provide some examples of Böhm trees and Lévy-Longo trees, using the terms 
  \( \Delta\),  \( \Omega\), and \( \ogre\) introduced earlier. 
\[
  \begin{array}{llll}
  \BT \Delta =
  \begin{tikzpicture}[baseline={([yshift=-.8ex]current bounding box.center)}, level distance=7mm]
  \node {\( \lambda x . x \)}
  child {node{\( x \)}};
  \end{tikzpicture}
  &
  \BT \Omega = \bot
  &
  \BT {\lambda x. \Omega} = \bot
  &
  \BT \ogre = \bot
  \\
  \LT \Delta =
  \begin{tikzpicture}[baseline={([yshift=-.8ex]current bounding box.center)}, level distance=7mm]
  \node {\( \lambda x . x \)}
  child {node{\( x \)}};
  \end{tikzpicture}
  &
  \LT \Omega = \bot
  &
  \LT {\lambda x. \Omega} = \lambda x. \bot
  &
  \LT \ogre = \top
  \end{array}
  \]

  For all the terms \( M \) used in this example, we have \( \BT M = \BTinf M \).
\end{exa}

\begin{exa}
\label{ex:inf-eta}
  Let \( \Einf \) be a term such that \( \app \Einf z \Hred \lambda y. \app z (\Einf y)
  \); for instance, we can take $\Einf$ to be  \( \app Y {(\lambda f x y. \app x {(\app f y)})}\) using the fixpoint combinator \( Y = \lambda f. \app {(\lambda x . \app f {(\app x x)})}  {(\lambda x . \app f {(\app x x)})} \).
  Intuitively, the term \( \app \Einf z \) can be considered as the `limit of the sequence of \( \eta \)-expansions'
  \[
    z \red_{\eta} \lambda z_1. \app z {z_1} \red_{\eta} \lambda z_1. \app z (\lambda z_2. \app {z_1} {z_2}) \red_{\eta} \cdots
  \]
  In other words, \( \app \Einf z \) is a term that represents an `infinite \( \eta \)-expansion' of \( z \).
  The terms \( z \) and \( \app \Einf z \)  have different Böhm trees:
  \[
  \BT {\app \Einf z} =
  \begin{tikzpicture}[baseline={([yshift=-.8ex]current bounding box.center)}, level distance=8.5mm]
  \node {\( \lambda z_1 . z \)}
  child {node{\( \lambda z_2. z_1 \)}
    child {node{\( \lambda z_3 . z_2 \)}
      child {node {\( \vdots \)}}
    }
  };
  \end{tikzpicture}
  \qquad
  \BT z = z.
  \]

  However, \( \BTinf{\app \Einf z} = \BTinf{z} \) as the
  two terms can be equated using an infinite form of $\eta$-expansion.
  To show that they  are  \qBTinf{}-bisimilar, 
let \( \relR \) be the set of pairs of the form \( (x, \app \Einf x) \) and \( \BEbisim \) be  \qBTinf{}-bisimilarity.
 Then  \( {\relR} \cup {\BEbisim} \) is a \qBTinf{}-bisimulation.
To see this, assume that \( M ({\relR} \cup {\BEbisim}) N \).
  We only consider the case where \( M \) is a variable, say \( x \), and \( N \) is \( \app \Einf x \).
  We have \( x \Hred x \) and \( \Einf x \Hred \lambda y. \app x (\app \Einf y) \).
  This  closes the case because \( x \BEbisim x \) and \( \app \Einf y \relR y \).

\end{exa}

\subsection{Internal \texorpdfstring{\( \pi \)}{pi}-calculus}
\label{sec:piI}

In all  encodings we consider, the encoding of a  $\lambda$-term
 is parametric on  a name, 
i.e., it is a %
 function from names  to $\pi$-calculus  
 processes. We also need parametric processes (over one or several names) for writing recursive process definitions
 and equations. 
We call such  parametric processes {\em abstractions}.
The actual instantiation of the parameters of an abstraction $F$ is done
via the {\em application} construct $\appI F \tila$.
We use $P,Q$ for processes, $F$ for abstractions.
Processes and abstractions  form the set  of {\em $\pi$-agents} (or simply \emph{agents}), ranged
over by $A$. 
Small letters 
 $a,b, \ldots, x,y, \ldots$  
range  over the infinite set of names.
The grammar of the internal \( \pi \)-calculus (\piI{}) is thus:
$$ \begin{array}{ccll}
A & \Coloneqq & P \midd F & \mbox{(agents)}\\[\mypt]
P & \Coloneqq & \nil    \midd    \inp a \tilb . P    \midd    \bout a \tilb . P
   \midd    \res a P
& \mbox{(processes)}  \\[\myptSmall]
& &
   \midd     P_1 |  P_2   \midd  ! \inp a \tilb . P   
\midd \appI F \tila
\\[\mypt]
F & \Coloneqq & \bind \tila P \midd \cnst & \mbox{(abstractions)}
   \end{array}
 $$
The operators used  have the usual meanings.
Here, \( \bout a \tilb \) is a \emph{bound output}, which may be thought of as an abbreviation \( \res {\til b} \bar a \langle \til b \rangle \) using the syntax of the standard \( \pi \)-caclulus.
Having only bound output constructs is a key characteristic of \piI{}, which poses a symmetry between input and output constructs.
In  prefixes $\inp a \tilb$ and $\bout a \tilb$, we call 
 $a$ the {\em subject}.
We  
 often abbreviate 
$\res a \res b P$ as $\resb{
a,b} P$.
Prefixes,  restriction, and  abstraction   are binders and give
rise in the expected way  to the definition of {\em free names},
{\em
bound names}, and {\em names} of an agent, 
respectively indicated with $\fn -$,  $\bn -$, and
$\n -$, as well as  that of  $\alpha$-conversion.  
An agent is \emph{name-closed}  if it does not contain free names. 
In the grammar, \( \cnst \) is  a \emph{constant},  used to write  recursive definitions. Each
constant \( \cnst \) has a defining equation of the form
\( \cnst \defeq  \bind{\tilx} P \), where $\bind{\tilx} P  $ is
name-closed;
$\tilx$
 are the formal parameters of
the constant (replaced by the actual parameters whenever the constant
is used).
Replication could be avoided in the syntax since it can be encoded
 with recursion. However its semantics is simple, and it is 
a useful
 construct for encodings; thus we chose to include it in  the grammar.

For convenience, we set some conventions.
An {application redex} $\appI{((\til x) P)}{\til a}$ can be normalised as $P \sub {\til x}{\til a}$. An agent is
\emph{normalised} if all such application redexes have been
contracted. %
When reasoning
 on behaviours
it is useful to assume that all expressions are normalised, in the above sense.    Thus
in the remainder of the paper \emph{we identify an agent with its normalised expression}.
 As in the $\lambda$-calculus, following the 
usual Barendregt convention
we   identify
processes or actions which only  differ on the choice of the 
bound names.
The symbol $=$
indicates syntactic  identity; in the case of terms or objects of the calculi,
it  will
 mean  `syntactic identity modulo $\alpha$-conversion'.

Since the calculus is polyadic, we assume a \emph{sorting system}~\cite{Milner93} to avoid disagreements  in the arities of the tuples of names carried by a given name and in applications of abstractions.
In Milner's encoding (written in the polyadic $\pi$-calculus)
as well as in all encodings in the paper, there are only two sorts of
names: \emph{location names}, and    \emph{variable names}.
Location names carry a pair of a variable name and
a  location name; variable names carry a single location name. 
Using $p,q,r,...$ for location names, and 
$x,y,z,...$ for variable names, the forms of the possible prefixes 
are:
\[ \inp p{x,q}. P \midd \inp x{p}.P \midd 
\bout p{x,q}. P \midd \bout x{p}.P 
\] 
This sorting will be maintained throughout the paper. Hence
process transformations and algebraic laws will be given with
reference to such a  sorting.

The operational semantics of \piI{} is standard~\cite[Section 7.5]{SangiorgiWalker01}, and reported in Figure~\ref{fig:LTS}.
Transitions are of the form \( \proc \arr \act \proc' \), where the set of \emph{actions} is given by
\[
\act \Coloneqq \bin a {\seq b} \midd \bout a {\seq b} \midd \tau
\]
and bound names of \( \act \) are \emph{fresh}, i.e., they  do not appear free in \( \proc \).
The meaning of  actions is the  usual one: \( \bin a {\seq b} \) and \( \bout a {\seq b}\) are bound input and outputs, respectively, and \( \tau \) is the internal action.
The co-action \( \bar \act \) of \( \act \) used in the rule \LTS{com} is defined by \( \overline {\bin a {\seq b}} \defeq \bout a {\seq b} \) and \( \overline {\bout a {\seq b}} \defeq \bin a {\seq b} \).
There is only one rule for communication and one rule for restriction, which makes the LTS simpler than that of the \( \pi \)-calculus.
This is because we do not need to care about the difference between free and bound outputs as all the outputs in \piI{} are bound.
Since we are using the Barendregt convention, in \LTS{par}, we have an implicit side condition \( \bn \act \cap \fn Q = \emptyset \).
The \( \abs{\til y} Q \equival \abs{\til x} P \) in the premise of rule \LTS{com} merely means that the two agents are equal up-to \( \alpha \)-conversion.
As usual, we write \( \wtrans{\ }\) for the reflexive transitive closure of \( \trans \tau \), \( \wtrans{\act} \) for \( \wtrans{\  } \trans \act \wtrans{\ } \),
and  \( \wtrans {\hat \act}\) is \( \wtrans {\act} \) for \( \act  \neq \tau \) and \( \wtrans{\ } \) otherwise.

\begin{figure*}[!t]
\begin{center}
\begin{tabular}{rlclrrcl}
[\LTS{pre}]&
\infer{ }{\act.P \arr\act {P}}
&[\LTS{par}]
\infer{P \arr\act  P'}
{ P | Q   \arr\act P'| Q }
\\[\mysp]
[\LTS{res}]& \infer{{  P} \arr{\act}{P'} \qquad  x \not \in \n\act}
{\res x P   \arr{ \act} \res x P'}
&
\multicolumn{4}{c}{
    [\LTS{com}] \; \infer{P \arr{ \act } P' \qquad Q \arr{\outC\act} Q' \qquad \act\neq \tau,   \; \tilx = \bn\act}
  {P |  Q \arr{ \tau} \res{  \tilx} (P'|  Q') }
}
 \\[\mysp]
    [\LTS{rep}] &
\infer{ }
 { !\act. P  \arr{ \act}  P' | !\act. P }
&
\multicolumn{4}{c}{
    [\LTS{con}] \;
\infer{P \arr \act P' \qquad \cnst  \defeq \abs{\til y} Q \qquad \abs{\til y} Q \equival \abs{\til x} P}
 {\appI {\rmmm K}{\til x}  \arr\act P'}
}
\end{tabular}
\end{center}
\caption{The  standard LTS  for \piI{}.}
\label{fig:LTS}
\end{figure*}

The reference behavioural equivalence for \piI\ is (weak)
bisimilarity.
It is known that, in \piI{}, bisimilarity coincides with barbed
congruence  (for processes that are image-finite up to weak bisimilarity; all the agents obtained as encodings of \( \lambda \)-terms will be image-finite up to weak bisimilarity).

\begin{defi}[Bisimilarity]
\label{def:bisim}
  A symmetric relation \( \relR \) over processes is a \emph{(weak) bisimulation}, if, whenever \( \proc \relR \proctwo \),
  \begin{itemize}
    \item \( \proc \trans \act \proc' \) implies \( \proctwo \wtrans {\hat \act} \proctwo' \) for some \( \proctwo' \)  such that \( \proc' \relR \proctwo' \).
  \end{itemize}
  Processes \( \proc \) and \( \proctwo \) are \emph{(weakly) bisimilar}, written \( \proc \bisim \proctwo \) if there exists a bisimulation \( \relR \) such that \( \proc \relR \proctwo \).

In a few places we will also use
   \emph{strong bisimilarity}, written \( \sbisim \); this is
 defined analogous to \( \bisim \), but \( \proctwo \) must respond
 with a (strong) transition \( \proctwo \trans \act \proctwo' \).
\end{defi}

We also use the \emph{expansion} preorder, written \( \lexp \), an asymmetric  variant of \( \bisim \)
in which, intuitively, \( \proc \lexp \proctwo \) holds if  $P \wb
Q$ but also $Q $ has at least as many $\tau$-moves as $P$.
\begin{defi}[Expansion]
\label{def:expansion}
A relation \( \relR \) over processes is an \emph{expansion} if \( P \relR Q \) implies:
\begin{enumerate}
\item  Whenever
  $P \trans \act P'$,  there exists  $Q' $  such that \ $Q \Arr\act Q'$ and $P'  \relR  Q'$;
\item whenever $Q \arr\act Q'$, there exists   $P' $  such that \ $P \arcap\mu P'$ and $P'  \RR  Q'$.
\end{enumerate}
Here \( \arcap \act \) is the strong version of \( \wtrans {\hat \act }\), that is, \( \arcap \act \) is \( \arr \act \) if \( \act\neq \tau \) and is \( = \) or \( \trans \tau  \) if \( \act = \tau \).
Recall that \( \wtrans \tau \) means at least one \( \tau \)-action.
Therefore, when \( P \trans \tau P' \), \( Q \) must respond using at least one \( \tau \)-action  whereas, when \( Q \trans \tau Q' \), \( P \) must respond using at most one \( \tau \)-action.
We say that $Q$  \emph{expands} $P$, written
$P \lexp  Q$, if $P  \relR  Q$,  for some expansion \( \relR \).
\end{defi}

Finally, we sometimes use \emph{structural congruence}, when we need to emphasize that two
processes are the same modulo some structural rewriting of their syntax.

\begin{defi}[Structural congruence]
\label{def:scong}
The \emph{structural congruence} \( \scong \) is the smallest congruence relation over processes that includes the \( \alpha \)-equivalence and the following equivalence:
\begin{gather*}
  \proc \mid \nil \scong \proc \quad \proc \mid \proctwo \scong \proctwo \mid \proc \quad (\proc \mid \proctwo) \mid \procthree \scong \proc \mid (\proctwo \mid \procthree) \\
  \res x \res y \proc \scong \res y \res x \proc \quad (x \neq y) \\
  \res x (\proc \mid \proctwo) \scong \res x \proc \mid \proctwo \quad (x \notin \fv \proctwo).
\end{gather*}
\end{defi}

We summarise the inclusion order between the process relations that we use:
\[
  {\scong} \subsetneq {\sbisim} \subsetneq {\lexp} \subsetneq {\bisim}.
\]
In \piI{}, \emph{all the relations above are (pre)congruences}~\cite{Sangiorgi96a}.
All behavioural relations are extended to abstractions by requiring ground
instantiation of the parameters.
For instance,  $(x)P  \wb (x)Q$ if $P \wb Q$.
Appendix~\ref{a:nota} summarises notations used in the paper (term relations,
encodings, etc.).

\subsubsection{Proof techniques}
\label{sec:proof-techniques}
Here we review known proof-techniques for \Ipi{} that we will use in this paper.
These are well-known algebraic laws, notably laws for private replications, up-techniques for bisimilarity, and unique solutions of equations.

\subsubsection*{Algebraic laws}
We will often use a group of laws about
private replication.  That is, laws for processes of the form \( \res a
(\proc \mid \riproc a {\seq b} \proctwo) \), where \( a \)
appears free only in output position of \( \proc \) and \( \proctwo \).
Since the process \( \riproc a {\seq b} \proctwo \) is replicated and
private it can be distributed or pushed inside prefixes. These laws
are known as the replication theorems~\cite[Section 2.2.3]{SangiorgiWalker01}.
\begin{lem}
\label{lem:rep-thm}
  Suppose \( x \) occurs free only in output subject position of \( \proc \), \( \proctwo \) and \( \procthree \).
  Then we have
  \begin{enumerate}
    \item \( \res x (\proctwo \mid \procthree \mid \riproc x p \proc ) \sbisim  \res x (\proctwo \mid \riproc x p \proc) \mid  \res x (\procthree \mid \riproc x p \proc) \);
          \label{it:lem:rep-thm:par}
    \item \( \res x ( \pi.\proctwo \mid  \riproc x p \proc ) \sbisim  \pi. \res x (\proctwo \mid  \riproc x p \proc ) \), if \( \pi \) is a non-replicated prefix, subject of \( \pi \) is not \( x \) and \( \bn \pi \cap \fn {\riproc x p \proc} = \emptyset \);
    \item \( \res x (\riproc y q \proctwo \mid \riproc x p \proc) \sbisim  \riproc y q \res x(\proctwo \mid \riproc x p \proc) \);
    \item \( \res x (\proctwo \mid \riproc x p \proc) \sbisim \proctwo \), if \( x \notin \fn \proctwo \);
          \label{it:lem:rep-thm:gc}
    \item \( \res x (\bout x p . \proctwo \mid \riproc x p \proc) \gexp \res x \res p(\proctwo \mid \proc \mid \riproc x p \proc) \).
          \label{it:lem:rep-thm:comm}
  \end{enumerate}
\end{lem}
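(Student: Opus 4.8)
The plan is to treat these as the standard replication theorems for a \emph{private, stateless server} and to prove each part by exhibiting an explicit (strong) bisimulation or expansion, systematically exploiting the hypothesis that \( x \) occurs only in output subject position. That hypothesis has two structural consequences that drive every case, and I would record them once as a preliminary analysis of the possible transitions of \( \res x(\cdots \mid \riproc x p \proc) \). First, under \( \res x \) the replicated input \( \riproc x p \proc \) can never fire on its own and can never be observed from outside: its only action is the input \( \bin x p \), whose subject \( x \) is restricted, so it is blocked by rule \LTS{res}; hence the sole transitions involving \( x \) are internal communications in which a client emits \( \bout x p \) and the server consumes it, spawning one fresh, independent copy of \( \proc \). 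Second, no two clients ever communicate on \( x \) (none can input on \( x \)), so all \( x \)-traffic is client-to-server.

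I would then dispatch the easy parts (2), (4) and (5) by direct transition analysis. For (5), the only transition of the left-hand side is the communication \( \res x(\bout x p.\proctwo \mid \riproc x p \proc)\trans\tau \res x\res p(\proctwo\mid\proc\mid\riproc x p\proc) \), so the relation pairing the right-hand side with the left-hand side, closed up with \( \sbisim \), is an expansion witnessing \( \gexp \); the single extra \( \tau \) on the left is exactly what distinguishes \( \gexp \) from \( \sbisim \). For (4), with \( x\notin\fn\proctwo \) the server is dead weight: the relation pairing \( \res x(R\mid\riproc x p\proc) \) with \( R \), for every derivative \( R \) of \( \proctwo \) (which keeps \( x\notin\fn R \)), is a strong bisimulation, since every transition comes from \( R \) and carries no \( x \). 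For (2) the guarding prefix \( \pi \) must fire before anything else (the continuation cannot yet emit on \( x \)); after it fires both sides reduce to \( \res x(\proctwo'\mid\riproc x p\proc) \), with the side conditions on \( \pi \) (its subject is not \( x \), and \( \bn\pi\cap\fn{\riproc x p\proc}=\emptyset \)) preventing capture and leaving the server untouched, so the matching is essentially the identity after one step, modulo \( \scong \).

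The substance is in parts (1) and (3), which both express that sharing one private server is behaviourally the same as duplicating it, because the server is stateless and replicated. I would prove them from a single \emph{sharing lemma}: for clients \( R_1,\dots,R_k \) each using \( x \) only in output, \( \res x(R_1\mid\cdots\mid R_k\mid\riproc x p\proc)\sbisim \res x(R_1\mid\riproc x p\proc)\mid\cdots\mid\res x(R_k\mid\riproc x p\proc) \). Part (1) is the two-client instance with \( \proctwo,\procthree \); part (3) follows by viewing the replicated input \( \riproc y q\proctwo \) (with \( y\neq x \), forced by the hypothesis) as a persistent client that spawns a new client on each firing on \( y \), so that after \( n \) firings the shared side carries many clients under one \( \res x \) while the split side has handed each spawned client its own fresh copy of the server. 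I would establish the sharing lemma coinductively using strong bisimulation \emph{up to restriction and parallel composition}, sound here because \( \sbisim \) is a congruence in \piI\ \cite{Sangiorgi96a}: each transition is matched using the preliminary analysis, a client step on a name \( a\neq x \) being matched directly (across the top-level \( \mid \) on the split side, any extruded fresh name handled by \( \scong \)), while a client output \( \bout x p \) is matched by communication with the single server on one side and with that block's copy on the other, the two spawned copies of \( \proc \) being identical and independent.

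I expect the sharing lemma underlying (1) and (3) to be the only real obstacle, and within it the delicate point is closure of the candidate relation under spawning: since \( \proc \) may itself emit on \( x \), a spawned copy is again a client and triggers further copies, so the invariant must be phrased over arbitrary parallel products of clients and spawned server copies, with the up-to-context technique absorbing the \( \res x \)/\( \res p \) reshuffling (via \( \scong \) and congruence of \( \sbisim \)) needed to keep each copy on the split side attached to the correct block. Once the candidate is seen to be closed under these moves, the transition matching is routine given the preliminary analysis.
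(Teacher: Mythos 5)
Your proposal is correct and takes essentially the same route as the paper's: in fact the paper does not prove this lemma at all, but cites it as the well-known replication theorems of~\cite[Section 2.2.3]{SangiorgiWalker01}, where the argument is precisely the one you reconstruct --- a preliminary transition analysis showing that the restricted replicated input can only be triggered by client outputs on \( x \), direct matching for the prefix, garbage-collection and communication laws, and a coinductive sharing argument via (strong) bisimulation up-to restriction and parallel composition for the distribution laws (1) and (3), with the relation closed under spawned copies of \( \proc \) acting as new clients. No gaps.
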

We often call the law~\eqref{it:lem:rep-thm:gc} the `garbage collection law'.

\subsubsection*{Up-to techniques}
Our main up-to technique will be \emph{up-to context and expansion}~\cite{Sangiorgi96b}, which
admits the  use of contexts and of behavioural equivalences such as expansion to achieve the closure of a relation in the bisimulation game.
\begin{defi}[Bisimulation up-to context and \( \lexp \)]
\label{def:up-to-ctx}
A symmetric relation \(  {\relR}  \) on \Ipi{}-processes is a \emph{bisimulation up-to context and up-to \( \lexp \) if \( \proc \relR \proctwo \) and \( \proc \arr \act   \proc'' \)
imply that there are a (possibly multi-hole) context $\qct$ and processes  \( \seq \proc' \) and \( \seq \proctwo' \) such that  \( \proc'' \gexp \ct {\seq \proc'} \), \( \proctwo \Arcap\act  \contr  \ct{\seq \proctwo'} \) and \( \seq \proc'  \relR  \seq \proctwo' \).
Here \( \seq \proc'  \relR  \seq \proctwo' \) means \( \proc'_i \relR \proctwo'_i \) for each component.}
\end{defi}
A special instance of this technique is the \emph{up-to expansion} technique where the common context \( C \) is taken as the empty-context.

\begin{thm}
  If \( \relR \) is a bisimulation up-to context and expansion then \( {\relR} \subseteq {\bisim } \).
\end{thm}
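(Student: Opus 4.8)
The plan is to exhibit an explicit weak bisimulation $\mathcal S$ that contains $\relR$, obtained by closing $\relR$ under contexts and then composing with the two expansion preorders. Let $\widehat{\relR}$ denote the closure of $\relR$ under (possibly multi-hole) contexts, i.e.\ $\ct{\seq\proc}\mathrel{\widehat{\relR}}\ct{\seq\proctwo}$ whenever $\seq\proc\relR\seq\proctwo$ holds componentwise, and set $\mathcal S \defeq {\gexp}\,\widehat{\relR}\,{\lexp}$ (relational composition). Taking the one-hole context and using reflexivity of $\gexp$ and $\lexp$ gives $\relR\subseteq\widehat{\relR}\subseteq\mathcal S$, so it suffices to prove that $\mathcal S$ is a bisimulation; symmetry of $\mathcal S$ follows from symmetry of $\relR$ together with the fact that $\gexp$ and $\lexp$ are converse relations. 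I will use freely two facts available for \piI: expansion is a preorder (in particular transitive), and $\lexp$, hence $\gexp$, is a precongruence. Transitivity yields the key absorption property ${\gexp}\,\mathcal S \subseteq \mathcal S$ and $\mathcal S\,{\lexp}\subseteq\mathcal S$, and precongruence lets me push expansions through contexts.

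Consider $\proc\mathrel{\mathcal S}\proctwo$, say $\proc\gexp\proc_0\mathrel{\widehat{\relR}}\proctwo_0\lexp\proctwo$, and a challenge $\proc\arr\act\proc'$. First I absorb the left expansion: since $\proc_0\lexp\proc$, clause~(2) of Definition~\ref{def:expansion} answers the move of the faster process $\proc$ by a short transition $\proc_0\arcap\act\proc_0'$ with $\proc'\gexp\proc_0'$ (if this short transition is empty, $\proctwo$ stays put and $\proc'\mathrel{\mathcal S}\proctwo$ already). The core of the argument is then a \emph{closure lemma}: if $\proc_0\mathrel{\widehat{\relR}}\proctwo_0$ and $\proc_0\arr\act\proc_0'$, then $\proctwo_0\wtrans{\hat\act}\proctwo_0'$ for some $\proctwo_0'$ with $\proc_0'\mathrel{\mathcal S}\proctwo_0'$. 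Granting this, I transfer the answer across the right expansion: from $\proctwo_0\lexp\proctwo$ and the weak form of clause~(1), $\proctwo\wtrans{\hat\act}\proctwo'$ with $\proctwo_0'\lexp\proctwo'$. Composing, $\proc'\gexp\proc_0'\mathrel{\mathcal S}\proctwo_0'\lexp\proctwo'$, and absorption yields $\proc'\mathrel{\mathcal S}\proctwo'$, closing the game.

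It remains to prove the closure lemma, by case analysis on the last rule deriving $\ct{\seq\proc}\arr\act\proc_0'$, according to the origin of the action. When the redex lies entirely in the context, $\ct{\seq\proc}\arr\act\qct'[\seq\proc]$ is matched by $\ct{\seq\proctwo}\arr\act\qct'[\seq\proctwo]$ for the same residual context $\qct'$, and $\qct'[\seq\proc]\mathrel{\widehat{\relR}}\qct'[\seq\proctwo]$ since the holes are still filled by $\relR$-related agents (duplication of holes by replication is harmless, each copy being independently related). When the action is contributed by a single hole, $\proc_i\arr\act\proc_i'$, the up-to hypothesis on $\proc_i\relR\proctwo_i$ (Definition~\ref{def:up-to-ctx}) yields an inner context and agents with $\proc_i'\gexp\ct{\seq{\proc''}}$, a matching weak transition $\proctwo_i\Arcap\act\proctwo_i''$ with $\proctwo_i''\gexp\ct{\seq{\proctwo''}}$, and $\seq{\proc''}\relR\seq{\proctwo''}$. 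Grafting this inner context into $\qct$ at hole $i$ produces a single multi-hole context with two fillings $\qct_1,\qct_2$, built respectively from $\seq\proc,\seq{\proc''}$ and $\seq\proctwo,\seq{\proctwo''}$, so that $\qct_1\mathrel{\widehat{\relR}}\qct_2$. By precongruence of expansion, $\proc_0'\gexp\qct_1$, while lifting $\proctwo_i\Arcap\act\proctwo_i''$ through $\qct$ gives $\proctwo_0\wtrans{\hat\act}\proctwo_0'$ with $\qct_2\lexp\proctwo_0'$. Hence $\proc_0'\gexp\widehat{\relR}\lexp\proctwo_0'$, that is $\proc_0'\mathrel{\mathcal S}\proctwo_0'$.

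The remaining and most delicate case is synchronization: $\act=\tau$ produced by a communication either between the context and a hole or between two holes. One applies the single-hole analysis to each participating hole, obtaining weak matching transitions on the $\proctwo$-side, and recombines them with the complementary prefix (supplied by the context or by the other hole) into one weak $\tau$-transition of $\ct{\seq\proctwo}$; the component expansions are then glued back through the enclosing parallel composition and the restriction $\res{\bn\act}(\cdot)$ created by \LTS{com}, again by precongruence of $\gexp$. I expect this step, together with the underlying decomposition lemma for transitions of multi-hole \piI-contexts on which all cases rest, to be the main obstacle: it requires careful bookkeeping of the bound names exported by \LTS{com}, a check that weak transitions and expansions of subcomponents genuinely lift through parallel composition and restriction, and attention to the fact that a single input or output of a hole must be answered by a \emph{weak} transition of the corresponding $\proctwo_i$ while the surrounding term provides the matching action. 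The symmetric clause of the bisimulation game is identical, using symmetry of $\relR$ and of the construction of $\mathcal S$; all other ingredients (transitivity and precongruence of expansion, its transfer clauses and their weak versions, and stability of $\widehat{\relR}$ under contexts) are routine.
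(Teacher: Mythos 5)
Your proof strategy is sound, but it is genuinely different from what the paper does: the paper offers no proof of this theorem at all, deferring entirely to~\cite{Sangiorgi96b} with the remark that adapting that proof to \piI{} and multi-hole contexts is straightforward. The cited work establishes soundness \emph{compositionally}, via a general framework of well-behaved (``respectful'') functions on relations: context closure and composition with expansion are each shown well-behaved once and for all, well-behavedness is preserved under composition, and soundness of the combined technique follows from a single general theorem. You instead give the direct, monolithic argument: exhibit the candidate \( \mathcal S \defeq {\gexp}\,\widehat{\relR}\,{\lexp} \), absorb the left expansion using the at-most-one-step clause (2) of Definition~\ref{def:expansion} --- which is exactly the point where expansion, rather than \( \bisim \), is indispensable, and which you exploit correctly, including the degenerate zero-step case --- then prove a closure lemma for \( \widehat{\relR} \), and recombine using transitivity and precongruence of \( \lexp \). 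Your hole-move case, where the inner context supplied by Definition~\ref{def:up-to-ctx} is grafted into the residual outer context so that both derivatives land back in \( {\gexp}\,\widehat{\relR}\,{\lexp} \), is precisely the substitutivity property that the framework of~\cite{Sangiorgi96b} packages as respectfulness of context closure; so the two routes share their mathematical core but organise it differently. What each buys: your argument is elementary and self-contained, but the burden falls on the decomposition lemma for transitions of multi-hole \piI{} contexts and on the recombination in the communication case, which you candidly flag as the main obstacle and only sketch (this is a deferred technical lemma rather than a gap --- it is standard, and in \piI{} it is genuinely simpler than in the full \( \pi \)-calculus, since inputs bind only fresh names and thus contexts never substitute received names into the holes); the framework approach isolates exactly that lemma as the sole calculus-specific ingredient, which is why the paper can dismiss the adaptation as routine. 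Two points you got right that are easy to get wrong: the symmetry of \( \mathcal S \) from symmetry of \( \relR \) and converseness of \( \gexp \) and \( \lexp \), and the fact that replication-induced duplication of holes keeps residuals inside \( \widehat{\relR} \).
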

We refer the readers to~\cite{Sangiorgi96b} for the proof.
(Adapting the proof to \piI{} and 
arbitrary
 multi-hole contexts is straightforward.)
We also use the technique of \emph{expansion} up-to context and \( \lexp \), which is defined analogously to Definition~\ref{def:up-to-ctx}, as a proof technique  to prove expansion results
 (the main difference is that one now
requires  \( \proc'' \lexp \ct {\seq \proc'} \)).

\subsubsection*{Unique solution of equations}

We  briefly recall the `unique solution of equations' technique~\cite{DurierHS19}.
\emph{Equation variables} \( X, Y, Z \) are used to write equations.
The body of an equation is a name-closed abstraction
possibly containing equation  variables
(that is,  applications can also be of the form $\appI X\tila$).
We use $E$ to range over %
expression bodies; and
 $\EE$ to range over systems of equations, defined as follows.
 In all the definitions, the indexing set $I$ can be infinite.

\begin{defi}
Assume that, for each $i$ of 
 a countable indexing set $I$, we have a variable $X_i$, and an expression
$E_i$, possibly containing   variables. 
Then 
$\{  X_i = E_i\}_{i\in I}$
(sometimes written $\til X = \til E$)
is 
  a \emph{system of equations}. (There is one equation for each
  variable $X_i$.)
  A system of equations is \emph{guarded} if each
  occurrence of a variable in the body of an equation is underneath a
  prefix.
\end{defi}

We write $E[\til F]$ for the abstraction obtained  by
replacing in $E$ each occurrence of the variable $X_i$ 
with the abstraction $F_i$. 
This is a  syntactic replacement, with instantiation of the parameters:
e.g.,  
replacing $X$ with $(\til x)P$ in $X\param{\til a}$ amounts to %
replacing $X\param{\til a}$ with the process $P\sub{\til a}{\til x}$.

\begin{defi}\label{d:un_sol}
Suppose  $\{  X_i = E_i\}_{i\in I}$ is a system of equations. We say that:
\begin{itemize}
\item
 $\til F$ is a \emph{solution of the 
system of equations  for $\wb$} 
if for each $i$ it holds
that $F_i \wb E_i [\til F]$.
\item The  system has 
\emph{a unique solution for $\wb$}  if whenever 
$\til F$ and $\til G$ are both solutions for $\wb$, we have %
$\til F \wb \til G$. \end{itemize}
 \end{defi}

 \begin{defi}[Syntactic solutions]
The    \emph{syntactic solutions}
of a system of equations  $\{  X_i = E_i\}_{i\in I}$
   are the  
   recursively defined constants $\KEi E \defeq E_i[\KE]$, for
   $i\in I$. %
\end{defi}
The syntactic solutions
of a system of equations %
are indeed solutions of
it. 
The  unique-solution technique   relies on an analysis of
  divergences. 
A process $P$ \emph{diverges} if it can perform an
infinite sequence of internal moves, possibly after some visible ones
(i.e., actions different from $\tau$). Formally, this holds if there
are processes $P_i$, $i\geq 0$, and some $n$ such that
$P=P_0\arr{\mu_0} P_1 \arr{\mu_1} P_2 \arr{\mu_2}\dots$ and for all
$i>n$, $\mu_i=\tau$. We call a \emph{divergence of $P$} the sequence
of transitions $\big(P_i\arr{\mu_i}P_{i+1}\big)_{i\geq 0}$.  
An abstraction $F$
 has a divergence 
if the process $\appI F\tila$ has a divergence,  where $\tila$ are fresh
names. 

\begin{thmC}[\cite{DurierHS22}]\label{thm:usol}
A guarded system of equations %
whose   syntactic 
solutions are agents with no divergences  has a unique solution for $\wb$. 
\end{thmC}

\section{Wires and Permeable Prefixes}
\label{sec:EpiI}

We introduce the abstract notion of \emph{wire}
process; and, as a syntactic sugar, the process constructs for 
 \emph{permeable prefixes}.  Wires and permeable prefixes will play a
 central role in the technical development in the following
 sections. 

\subsubsection*{Wires}
We use the notation \( \glink a b \) for an  \emph{abstract wire};
this is, intuitively, a special process whose purpose is
 to connect the output-end of $a$ with the input-end of $b$
(thus \( \glink a b \) itself will use $a$ in input and $b$ in
output). We call such wires `abstract' because
we will not give a  definition for them. 
We  only state (Section~\ref{sec:abs-enc})
 some behavioural properties that are expected to hold, and that 
 have mainly to do with substitutions; approximately:
\begin{enumerate}
\item
 if $P$ uses $b $  only in input, then $ \res b ( \glink a b | P  )\gexp  P \sub ab $

\item
 dually, if $P$ uses $a $  only in output, then $ \res a ( \glink a b | P  )\gexp  P \sub ba$

\end{enumerate}
Further conditions will however be needed on $P$ for such properties
 to hold (e.g., in (1), the input at $b$ in $P$ should be `at the
top-level', and in (2), the outputs at $a$ in $P$ should be
`asynchronous'.)
Special cases of  (1) and (2) are  forms of transitivity for wires, with
the common name restricted:
\begin{enumerate}
\item[(3)]
 \( \res b (\glink ab \mid \glink bc) \gexp  (\glink bc) \sub ab =  (\glink ab) \sub cb  = \glink ac \).
\end{enumerate}
When (1) holds we say that $P$ is 
\emph{I-respectful with respect to \( \glink ab \)}; similarly when
(2) holds we say that $P$ is 
\emph{O-respectful with respect to \( \glink ab \)}. 
When (3) holds, for any $a,b,c$ of the same sort, we say that 
\emph{wires are transitive}.

As we have two sorts of names in the paper (location
names and variable names), we will correspondingly deal with two sorts
of wires,  \emph{location wires} and  
\emph{variable wires}.
In fact, location wires will be the key
structures.
Thus when, in Section~\ref{sec:wires}, we consider \emph{concrete} instantiations of the location wires,
the corresponding  definitions of the variable wires will be adjusted so to maintain the expected properties of the location wires.

\subsubsection*{Permeable prefixes}
We write
  \( \diprfx a {\seq b} \proc \) and \( \oprfx a {\seq b} \proc
\) for  a \emph{permeable input} and a \emph{permeable  output}. 
Intuitively, a permeable prefix only blocks actions
involving the bound names of the prefix.
For instance, a permeable
input $\diprfx a {\tilb} \proc$, as an ordinary input, is capable of
producing   action $\bin a {\tilb}$  thus yielding  the
derivative $P$. However, in contrast with the ordinary input, in 
$\diprfx a {\tilb} \proc$
  the 
process  $P$ is active and running, 
and can thus interact with the outside environment as well as with the prefix
  \( \diprfx a {\til b} \proc \) (though such a form of  interaction will not occur in processes encoding of \( \lambda
  \)-terms); the only
constraint is that  the actions
from $P$ involving the bound names $\tilb$ cannot fire for as long as
 the top
 prefix $a(\tilb)$ is not consumed.

Given the two sorts of names that will be used in the
paper, the possible forms of permeable prefixes are:
\[
\diprfx p {x,q} \proc 
\midd \oprfx p {x,q}   \proc  \midd
\diprfx x {p} \proc 
\midd \oprfx x {p}   \proc
\]
Moreover, it will always be the case that, in a prefix
$\diprfx p {x,q} \proc$, process $P$ uses  $x$ only in output and
$q$ only once in input; and conversely, \( P \) uses \( x \) only in output and \( q \) only once in output in  $\oprfx p {x,q} \proc$.
Similarly,  in
$\diprfx x {p} \proc$, 
 process $P$  uses  $p$ only once in input; and conversely, in $\oprfx x {p} \proc$, name \( p \) is used only once in output position in \( P \).

We stress that permeable prefixes 
 should be taken as syntactic sugar; formally they are defined from
 ordinary prefixes and wires as follows 
\begin{align*}
  \diprfx p {x, q} \proc &\defeq \resb {x, q} \left(\iproc p {x', q'}{\left(\glink x {x'} \mid \glink {q'} q \right)} \mid \proc \right) \\
  \diprfx x p \proc &\defeq \res p \left(\iproc x {p'} {\glink {p'} p} \mid  \proc \right) \\
  \oprfx p {x, q} \proc &\defeq \resb {x, q} \left(\bout p {x', q'}. \left(\glink {x'} x \mid \glink q {q'} \right) \mid \proc \right) \\
  \oprfx x p \proc &\defeq \res p \left(\bout x {p'}. \glink p {p'} \mid  \proc \right)
\end{align*}
Such definitions thus depend on the concrete forms of wires adopted.
The definitions behave as intended only when the processes underneath the permeable prefixes are respectful.
For example, we have
\begin{align*}
  &\diprfx p {x, q} \proc \trans{p(x, q)} \scong
  \resb{x', q'}\left( \glink {x'} {x} \mid \glink {q} {q'} \mid \proc \sub {x', q'} {x, q}  \right) \gexp \proc
\end{align*}
if \( \proc \) is I-respectful with respect to \( \glink {q'} q \) and
O-respectful with respect to
 \( \glink x {x'} \), for fresh \( q' \) and \( x' \).

Later, when  the abstract wires will be instantiated to
{concrete} wires,  we will study properties of
I-respectfulness and O-respectfulness, as well as, correspondingly,
properties of permeable prefixes,   
in the setting of encodings of  functions.

We end this section by introducing some algebraic laws for permeable prefixes.
These laws allow us to avoid desugaring while proving the properties of the encodings.

\begin{lem}
\label{lem:permeable-comm}
Suppose that wires, which are used to define the permeable prefixes, are transitive.
  \begin{enumerate}
    \item  \( \res p (\diprfx p{x, q} \proc \mid \oprfx p {x, q} \proctwo ) \gexp \resb {x,  q}(\proc \mid \proctwo) \) if either \( \proc \) is I-respectful with \( \glink q {q'} \) and O-respectful with \( \glink {x'} x \) or \( \proc \) is O-respectful with \( \glink q {q'} \)  and I-respectful with \( \glink {x'} x \).
    \item  \( \res x (\riproc x p \proc \mid \oprfx x p \proctwo  \mid \procthree) \gexp \res x (\riproc x p \proc \mid \res p (\proc \mid \proctwo)  \mid \procthree)  \) if \( p \notin \fn \procthree \), \( x \) does not appear in an input subject position of \( \proc, \proctwo, \procthree\) and either \( \proc \) is I-respectful with \( \glink p {p'} \) or \( \proctwo \) is O-respectful with \( \glink {p'} p \).
  \end{enumerate}
\end{lem}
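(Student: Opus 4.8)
The plan is to prove both laws by rewriting the left-hand side into the right-hand side up to $\gexp$, chaining the individual steps through transitivity and the fact that $\gexp$ is a precongruence. In each case the single extra internal step on the left is a communication on the common subject of the two prefixes; once it has fired, the surviving wires are first merged using transitivity of wires (assumption~(3)) and then turned into substitutions using I- and O-respectfulness (assumptions~(1)--(2)), which is precisely where the hypotheses on $\proc$ (and $\proctwo$) are consumed.

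For item~(1), I would first desugar $\diprfx p{x,q}\proc$ and $\oprfx p{x,q}\proctwo$, $\alpha$-renaming the bound names of the two prefixes apart so that the input carries $\glink{x_i}{x'}\mid\glink{q'}{q_i}$ and the output carries $\glink{x'}{x_o}\mid\glink{q_o}{q'}$, with $x',q'$ the transmitted names. Since $p$ is restricted and occurs only as the subject of the (non-replicated) input and of the bound output, the only initial interaction is their communication, which identifies the received names with the transmitted ones and restricts them; this one $\tau$ accounts for the expansion. Garbage-collecting $\res p$ then leaves $\resb{x_i,q_i}\resb{x_o,q_o}\res{x'}\res{q'}\big(\glink{x_i}{x'}\mid\glink{x'}{x_o}\mid\glink{q_o}{q'}\mid\glink{q'}{q_i}\mid\proc'\mid\proctwo'\big)$, with $\proc',\proctwo'$ the correspondingly renamed continuations. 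I apply transitivity to the two pairs sharing the restricted intermediary, obtaining $\glink{x_i}{x_o}$ and $\glink{q_o}{q_i}$, and finally use the respectfulness of $\proc$ to let each of these wires behave as a substitution, collapsing $x_i,x_o$ (resp.\ $q_i,q_o$) into a single restricted name and yielding $\resb{x,q}(\proc\mid\proctwo)$. The two disjuncts of the hypothesis correspond to the two admissible ways of orienting the collapse of a wire, namely absorbing it through its output-end via O-respectfulness or through its input-end via I-respectfulness.

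For item~(2), the argument is a variant of the replication law of Lemma~\ref{lem:rep-thm}\eqref{it:lem:rep-thm:comm}. After desugaring $\oprfx x p\proctwo$ as $\res{p_1}(\bout x{p'}.\glink{p_1}{p'}\mid\proctwo)$, the hypothesis that $x$ does not occur in input-subject position of $\proc,\proctwo,\procthree$ forces the bound output $\bout x{p'}$ to communicate only with the replicated input $\riproc x p\proc$; this communication spawns a fresh copy $\proc\sub{p'}p$, retains the replication, and releases $\proctwo$ together with the connecting wire, giving, up to $\gexp$, the process $\res x(\riproc x p\proc\mid\res{p_1}\res{p'}(\glink{p_1}{p'}\mid\proc\sub{p'}p\mid\proctwo)\mid\procthree)$, where $p\notin\fn\procthree$ ensures that the new restrictions do not capture $\procthree$. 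It then remains to erase $\glink{p_1}{p'}$: if $\proc$ is I-respectful with respect to it, the input-end $p'$ is redirected to $p_1$, whereas if $\proctwo$ is O-respectful, the output-end is redirected; either way $p_1$ and $p'$ coalesce into the single restricted name of $\res p(\proc\mid\proctwo)$, which is the desired right-hand side.

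The main obstacle I anticipate is the bookkeeping around the two communication steps and the orientation of the wires. One must check that after the $\tau$-step the subject name occurs nowhere else, so that the step is the unique initial reduction and the statement is a genuine expansion rather than merely a weak bisimilarity, and one must verify that the name-usage conventions attached to permeable prefixes---$\proc$ using $x$ only in output and $q$ only once in input, and dually for $\proctwo$---are exactly what makes the I-/O-respectfulness side conditions applicable to the \emph{merged} wires produced by transitivity. Checking that respectfulness survives the transitivity collapse, and matching each disjunct of the hypothesis to the endpoint through which the wire is absorbed, is the delicate part; the communication and garbage-collection steps themselves are routine given Lemma~\ref{lem:rep-thm}.
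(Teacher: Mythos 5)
Your proposal is correct and takes essentially the same route as the paper: the paper gives no detailed proof of this lemma, remarking only that it ``simply follows from the syntactic definition of the permeable prefixes'', and your desugar--communicate--merge-wires-by-transitivity--absorb-wires-by-respectfulness computation (with Lemma~\ref{lem:rep-thm}\eqref{it:lem:rep-thm:comm} driving the replicated case in item~(2), and the two disjuncts of each hypothesis matched to which endpoint of the residual wire is restricted) is exactly that intended computation. One caveat on your closing discussion: in item~(1) the \( \tau \)-step at \( p \) is \emph{not} the unique initial reduction---the permeable prefixes leave \( \proc \) and \( \proctwo \) free to act immediately---so what licenses the expansion is not uniqueness of the initial step but the determinacy (confluence) of a communication on a restricted name occurring exactly once in input and once in output subject position, which is precisely the occurrence condition you verify, just invoked for a slightly different reason.
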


\begin{lem}
\label{lem:permeable-scong}
The following structural laws hold.
  \begin{enumerate}
    \item If \( a \neq b\) and \( a \notin \seq c \) then \( \res a \diprfx b {\seq c} \proc \scong  \diprfx b {\seq c} \res a \proc  \) and \( \res a \oprfx b {\seq c} \proc \scong  \oprfx b {\seq c} \res a \proc  \)
    \item If \( \seq b \cap \fn{\proctwo} = \emptyset \) then \( \proctwo  \mid    \diprfx a {\seq b} \proc
 \scong \diprfx a {\seq b} (\proctwo \mid \proc) \) and \( \proctwo \mid \oprfx a {\seq b} \proc  \scong \oprfx a {\seq b} (\proctwo \mid \proc) \).
  \end{enumerate}
\end{lem}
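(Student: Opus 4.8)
The plan is to prove both laws by \emph{desugaring} the permeable prefixes according to their definitions --- as ordinary prefixes, restrictions, and wires --- and then rewriting purely with the basic structural-congruence rules of Definition~\ref{def:scong}. Since $\scong$ is a congruence, every such rewriting step may be carried out under an arbitrary surrounding context. Because the permeable input and output prefixes have completely symmetric definitions, and the location-sorted and variable-sorted versions differ only in the arity of the carried tuple, it suffices to treat one representative case in full detail --- I take the location-sorted permeable input $\diprfx p {x,q}\proc$ --- and to observe that the other three cases are settled by the identical argument.

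For law~(1), consider $\res a \diprfx p {x,q} \proc$ under the hypotheses $a \neq p$ and $a \notin \{x,q\}$. Unfolding the definition gives
\[
\res a \resb{x,q}\bigl( \iproc p {x',q'}{(\glink x {x'} \mid \glink {q'} q)} \mid \proc \bigr).
\]
First I would commute the restriction $\res a$ inward past $\resb{x,q}$, which is sound because $a \neq x$ and $a \neq q$. The free names of the wire component $\iproc p {x',q'}{(\glink x {x'} \mid \glink {q'} q)}$ are exactly $p$, $x$, and $q$ (the carried names $x',q'$ being bound), none of which is $a$; hence scope extrusion lets me push $\res a$ past this component, so that it now binds $\proc$ alone. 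Re-folding the definition yields $\diprfx p {x,q} \res a \proc$, which is the desired right-hand side.

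For law~(2), consider $\proctwo \mid \diprfx p {x,q} \proc$ under the hypothesis $\{x,q\} \cap \fn{\proctwo} = \emptyset$. After desugaring, this hypothesis together with the freshness of the carried names $x',q'$ (Barendregt convention) gives $\{x,q,x',q'\} \cap \fn{\proctwo} = \emptyset$, so repeated scope extrusion moves $\proctwo$ inside the restriction $\resb{x,q}$; commutativity and associativity of parallel composition then place $\proctwo$ directly alongside $\proc$, and re-folding gives $\diprfx p {x,q} (\proctwo \mid \proc)$.

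The argument is essentially mechanical, so the only real obstacle is the bookkeeping: at each scope-extrusion step one must check that the extruded name does not occur free in the component it crosses, and in particular that the auxiliary names introduced by the desugaring ($x',q'$ here, and the analogous $p'$ in the variable-sorted cases) remain fresh throughout. The Barendregt convention discharges all of these side conditions uniformly, so no genuine difficulty arises. It is worth noting that, unlike Lemma~\ref{lem:permeable-comm}, no transitivity assumption on wires is needed here, precisely because these rearrangements are purely structural and never cause a wire to interact.
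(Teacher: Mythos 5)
Your proposal is correct and takes essentially the same approach as the paper: the paper gives no detailed proof, remarking only that these laws ``simply follow from the syntactic definition of the permeable prefixes,'' which is precisely the desugar--rewrite-by-structural-congruence--refold argument you spell out (with the bookkeeping on bound and fresh names handled by the Barendregt convention). Your closing observation that no transitivity assumption on wires is needed also matches the paper, which attaches that hypothesis only to Lemma~\ref{lem:permeable-comm}.
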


Lemma~\ref{lem:permeable-comm} expresses a property of a restricted interaction consuming  permeable prefixes.  Lemma~\ref{lem:permeable-scong} shows two structural laws for permeable prefixes, one concerning restriction, the other
concerning parallel composition. These lemmas simply follow from the syntactic definition of the permeable prefixes.
The assumption about the transitivity of wires, in Lemma~\ref{lem:permeable-comm}, is harmless as all the wires we use in this paper are transitive.
In what follows, we will use these structural rules without explicitly mentioning Lemma~\ref{lem:permeable-scong}.

\section{Abstract Encoding}
\label{sec:abs-enc}
\renewcommand*{\enco}{\encoN}

This section introduces the abstract encoding of \( \lambda \)-terms
into \( \piI \)-processes. We call the encoding `abstract' because it
uses  the abstract wires
discussed in the previous section.  In other words,  the encoding
is \emph{parametric} with respect to the concrete
 definition of the wires.
We then
 prove that,
whenever  the wires satisfy a few basic  laws,
 the
encoding yields a  \(
\lambda \)-theory.

\subsection{Definition of the abstract encoding}
\label{sec:abs-enc-def}

We begin by recalling Milner's original encoding $\QencoM$ of (call-by-name)
$\lambda$-calculus into the  $\pi$-calculus~\cite{Milner92,Milner93}:
\begin{align*}
  \QencoM \enco x p &\defeq \out x p \\
  \QencoM \enco {\lambda x . M} p &\defeq  \iproc p {x, q} {\QencoM \enco M q} \\
  \QencoM \enco {\app M N} p &\defeq \resb {q, x} \left(\QencoM \enco M q \mid \out q {x, p} \mid \riproc x r {\QencoM \enco N r} \right) \\
\end{align*}

The encoding of a
$\lambda$-term  $M$ is parametric over a port $p$, which
  can be thought of as the
\emph{location} of $M$, for $p$ represents  the
unique port along which $M$ may be called
by its environment, thus receiving two names:  (a trigger  for) its argument and  the  location to be
used for the next interaction.
Hence,  \( \QencoM \) (as well as  all the encodings in the paper) is
a function from \( \lambda \)-terms to abstractions of the form \(
\abs p \proc \). We write \( \QencoM \enco M p \) as a shorthand
for \(
\appI{\QencoM \enco M {}} p \).
A function application of the $\lambda$-calculus becomes, in the
$\pi$-calculus,
 a particular form of
parallel combination of two agents, the function and its argument; $\beta$-reduction is then modelled as process interaction.
An argument of an application is translated as a replicated server,
that can be used as many times as needed, each time providing
 a name to be used as location for the following computation.

\begin{figure}[t]
\begin{align*}
  \Qencoabs \enco x p &\defeq \oprfx x {p'} \glink p {p'} \\
  \Qencoabs \enco {\lambda x . M} p &\defeq  \diprfx p {x, q} \Qencoabs \enco M q \\
  \Qencoabs \enco {\app M N} p &\defeq \res q (
  \Qencoabs \enco M q
  \mid \oprfx q {x, p'} (\riproc x r {\Qencoabs \enco N r}  \mid  \glink p {p'}))
\end{align*}
\caption{The abstract encoding $\QencoA$.}
\label{fig:abs-enc}
\end{figure}

In Figure~\ref{fig:abs-enc} we report the
abstract encoding  \( \Qencoabs \).
There are two main modifications from Milner's encoding \( \QencoM \):
\begin{enumerate}
\item The encoding uses \piI, rather than $\pi$-calculus; for this, all
  free outputs are replaced  by a combination of
 bound outputs and wires (as hinted in Introduction).
\item A permeable input is used, in place of an ordinary input, in
  the translation of abstraction so to allow reductions underneath a
  $\lambda$-abstraction. (We thus implement a \emph{strong}
  call-by-name  strategy.)
\end{enumerate}

We  report a few basic conditions that will be required on wires.
The main ones concern the behaviour of wires as substitutions and
transitivity of wires.

\begin{defi}[Wires]
  \label{ax:glink-piI}
  As a convention, we assume that names \( a, b, c \) are of the same sort, either location names or variable names.
  \emph{Wires} \( \glink a b \) are processes that satisfy the following properties:
  \begin{enumerate}
    \item The free names of \( \glink a b \) are \( a \) and \( b \).
          Furthermore,  \( \glink a b \) only uses \( a \) in input and \( b \) in output.
          \label{it:ax:glink-piI:fn}
    \item If \( \glink a b \trans \act \proc \) for some \( \proc \), then \( \act \neq \tau \).
    \label{it:ax:glink-piI:no-tau}
    \item \( \res b (\glink a b \mid \glink b c) \gexp \glink a c \).
          \label{it:ax:glink-piI:trans}
    \item \( \res q (\glink p q \mid \diprfx q {x, r} \proc) \gexp \diprfx p {x, r} \proc \), provided that \( (\res {x, r})(\glink x {x'} \mid \glink {r'} r \mid \proc ) \gexp \proc \sub {x' , r'} {x, r} \), where \( x', r'\) are fresh names.
          \label{it:ax:glink-piI:subst-din}
    \item \( \res p \left(\glink p q \mid \oprfx p {x, r} \proc \right) \gexp \oprfx q {x, r} \proc \), provided that \( (\res {x, r})(\glink {x'} x \mid \glink r {r'} \mid \proc ) \gexp \proc \sub {x' , r'} {x, r} \),  where \( x', r'\) are fresh names.
          \label{it:ax:glink-piI:subst-dout}
    \item \( \res y (\glink x y \mid \riproc y p {\proc}) \gexp \riproc {x} {p} \proc \), provided that \( y \notin \fn \proc \) and \( \res p (\glink {p'} p \mid \proc ) \gexp \proc \sub {p'} p \), where \( p' \) is fresh.
          \label{it:ax:glink-piI:subst-rep}
    \item \( \res x (\glink x y \mid \oprfx x {p} \proc ) \gexp \oprfx y {p} \proc \), provided that \( x \notin \fn \proc \) and \( \res p (\glink p {p'} \mid \proc ) \gexp \proc \sub {p'} p \), where \( p' \) is fresh.
          \label{it:ax:glink-piI:subst-dout-var}
    \item \( \glink x y \) is a replicated input process at \( x \), i.e. \( \glink x y = \riproc x p \proc \) for some \( \proc \).
    \label{it:ax:glink-piI:var-link-rep}
  \end{enumerate}
\end{defi}
Condition~\ref{it:ax:glink-piI:fn} is a simple syntactic requirement.
Condition~\ref{it:ax:glink-piI:no-tau} says that wires are
`optimised' in that they cannot do any immediate internal interaction
(this requirement, while not mandatory, facilitates a few proofs).
 Law~\ref{it:ax:glink-piI:trans} is about the transitivity of wires.
Laws~\ref{it:ax:glink-piI:subst-din}-\ref{it:ax:glink-piI:subst-dout-var}
show that wires act as substitutions for permeable inputs, permeable
outputs and replicated input prefixes.
We do not require similar laws for ordinary prefixes, e.g., as in
\[
\res p \left(\glink p q \mid \bout p {x, r}. \proc \right)
\gexp \bout q {x, r}. \proc
\]
because wires break the strict sequentiality imposed by such prefixes
(essentially transforming an ordinary prefix into a permeable one:
only for the process on the right any action from
 $P$ is blocked until the environment accepts an interaction at $q$).
Condition~\ref{it:ax:glink-piI:var-link-rep} requires \( \glink x
y \) to be an input replicated processes, and is useful so to be able
to use the replication laws (Lemma~\ref{lem:rep-thm}).
Location wires, on the other hand,  are linear and hence should not be duplicated.
Moreover, we do not even require them to be an input process because we shall consider location wires with different control flows.

\emph{Hereafter we assume that \( \glink p q \) and \( \glink x y \)
are  indeed  wires, i.e., processes that
 satisfy the requirements of Definition~\ref{ax:glink-piI}}.
We can therefore exploit such requirements
to derive properties  of the abstract encoding.

Lemma~\ref{lem:glink-subst} shows that the processes encoding
functions are I-respectful with respect to the  location wires, and
 O-respectful with respect to the variable wires.

\begin{lem}
  \label{lem:glink-subst}
  \hfill
  \begin{enumerate}
    \item \( \res p (\glink qp \mid \Qencoabs \enco M p) \gexp \Qencoabs \enco M q \)
          \label{it:lem:glink-subst:cont}
    \item \( \res x (\glink x y \mid \Qencoabs \enco M p ) \gexp \Qencoabs \enco {M \sub y x} p \)
          \label{it:lem:glink-subst:var}
      \end{enumerate}
\end{lem}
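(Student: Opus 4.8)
The plan is to prove the two statements simultaneously by induction on the structure of $M$. They are intertwined only through strictly smaller subterms — part~(1) at an abstraction appeals to both parts on the body, whereas part~(2) only ever appeals to itself on subterms — so a single structural (mutual) induction suffices. Before starting I would record two routine facts. First, the abstract encoding commutes with injective renaming, i.e. $\Qencoabs \enco M p \sub {x'} x = \Qencoabs \enco {M \sub {x'} x} p$, and likewise for location names; this lets me read the provisos attached to the laws of Definition~\ref{ax:glink-piI} as encodings of renamed terms. Second, the syntactic invariant that in $\Qencoabs \enco M p$ every free variable name occurs only in output-subject position, which is immediate from Figure~\ref{fig:abs-enc}. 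Combined with condition~\ref{it:ax:glink-piI:fn} — which forces $x \notin \fn{\proc_0}$ once $\glink x y$ is written $\riproc x s {\proc_0}$ — this invariant is precisely what discharges the output-subject side conditions of the replication theorems (Lemma~\ref{lem:rep-thm}) when they are applied to the private replicated wire $\glink x y$.

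For part~(1) the base case $M = x$ and the application case $M = \app {M_1} {M_2}$ carry no recursion. In both, the location $p$ occurs exactly once, inside a wire $\glink p {p'}$ (the continuation of $\Qencoabs \enco x p$, respectively the body of the permeable output in $\Qencoabs \enco {\app {M_1} {M_2}} p$). I would push $\res p$ and $\glink q p$ inward, using structural congruence and the structural laws of Lemma~\ref{lem:permeable-scong}, until $\glink q p$ sits beside $\glink p {p'}$, and then collapse the two by transitivity (condition~\ref{it:ax:glink-piI:trans}), landing syntactically on $\Qencoabs \enco x q$, respectively $\Qencoabs \enco {\app {M_1} {M_2}} q$. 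The only inductive case is $M = \lambda z. M_0$, where the goal is exactly the instance of the permeable-input substitution law (condition~\ref{it:ax:glink-piI:subst-din}) with $P = \Qencoabs \enco {M_0} r$. Its proviso is a simultaneous variable-and-location wire substitution on $P$, which I obtain by applying part~(2) on $M_0$ to the variable wire and part~(1) on $M_0$ to the location wire, composing the two with precongruence of $\gexp$ and the renaming-commutation fact.

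For part~(2) all three cases are inductive. The base cases are a renaming at a single leaf: for $M = x$ the goal is the variable permeable-output substitution law (condition~\ref{it:ax:glink-piI:subst-dout-var}), whose proviso again reduces to transitivity; for a variable $w \neq x$ the wire is garbage, since $x \notin \fn{\Qencoabs \enco w p}$, so it is removed by Lemma~\ref{lem:rep-thm}\eqref{it:lem:rep-thm:gc}, and $w \sub y x = w$. The abstraction case pushes $\res x$ and $\glink x y$ under the permeable input by Lemma~\ref{lem:permeable-scong} and then applies the induction hypothesis under the (precongruent) permeable-input context. The application case is the core: $x$ may occur in both $\Qencoabs \enco {M_1} q$ and the argument server $\riproc w r {\Qencoabs \enco {M_2} r}$, so I first replicate $\glink x y$ onto each occurrence using the replication theorems — the distribution law Lemma~\ref{lem:rep-thm}\eqref{it:lem:rep-thm:par}, the law commuting a private replication under another replicated input, and garbage collection for the copy that meets $\glink p {p'}$ — and then apply the induction hypothesis separately to $\Qencoabs \enco {M_1} q$ and $\Qencoabs \enco {M_2} r$, reassembling the encoding of $\app {M_1 \sub y x} {M_2 \sub y x}$.

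The main obstacle is this last step: transporting the single private wire $\glink x y$ down through the nested restriction, permeable-output, and replicated-input layers of the application encoding to every output occurrence of $x$, while keeping the side conditions of the replication laws verified from the syntactic invariant. A secondary delicate point is the abstraction case of part~(1): one must check that the combined substitution delivered by the two induction hypotheses matches verbatim the proviso demanded by condition~\ref{it:ax:glink-piI:subst-din}, which requires lining up the fresh names and the renaming-commutation exactly; this is fiddly but routine.
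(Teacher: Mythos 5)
Your proposal is correct and follows essentially the same route as the paper: a simultaneous structural induction in which part~(1) needs no recursion at variables and applications (transitivity of wires suffices there), the abstraction case of part~(1) is discharged by condition~\ref{it:ax:glink-piI:subst-din} with its proviso supplied by both induction hypotheses on the body, and part~(2) is handled by condition~\ref{it:ax:glink-piI:subst-dout-var} (resp.\ garbage collection) at variables, by pushing under the permeable input at abstractions, and by the replication theorems at applications. The extra bookkeeping you make explicit (renaming-commutation and the output-subject-only invariant for free variables) is left implicit in the paper's proof but is exactly what justifies its applications of Lemma~\ref{lem:rep-thm}.
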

\begin{proof}
  We prove~\ref{it:lem:glink-subst:cont} and~\ref{it:lem:glink-subst:var} simultaneously by induction on the structure of \( M \).

  \mylabel{Case \( M = x \)}
  We begin with the case of location names.
  \begin{align*}
    \res p (\enco x p \mid \glink q p)
    &= \res p (\oprfx x {p'} \glink {p} {p'} \mid \glink q p ) \\
    &\scong \oprfx x {p'} \res p (\glink p {p'} \mid \glink q p)  \\
    &\gexp \oprfx x {p'} \glink q {p'} \tag{\ref{it:ax:glink-piI:trans} of Definition~\ref{ax:glink-piI}} \\
    &= \enco{x} {q}
  \end{align*}

  Next we show that \( \res x (\enco M p \mid \glink x y) = \enco {M \sub y x} p \) holds when \( M \) is a   variable.
  First, we consider the case where \( M = z \neq x \).
  Since \( \glink x y \) is of the form \( \riproc x p \proc \) by~\ref{it:ax:glink-piI:var-link-rep} of Definition~\ref{ax:glink-piI}, it follows that \( \res x (\glink x y) \sbisim  \nil \), and this concludes this case.
  If \( M = x \), then
  \begin{align*}
    \res x (\enco x p \mid \glink x y)
    &=  \res x (\oprfx x {p'} \glink p {p'} \mid \glink x y) \\
    &\gexp \oprfx y {p'} \glink p {p'} \tag{\ref{it:ax:glink-piI:subst-dout-var} of Definition~\ref{ax:glink-piI}}  \\
    &= \enco y p
  \end{align*}
  The premise of~\ref{it:ax:glink-piI:subst-dout-var} of Definition~\ref{ax:glink-piI} is satisfied because of the transitivity of wires (\ref{it:ax:glink-piI:trans} of Definition~\ref{ax:glink-piI}).

  \mylabel{Case \( M = \lambda x . N \)}
  We first show that~\ref{it:lem:glink-subst:cont} holds.
  This is a direct consequence of~\ref{it:ax:glink-piI:subst-din} of Definition~\ref{ax:glink-piI}.
  \begin{align*}
    \res p (\enco {\lambda x. N} p \mid \glink q p)
    &= \res p (\diprfx p {x, r} \enco N r \mid \glink q p) \\
    &\gexp \diprfx q {x, r} \enco N r \tag{\ref{it:ax:glink-piI:subst-din} of Definition~\ref{ax:glink-piI} together with the i.h.} \\
    &= \enco {\lambda x. N} q
  \end{align*}
  The proof for~\ref{it:lem:glink-subst:var} is also a direct consequence of the induction hypothesis.
  That is,
  \begin{align*}
    \res y (\enco {\lambda x.N} p \mid \glink y z)
    &= \res y (\diprfx p {x, q} \enco {N} q \mid \glink y z) \\
    &\scong \diprfx p {x, q} \res y (\enco {N} q \mid \glink y z) \\
    &\gexp \diprfx p {x, q} (\enco {N \sub z y} q ) \tag{i.h.} \\
    &= \enco {(\lambda x.N) \sub z y} p
  \end{align*}
  Here we assumed that \( x \), \( y \) and \( z \) are pairwise distinct; the general case can be proved using \( \alpha \)-conversion.

  \mylabel{Case \( M = \app N L \)}
  The case for location names follows from the transitivity of wires.
  Observe that
  \begin{align*}
    &\res p \left(\oprfx r {x, p'} (\riproc x {r'} {\enco L {r'}} \mid  \glink p {p'}) \mid \glink q p \right) \\
    &\scong (\res {p, x, p'})
        (\bout r {x', p''}. (\glink {x'} x \mid \glink {p'} {p''}) \mid \riproc x {r'}
      {\enco L {r'}} \mid  \glink p {p'} \mid \glink q p )
    \tag{definition of permeable prefix}
  \\
    &\gexp (\res {x, p'})(\bout r {x', p''}. (\glink {x'} x \mid \glink {p'} {p''}) \mid \riproc x {r'} {\enco L {r'}} \mid  \glink q {p'})
      \tag{\ref{it:ax:glink-piI:trans} of Definition~\ref{ax:glink-piI}} \\
    &= \oprfx r {x, p'} (\riproc x {r'} {\enco L {r'}} \mid  \glink q {p'}) \tag{definition of permeable prefix}
  \end{align*}
  Using this, we get
  \begin{align*}
    \res p \left( \enco {\app N L} p \mid \glink qp \right)
    &\scong  \res r \left(\enco N r \mid \res p \left(\oprfx r {x, p'} (\riproc x {r'} {\enco L {r'}} \mid  \glink p {p'}) \mid \glink q p \right) \right) \\
    &\gexp  \res r \left(\enco N r \mid \oprfx r {x, p'} (\riproc x {r'} {\enco L {r'}} \mid  \glink q {p'}) \right) \\
    &= \enco {\app N L} q
  \end{align*}
  The proof for \( \glink x y \) follows from the replication theorem and the induction hypothesis.
  First, observe that \( \glink x y \) must be of the form \( \riproc x p \proc \) because of~\ref{it:ax:glink-piI:var-link-rep} of Definition~\ref{ax:glink-piI}.
  We, therefore, can apply the replication theorem to \( \glink x y \).
  Hence, we have
  \begin{align*}
    \res x (\riproc z p  \enco L p \mid \glink x y)
    &\sbisim \riproc z p {\res x(\enco L p \mid \glink x y)} \tag{replication theorem}  \\
    &\gexp  \riproc z p {\enco {L \sub y x} p}  \tag{i.h.}
  \end{align*}
  The claim follows by applying this expansion relation, the replication theorem for parallel composition and the induction hypothesis:
  \begin{align*}
    &\res x (\enco{\app N L} p  \mid \glink x y) \\
    &\scong (\res {x, r}) \left(\enco N r \mid \oprfx r {z, p'} (\riproc z {r'} {\enco L {r'}} \mid  \glink p {p'}) \mid \glink x y \right) \\
    &\sbisim \res r
      \begin{aligned}[t]
      (&\res x \left(\enco N r \mid \glink x y \right) \\
      &\mid \oprfx r {z, p'} (\res x \left(\riproc z {r'} {\enco L {r'}} \mid \glink x y \right) \mid  \glink p {p'}))
      \end{aligned} \tag{replication theorem} \\
    &\gexp \res r \left(\enco {N \sub y x} r \mid \oprfx r {z, p'} (\riproc z {r'} {\enco {L \sub y x} {r'}} \mid  \glink p {p'}) \right) \tag{i.h. and the above expansion relation} \\
    &= \enco {\app {N \sub y x} {L \sub y x}} p \\
    &= \enco {(\app N L) \sub y x} p. \tag*{\qedhere}
  \end{align*}
\end{proof}

\subsection{Validity of  \texorpdfstring{\( \beta \)}{beta}-reduction}
The abstract encoding \( \Qencoabs \) validates \( \beta \)-reduction with respect to the expansion relation.
This is proved by showing that substitution of a \( \lambda \)-term \( M \) is implemented as a communication to a replicated server that owns  \( M \).
The proof is similar to that for Milner's encoding, except that we exploit Lemma~\ref{lem:glink-subst}.
We recall  that \( M \red N \) is the \emph{full} \( \beta \)-reduction.

\begin{restatable}[Substitution]{lem}{gencSubst}
  \label{lem:genc-subst}
  If \( x \notin \fv N \), then \( \res x ( \Qencoabs \enco M p \mid \riproc x q {\Qencoabs \enco N q} ) \gexp \Qencoabs \enco {M \sub N x} p \).
\end{restatable}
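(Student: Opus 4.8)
The plan is to prove the statement by induction on the structure of $M$, following the same skeleton as the proof of Lemma~\ref{lem:glink-subst} but with the private replicated server $\riproc x q {\Qencoabs \enco N q}$ playing the role that the variable wire $\glink x y$ plays there. Throughout I use that $\gexp$ is a transitive precongruence, so that the individual rewriting steps may be chained and applied underneath the process contexts produced by the encoding. The tools are the replication theorems of Lemma~\ref{lem:rep-thm}, used to distribute and push the server through parallel compositions, prefixes and inner replications; the garbage-collection law Lemma~\ref{lem:rep-thm}(\ref{it:lem:rep-thm:gc}), used to discard the server once $x$ no longer occurs free; the structural laws for permeable prefixes (Lemma~\ref{lem:permeable-scong}); the permeable-output communication law Lemma~\ref{lem:permeable-comm}(2); and Lemma~\ref{lem:glink-subst}(\ref{it:lem:glink-subst:cont}), which lets a location wire rename the location of an encoded term. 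A standing fact, used to license every application of a replication theorem, is that in $\Qencoabs \enco L q$ each free variable occurs only in output-subject position; moreover, since $x \notin \fv N$, the server body $\Qencoabs \enco N q$ contains no free occurrence of $x$ at all.

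The two structural cases are routine. If $M = y$ with $y \neq x$, then $x \notin \fn{\Qencoabs \enco y p}$, so the server is garbage and Lemma~\ref{lem:rep-thm}(\ref{it:lem:rep-thm:gc}) gives the result directly, with $\Qencoabs \enco y p = \Qencoabs \enco {y \sub N x} p$. If $M = \lambda y. M'$, I assume $y \neq x$ and $y \notin \fv N$ by the Barendregt convention; I first move the server underneath the permeable input by Lemma~\ref{lem:permeable-scong}(2) and then commute the restriction $\res x$ inside by Lemma~\ref{lem:permeable-scong}(1), reaching $\diprfx p {y, r} \res x (\Qencoabs \enco {M'} r \mid \riproc x q {\Qencoabs \enco N q})$; the induction hypothesis on $M'$ and precongruence of $\gexp$ then close the case, using $(\lambda y. M') \sub N x = \lambda y. (M' \sub N x)$. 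The application case $M = \app {M_1} {M_2}$ reproduces the corresponding computation of Lemma~\ref{lem:glink-subst} verbatim: I split the server across the two parallel components with Lemma~\ref{lem:rep-thm}(\ref{it:lem:rep-thm:par}), push one copy through the permeable output $\oprfx r {z, p'}$ and into the inner replication $\riproc z s {\Qencoabs \enco {M_2} s}$ with the remaining replication theorems, and then apply the induction hypothesis to $M_1$ and to $M_2$, using $(\app {M_1} {M_2}) \sub N x = \app {M_1 \sub N x} {M_2 \sub N x}$.

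The delicate case is $M = x$, where $M \sub N x = N$ and the encoding must actually perform the substituting communication. Starting from $\res x (\oprfx x {p'} \glink p {p'} \mid \riproc x q {\Qencoabs \enco N q})$, I fire the communication between the permeable output and the replicated input with Lemma~\ref{lem:permeable-comm}(2); this is where the side conditions bite, as I must check that $x$ occurs in no input-subject position of either participant (immediate, since $x \notin \fv N$ and a wire uses its left name only in input) and that the respectfulness premise holds, which it does because $\Qencoabs \enco N {p'}$ is I-respectful with respect to the location wire by Lemma~\ref{lem:glink-subst}(\ref{it:lem:glink-subst:cont}). The step yields $\res x (\riproc x q {\Qencoabs \enco N q} \mid \res w (\Qencoabs \enco N w \mid \glink p w))$; I then garbage-collect the now-unused server by Lemma~\ref{lem:rep-thm}(\ref{it:lem:rep-thm:gc}) and finish with $\res w (\glink p w \mid \Qencoabs \enco N w) \gexp \Qencoabs \enco N p$, an instance of Lemma~\ref{lem:glink-subst}(\ref{it:lem:glink-subst:cont}). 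The main obstacle is therefore not the inductive skeleton but the discipline of the side conditions: at each use of a replication theorem I must know that $x$ sits only in output-subject position, and at each use of a permeable-prefix law I must supply the appropriate I- or O-respectfulness, while keeping track of the freshly exported location so that the final wire law renames it back to $p$.
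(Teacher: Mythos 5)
Your proposal is correct and follows essentially the same route as the paper's proof: induction on $M$, with garbage collection for variables other than $x$, the structural laws for permeable prefixes plus the induction hypothesis for abstraction, the replication theorems (parallel and replicated-input) plus the induction hypothesis for application, and, in the key case $M = x$, the permeable-communication law (Lemma~\ref{lem:permeable-comm}) followed by garbage collection on $x$ and the wire-substitution property of Lemma~\ref{lem:glink-subst}(\ref{it:lem:glink-subst:cont}). The side conditions you track (output-only occurrences of $x$, I-respectfulness of $\Qencoabs\enco{N}{}$ supplied by Lemma~\ref{lem:glink-subst}) are exactly those the paper relies on implicitly.
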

\begin{proof}
  By induction on the structure of \( M \) using the replication theorem.

  For the base case, we consider the case \( M = x \); if \( x \notin \fv M \), then we just need to apply the garbage collection law.
  \begin{align*}
    \res x ( \enco x p \mid  \riproc x q {\enco {N} q} )
    &= \res x ( \oprfx x {p'} \glink p {p'}  \mid \riproc x {q} {\enco {N} q} ) \\
    &\gexp \res {p'} (\glink p {p'} \mid \encoN {N} {p'}) \tag{Lemma~\ref{lem:permeable-comm}, and garbage collection on \( x \)} \\
    &\gexp \encoN {N} p \tag{Lemma~\ref{lem:glink-subst}}.
  \end{align*}

  The case  \( M = \lambda y . M' \) is a straightforward consequence of the induction hypothesis:
  \begin{align*}
    \res x ( \enco {\lambda y. M'} p \mid  \riproc x q {\enco N q})
    &= \res x ( \diprfx p {y, r} {\enco {M'} r}  \mid \riproc x {q} {\enco N q})\\
    &\equiv \diprfx p {y, r} \res x ({\enco {M'} r}  \mid \riproc x {q} {\enco N q}) \tag{Lemma~\ref{lem:permeable-scong}} \\
    &\gexp \diprfx p {y, r} \enco {M' \sub N x} r \tag{i.h.} \\
    &= \enco {(\lambda y. M') \sub N x}p.
  \end{align*}

  It is the case where \( M = \app {M_1} {M_2} \) that needs the replication theorem.
  If \( M = \app {M_1} {M_2}\), we have
  \begin{align*}
    &\res x \left( \enco {\app {M_1} {M_2}} p \mid \riproc x q {\enco N q} \right) \\
    &\scong \resb {x,q}
        (\enco {M_1} q \mid \oprfx q {y, p'} \left( \riproc y r {\enco {M_2} r} \mid \glink p {p'}\right) \mid \riproc x q {\enco N q}) \\
    &\sbisim  \res q
      \begin{aligned}[t]
        (&\res x \left(\enco {M_1} q \mid \riproc x q {\enco N q}  \right) \mid\\
        &\oprfx q {y, p'} \left( \res x \left(\riproc y r {\enco {M_2} r} \mid \riproc x q {\enco N q} \right) \mid \glink p {p'}\right))
      \end{aligned}  \tag{replication theorem for parallel composition} \\
    &\sbisim  \res q
      \begin{aligned}[t]
        (&\res x \left(\enco {M_1} q \mid \riproc x q {\enco N q}  \right) \mid \\
        &\oprfx q {y, p'} \left( \riproc y r {\res x \left(\enco {M_2} r \mid \riproc x q {\enco N q} \right)} \mid \glink p {p'}\right))
      \end{aligned}  \tag{replication theorem for replicated input}  \\
    &\gexp \res q \left(\enco {M_1 \sub N x} q \mid \oprfx q {y, p'} \left( \riproc y r {\enco {M_2 \sub N x} r} \mid \glink p {p'}\right) \right) \tag{i.h.} \\
    &= \enco {(\app {M_1} {M_2}) \sub N x} p \tag*{\qedhere}
  \end{align*}
\end{proof}

\begin{thm}
  \label{thm:genc-validates-beta}
  If \( M \red N \), then \( \Qencoabs \enco M p \gexp \Qencoabs \enco N p \).
\end{thm}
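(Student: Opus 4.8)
The plan is to prove the statement by induction on the derivation of \( M \red N \), following the four rules \( [\beta], [\mu], [\nu], [\xi] \) that generate full \( \beta \)-reduction. Only the axiom \( [\beta] \) is computationally interesting; the three congruence rules reduce at once to the induction hypothesis because \( \gexp \) is a precongruence in \piI{} and the encoding is compositional.

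For the base case I would take \( M = \app {(\lambda x. M_0)} {N_0} \) and \( N = M_0 \sub {N_0} x \), assuming \( x \notin \fv {N_0} \) by the Barendregt convention. Unfolding the encoding of the application and then of the abstraction gives a redex at a private location \( q \), between the permeable input \( \diprfx q {x, q'} \Qencoabs \enco {M_0} {q'} \) and the permeable output \( \oprfx q {x, q'} (\riproc x r {\Qencoabs \enco {N_0} r} \mid \glink p {q'}) \) (after \( \alpha \)-renaming the bound names of the two prefixes so that they agree). I would then apply the permeable-communication law, Lemma~\ref{lem:permeable-comm}(1), to consume this redex, obtaining up to \( \scong \) a process of the shape \( \res x (\res {q'} (\glink p {q'} \mid \Qencoabs \enco {M_0} {q'}) \mid \riproc x r {\Qencoabs \enco {N_0} r}) \), where \( \res {q'} \) has been pushed inward since \( q' \) is fresh. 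Next I would rewrite the inner \( \res {q'} (\glink p {q'} \mid \Qencoabs \enco {M_0} {q'}) \) to \( \Qencoabs \enco {M_0} p \) using the location-wire substitution law, Lemma~\ref{lem:glink-subst}(1), and finally apply the Substitution Lemma, Lemma~\ref{lem:genc-subst}, to \( \res x (\Qencoabs \enco {M_0} p \mid \riproc x r {\Qencoabs \enco {N_0} r}) \) to reach \( \Qencoabs \enco {M_0 \sub {N_0} x} p \). Composing these expansions by transitivity of \( \gexp \) yields \( \Qencoabs \enco {\app {(\lambda x. M_0)} {N_0}} p \gexp \Qencoabs \enco {M_0 \sub {N_0} x} p \).

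For the congruence cases I would invoke precongruence of \( \gexp \) directly. If the last rule is \( [\xi] \), so \( M = \lambda x. M_0 \red \lambda x. M_0' = N \) with \( M_0 \red M_0' \), the induction hypothesis gives \( \Qencoabs \enco {M_0} q \gexp \Qencoabs \enco {M_0'} q \), and closing under the permeable-input context \( \diprfx p {x, q} \) — a derived context built only from restriction, a prefix, wires, and parallel composition, all of which preserve \( \gexp \) — gives the claim. The rules \( [\nu] \) and \( [\mu] \) are analogous: for \( [\nu] \) I close the induction hypothesis under the application context \( \res q ((\cdot) \mid \oprfx q {x, p'} (\ldots)) \); for \( [\mu] \), where the reduced subterm sits under a replicated input, I first close under \( \riproc x r {(\cdot)} \) and then under the surrounding context.

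The only real work is concentrated in the base case: matching the bound names across the two permeable prefixes before communication, and discharging the respectfulness side-conditions of Lemma~\ref{lem:permeable-comm}(1). Those side-conditions amount exactly to I-respectfulness of the encoded body with respect to the location wire and O-respectfulness with respect to the variable wire, both of which are delivered by Lemma~\ref{lem:glink-subst}; once they are lined up the base case is a mechanical chain of expansions, and the congruence cases carry no difficulty beyond the appeal to precongruence.
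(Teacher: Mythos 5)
Your proposal is correct and takes essentially the same route as the paper's own proof: both reduce to the \( \beta \)-axiom case by precongruence of \( \gexp \), consume the redex at the private location via Lemma~\ref{lem:permeable-comm} with the respectfulness side-conditions discharged by Lemma~\ref{lem:glink-subst}, eliminate the residual location wire by Lemma~\ref{lem:glink-subst}(1), and conclude with the Substitution Lemma~\ref{lem:genc-subst}. The only differences are presentational (your explicit induction on the derivation and \( \alpha \)-matching of the two prefixes' bound names, where the paper keeps them distinct and carries the renaming \( \sub y x \) through the chain).
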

\begin{proof}
  It suffices to show that \( \enco {\app {(\lambda x. M)} N} p \gexp \enco {M \sub N x} p \) because the other cases follow from the precongruence of \( \gexp \).
  We have
  \begin{align*}
    \enco {\app {(\lambda x. M)} N} p
    &= \res q \left(\diprfx q {x, r} \enco M r \mid \oprfx q {y, p'} \left( \riproc y {r'} {\enco N {r'}}  \mid \glink p {p'} \right) \right) \qquad (\text{\( q \) is fresh}) \\
    &\gexp  (\res {y, p'}) \left(\enco {M \sub y x} {p'} \mid \riproc y {r'} {\enco N {r'}}  \mid \glink p {p'} \right) \tag{Lemma~\ref{lem:permeable-comm} and~\ref{lem:glink-subst}} \\
    &\gexp \res y \left(\enco {M \sub y x} p \mid \riproc y {r'} {\enco N {r'}} \right) \tag{Lemma~\ref{lem:glink-subst}} \\
    &\gexp \enco {M \sub N x} p \tag{Lemma~\ref{lem:genc-subst}}
  \end{align*}
  as desired.
\end{proof}

Since  bisimilarity  is a congruence in \piI{} and our encoding is compositional, the validity of \( \beta \)-reduction implies that the equivalence induced by the encoding is a \( \lambda \)-theory.
\begin{cor}
\label{cor:lambda-theory}
  Let \( {=_\pi} \defeq \{ (M, N) \mid \Qencoabs \enco M {} \bisim \Qencoabs \enco N {} \} \).
  Then \( =_\pi \) is a \( \lambda \)-theory, that is, a congruence on \( \lambda \)-terms that contains \( \beta \)-equivalence.
\end{cor}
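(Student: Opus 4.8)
The plan is to verify the two defining properties of a $\lambda$-theory separately: that $=_\pi$ is a congruence on $\lambda$-terms, and that it contains $\beta$-equivalence. That $=_\pi$ is an equivalence relation requires no work, since $\bisim$ is reflexive, symmetric and transitive, and these properties transfer pointwise through the definition $M =_\pi N \iff \Qencoabs \enco M {} \bisim \Qencoabs \enco N {}$. So the two substantive points are compatibility with the term constructors and containment of the one-step reduction $\red$.

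For congruence, I would exploit the fact that the encoding $\Qencoabs$ is \emph{compositional}: the encoding of a compound term is a fixed \piI{} context filled with the encodings of its immediate subterms. Concretely, $\Qencoabs \enco {\lambda x . M} p = \diprfx p {x, q} \Qencoabs \enco M q$ places $\Qencoabs \enco M {}$ under the (desugared) permeable-input context, while $\Qencoabs \enco {\app M N} p = \res q (\Qencoabs \enco M q \mid \oprfx q {x, p'} (\riproc x r {\Qencoabs \enco N r} \mid \glink p {p'}))$ places $\Qencoabs \enco M {}$ and $\Qencoabs \enco N {}$ in fixed contexts built from restriction, parallel composition, (permeable) prefixes and wires. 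Since $\bisim$ is a congruence in \piI{} for all these operators, $\Qencoabs \enco M {} \bisim \Qencoabs \enco {M'} {}$ entails $\Qencoabs \enco {\lambda x . M} {} \bisim \Qencoabs \enco {\lambda x . M'} {}$, and $\Qencoabs \enco M {} \bisim \Qencoabs \enco {M'} {}$ together with $\Qencoabs \enco N {} \bisim \Qencoabs \enco {N'} {}$ entails $\Qencoabs \enco {\app M N} {} \bisim \Qencoabs \enco {\app {M'} {N'}} {}$. This yields closure under abstraction and application, hence congruence.

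For containment of $\beta$-equivalence, recall that ${=_\beta}$ is by definition the smallest congruence on $\lambda$-terms containing $\red$. By Theorem~\ref{thm:genc-validates-beta}, $M \red N$ implies $\Qencoabs \enco M p \gexp \Qencoabs \enco N p$, and since $\gexp \subseteq \bisim$ we obtain $M =_\pi N$; thus ${\red} \subseteq {=_\pi}$. As $=_\pi$ is a congruence by the previous step, it contains the smallest congruence generated by $\red$, namely ${=_\beta}$. Together with the equivalence and congruence properties, this shows that $=_\pi$ is a $\lambda$-theory.

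The only delicate point is the abstraction case of the congruence argument, where the bound variable $x$ becomes a bound \piI{} name: one must ensure that congruence of $\bisim$ under input prefix and restriction — which is exactly what governs the desugaring of the permeable prefix $\diprfx p {x, q} {-}$ and binds both $x$ and the location $q$ — suffices to preserve the relation. Since permeable prefixes and wires are built only from the primitive \piI{} operators, for which $\bisim$ is a (pre)congruence, no argument beyond invoking this congruence is needed, and the compositionality of $\Qencoabs$ does the rest.
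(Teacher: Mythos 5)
Your proof is correct and follows exactly the paper's (very brief) argument: the paper derives the corollary in one sentence from the congruence of $\bisim$ in \piI{}, the compositionality of $\Qencoabs$, and Theorem~\ref{thm:genc-validates-beta}, which is precisely the decomposition you spell out. Your additional care about the abstraction case and the generation of $\beta$-equivalence as the smallest congruence containing $\red$ is a faithful elaboration of what the paper leaves implicit.
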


\begin{rem}
From a \( \lambda \)-theory,  a
$\lambda$-model can be extracted~\cite{Barendregt84}, hence Corollary~\ref{cor:lambda-theory} implies that we can
 construct a
$\lambda$-model out of the process terms.
The domain of the model would be
 the processes that are in the image of the encoding, quotiented with bisimilarity.
We could not define  the domain of the model  out  of all process
terms (as in~\cite{Sangiorgi00}, as opposed to the processes in the image of the encoding)
 because our proofs rely on Lemma~\ref{lem:glink-subst}, and such a
 lemma cannot be extended to the set of all processes.
\end{rem}

\section{Optimised Encoding}
\label{sec:oenc}
We introduce an optimised version of the
abstract encoding, which 
removes certain `administrative steps' on the process terms.
This will allow us to have a sharper operational correspondence
between  $\lambda$-terms and processes, which  
 will be needed in  proofs in later sections.
As in the previous section,
we work with abstract wires, only assuming 
the requirements in Definition~\ref{ax:glink-piI}.

To motivate the need of the optimised encoding, let us consider the encoding of a term \( \app {(\app x M)} N \):
\[
 \resb {p_0, p_1}(
\begin{aligned}[t]
 &\oprfx x {p_0'} \glink {p_0'} {p_0}  \\
&\mid\oprfx {p_0} {x_1. p_1'}
 \left( \riproc{x_1}{r_1} {\enco M} {r_1} \mid \glink {p_1} {p_1'}\right)  \\
 &\mid \oprfx {p_1}{x_2, p_2}\left(\riproc{x_2}{r_2}
   {\enco N} {r_2}  \mid \glink p {p_2}\right))
\end{aligned}
\]
This process has, potentially (i.e., depending on the concrete
instantiations of the wires),  some initial administrative reductions.
For instance, the output at \( p_1 \) may interact with the input end
of the wire   \( \glink {p_1} {p_1'} \).
\begin{figure*}[!t]
  \begin{align*}
  \oenco x p &\defeq \oprfx x {p'} \glink p {p'} \\
  \oenco {\lambda x . M} p &\defeq  \diprfx p {x, q} \oenco M q \\
  \oenco {\app x {M_1 \cdots M_n}} p &\defeq \oprfx x {p_0}  \oencoN {p_0} p {\oenco {M_1} {} \cdots \oenco {M_n}{}} \\
  \oenco {\app {(\lambda x. M_0)} {M_1 \cdots M_n}} p &\defeq \res {p_0} \left(\diprfx {p_0}{x, q} \oenco {M_0} q \mid  \oencoN {p_0} p {\oenco {M_1} {} \cdots \oenco {M_n}{}} \right) \\
  \oencoN {p_0} p {\oenco {M_1} {} \cdots \oenco {M_n}{}}  &\defeq
\begin{array}[t]{l}
 \oprfx {p_0}{x_1,  p_1} \cdots \oprfx {p_{n - 1}}{x_n, p_n} 
\\
\left( \riproc {x_1} {r_1} {\oenco {M_1} {r_1}} \mid \cdots \mid \riproc {x_n} {r_n} {\oenco {M_n} {r_n}}  \mid \glink p {p_n} \right)
\end{array}
   \end{align*}
\caption{Optimised encoding. (The number \( n \) is greater than \( 0 \) in the last three cases.)}
\label{fig:oenc}
\end{figure*}

In the optimised encoding  \( \Qoenco \),   in
Figure~\ref{fig:oenc},  
 any initial reduction of a process
has a  direct
 correspondence with a (strong call-by-name) reduction of the source
 $\lambda$-term.
With respect to
the unoptimised encoding $\QencoA$,
the novelties are in the clauses for
application, where the case of a head normal form 
$\app x {M_1 \cdots M_n}$ (for $n\geq 1$) and of an application 
$\app {(\lambda x. M_0)} {M_1 \cdots M_n}$
with a head redex are distinguished.
In both cases, 
$  \oencoN {p_0} p {\oenco {M_1} {} \cdots \oenco {M_n}{}}$ is used
for a
compact representation of the encoding of the trailing arguments
$M_1,\ldots,M_n$, as a sequence of nested permeable prefixes and a
bunch of replications embracing the terms $M_i$.

Analogous properties to those in Section~\ref{sec:abs-enc} for the unoptimised encoding $\QencoA$  hold for \( \Qoenco \).
For instance, \( \Qoenco \) validates \( \beta \)-reduction (Lemma~\ref{lem:gen-oenc-validates-beta}).
Using such properties, and reasoning by induction of the structure of
a $\lambda$-term, we
can  prove that \( \Qoenco \) is indeed an optimisation.
\begin{restatable}{lem}{genOencValidatesBeta}
  \label{lem:gen-oenc-validates-beta}
  If \( M \red N \), then \( \oenco M p \gexp \oenco N p \).
\end{restatable}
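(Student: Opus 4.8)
The plan is to redo, for the optimised encoding $\Qoenco$, the development that Section~\ref{sec:abs-enc} carried out for $\QencoA$. First I would establish the optimised counterparts of Lemma~\ref{lem:glink-subst} (wires behave as substitutions: $\res p(\glink q p\mid\oenco M p)\gexp\oenco M q$ and $\res x(\glink x y\mid\oenco M p)\gexp\oenco{M\sub y x}p$) and of Lemma~\ref{lem:genc-subst} (if $x\notin\fv N$ then $\res x(\oenco M p\mid\riproc x q{\oenco N q})\gexp\oenco{M\sub N x}p$). Both are proved by induction on $M$ exactly as before, using Definition~\ref{ax:glink-piI}, the permeable-prefix laws (Lemmas~\ref{lem:permeable-comm} and~\ref{lem:permeable-scong}) and the replication theorems (Lemma~\ref{lem:rep-thm}); the only difference from Section~\ref{sec:abs-enc} is that the application clauses now range over a whole spine rather than over a single binary node.

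A word of warning about the strategy: unlike for $\QencoA$, I would \emph{not} argue by induction on the derivation of $M\red N$ and discharge the congruence rules by precongruence of $\gexp$. The reason is that $\Qoenco$ is not compositional at the binary application node — contracting the head redex of $\app{(\lambda x.M_0)}{M_1\cdots M_n}$ produces $\app{(M_0\sub{M_1}x)}{M_2\cdots M_n}$, whose spine shape, and hence whose encoding clause, typically differs. Pushing such a step through rule $[\nu]$ would only exhibit a common $\gexp$-upper bound for the two sides, yielding $\bisim$ but not the oriented $\gexp$ we need. Instead I would induct on the \emph{structure} of $M$ and split on the location of the contracted redex. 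If $M=\lambda y.M_0$, the redex is in $M_0$ and, since $\oenco{\lambda y.M_0}p=\diprfx p{y,q}{\oenco{M_0}q}$ and $\gexp$ is preserved by the permeable-input prefix, the claim follows from the induction hypothesis. If $M=\app{M_h}{M_1\cdots M_n}$ and the redex lies inside the body of $M_h$ (when $M_h$ is an abstraction) or inside some $M_i$, the encoding clause is unchanged, so the induction hypothesis and precongruence through the fixed surrounding context suffice. All these cases give a clean $\gexp$.

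The main case is when the contracted redex is the head redex itself, i.e.\ $M=\app{(\lambda x.M_0)}{M_1\cdots M_n}\red\app{(M_0\sub{M_1}x)}{M_2\cdots M_n}=N$, which I would compute directly. The head permeable input $\diprfx{p_0}{x,q}{\oenco{M_0}q}$ communicates with the outermost permeable output of $\oencoN{p_0}p{\oenco{M_1}{}\cdots\oenco{M_n}{}}$; by Lemma~\ref{lem:permeable-comm}, and after floating the replicated server of $M_1$ out of the remaining permeable outputs with the structural laws (Lemma~\ref{lem:permeable-scong}), this is a $\gexp$-step to
\[
\resb{x,q}\left(\oenco{M_0}q\mid\riproc x r{\oenco{M_1}r}\mid\oencoN q p{\oenco{M_2}{}\cdots\oenco{M_n}{}}\right),
\]
where the residual spine for $M_2\cdots M_n$ now enters at $q$ and $x$ is the server name for $M_1$. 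Applying the optimised substitution lemma to $\res x(\oenco{M_0}q\mid\riproc x r{\oenco{M_1}r})$ rewrites this block to $\oenco{M_0\sub{M_1}x}q$, leaving $\res q(\oenco{M_0\sub{M_1}x}q\mid\oencoN q p{\oenco{M_2}{}\cdots\oenco{M_n}{}})$. It then only remains to fold this generic application of $M_0\sub{M_1}x$ to its trailing arguments into $\oenco N p$, i.e.\ to prove the spine law
\[
\res q\left(\oenco L q\mid\oencoN q p{\oenco{M_2}{}\cdots\oenco{M_n}{}}\right)\gexp\oenco{\app L{M_2\cdots M_n}}p .
\]

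This spine law is the crux. I would prove it by a secondary induction on the head of $L$. When $L$ is an abstraction the two sides are equal up to $\alpha$-conversion, directly by the head-redex clause of the encoding. Otherwise the encoding of $L$ terminates with a spine whose tail wire reaches the restricted location $q$, and this tail must be concatenated with the second fragment $\oencoN q p{\oenco{M_2}{}\cdots\oenco{M_n}{}}$; the concatenation is discharged by transitivity of wires (item~\ref{it:ax:glink-piI:trans} of Definition~\ref{ax:glink-piI}) together with the permeable-output substitution axiom (item~\ref{it:ax:glink-piI:subst-dout}), which consume the tail wire against the head of the second fragment, the degenerate instance $n=1$ being just the optimised wire-substitution lemma. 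Since every step above is an expansion, so is their composite, which yields $\oenco M p\gexp\oenco N p$ with the correct orientation. The genuinely delicate point — and the reason for inducting on the term rather than on the reduction derivation — is precisely this splicing of spine fragments through a restricted location: it is where the non-compositional, spine-grouped shape of $\Qoenco$ has to be reconciled with substitution while keeping the relation oriented as $\gexp$ rather than as mere bisimilarity.
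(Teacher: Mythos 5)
Your proposal is correct and follows essentially the same route as the paper: the paper likewise reduces everything, via precongruence of \( \gexp \), to the head-redex case \( M = \app{(\lambda x. M_0)}{M_1 \cdots M_n} \) with \emph{arbitrary} \( n \), fires the communication at \( p_0 \) by Lemma~\ref{lem:permeable-comm}, applies the optimised substitution lemma (Lemma~\ref{lem:gen-oenc-subst}), and concludes with exactly your spine law, which is the paper's Lemma~\ref{lem:gen-oenc-app-oenco-oencon} (proved, as you sketch, by case analysis on the head using the wire-substitution law for spines, Lemma~\ref{lem:glink-subst-seq-arg}, and the spine-concatenation law, Lemma~\ref{lem:gen-oencon-addition}). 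Your warning about rule \( [\nu] \) is consistent with, rather than divergent from, the paper's argument: the paper avoids precisely that pitfall by computing the head-redex case for whole spines directly, and invokes precongruence only for redexes contracted inside arguments, inside the head abstraction's body, or under a \( \lambda \), where the encoding clause is preserved.
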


\begin{restatable}{lem}{genOencIsOptimisation}
  \label{lem:gen-oenc-is-optimisation}
  \( \Qencoabs \enco M p \gexp \oenco M p\).
\end{restatable}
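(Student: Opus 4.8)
The plan is to proceed by induction on the structure of \( M \), matching the case analysis of the optimised encoding in Figure~\ref{fig:oenc}. The statement to establish is \( \Qencoabs \enco M p \gexp \oenco M p \). In the variable case \( M = x \) and the abstraction case \( M = \lambda x. N \), the two encodings are syntactically identical modulo the inductive call (for abstraction, \( \Qencoabs \enco {\lambda x. N} p = \diprfx p {x, q} \Qencoabs \enco N q \) and \( \oenco {\lambda x. N} p = \diprfx p {x, q} \oenco N q \)), so these close immediately by reflexivity of \( \gexp \), and by the induction hypothesis together with the fact that \( \gexp \) is a precongruence (so it is preserved under the permeable prefix \( \diprfx p {x, q}{-} \)).

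The real content is in the application case. First I would treat the head-variable pattern \( M = \app x {M_1 \cdots M_n} \) and then the head-redex pattern \( M = \app {(\lambda x. M_0)} {M_1 \cdots M_n} \); in both, the difference between the two encodings is that \( \QencoA \) builds up the spine of applications one argument at a time, inserting a fresh location \( q_i \), a bound output, and a location wire \( \glink {p_{i-1}} {p_i'} \) at each stage, whereas \( \Qoenco \) collapses these into the single nested block \( \oencoN{p_0} p {\oenco{M_1}{}\cdots\oenco{M_n}{}} \). So the essential step is to show that the chained location wires introduced by \( \QencoA \) can be eliminated in favour of the compact form. Concretely, I would unfold \( \Qencoabs \enco {\app N L} p \) according to its clause, rewrite the argument subterm \( L \) (or the spine of arguments) using the induction hypothesis under the relevant prefixes and replications — invoking the precongruence of \( \gexp \) and, for the replicated server \( \riproc x r {-} \), the replication theorem (Lemma~\ref{lem:rep-thm}) to push the rewriting inside — and then repeatedly apply transitivity of wires (Condition~\ref{it:ax:glink-piI:trans} of Definition~\ref{ax:glink-piI}) to splice out the intermediate location \( q \) and its accompanying wire, exactly as in the location-name case of the proof of Lemma~\ref{lem:glink-subst}. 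The transitivity law \( \res b (\glink a b \mid \glink b c) \gexp \glink a c \) is what turns the per-argument wiring of \( \QencoA \) into the single trailing wire \( \glink p {p_n} \) of the optimised block.

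I expect the main obstacle to be bookkeeping rather than conceptual: correctly iterating the transitivity/garbage-collection argument along the whole spine \( M_1, \ldots, M_n \) so that each restricted intermediate location \( p_i \) is eliminated in the right order, while keeping track of the permeable-prefix restructuring of the nested \( \oprfx {p_{i-1}}{x_i, p_i}{-} \) prefixes. Here I would lean on Lemma~\ref{lem:permeable-scong} to commute restrictions past permeable prefixes and on Lemma~\ref{lem:permeable-comm} where a restricted permeable input/output pair must be consumed, reorganising the spine into the shape of \( \oencoN{p_0} p {\cdots} \). Since \( \gexp \) is transitive and a precongruence, the chain of expansion steps composes into the single required relation; the head-redex case \( \app {(\lambda x. M_0)} {M_1 \cdots M_n} \) is handled identically to the head-variable case for the argument spine, with the additional head component \( \diprfx{p_0}{x,q}\oenco{M_0}q \) carried along unchanged under the outermost restriction \( \res{p_0}{-} \) and rewritten via the induction hypothesis applied to \( M_0 \).
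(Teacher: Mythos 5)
Your proof skeleton is the same as the paper's: structural induction on \( M \), with the variable and abstraction cases immediate, and the application case handled by first rewriting subterms via the induction hypothesis (using precongruence of \( \gexp \) and the replication theorems under the replicated argument servers) and then eliminating the restricted intermediate locations along the spine; the paper organises this elimination as an inner induction on the number \( n \) of arguments. However, the law you designate as doing the actual splicing is the wrong one, and the step fails as stated. You appeal to transitivity, law~\ref{it:ax:glink-piI:trans} of Definition~\ref{ax:glink-piI}, ``exactly as in the location-name case of the proof of Lemma~\ref{lem:glink-subst}''. In that lemma the redex really is wire-against-wire: the external wire \( \glink q p \) has its \emph{output} end \( p \) restricted, and it meets the \emph{input} end of the trailing wire \( \glink p {p'} \), so \( \res p (\glink q p \mid \glink p {p'}) \gexp \glink q {p'} \) applies. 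In the present lemma the configuration is different: unfolding \( \Qencoabs \enco {\app y {N_1 \cdots N_n}} p \), each restricted intermediate location \( q \) connects the \emph{input} end of a wire \( \glink q {q'} \) (produced by the encoding of the head, or of the inner application) with the \emph{subject of the permeable output} \( \oprfx q {x, p'} {(\cdots)} \) heading the next argument block. There is no second wire with output end \( q \), so law~\ref{it:ax:glink-piI:trans} does not match this redex, and no amount of iterating it can remove \( q \). Since the lemma must be proved for abstract wires, the only available law performing this splice is law~\ref{it:ax:glink-piI:subst-dout} (substitution of a wire into a permeable output), whose respectfulness premise is discharged by law~\ref{it:ax:glink-piI:subst-rep} (through Lemma~\ref{lem:glink-subst-oenc}) for the replicated servers, with transitivity entering only for the trailing location wire. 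This is precisely what the paper packages as Lemma~\ref{lem:glink-subst-seq-arg}, namely \( \res {p_0} \left(\glink {p_0} q \mid \oencoN {p_0} p {\cdots} \right) \gexp \oencoN q p {\cdots} \), used in the base case, and as Lemma~\ref{lem:gen-oencon-addition} (concatenation of argument blocks), used in the inner inductive step.

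Nor do your fallback tools close this gap. Desugaring the permeable output and letting the wire interact with the resulting ordinary output prefix is not available abstractly: the paper explicitly refrains from assuming any substitution law for wires against ordinary prefixes (see the discussion following Definition~\ref{ax:glink-piI}), since such a law fails for some intended instantiations. Lemma~\ref{lem:permeable-scong} is purely structural, and Lemma~\ref{lem:permeable-comm} consumes a matched permeable input/permeable output pair at a restricted name, which is not the shape of this redex (it would matter for the head-redex case only if one fired the \( \beta \)-interaction, which neither you nor the paper does in this lemma). Once law~\ref{it:ax:glink-piI:subst-dout} --- or the packaged Lemmas~\ref{lem:glink-subst-seq-arg} and~\ref{lem:gen-oencon-addition} --- replaces transitivity in your argument, the rest of your plan goes through and coincides with the paper's proof.
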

The details about the operational behaviour of
\( \oenco M p
\), and its operational correspondence with \( M \), are described in Appendix~\ref{sec:oenc-appx}.
We only report here the statements of a few important lemmas.

\begin{restatable}{lem}{genOencTauHasMatchingRed}
  \label{lem:gen-oenc-tau-has-mathcing-red}
  If \( \oenco M p \trans \tau \proc \) then there exists \( N \) such
  that \( M \snred N \) and \( \proc \gexp \oenco N p \).
\end{restatable}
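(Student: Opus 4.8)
The plan is to prove the statement by induction on the structure of \( M \), performing a case analysis that follows the four clauses defining \( \Qoenco \) and, in each case, enumerating \emph{every} \( \tau \)-transition that \( \oenco M p \) can perform. The guiding intuition is that the optimisation was designed precisely so that each \( \tau \) of the process mirrors one \( \snred \)-step of the source term; the work is to make this exhaustive. I would first dispose of the two inert cases. When \( M = x \), desugaring \( \oenco x p = \oprfx x {p'} \glink p {p'} \) yields \( \res{p'}(\bout x {p''}. \glink{p'}{p''} \mid \glink p {p'}) \), whose only actions are the visible output on the free name \( x \) and the visible input of the wire on the free name \( p \) (a lone wire performs no \( \tau \) by condition~\ref{it:ax:glink-piI:no-tau} of Definition~\ref{ax:glink-piI}); hence there is no \( \tau \)-transition and the implication holds vacuously. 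The case \( M = \app x {M_1 \cdots M_n} \) is the same: the permeable output is on the free head variable \( x \), so the guarded wire \( \glink {p_0} {p_0'} \) that would expose an input on the restricted name \( p_0 \) never becomes active, the trailing arguments sit under replicated inputs on the bound names \( x_i \), and no internal synchronisation is possible; thus \( \oenco {\app x {M_1 \cdots M_n}} p \) also has no \( \tau \)-transition.

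For \( M = \lambda x. M' \) I would argue that, since the permeable input \( \diprfx p {x, q} \) is on the free location \( p \) and the wires it introduces are guarded underneath the ordinary input on \( p \), every \( \tau \) of \( \oenco {\lambda x. M'} p = \diprfx p {x, q} \oenco {M'} q \) arises from an internal move of the continuation \( \oenco {M'} q \); formally this propagation follows from \LTS{par} and \LTS{res} on the desugared form. Given such a move \( \oenco {M'} q \trans \tau \proc' \), the induction hypothesis yields \( N' \) with \( M' \snred N' \) and \( \proc' \gexp \oenco {N'} q \); then \( \oenco {\lambda x. M'} p \trans \tau \diprfx p {x, q} \proc' \), and since \( \gexp \) is a precongruence in \piI{} (preserved by the permeable-prefix context, which is built from restriction, guarded parallel composition, and prefixing) we get \( \diprfx p {x, q} \proc' \gexp \diprfx p {x, q} \oenco {N'} q = \oenco {\lambda x. N'} p \). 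Closing with the \( [\xi] \) rule, \( \lambda x. M' \snred \lambda x. N' \), settles the case.

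The substantive case is the head redex \( M = \app {(\lambda x. M_0)} {M_1 \cdots M_n} \), where \( \oenco M p = \res {p_0}(\diprfx {p_0}{x, q} \oenco {M_0} q \mid \oencoN {p_0} p {\oenco {M_1}{} \cdots \oenco {M_n}{}}) \) contains, for the first time, a permeable input and a permeable output sharing the \emph{restricted} subject \( p_0 \). Here the \( \tau \)-transitions split into exactly two families. (i) The communication on \( p_0 \): this is the head \( \beta \)-reduction; using Lemma~\ref{lem:permeable-comm}(1) to consume the two matching permeable prefixes, together with the substitution reasoning underlying Lemma~\ref{lem:gen-oenc-validates-beta} (i.e.\ that the replicated server installed for the bound variable realises \( M_0 \sub {M_1} x \)), the derivative \( \proc \) satisfies \( \proc \gexp \oenco {\app {(M_0 \sub {M_1} x)} {M_2 \cdots M_n}} p \), while \( M \snred \app {(M_0 \sub {M_1} x)} {M_2 \cdots M_n} \) by \( [\beta] \) followed by repeated \( [\nu] \). (ii) An internal move of \( \oenco {M_0} q \) underneath the permeable input, handled by the induction hypothesis exactly as in the abstraction case, the matching source reduction being obtained by \( [\xi] \) and then repeated \( [\nu] \).

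The main obstacle is the exhaustiveness of (i) and (ii): I must rule out any spurious \( \tau \) produced by the desugared wires and confirm that the \( p_0 \)-communication is the unique productive synchronisation before it fires. The key facts are that, prior to that communication, the wires \( \glink x {x'}, \glink {q'} q \) contributed by the input side and \( \glink {x_1'} {x_1}, \glink {p_1} {p_1'} \) contributed by the output side are all guarded by their respective prefixes; that the variable bound by the permeable input is kept distinct (by the Barendregt convention) from the server names \( x_i \) of \( \oencoN {p_0} p {\cdots} \), so \( \oenco {M_0} q \) cannot prematurely call an argument; and that a lone wire emits no \( \tau \) (condition~\ref{it:ax:glink-piI:no-tau}). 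Making this enumeration precise on the desugared LTS, rather than merely appealing informally to the permeable-prefix laws, is the delicate part, and is presumably where Appendix~\ref{sec:oenc-appx} invests most of its effort.
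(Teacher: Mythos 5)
Your proposal is correct and follows essentially the same route as the paper's proof: induction on the structure of \( M \), with the variable and head-variable cases dispatched as \( \tau \)-free, the abstraction case handled by the induction hypothesis under the permeable-input context, and the head-redex case split into the interaction at \( p_0 \) (resolved via Lemma~\ref{lem:permeable-comm} and the substitution lemma, i.e.\ Lemma~\ref{lem:gen-oenc-subst}) versus an internal move of \( \oenco{M_0}{q} \). Your additional care in ruling out spurious \( \tau \)-actions from guarded wires makes explicit an exhaustiveness point the paper leaves implicit, but it does not change the argument.
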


\begin{restatable}{lem}{genOencInputMustBeAtp}
  \label{lem:gen-oenc-input-must-be-at-p}
  If \( \oenco M p \trans{\act} \proc \) and \( \act \) is an input action, then \( \act \) is an input at \( p \).
\end{restatable}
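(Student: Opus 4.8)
The plan is to proceed by induction on $M$, following the four syntactic cases that define the optimised encoding (variable, abstraction, variable-headed application $\app x {M_1 \cdots M_n}$, and redex-headed application $\app {(\lambda x. M_0)} {M_1 \cdots M_n}$), the recursive calls being always on proper subterms. In each case I would desugar every permeable prefix into ordinary prefixes, wires and restrictions, and then classify each single-step input transition $\oenco M p \trans \act \proc$ by tracing its derivation back to the prefix or wire that fires it. The ingredients are: condition~\ref{it:ax:glink-piI:fn} of Definition~\ref{ax:glink-piI}, by which a wire $\glink a b$ can only input at $a$; the observation that bound outputs are not inputs and that the body of a replicated input $\riproc {x_i} {r_i} {\cdots}$ is guarded, so it contributes nothing to a single step; rule \LTS{res}, which blocks any input whose subject is a restricted name; and the induction hypothesis, applied to the sub-encoding sitting underneath a permeable input.

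For $M = x$ we have $\oenco x p = \res {p'} (\bout x {p''}. \glink {p'} {p''} \mid \glink p {p'})$, and the only unguarded component capable of an input is the result wire $\glink p {p'}$, which inputs only at $p$ (the wire $\glink {p'} {p''}$ is guarded by the output prefix). For $M = \lambda x. M'$, desugaring the permeable input exposes a top input prefix at $p$ in parallel with $\oenco {M'} q$ under $\resb {x, q}$; since the permeable input lets the actions of $\oenco {M'} q$ surface, I invoke the induction hypothesis to learn that every input of $\oenco {M'} q$ is at $q$, and $q$ is restricted, hence blocked by \LTS{res}. The only surviving free input is thus at $p$.

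Both application cases hinge on analysing $\oencoN {p_0} p {\cdots}$, a nest of permeable outputs $\oprfx {p_0} {x_1, p_1} \cdots \oprfx {p_{n-1}} {x_n, p_n}$ wrapped around $\riproc {x_1} {r_1} {\cdots} \mid \cdots \mid \riproc {x_n} {r_n} {\cdots} \mid \glink p {p_n}$. After desugaring, each permeable output contributes only an output (on $p_0$ or on a bound name) while keeping its internal wires guarded; each replicated server can only fire an input at the bound name $x_i$; and the argument bodies $\oenco {M_i} {r_i}$ are guarded behind those inputs, so no induction hypothesis is needed for them. The single free input-capable component is the result wire $\glink p {p_n}$, whose subject is $p$ and which surfaces through all the restrictions $\resb {x_i, p_i}$ and $\res {p_0}$ (none of which binds $p$). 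For the variable-headed case $\oprfx x {p_0}\, \oencoN {p_0} p {\cdots}$ this completes the analysis; for the redex-headed case $\res {p_0} (\diprfx {p_0} {x, q} \oenco {M_0} q \mid \oencoN {p_0} p {\cdots})$ I additionally note that the top permeable input fires at the restricted name $p_0$, and that by the induction hypothesis the inputs of $\oenco {M_0} q$ are at $q$, again restricted.

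The main obstacle is the \emph{permeability} of the prefixes: a permeable prefix does not block the actions beneath it, so I cannot reason by inspecting only the outermost prefix, but must show that every input that can bubble up from arbitrarily deep is either fired by the result wire (hence at $p$) or fired at a name that the surrounding restrictions keep bound. What makes this manageable is the asymmetry between the two kinds of recursive descent: the only un-guarded descent happens under a permeable \emph{input}, where the induction hypothesis applies and yields the restricted subject $q$; every recursive occurrence inside $\oencoN {p_0} p {\cdots}$ is instead shielded behind a replicated-input guard, and is therefore inert within a single transition.
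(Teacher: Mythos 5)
Your proposal is correct and follows essentially the same route as the paper's proof: induction on the structure of $M$ over the four clauses of the optimised encoding, using condition~\ref{it:ax:glink-piI:fn} of Definition~\ref{ax:glink-piI} to pin the wire's input subject, the fact that argument bodies sit behind replicated-input guards, restriction to block inputs at bound names, and the induction hypothesis only for the subterm under a permeable input (abstraction and redex-headed cases). The extra detail you give by explicitly desugaring the permeable prefixes is exactly what the paper's phrase ``the only free name that may appear in an input occurrence which is not guarded by a (non-permeable) prefixing is $p$'' compresses.
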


If \( M = \lambda x. M' \), then its process encoding \( \oenco {M} p \) can always do an input at \( p \),
as the encoding of a abstraction begins with an input  at its location name.
Lemma~\ref{lem:gen-oenc-input-must-be-at-p} says that such an input at $p$ is indeed the only possible input;
that is,  we cannot observe an inner \( \lambda \)-abstraction (i.e., a $\lambda$-abstraction in \( M' \)).

Later, when we relate our encoding to trees of the  \( \lambda \)-calculus,
the notions of head normal form and (un)solvable term  will be important. 
Hence some of our operational correspondence results concern them.
\begin{restatable}{lem}{genOencOutputImpliesSolvable}
\label{lem:gen-oenc-output-implies-solvable}
  Let \( M \) be a \( \lambda \)-term.
  If \( \oenco M p \trans {\bout x q} \proc \) for some \( \proc \),
  then \( M \)  has a head normal form
\( \lambda \seq y . \app x \seq M \), for some (possibly
  empty) sequence of terms \( \seq M \) and variables \( \seq y \)
  with
  \( x \notin \seq y \).
\end{restatable}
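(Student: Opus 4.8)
The plan is to prove the statement by structural induction on $M$, case-analysing $M$ according to the four shapes that drive the clauses of $\Qoenco$ in Figure~\ref{fig:oenc}: a variable, an abstraction $\lambda y. N$, a head-variable application $\app x{\seq M}$, or a head-redex application $\app{(\lambda z. M_0)}{\seq M}$ with $\seq M = M_1 \cdots M_n$ and $n \ge 1$. Every $\lambda$-term falls under exactly one of these (peel off leftmost applications until the head is a variable or an abstraction), and in the last two cases the subterms $M_0, M_1, \ldots, M_n$ are strictly smaller, so the induction is well founded. Throughout I use that the subject of a free output is a free name of the emitting process and that $\fn{\oenco M p} = \fv M \cup \{p\}$; hence $x \in \fv M$ and, in particular, $x$ differs from every name restricted or bound by the encoding.

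In the variable case $\oenco x p = \oprfx x{p'} \glink p{p'}$ is a permeable output whose body $\glink p{p'}$ only inputs at the location name $p$ (Definition~\ref{ax:glink-piI}(\ref{it:ax:glink-piI:fn})); its only variable-subject output is therefore the output at $x$, and $M = x$ is already a head normal form with head variable $x$ and empty $\seq y$. The head-variable case $\oenco{\app x{\seq M}}{p} = \oprfx x{p_0} \oencoN{p_0}{p}{\oenco{M_1}{}\cdots\oenco{M_n}{}}$ is again a permeable output at $x$, guarding the blocked replicated servers $\riproc{x_i}{r_i}{\cdots}$ and permeable outputs on the (bound) location names $p_0, p_1, \ldots$; so the sole free variable-subject output is at the head variable $x$, and $\app x{\seq M}$ is itself a head normal form with head variable $x$ and empty $\seq y$. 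For the abstraction $M = \lambda y. N$ we have $\oenco M p = \diprfx p{y,q} \oenco N q$; since the prefix is an input, a single output transition must originate from the body $\oenco N q$ with subject distinct from the bound names $y, q$ (an output on $y$ is blocked and can at most be internalised against the wire $\glink y{y'}$ spawned after the input at $p$). As $x$ is free, $x \neq y$, so $\oenco N q$ outputs at $x$, and the induction hypothesis gives $N \Hred \lambda\seq z.\app x{\seq L}$ with $x \notin \seq z$; hence $M \Hred \lambda y\seq z.\app x{\seq L}$ with $x \notin \{y\} \cup \seq z$, as required.

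The principal case is the head redex $M = \app{(\lambda z. M_0)}{\seq M}$, with $\oenco M p = \res{p_0}(\diprfx{p_0}{z,q} \oenco{M_0}q \mid \oencoN{p_0}{p}{\oenco{M_1}{}\cdots\oenco{M_n}{}})$. I first argue the output cannot come from the right-hand component: its enabled outputs are on the restricted location name $p_0$ (captured by $\res{p_0}$), on further bound location names, or from the blocked replicated servers, while the trailing wire $\glink p{p_n}$ only inputs at $p$; none is a free variable-subject output. Hence the output must arise from $\oenco{M_0}q$ under the permeable input $\diprfx{p_0}{z,q}$, with subject distinct from the bound names $z, q$; since $x$ is free, $x \neq z$. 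Thus $\oenco{M_0}q$ outputs at $x$, and the induction hypothesis gives $M_0 \Hred \lambda\seq w.\app x{\seq L}$ with $x \notin \seq w$. Because $x \neq z$ and head reduction commutes with substitution, contracting the head redex yields $M \hred (M_0\sub{M_1}{z})\,M_2 \cdots M_n \Hred (\lambda\seq w.\app x{\seq L\sub{M_1}{z}})\,M_2 \cdots M_n$; absorbing the remaining arguments $M_2, \ldots, M_n$ into the leading abstractions $\seq w$ (none of which is $x$) leaves a head normal form of $M$ whose head variable is $x$ and whose leading $\lambda$s avoid $x$. Each of these steps is a finite head reduction, so $M$ indeed reduces to a head normal form of the required shape.

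The delicate part is the transition bookkeeping common to all cases: justifying, from the desugaring of the permeable prefixes together with conditions (\ref{it:ax:glink-piI:fn}) and (\ref{it:ax:glink-piI:no-tau}) of Definition~\ref{ax:glink-piI} (a wire inputs on its first name, outputs on its second, and performs no $\tau$), that the only immediately enabled free output on a variable name is exactly the current head variable, every other output being blocked by a permeable or replicated prefix or captured by a restriction. The head-redex case additionally rests on the standard fact that head reduction is stable under substitution, applied with $x \neq z$ to keep $x$ the head variable across the contraction. These arguments are routine but must be carried out carefully, respecting the Barendregt convention and the sorting discipline separating variable names from location names, which is precisely what renders several of the `subject $\neq$ bound name' side conditions automatic.
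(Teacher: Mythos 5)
Your process-level analysis is sound and follows essentially the same route as the paper: the paper first proves, by the same four-case structural induction on \( M \) (Lemma~\ref{lem:gen-oenc-output-implies-hctx}), that an output at \( x \) forces \( M = \hctx[\app x {\seq M}] \) for an H-context \( \hctx \) with \( x \in \fv M \), and then invokes a purely \( \lambda \)-calculus fact (Lemma~\ref{lem:hctx-red}, stated without proof as standard) saying that any such term head-reduces to a head normal form with head variable \( x \). Your variable, head-variable and abstraction cases, and the transition bookkeeping in all four cases, match the paper's induction and are fine. The gap is in the head-redex case, where you inline the \( \lambda \)-calculus reasoning that the paper delegates to Lemma~\ref{lem:hctx-red}.

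The flawed step is the claim that \( (M_0\sub{M_1}{z})\,M_2 \cdots M_n \Hred (\lambda\seq w.\app x{\seq L \sub {M_1}{z}})\,M_2 \cdots M_n \) and that ``each of these steps is a finite head reduction''. Head reduction is the deterministic strategy contracting the redex \( \app{(\lambda y. A)}{B} \) of \( \lambda \seq u.\app{(\lambda y. A)}{B\, C_1 \cdots C_k} \); hence head steps of a function part \( A \) lift to head steps of \( \app A {\seq N} \) only while \( A \) exposes no leading \( \lambda \). As soon as the head reduction of \( M_0\sub{M_1}{z} \) produces a leading abstraction, the head redex of the whole application becomes the application of that abstraction to \( M_2 \), so head reduction interleaves argument absorption with inner steps instead of first completing the reduction of the function part. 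Concretely, for \( A = \app{(\lambda a. \lambda b. \app{(\lambda d. x)}{b})}{c} \) one has \( A \Hred \lambda b. x \), yet \( \app A {M_2} \hred \app{(\lambda b. \app{(\lambda d. x)}{b})}{M_2} \hred \app{(\lambda d. x)}{M_2} \hred x \), never passing through \( \app{(\lambda b. x)}{M_2} \). So the sequence you exhibit is a correct \( \beta \)-reduction sequence but not a head-reduction sequence, whereas the lemma, with the paper's definition of ``has a head normal form'', requires \( M \Hred \) hnf. The conclusion is still true, and your argument can be repaired in either of two ways: (i) carry out the interleaving induction (lexicographic on the length of the head reduction of the function part and the number of pending arguments), which amounts to proving the paper's Lemma~\ref{lem:hctx-red}; or (ii) keep your sequence as a \( \beta \)-reduction to a hnf whose head variable is \( x \) and whose leading binders avoid \( x \), and then appeal to standardization and confluence: head reduction of \( M \) must then terminate, and its (unique) hnf has the same leading binders and head variable, since reduction from a hnf preserves both.
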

\begin{restatable}{lem}{genOencUnsolvableNoOutput}
  \label{lem:gen-oenc-unsolvable-no-output}
  Let \( M \) be an unsolvable term.
  Then there does not exist an output action \( \act \) such that \(
  \oenco M p \wtrans \act \proc \) for some \( \proc \).
\end{restatable}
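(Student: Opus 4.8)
The plan is to reduce this weak statement to the single-step results already established, namely Lemma~\ref{lem:gen-oenc-tau-has-mathcing-red} (which matches a $\tau$ of the encoding with a strong call-by-name reduction) and Lemma~\ref{lem:gen-oenc-output-implies-solvable} (which, for a single \emph{strong} output transition, produces a head normal form). The only extra ingredient from the $\lambda$-calculus side is the standard fact that unsolvability is preserved under $\beta$-reduction: if $M$ is unsolvable and $M \wred N$, then $N$ is unsolvable as well (cf.~\cite{Barendregt84}).

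First I would establish a \emph{tracking invariant} along the leading $\tau$-sequence: for every $P_0$ with $\oenco M p \wtrans{} P_0$ there is a term $N$ such that $M \wred N$ and $P_0 \gexp \oenco N p$. I prove this by induction on the number of $\tau$-steps. The base case is $P_0 = \oenco M p$ with $N = M$, using reflexivity of $\gexp$. For the inductive step, suppose $\oenco M p \wtrans{} Q \trans\tau Q'$ with $Q \gexp \oenco N p$ and $M \wred N$. Reading $Q \gexp \oenco N p$ as $\oenco N p \lexp Q$ and applying clause~(2) of the definition of expansion to the move $Q \trans\tau Q'$, the clean encoding answers with $\oenco N p \arcap\tau P'$ and $Q' \gexp P'$. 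If $\arcap\tau$ is the identity we keep $N$; if it is a genuine step $\oenco N p \trans\tau P'$, then Lemma~\ref{lem:gen-oenc-tau-has-mathcing-red} yields $N'$ with $N \snred N'$ and $P' \gexp \oenco {N'} p$. In either case transitivity of $\gexp$ gives $Q' \gexp \oenco {N'} p$ (resp.\ $Q' \gexp \oenco N p$), and since $\snred \subseteq \red$ we still have $M \wred N'$, closing the induction.

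Then I would conclude by contradiction. Assume $\oenco M p \wtrans{\act} \proc$ with $\act$ an output action, and decompose the weak transition as $\oenco M p \wtrans{} P_0 \trans{\act} P_1 \wtrans{} \proc$. By the invariant, $P_0 \gexp \oenco N p$ for some $N$ with $M \wred N$; since $M$ is unsolvable, preservation gives that $N$ is unsolvable too. Now $\act$ is visible, so reading $P_0 \gexp \oenco N p$ again through clause~(2) of expansion, the single step $P_0 \trans{\act} P_1$ is matched by exactly one step $\oenco N p \trans{\act} P''$ (for $\act \neq \tau$, $\arcap\act$ is just $\trans\act$). Since in the encoding the subject of any output is a free variable (the candidate head variable), $\act$ has the form $\bout x q$ covered by Lemma~\ref{lem:gen-oenc-output-implies-solvable}; that lemma then forces $N$ to have a head normal form, contradicting its unsolvability. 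Hence no such $\act$ exists.

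The main obstacle is the tracking invariant. Lemma~\ref{lem:gen-oenc-tau-has-mathcing-red} only speaks about transitions emanating \emph{directly} from a clean encoding $\oenco N p$, whereas the actual reductions start from the intermediate processes $P_0$, which are merely expansions of such encodings and may carry extra administrative redexes. Threading the argument through the expansion preorder — using clause~(2) of expansion to transfer each move of $P_0$ back onto $\oenco N p$, and transitivity of $\gexp$ to re-accumulate — is exactly what lets the single-step lemmas apply; the remaining steps (preservation of unsolvability and the single-step matching of a visible action) are routine.
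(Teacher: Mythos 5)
Your proof is correct and follows essentially the same route as the paper's: decompose the weak transition into leading $\tau$-steps plus a single output, track the $\tau$-prefix using Lemma~\ref{lem:gen-oenc-tau-has-mathcing-red} together with preservation of unsolvability under reduction, and then contradict Lemma~\ref{lem:gen-oenc-output-implies-solvable}. Your explicit induction threading the intermediate processes back onto clean encodings via clause~(2) of the expansion preorder is precisely the detail that the paper's phrase ``repeatedly applying Lemma~\ref{lem:gen-oenc-tau-has-mathcing-red}'' leaves implicit, since that lemma by itself only speaks about transitions emanating from encodings, not from their expansions.
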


\begin{cor}
\label{c:solv}
If $M$ is solvable then there are input actions $\mu_1, \ldots,
\mu_n$ ($n \geq 0$) and an output action $\mu$ such that 
 \(
  \oenco M p \wtrans {\mu_1} \ldots \wtrans {\mu_n}\wtrans {\mu} P
  \), for some $P$.
\end{cor}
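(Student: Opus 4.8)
The plan is to reduce $\oenco M p$ to the encoding of a head normal form and then read off the required trace from the syntactic shape of that encoding. Since $M$ is solvable, it has a head normal form: there are $m,k\geq 0$, a head variable $y$, and terms $N_1,\ldots,N_k$ with
\[
  M \Hred H, \qquad H = \lambda x_1\cdots x_m.\,\app y {N_1\cdots N_k}.
\]
Each head-reduction step is a $\beta$-reduction step ($\hred$ is a special case of $\red$), so $M\Hred H$ decomposes into finitely many $\red$-steps. Applying Lemma~\ref{lem:gen-oenc-validates-beta} once per step and using that $\gexp$ is a transitive precongruence yields $\oenco M p \gexp \oenco H p$, and hence $\oenco M p \bisim \oenco H p$ since ${\gexp}\subseteq{\bisim}$ (the symmetric closure of ${\lexp}\subseteq{\bisim}$). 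As (weakly) bisimilar processes are weak-trace equivalent, it suffices to exhibit a weak trace of $\oenco H p$ consisting of $m$ input actions followed by one output action; the matching weak trace of $\oenco M p$ then witnesses the statement with $n=m$.

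Next I would analyse $\oenco H p$ directly from Figure~\ref{fig:oenc}. Unfolding the abstraction clause $m$ times gives
\[
  \oenco H p = \diprfx p {x_1,q_1}\diprfx {q_1}{x_2,q_2}\cdots\diprfx {q_{m-1}}{x_m,q_m}\, B,
\]
where $B = \oenco{\app y{N_1\cdots N_k}}{q_m}$ is the encoding of the head. By the remaining clauses of Figure~\ref{fig:oenc}, $B$ is a permeable output whose subject is the head variable $y$: it is $\oprfx y{p'}\glink{q_m}{p'}$ when $k=0$, and $\oprfx y{p_0}\,\oencoN{p_0}{q_m}{\oenco{N_1}{}\cdots\oenco{N_k}{}}$ when $k\geq 1$. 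Reading the transitions off the desugaring of the permeable prefixes, the nested permeable inputs must be consumed in order — the $i$-th input $\diprfx{q_{i-1}}{x_i,q_i}$ has bound subject $q_{i-1}$, so its firing stays blocked until the preceding prefix is consumed — which produces the input actions $\mu_1 = \bin p{x_1,q_1},\ldots,\mu_m = \bin{q_{m-1}}{x_m,q_m}$. After these $m$ inputs the permeable output at $y$ is enabled and fires, giving the output action $\mu$. If $y$ is free this is an output on a free variable name; if $y=x_i$ is one of the leading $\lambda$-variables, then $y$ has been received along $\mu_i$ and $\mu$ is the output on that received name — in either case $\mu$ is a genuine output action, which is all the statement requires.

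The step I expect to be the main obstacle is this second one: justifying rigorously that the desugared permeable prefixes admit exactly this transition path. One has to track the wires introduced when each permeable input fires, confirm the sequentialisation forced by the bound subjects $q_{i}$, and verify the enabling of the head output, the delicate point being that the head variable may be captured by a leading $\lambda$ so that the output occurs on a received name rather than on a free one. Everything else — the head-normal-form decomposition, the $\beta$-reduction bookkeeping reducing $\oenco M p$ to $\oenco H p$, and the final transfer of the trace through bisimilarity — is routine given Lemma~\ref{lem:gen-oenc-validates-beta} and the inclusion ${\gexp}\subseteq{\bisim}$.
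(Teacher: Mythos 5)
Your overall route is the one the paper intends (the corollary is stated without an explicit proof, as a consequence of the preceding development): solvability gives \( M \Hred H \) with \( H = \lambda x_1 \cdots x_m. \app y {N_1 \cdots N_k} \); since head reduction is a special case of \( \red \), repeated application of Lemma~\ref{lem:gen-oenc-validates-beta} and transitivity of \( \gexp \) give \( \oenco M p \gexp \oenco H p \); and a trace of \( \oenco H p \) consisting of inputs followed by an output lifts to a weak trace of \( \oenco M p \). For this last step you do not even need the detour through \( \bisim \) and trace equivalence: by Definition~\ref{def:expansion}, every transition of the right-hand side of \( \gexp \) is matched by a weak transition of the left-hand side, and this iterates along the trace. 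This first half of your argument is correct.

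The gap is where you expect it, but it is sharper than bookkeeping: the transition path you describe does not exist in the LTS. In the desugared process the names \( x_i, q_i \) of the permeable prefixes are \emph{restricted}, and by the restriction rule of the LTS (Figure~\ref{fig:LTS}) a process \( \res a {P} \) has no action mentioning \( a \); so the sugared prefix \( \diprfx {q_{i-1}} {x_i, q_i} {B_i} \) never fires as a visible action with subject \( q_{i-1} \). What actually happens is: the ordinary prefix \( \iproc p {x', q'} {(\glink {x_1} {x'} \mid \glink {q'} {q_1})} \) in the desugaring of the outermost permeable input fires, receiving \emph{fresh} names \( x', q' \); the next visible input is at \( q' \), not at \( q_1 \), and its availability is not a syntactic fact but a consequence of the wire laws: law~\ref{it:ax:glink-piI:subst-din} of Definition~\ref{ax:glink-piI}, together with respectfulness of encodings (Lemma~\ref{lem:glink-subst-oenc}), gives \( \res {q_1} (\glink {q'} {q_1} \mid \diprfx {q_1} {x_2, q_2} {B}) \gexp \diprfx {q'} {x_2, q_2} {B} \), while the variable wire \( \glink {x_1} {x'} \) is absorbed as the substitution of \( x' \) for \( x_1 \). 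Thus after each input the derivative \emph{expands the encoding of the remaining body, relocated at the received continuation name}; iterating \( m \) times, and observing that the desugared permeable output of \( \oenco {\app y {\seq N}} {q} \) has an unguarded bound output at the head name (at \( y \) if \( y \) is free, at the received copy of \( x_i \) if \( y = x_i \)), yields the trace — whose inputs are at the freshly received names rather than at the bound subjects. This is exactly the wire-tracking you flagged, and every lemma needed is in the paper; but the argument must go through these substitution laws, since reading transitions directly off the sugared prefixes assigns visible actions to restricted subjects, which the LTS forbids.
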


By Lemma~\ref{lem:gen-oenc-is-optimisation}, Corollary~\ref{c:solv} also holds for $\QencoA$. 
The converse of Corollary~\ref{c:solv} will also hold, in all three
concrete encodings that will be studied in the next section. 
However we believe  the result cannot be derived from the assumptions on
wires in Definition~\ref{ax:glink-piI}.

We conclude  by looking, as an example, at
 the unsolvable term \( \Omega \defeq \app {(\lambda x. \app x x)} \app {(\lambda x. \app x x)} \).
\begin{exa}
\label{ex:oenc-Omega}
  The process \( \oenco \Omega p \) is
  \[
    \res {p_0} ( \begin{aligned}[t]
      &\diprfx {p_0} {x, q} \oprfx x {q_0} \oprfx {q_0}
      {y_1, q_1}  \left( \riproc {y_1} {r_1} {\oenco x {r_1}} \mid
        \glink q {q_1} \right)\\
 &\mid \oprfx {p_0}{x_1, p_1} \left(
        \riproc {x_1} {r_1} {\oenco {\lambda x. \app x x} {r_1}} \mid
        \glink p {p_1} \right) )
\end{aligned}
   \]
  The only action \( \oenco \Omega p \)
 can do is a \( \tau \)-action or an input at \( p \).
  Whether the input can be performed or not
 will depend on the concrete definition of the wire \( \glink p q \).
The possibility of an 
 input action from an unsolvable of order $0$ such as $\Omega$
 is a major difference between our encoding and encodings in
 the literature, where
 the encoding of such unsolvables are usually purely divergent
 processes.
\end{exa}

\section{Concrete Wires}
\label{sec:wires}
We now examine concrete instantiations of the abstract wires in the
encoding $\QencoA$ (and its optimisation \( \Qoenco \)).
In each case we have to define the wires for  location and variable names.
 The location wires are the important ones: the definition of the variable wires will follow from them, with the goal of guaranteeing their expected properties.
 We consider three concrete  wires: \emph{\IOwires{}},
\emph{\OIwires{}}, and \emph{\Pwires{}}.
The main difference among them is in the order in which
the input and output of the location wires are performed.

\subsection{Definition of the concrete wires}
Location and variable wires will be defined by means of recursion.
In contrast with the variable wires, the location wires are non-replicated processes, reflecting the linear  usage of
such names.
We recall that the choice of a  certain kind of concrete
 wires  (\IOwires, \OIwires, or \Pwires) also
affects the definition of the permeable prefixes (as it refers to the wires), including   the permeable
prefixes that may be used within the wires themselves.
We will also show de-sugared definitions of the concrete wires, i.e., without reference to permeable prefixes.
We add a subscript ($\mathtt{IO}$, $\mathtt{OI}$,
$\mathtt{P}$) to indicate a concrete wire (as opposed to an abstract one).
For readability, in the definitions of the concrete wires 
the name parameters are instantiated (e.g., writing 
$ \link ab  \defeq  P$ rather than
$ \link*  \defeq \abs{a,b} P$).

\subsubsection*{\IOwires{}}
In the \IOwires{}, the input of a wire precedes the output.
 \begin{align*}
   \link p q &\defeq \bin p {y, p_1}. \oprfx q {x, q_1} ( \link {p_1} {q_1} \mid \link x y) \\
   \link x y &\defeq \riproc x {p} {\oprfx y {q}  \link {p} {q}}
 \end{align*}
Inlining the abbreviations for permeable prefixes (as they are, 
 in turn, defined using wires, in this specific case,
 the I-O wires),  
we obtain: 
 \begin{align*}
   \link p q &\defeq \bin p {y, p_1}.
               \begin{aligned}[t]
                 \resb{x, q_1}(&\bout q {x', q_1'}.(\link{q_1}{q_1'} \mid \link {x'}{x})  \\
                 &\mid \link {p_1} {q_1} \mid \link x y)
                 \end{aligned} \\
   \link x y &\defeq \riproc x {p} {\res q (\bout y {q'}.\link q
   {q'}  \mid \link {p} {q} )}
 \end{align*}
\IOwires{}, beginning with an input and proceeding with an output,  are similar to the
ordinary wires  in the literature, 
sometimes called  \emph{forwarders}, and used to prove properties
about asynchronous and localised $\pi$-calculi (or encodings of
them)~\cite{HoYo95,Mer00thesis,MerroSangiorgi04,Boreale98}.
An important technical difference, within location wires, is the appearance
 of a permeable prefix, in place of an ordinary prefix,
and the inner wire
$ \link {p_1} {q_1}$ that, in a forwarder, would have 
$p_1$ and $q_1$ swapped.
 The  reason for these  differences is that location wires are
used with processes that are not localised (the recipient of a location name will use it
in input, rather than in output). 
The difference also shows up in the 
semantic properties: 
forwarders in the literature are normally
used to obtain properties of O-respectfulness (Section~\ref{sec:EpiI}), with the input-end of
the wire  restricted;
in contrast, \IOwires{} will be used to obtain properties of I-respectfulness,
 with the output-end of
the wire  restricted.

In the definition above of location wires, the permeable prefix cannot be replaced by an
ordinary prefix: the transitivity  of the wires (property~\ref{it:ax:glink-piI:trans} in Definitions~\ref{ax:glink-piI})  would  be lost.

\subsubsection*{\OIwires{}}
The symmetry of \piI{} enable us to consider the dual form of (location) wire,
 with the opposite control flow, namely `from output to input':  
  \begin{align*}
    \linkO qp  &\defeq 
 {\bout q {x, q_1}. \diprfx p {y, p_1}} ( { \linkO {q_1} {p_1}} \mid {
\vlinkOI xy}) \\
    \vlinkOI x y &\defeq  \riproc x p {\oprfx y q \linkO q p}
 \end{align*}

\begin{rem}[Duality]
\label{rem:OI-wires-prefix}
If duality is taken to
mean the exchange  between input and output prefixes,
then  the set of location
\IOwires  is  the dual of the set of \OIwires. 
Indeed, the location \OIwires{} are obtained from the corresponding 
location \IOwires{} by swapping     input and   output particles; variable
wires, in contrast are left unchanged. 
This means that we obtain an \OIwire from an  \IOwire{} if
 any input $p(\tilb)$ is made into an
output $\outC p {(\tilb)}$, and conversely (moreover, accordingly, the
parameters 
of the variable wires are swapped).
\end{rem}

\subsubsection*{\Pwires{}}
In the third form of wire, the sequentiality in location wires is broken: input and output execute concurrently.
This is achieved by using, in  the definition of location wires, only  permeable prefixes.
\begin{align*}
  \ilink p q &\defeq \diprfx p {y, p_1} \oprfx q {x, q_1} ( \ilink {p_1} {q_1} \mid \olink x y) \\
  \olink x y &\defeq \riproc x p {\oprfx y q \ilink p q}
\end{align*}
Without the syntactic sugar of permeable prefixes, the  definition of the location and
variable \Pwires{}  are thus:
\begin{align*}
  \ilink p q &\defeq \resb {p_1,q_1 x, y}
               \begin{aligned}[t]
                 (&\iproc p {y', p_1'} {(\ilink {p_1'} {p_1} \mid \olink y {y'})} \mid \\
                 &\bout q {q_1', x'}.(\ilink {q_1} {q_1'} \mid  \olink {x'} x) \mid \\
                 &\ilink {p_1} {q_1} \mid \olink x y)
              \end{aligned} \\
  \olink x y &\defeq \riproc x {p} {\res q (\bout y {q'}. \ilink q {q'}  \mid  \ilink p q)}
\end{align*}

The wire \( \ilink p q \) is dual of itself: due to the use of
permeable prefixes, 
swapping input and output prefixes has no behavioural effect.

\subsection{Transitivity and well-definedness of the concrete wires}
All three wires introduced in the previous subsection are well defined:
\begin{lem}
\label{lem:all-wires-satisfy-ax}
The \IOwires{}, \OIwires{} and \Pwires{} are indeed wires; that is, they
satisfy the laws of Definition~\ref{ax:glink-piI}.
\end{lem}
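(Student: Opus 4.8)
The plan is to verify the eight conditions of Definition~\ref{ax:glink-piI} for each of the three concrete wires separately, exploiting the symmetries to avoid triplicating work. Conditions~\ref{it:ax:glink-piI:fn}, \ref{it:ax:glink-piI:no-tau} and~\ref{it:ax:glink-piI:var-link-rep} are purely syntactic and can be dispatched by inspection of the de-sugared definitions: in each case \( \glink a b \) has free names exactly \( a,b \) and uses \( a \) in input, \( b \) in output; no wire has a top-level \( \tau \)-redex (for the \IOwires\ and \OIwires\ because the leading prefix is ordinary and the inner permeable prefix is guarded; for the \Pwires\ one checks that, after desugaring, the two leading permeable prefixes share no restricted subject so their combination still has no immediate internal communication); and every variable wire \( \glink x y \) is manifestly a replicated input \( \riproc x p {\dots} \). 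The substantive content is in the transitivity law~\ref{it:ax:glink-piI:trans} and the substitution laws~\ref{it:ax:glink-piI:subst-din}--\ref{it:ax:glink-piI:subst-dout-var}.

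For transitivity I would set up a candidate relation and prove \( \res b (\glink ab \mid \glink bc) \gexp \glink ac \) using the expansion-up-to-context-and-\( \lexp \) technique of Definition~\ref{def:up-to-ctx}, since the wires are recursively defined and a single unfolding exposes an inner residual of exactly the same transitivity shape. Concretely, for the \IOwires\ the composite \( \res b (\link ab \mid \link bc) \) can only fire an input at \( a \); after this input the body reorganises, via the permeable-prefix laws of Lemma~\ref{lem:permeable-comm} and Lemma~\ref{lem:permeable-scong}, into an output-guarded process whose continuation contains a restricted composition of a location wire \( \res{b_1}(\link{a_1}{b_1}\mid\link{b_1}{c_1}) \) and a variable wire composition \( \res y(\link xy\mid\link yz) \)-shaped term; both are again instances of transitivity, closing the bisimulation game up to context. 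This is where the permeable prefixes inside the wire are essential, and I expect this to be the main obstacle: one must track carefully that the restricted name \( b \) (and the auxiliary names \( x,q_1 \) created by desugaring) are used linearly and that the respectfulness side-conditions needed to apply Lemma~\ref{lem:permeable-comm} are met, so that the \( \gexp \) (rather than mere \( \bisim \)) is genuinely obtained.

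Because the location \OIwires\ are, by Remark~\ref{rem:OI-wires-prefix}, the exact input/output dual of the \IOwires\ while the variable wires are unchanged, and because \piI\ bisimilarity and expansion are preserved under this duality, the transitivity and substitution proofs for \OIwires\ follow from those for \IOwires\ by dualising every transition; I would state this as a lemma-level symmetry argument rather than redo the calculations. The \Pwires\ are self-dual, so only one direction must be checked, but here both the input and the output of \( \ilink pq \) are permeable and can fire concurrently, so the transition analysis has more cases; the expansion-up-to technique again reduces each case to a smaller transitivity instance.

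For the four substitution laws~\ref{it:ax:glink-piI:subst-din}--\ref{it:ax:glink-piI:subst-dout-var}, the strategy is uniform: unfold the definitions of the permeable prefixes (which, per Section~\ref{sec:EpiI}, are themselves built from wires), then push the restricted wire \( \glink pq \) (resp.\ \( \glink xy \)) through the outer structure using the structural laws of Lemma~\ref{lem:permeable-scong} until it meets the matching endpoint, and collapse it by transitivity~\ref{it:ax:glink-piI:trans}, which has just been established. The stated premises — e.g.\ \( \res{x,r}(\glink x{x'}\mid\glink{r'}r\mid P)\gexp P\sub{x',r'}{x,r} \) — are precisely what licenses absorbing the residual wires left inside the permeable prefix, so each law reduces to an application of transitivity plus the replication laws of Lemma~\ref{lem:rep-thm} (needed for~\ref{it:ax:glink-piI:subst-rep} and~\ref{it:ax:glink-piI:subst-dout-var}, where the variable wire is replicated and must be distributed or garbage-collected). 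The whole proof thus factors cleanly: establish transitivity once per wire family via up-to expansion, transport the \OIwire\ case by duality, and derive the remaining substitution laws mechanically from transitivity and the algebraic laws already in hand.
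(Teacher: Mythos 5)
Your overall decomposition --- the syntactic conditions~\ref{it:ax:glink-piI:fn}, \ref{it:ax:glink-piI:no-tau}, \ref{it:ax:glink-piI:var-link-rep} by inspection, transitivity proved first via expansion up-to context, and the substitution laws~\ref{it:ax:glink-piI:subst-din}--\ref{it:ax:glink-piI:subst-dout-var} then derived algebraically from transitivity plus Lemma~\ref{lem:permeable-comm} and Lemma~\ref{lem:rep-thm} --- is exactly the paper's. But your transitivity argument has a genuine gap. Your candidate relation pairs a single wire with a single two-wire composition, and you claim that after the initial action the residual again contains two-wire compositions (``smaller transitivity instances''), so the game closes up to context. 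This is false: unfolding makes the chains \emph{longer}, not smaller. For \IOwires{}, after the input at \( a \) and the internal communication at the restricted middle name, the auxiliary wires created by desugaring the permeable output of the first wire interleave with the wires exchanged in that communication, leaving chains of \emph{three} variable wires and \emph{three} location wires under the residual output prefix; the matching derivative of the single wire \( \link ac \) has \emph{single} wires in the corresponding holes, so the up-to-context step needs pairs (wire, 3-chain), which your relation does not contain. You cannot repair this by collapsing a 3-chain to a 2-chain with \( \gexp \), since that is precisely the transitivity being proved --- the argument would be circular. For \Pwires{} the problem is even more visible: \( \res q (\ilink p q \mid \ilink q r) \) reduces internally and every such reduction lengthens the chain, so infinitely many reductions arise. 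The missing idea, which is the crux of the paper's proof, is to \emph{strengthen the statement}: the relation must contain pairs of a single wire and a chain of arbitrary length \( n \ge 2 \), together with auxiliary expansion lemmas showing that the derivative of an \( n \)-chain is, up to \( \gexp \), a prefix over \( (2n-1) \)-chains (for \IOwires{} and \OIwires{}) or \( (3n-2) \)-chains (for \Pwires{}); only then does expansion up-to \( \lexp \) and context close.

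A secondary flaw: transporting the \OIwire{} case ``by duality'' is not sound as stated. The duality of Remark~\ref{rem:OI-wires-prefix} swaps prefixes in the \emph{location} wires only; the variable wires are left unchanged, necessarily so, because \piI{} has no replicated output and the literal dual of \( \link x y \) is not a \piI{} process. Hence the dual image of the \IOwire{} transitivity statement is not the \OIwire{} statement (the composites mix dualised location wires with un-dualised variable wires), and ``dualising every transition'' is not an available formal step. The paper does not transport the proof; it redoes it for \OIwires{}, merely omitting details because the computations are almost identical --- and it even notes that the duality is imperfect at the level of proofs, since which laws need the respectfulness premises differs between the two wire families. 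Your plan for the substitution laws themselves (transitivity, the permeable-prefix laws, the replication theorems, with the stated premises licensing absorption of residual wires) does match the paper's appendix proofs.
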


For each kind of wires, the proof is carried out in two steps.
First, we show that the wires are transitive, using up-to techniques
for bisimilarity.
Then, the other laws are proved by algebraic reasoning (including the use of transitivity of wires).
The algebraic reasoning is done in a similar manner for all the three wires.

Here, \emph{we shall only give the proof of the transitivity of \Pwires{}} since it is the most delicate one, because of
the concurrency allowed by permeable prefixes and because permeable prefixes are defined in terms of the wires themselves.
The proofs of the transitivity of the other two wires, and the proofs for the other laws are reported in Appendix~\ref{sec:wires-appx}.

To illustrate the difficulty of \Pwires{}, let us consider the wires \( \res q (\ilink p q \mid \ilink q r) \).
This process can immediately reduce at the internal name \( q \).
Moreover, the derivative
\[
  \diprfx p {p_1, y} \oprfx {r} {q_1, x}
  \begin{aligned}[t]
    (&\resb{z_1,z_2}(\olink x {z_1} \mid \olink {z_1} {z_2}  \mid \olink {z_2} y) \mid \\
    &\resb{s_1, s_2}(\ilink {p_1}{s_1} \mid \ilink {s_1} {s_2} \mid \ilink {s_2} {q_1}))
\end{aligned}
\]
shows that the reduction has made the chain of wires longer.
(Further reductions are then possible; indeed infinitely-many reductions may
  thus be produced.)
To deal with these cases, we strengthen the claim and show the transitivity of chains of wires.
In doing so, we crucially rely on up-to  proof techniques for \piI{}, notably `expansion up-to expansion and context', and several algebraic laws (cf.~Section~\ref{sec:proof-techniques}).
The `up-to context' is used to cut out common contexts such as \( \diprfx p {p_1, y} \oprfx r {q_1, x}(\hole \mid \hole ) \).
The algebraic laws are used to transform the processes so to be able to  apply the `up-to context', mainly by performing  internal
interactions.
It is unclear how the proof could be carried out without such  proof techniques.

As we need to consider chains of \Pwires{} of arbitrary length, we introduce a
notation for them.
\newcommand*{\chain}[4]{\mathrm{chain}_{#1}^{#2}(#3, #4)}
We thus set:
  \begin{align*}
    \chain {\mathtt {P}} 1 p q &\defeq \link p q &\chain {\mathtt{P}} {n + 1} p q &\defeq \res r (\chain {\mathtt{P}} n p r \mid \link r q) \\
    \chain {\mathtt{P}} 1 x y &\defeq \link x y & \chain {\mathtt{P}} {n + 1} x y &\defeq \res z (\chain {\mathtt{P}} n x z \mid \link z y)
  \end{align*}
where \( r \notin \{p, q\} \) and \( z \notin \{ x, y\} \).

Now we are ready to show the transitivity of \Pwires{}.
\begin{lem}
  \label{lem:ilink-olink-trans}
  The \Pwires{} \( \ilink p q \) and \( \olink x y \) are transitive, that is,
 \( \res  q (\ilink p q \mid \ilink q r) \gexp \ilink p r \) and \( \res y (\olink x y \mid \olink y z) \gexp \olink x z \).
\end{lem}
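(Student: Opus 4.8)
The plan is to prove the stronger statement that chains of \Pwires{} of every length collapse to a single wire: for all $n\geq1$,
\[
\chain{\mathtt P}{n}{p}{q}\gexp\ilink p q \qquad\text{and}\qquad \chain{\mathtt P}{n}{x}{y}\gexp\olink x y,
\]
the lemma being the instance $n=2$. First I would record two structural facts read off the desugared definitions: a single wire $\ilink p q$ (resp.\ $\olink x y$) performs no immediate $\tau$, because every name along which its body could synchronise is both restricted and guarded; and its only visible actions are the permeable input at $p$ and permeable output at $q$ for location wires, or the replicated input at $x$ for variable wires. Since a location wire contains a variable wire in its body and conversely, the two families of chains are mutually dependent and must be handled in one and the same coinductive argument.

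The heart of the proof is to take the relation
\[
\relR=\{(\ilink p q,\ \chain{\mathtt P}{n}{p}{q})\}_{n\geq1}\ \cup\ \{(\olink x y,\ \chain{\mathtt P}{n}{x}{y})\}_{n\geq1}
\]
and to show that it is an expansion up-to context and $\lexp$; soundness of that technique then yields $\relR\subseteq\lexp$, i.e.\ $\ilink p q\lexp\chain{\mathtt P}{n}{p}{q}$ and $\olink x y\lexp\chain{\mathtt P}{n}{x}{y}$, which is the desired $\gexp$, the chain being the larger process carrying the extra internal steps. Every transition is treated uniformly: one computes the derivative on the desugared terms, floats a single-wire-shaped context to the top, and applies up-to context to reduce to pairs of sub-chains and sub-wires already in $\relR$, absorbing administrative reductions with $\lexp$. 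The input at $p$ and the output at $q$ are answered directly by the corresponding action of the single wire. The genuinely delicate transition is an internal $\tau$ of the chain—a synchronisation at one of the restricted junction names—because this is precisely where the chain lengthens: the single wire stays put (it has no $\tau$), and one must show that the $\tau$-derivative is, up to $\lexp$, of the shape
\[
\diprfx p {y,p_1}\oprfx q {x,q_1}\bigl(\chain{\mathtt P}{k}{p_1}{q_1}\mid\chain{\mathtt P}{k}{x}{y}\bigr)
\]
for some $k$, namely the single-wire context $\diprfx p {y,p_1}\oprfx q {x,q_1}(\hole\mid\hole)$ filled with a location sub-chain and a variable sub-chain; factoring out this common context reduces the obligation to the pairs $(\ilink{p_1}{q_1},\chain{\mathtt P}{k}{p_1}{q_1})$ and $(\olink x y,\chain{\mathtt P}{k}{x}{y})$, which are already in $\relR$.

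Bringing the $\tau$-derivative into this canonical shape is the technical work, and it must be done from the raw desugared definitions: since transitivity is the very first wire law to be established, I cannot appeal to Lemma~\ref{lem:permeable-comm}, whose statement already presupposes it. I would therefore compute the synchronisation directly, use the structural laws for permeable prefixes (Lemma~\ref{lem:permeable-scong}) to float the surviving outermost prefixes at $p$ and $q$ to the top, and then resolve, one by one, the matching input/output pairs that the synchronisation leaves at restricted junctions, each resolution being an internal reduction swallowed by the $\lexp$ component of the technique. For the variable wires the scheme is identical, save that the triggering action is a replicated input at the restricted name $y$: here the replication theorems (Lemma~\ref{lem:rep-thm}, including garbage collection) do the bookkeeping, and the location sub-chain that appears is dispatched by the simultaneously-proved location case.

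The main obstacle is exactly this internal-$\tau$ analysis. A single synchronisation does not merge the whole chain in one step: it leaves intermediate permeable prefixes at restricted junctions that can themselves fire again, so the $\tau$-derivative is not literally a context wrapped around shorter chains. The crux will be to argue, using the structural and replication laws above together with a confluence analysis of these administrative reductions, that every such derivative nonetheless expands to the canonical $\diprfx p {y,p_1}\oprfx q {x,q_1}(\hole\mid\hole)$-context around sub-chains, so that up-to context can finally close the case. It is precisely the interplay of the chain generalisation, up-to context, and up-to $\lexp$ that tames the unbounded growth of the chain under reduction; without these techniques no finite invariant could be maintained.
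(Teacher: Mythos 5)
Your proposal is correct and takes essentially the same route as the paper's proof: the same relation of (single wire, chain-of-arbitrary-length) pairs for both location and variable wires, shown to be an expansion up-to \( \lexp \) and context, with visible actions matched directly, the chain's internal \( \tau \)-steps answered by the identity transition of the single wire, and the \( \tau \)-derivative brought—by direct computation on the desugared definitions, not via Lemma~\ref{lem:permeable-comm}—into the canonical form of a permeable-input/permeable-output context enclosing a location sub-chain and a variable sub-chain, which up-to context then dispatches. The paper handles the chain-lengthening phenomenon exactly as you describe, by executing all remaining junction interactions (each an expansion step at restricted linear names) and closing with longer chains already present in the relation.
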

\begin{proof}
  \newcommand*{\ichain}[3]{\mathrm{chain}_{\mathtt{P}}^{#1}(#2, #3)}
  \newcommand*{\ochain}[3]{\mathrm{chain}_{\mathtt{P}}^{#1}(#2, #3)}
For the proof,
we strengthen the statement and prove transitivity for chains of wires of arbitrary length \( n \).
  Let
  \begin{align*}
    {\relR_1} &\defeq \left\{ \left(\ilink {p_0} {p_n}, \ichain n {p_0} {p_n} \right) \; \midbar \; n \ge 2 \right\} \\
    {\relR_2} &\defeq \left\{ \left(\olink {x_0} {x_n}, \ochain n {x_0} {x_n} \right) \; \midbar \; n \ge 2 \right\}
  \end{align*}
  We show that \( {\relR_1} \cup {\relR_2}\) is an expansion up-to \( \lexp \) and context.

  Before considering how processes in the relation can match each other's transition,
we present some useful observations that will be used throughout the proof.
  Recall that \( \ilink {p_i} {p_{i+1}} \) is of the form
  \begin{align*}
    &(\res {x_i^+, q_i^+, x_{i + 1}^-, q_{i + 1}^-})\\
    (&\iproc {p_i}{x_i, q_i} {(\olink {x_i^+} {x_i} \mid \ilink {q_i} {q_i^+} )} \\
    &\mid \bout {p_{i+1}} {x_{i+1}, q_{i+1}}.(\olink {x_{i+1}} {x_{i+1}^-} \mid \ilink{q_{i+1}^-}{q_{i+1}} ) \\
    &\mid \olink {x_{i + 1}^-} {x_i^+} \mid \ilink {q_i^+} {q_{i+1}^-})
  \end{align*}
  Therefore, given
  \[
    \ichain n {p_0} {p_n} \equiv
    \resb {p_1, \ldots, p_{n-1}} (\ilink {p_0} {p_1} \mid \cdots \mid \ilink {p_{n - 1}}{p_n}) \, ,
  \]
 reducing the process by executing all the (internal) interactions at the \( p_i \)'s,
 gives us a process of the form
  \begin{align*}
    &(\res {x_0^+, x_1^-, x_1 ,x_1^+, \ldots, x_{n-1}^-, x_{n-1} ,x_{n-1}^+, x_n^-}) \\
    &(\res {q_0^+, q_1^-, q_1 ,q_1^+, \ldots, q_{n - 1}^-, q_{n -1} ,q_{n - 1}^+, q_n^-}) \\
    &(\iproc {p_0}{x_0, q_0} {(\olink {x_0^+} {x_0} \mid \ilink {q_0} {q_0^+} )} \\
    &\mid \bout {p_n} {x_n, q_n}.(\olink {x_n} {x_n^-} \mid \ilink{q_n^-}{q_n} )\\
    &\mid \ilink{q_0^+}{q_1^-} \mid \ilink {q_1^-} {q_1} \mid \ilink {q_1}{q_1^+} \mid \cdots \\
    &\mid \ilink {q_{n - 1}^-} {q_{n - 1}} \mid \ilink {q_{n - 1}}{q_{n - 1}^+} \mid \ilink {q_{n - 1}^+} {q_n^-}\\
    &\mid \olink {x_n^-} {x_{n - 1}^+} \mid \olink {x_{n - 1}^+} {x_{n - 1}} \mid \olink {x_{n - 1}} {x_{n - 1}^-} \mid \cdots \\
    &\mid \olink {x_1^+} {x_1} \mid \olink {x_1} {x_1^-} \mid \olink {x_1^-} {x_0^+})
  \end{align*}
  Up to structural congruence, the above process can  be written as
  \begin{align*}
    &(\res {x_0^+, x_n^-}) (\res {q_0^+,  q_n^-}) \\
    &(\iproc {p_0}{x_0, q_0} {(\olink {x_0^+} {x_0} \mid \ilink {q_0} {q_0^+} )} \\
    &\mid \bout {p_n} {x_n, q_n}.(\olink {x_n} {x_n^-} \mid \ilink{q_n^-}{q_n} ) \\
    &\mid \ichain {3n + 1} {q_0^+} {q_{n + 1}^-} \mid \ochain {3n + 1} {x_{n + 1}^-} {x_0^+} ) \\
    &\scong \diprfx {p_0} {x_0, q_0} \oprfx {p_{n+1}}{x_{n+1}, q_{n+1}} (\ochain {3n - 2}{x_n}{x_0} \mid \ichain {3n - 2} {q_0} {q_n})
  \end{align*}

  Moreover, since the reductions performed  are all at restricted linear names,
  in each reduction the initial process is in the relation $\gexp$ with the derivative
  process; that is,
  for any \( n \ge 2\), we have
  \begin{align}
    \label{e:ichain}
    \ichain n {p_0} {p_n} &\gexp  \diprfx {p_0} {x_0, q_0} \oprfx {p_n}{x_n, q_n} (\ochain {3n - 2}{x_n}{x_0} \mid \ichain {3n - 2} {q_0} {q_n})
  \end{align}
  Exploiting this property,  we can  prove that \( {\relR_1} \cup {\relR_2}\) is an expansion up-to expansion and context.
  We first consider the case where \( \ilink{p_0} {p_n} \relR_1 \ichain n {p_0} {p_n} \).
  We only consider the case where the process on the right-hand side makes the challenge; the opposite direction can be proved similarly.
  There are three possible actions that the process on the right-hand side can make: (1) \( \tau \)-action, (2) input at \( p_0 \), and (3) output at \( p_{n + 1} \).
  We start by proving the first case.
  If
  \[ (\res {p_1, \ldots, p_{n - 1}}) (\ilink {p_0} {p_1} \mid \cdots \mid \ilink {p_{n - 1}}{p_n} ) \trans \tau \proc \, , \]
  then the action must have been caused by an interaction at \( p_i \), for some \( i \) such that \( 1 \le i \le n - 1 \).
  We can execute the interactions at the remaining \( p_i \)'s,
  and then, using the property (\ref{e:ichain}) above, we have
  \begin{align*}
      \proc \gexp \diprfx {p_0} {x_0, q_0} \oprfx {p_{n+1}}{x_n, q_n} (\ochain {3n - 2}{x_n}{x_0} \mid \ichain {3n - 2} {q_0} {q_n}).
  \end{align*}
  For the matching transition we take the \( 0 \)-step transition, i.e.~the identity relation.
  Since
    \begin{align*}
      &\ilink {p_0} {p_n} =  \diprfx {p_0} {x_0, q_0} \oprfx {p_n} {x_{n}, q_{n}} (\olink
        {x_n} {x_0} \mid \ilink {q_n}{q_0}) \;,  \\
      & \olink {x_n} {x_0} \relR_2 \ochain {3n - 2}{x_n}{x_0} \;, \\
      & \ilink {q_0} {q_n} \relR_1 \ichain {3n - 2} {q_0} {q_n}
    \end{align*}
    we can conclude this case using the up-to expansion and context technique.
    Similarly, if
    \[
      \ichain n {p_0}{p_n} \trans {p_0(x_0, q_0)} \proc,
    \]
    then we can show that
    \[
      \proc \gexp (\res {x_0^+, q_0^+}) (\olink {x_0^+} {x_0} \mid \ilink {q_0} {q_0^+} \mid  \oprfx {p_n}{x_n, q_n} (\ochain {3n - 2}{x_n}{x_0^+}  \mid \ichain {3n - 2} {q_0^+} {q_n})).
    \]
    We can match this transition with
    \begin{align*}
      \ilink {p_0} {p_n} \trans {p_0(x_0, q_0)} (\res {x_0^+, q_0^+}) (\olink {x_0^+} {x_0} \mid \ilink {q_0} {q_0^+} \mid \oprfx {p_n}{x_n, q_n} (\olink {x_n} {x_0^+} \mid \ilink {q_0^+} {q_n})).
    \end{align*}
    Once again, since
     \begin{gather*}
      \olink {x_n} {x_0} \relR_2 \;, \\
      \ilink {q_0} {q_n} \relR_1  \ichain {3n - 2} {q_0} {q_n}
    \end{gather*}
    we can appeal to the up-to expansion and context technique to finish the case.

    The remaining case (the case where process makes an output at \( p_n \)) can be proved by the same reasoning.

    Now we sketch the case for variable names.
    Take
    \[
     \olink {x_0} {x_n} \relR_2  \ochain n {x_0} {x_n}.
    \]
    There is only one possible action that the process on the right-hand side can make: an input at \( x_0 \).
    By a reasoning similar to that of the location wires, we can show that
    if
    \[
      \ochain n {x_0} {x_n} \trans {x_0(p_0)} \proc
    \]
    then
    \[
      \proc \gexp
      \begin{aligned}[t]
        & \ochain n {x_0} {x_n} \mid  \oprfx {x_n}{p_n} \ichain {2n -1}{p_0}{p_n}
      \end{aligned}
    \]
    by executing the interactions at the \( x_i \)'s.
    The above transition can be matched by the transition \( \olink {x_0} {x_n} \trans{x_0(p_0)} \olink {x_0} {x_n} \mid \oprfx{x_n} {p_n} \ilink {p_0}{p_n} \), and we can conclude by using the up-to context technique.
\end{proof}

\begin{rem}
The duality between \IOwires{} and \OIwires{} also shows up in proofs.
For instance,  for \IOwires{} the proof of law~\ref{it:ax:glink-piI:subst-din} of Definitions~\ref{ax:glink-piI} does not use the
premise of the law (i.e., the respectfulness of \( \proc \)), whereas the proof of the dual law~\ref{it:ax:glink-piI:subst-dout} does.
In the case of \OIwires{}, the opposite happens: the proof of law~\ref{it:ax:glink-piI:subst-dout}  uses the premise,
whereas  that of law~\ref{it:ax:glink-piI:subst-din} does not.
\end{rem}

In  the following sections we examine the concrete encodings obtained
by instantiating the wires of the abstract
 encoding \( \Qencoabs \) (Figure~\ref{fig:abs-enc}) with 
 the \IOwires, \OIwires, and \Pwires{}. %
We denote the resulting (concrete) encodings as 
 \( \QencoIO \), \( \QencoOI \), and \( \QencoP \), respectively.
Similarly
 \( \QoencoIO \), \( \QoencoOI \), and \( \QoencoP \) are the
 instantiations of the abstract optimised encoding \( \Qoenco \).
For instance, in
 \( \QencoIO \) and  \( \QoencoIO \) an 
abstract wire $\glink ab$ is instantiated with the 
 corresponding concrete wire 
$   \link ab$; and similarly for \OIwires\ and  \Pwires. 

Having shown %
that all  the wires satisfy the requirements of Axiom~\ref{ax:glink-piI},
 we can  use, in the proofs about 
 all concrete encodings (optimised and not)
 the results  in Sections~\ref{sec:abs-enc} and~\ref{sec:oenc}
 for the abstract encoding 
and  its abstract optimisation.

\section{Full Abstraction for \qLTs{} and \qBTs{}}
\label{sec:BT-LT}

In this section we consider \( \QencoIO \) and \( \QencoOI \), and prove full abstraction with respect to the \qBTs{} and \qLTs{}, respectively.
The main proof is given in Section~\ref{sec:BT-LT:full-abst}.
Before that, in Section~\ref{sec:BT-LT:unsolv}, we discuss the difference between \( \QencoIO \) and \( \QencoOI \) on the encoding of unsolvable terms because it highlights the difference between the two encodings.
Some lemmas about the encoding of unsolvable terms given in Section~\ref{sec:BT-LT:unsolv} will also play key roles in the main proof.
In the proofs we exploit the optimised encodings $\QoencoOI$ and $\QoencoIO$.

\subsection{Encoding of unsolvable terms}
\label{sec:BT-LT:unsolv}
We recall that
the differences between \qBTs{} and \qLTs{} are due to the treatment of unsolvable terms (cf.~Section~\ref{sec:lambda}).
 \qBTs{}  equate all the unsolvable terms,
 whereas \qLTs{} 
distinguish  unsolvables of different order, such as 
 \( \Omega \) and \( \lambda x. \Omega \).
We begin, as an example,
with  the terms \( \Omega \) and \( \lambda x . \Omega \).
As we have seen in Example~\ref{ex:oenc-Omega}, 
in the abstract optimised encoding 
$\Qoenco$ process
\( \oenco \Omega p \) is:
\begin{align*}
  \res {p_0} ( &\diprfx {p_0} {x, q} \oprfx x {q_0} \oprfx {q_0} {y_1, q_1} \left( \riproc {y_1} {r_1} {\oenco x {r_1}} \mid \glink q {q_1} \right) \\
  &\mid \oprfx {p_0}{x_1, p_1} \left( \riproc {x_1} {r_1} {\oenco {\lambda x. \app x x} {r_1}} \mid \glink p {p_1} \right)).
\end{align*}
Its instantiation with \OIwires,
$ \oencoOI \Omega p$,
 cannot do any input action:
as 
 \( \glink p {p_1} \)  becomes the \OIwire{}
\( \linkO {p_1} p \), 
the input occurrence of the free name
 \( p \) is guarded by \( p_1 \),  which in turn is bound by  the
(permeable) prefix at $p_0$. 
Indeed, the only action that
$ \oencoOI \Omega p$
 can perform is (up-to expansion)
$ \oencoOI \Omega p
\trans \tau  \gexp 
 \oencoOI \Omega p $, which 
 corresponds to the %
 reduction \( \Omega \red \Omega \).
Hence,
  $ \oencoO
 {\Omega} p $ cannot 
 match the input action
  \( \oencoO
 {\lambda x.\Omega} p \trans{p(x, q)} \oencoO \Omega q \), and the two
 processes are distinguished.

In contrast, with \IOwires,   processes
$ \oencoI
 {\lambda x.\Omega} p $ and 
 $\oencoI \Omega p $ are indistinguishable. 
As before,  the former process can exhibit an input transition
  \( \oencoI
 {\lambda x.\Omega} p \trans{p(x, q)} \oencoI \Omega q \).
However, now
 \( \oencoI \Omega p \) has a  matching input transition,
 because when \( \glink p {p_1} \) is the \IOwire \( \link p {p_1} \),  the
 input at \( p \) is not guarded.
 The derivative of the input \( p(y, q) \) is
\begin{align*}
  &
    \begin{aligned}
      \res {p_0} (&\diprfx {p_0} {x, q} \oencoI {\app x x} q \\
      &\mid \oprfx {p_0}{x_1, p_1} ( \riproc {x_1} {r_1} {\oencoI {\lambda x. \app x x} {r_1}} \\
      &\mid \oprfx {p_1} {x_2, p_2} (\link {x_2} y \mid \link q {p_2} ) ))
    \end{aligned}  \\
  &\begin{aligned}
      =\res {p_0} (&\diprfx {p_0} {x, q} \oencoI {\app x x} q \\
      &\mid \oprfx {p_0}{x_1, p_1} ( \riproc {x_1} {r_1} {\oencoI {\lambda x. \app x x} {r_1}} \\
      &\mid  \oprfx {p_1} {x_2, p_2} (\riproc {x_2} {r_2} {\oencoI{y} {r_2}} \mid \link q {p_2} )) )
    \end{aligned} \\
&\scong \oencoI {\app \Omega y} q
\end{align*}
(exploiting the  definitions of \(\oencoI {y} {r_2}\) and \( \link {x_2} y \)).
In a similar manner, one then shows that
 \( \oencoI{\Omega} q \) and \(  \oencoI{\app \Omega y }{q} \) 
can  match each other's transitions, and iteratively so, on the
resulting derivatives.
These observations are not limited to \( \Omega \) and \( \lambda x . \Omega\), but can be said against general unsolvable terms.
Below we state these properties as lemmas.
Some of the proofs of the lemmas are given in Appendix~\ref{sec:BT-LT-appx}.

As indicated in the preceding example, in \( \QoencoOI\), a term \( \oencoO{M} {p} \)  can
perform an input transition if and only if $M$ is, or may reduce to,  a function, say
$ M = \lambda x. M' $,
and the input
action intuitively corresponds to consuming the outermost 
`$ \lambda x $'.

\begin{restatable}{lem}{absUnsolvnIn}
\label{lem:abs-unsolv-n-in}
Let \( M \) be an unsolvable term of order \( n \), where \( 0 < n \le \omega \).
Then \( \oencoO M p \) can do a weak input transition at \( p \).
Moreover, if \( \oencoO M p \wtrans{p(x, q)} \proc \), then there exists \( N \) such that \( \proc \gexp \oencoO N q \) and \( N \) is an unsolvable of order \( n - 1 \) (under the assumption \( \omega - 1 = \omega \)).
\end{restatable}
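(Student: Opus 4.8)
The plan is to reduce the statement to two operational facts about the optimised \OIwire{} encoding, combined with the standard invariance of the order of unsolvability under $\beta$-reduction. The key preliminary observation is that, since $M$ is unsolvable of order $n \ge 1$, head reduction peels off at least one leading abstraction: $M \Hred \lambda x_1. M'$ for some $M'$ that is again unsolvable and has order $n-1$ (with $\omega - 1 = \omega$). Unsolvability of $M'$ is clear, since otherwise $\lambda x_1.M'$, and hence $M$, would be solvable; and the order of $M'$ follows because $M'$ is a reduct of the unsolvable term $M$ and a leading abstraction of the body corresponds to a leading abstraction of $\lambda x_1.M'$. For the first part of the statement I would note that each head step is a $\beta$-step, so Lemma~\ref{lem:gen-oenc-validates-beta} and transitivity of $\gexp$ give $\oencoO M p \gexp \oencoO {\lambda x_1.M'} p = \diprfx p {x_1, q} {\oencoO {M'} q}$. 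The latter process performs the input $\trans{p(x_1, q)}$, so by the transfer property of expansion ($\gexp$ lets the slower side match any action of the faster side by a weak transition of the same label) the process $\oencoO M p$ has a matching weak input transition at $p$, as required.

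For the ``moreover'' part I would argue directly through the operational correspondence, so as to obtain the conclusion up to $\gexp$ (bisimilarity would not suffice). First I would establish a strengthened correspondence: \emph{if $P \gexp \oencoO {M_0} p$ and $P \wtrans{}\proc$ using internal steps only, then there is $M_1$ with $M_0 \snred^* M_1$ and $\proc \gexp \oencoO {M_1} p$.} This is proved by induction on the length of the $\tau$-sequence, at each step using the expansion transfer (which forces $\oencoO {M_0} p$ to answer a $\tau$ of $P$ with \emph{at most one} $\tau$), then Lemma~\ref{lem:gen-oenc-tau-has-mathcing-red} to turn a genuine $\tau$ of the encoding into a $\snred$ step, and transitivity of $\gexp$ to re-close the invariant. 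Now factor the given transition as $\oencoO M p \wtrans{}P_1 \trans{p(x,q)}P_2 \wtrans{}\proc$. The strengthened correspondence applied to the first segment yields $M \snred^* M_1$ with $P_1 \gexp \oencoO {M_1} p$, and transferring the input through this expansion gives a strong input $\oencoO {M_1} p \trans{p(x,q)} A'$ with $P_2 \gexp A'$.

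The crux is the structural fact that $\oencoO {M_1} p$ can perform an input at $p$ \emph{only if} $M_1$ is an abstraction. I would prove this by inspecting the four clauses of the optimised encoding (Figure~\ref{fig:oenc}): only the abstraction clause produces a top-level permeable input $\diprfx p {\cdot} {\cdot}$ at $p$, whereas in the variable, variable-headed and redex-headed cases the single free occurrence of $p$ sits inside an \OIwire{} $\linkO {p_n} p$, where its input is guarded by an ordinary output prefix on a restricted name and so cannot fire in one step (this refines Lemma~\ref{lem:gen-oenc-input-must-be-at-p}). Hence $M_1 = \lambda x_1. N_0$; the input consumes the permeable prefix, and by respectfulness of encodings (Lemma~\ref{lem:glink-subst}) together with the permeable-prefix law we obtain $A' \gexp \oencoO {N_0} q$ with $N_0$ the (suitably renamed) body, which is a reduct of the unsolvable order-$n$ term $M$ with one fewer leading abstraction, hence unsolvable of order $n-1$. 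Finally the strengthened correspondence applied to the last segment $P_2 \wtrans{}\proc$ (starting from $P_2 \gexp \oencoO {N_0} q$) produces $N$ with $N_0 \snred^* N$ and $\proc \gexp \oencoO N q$; since $N$ is a reduct of $N_0$, order-invariance gives that $N$ is unsolvable of order $n-1$, closing the proof. I expect the main obstacle to be exactly this structural claim together with the $\gexp$-bookkeeping in the strengthened correspondence: one must carry the ``at most one $\tau$'' side of expansion throughout, and verify that no internal wire reduction can covertly expose an input at $p$ before the leading abstraction is reached.
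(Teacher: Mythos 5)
Your proof is correct, and its skeleton coincides with the paper's: head-reduce \( M \) to \( \lambda x. M' \) with \( M' \) unsolvable of order \( n-1 \), use Lemma~\ref{lem:gen-oenc-validates-beta} to obtain \( \oencoO M p \gexp \oencoO {\lambda x. M'} p \), and get the weak input at \( p \) by the transfer property of expansion. Where you genuinely diverge is in the ``moreover'' part. The paper pushes the initial \( \tau \)-sequence of \( \oencoO M p \) through that expansion, so it only ever analyses derivatives of \( \oencoO {\lambda x. M'} p \); iterating Lemma~\ref{lem:gen-oenc-tau-has-mathcing-red} it reaches some \( \oencoO {\lambda x. M''} p \), which is an abstraction \emph{by construction}, and then it suffices to note that the encoding of an abstraction has a unique input transition, whose derivative is \( \gexp \oencoO {M''} q \). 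You instead track reducts of \( M \) itself via your strengthened correspondence, which obliges you to prove the extra, \OIwire{}-specific structural claim that \( \oencoO {M_1} p \) can perform a strong input at \( p \) only if \( M_1 \) is an abstraction (true, and essentially the observation underlying Lemma~\ref{lem:oencoO-order-zero-only-tau}, but a step the paper's arrangement never needs). In compensation, your version is more scrupulous on two points the paper glosses over: the trailing \( \tau \)-steps allowed by \( \wtrans{p(x,q)} \), and the fact that iterating Lemma~\ref{lem:gen-oenc-tau-has-mathcing-red} along a \( \tau \)-sequence requires re-transferring each step through the accumulated expansion, which is exactly what your strengthened correspondence packages. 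One cosmetic slip: for the optimised encoding the respectfulness lemma to cite is Lemma~\ref{lem:glink-subst-oenc} rather than Lemma~\ref{lem:glink-subst}.
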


In addition, a  process  \( \oencoO{M} {p} \) is bisimilar to \( \nil \) iff $M$ is an unsolvable of order $0$.
\begin{restatable}{lem}{oencoOZeroTau}
\label{lem:oencoO-order-zero-only-tau}
Let \( M \) be an unsolvable term of order \( 0 \).
Then the only action \( \oencoO M p \) can do is a \( \tau \)-action.
\end{restatable}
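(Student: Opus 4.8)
The plan is to reduce the statement to a single guardedness observation about the \OIwires{} sitting at the free location \( p \). First I would note that two operational-correspondence results already dispose of almost everything: by Lemma~\ref{lem:gen-oenc-unsolvable-no-output} an unsolvable \( M \) admits no weak, hence no immediate, output action from \( \oencoO M p \); and by Lemma~\ref{lem:gen-oenc-input-must-be-at-p} every input action of \( \oencoO M p \) must have subject \( p \). Since the actions of \piI{} are exactly \( \tau \), inputs and outputs, it therefore suffices to show that \( \oencoO M p \) cannot perform an \emph{immediate input at \( p \)}.

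Next I would pin down the syntactic shape of \( M \). An unsolvable term of order \( 0 \) is neither a variable nor a \( \lambda \)-abstraction (any \( \lambda y. M' \) satisfies \( \lambda y. M' \Hred \lambda y. M' \) and so has order at least \( 1 \)), and it is not a head normal form (which would make it solvable). Hence \( M \) is an application whose leftmost subterm is a \( \lambda \)-abstraction, i.e. \( M = \app {(\lambda x. M_0)} {M_1 \cdots M_n} \) with \( n \ge 1 \), and \( \oencoO M p \) falls into the head-redex clause of the optimised encoding, namely \( \res {p_0}(\diprfx {p_0} {x, q} \oencoO {M_0} q \mid \oencoN {p_0} p {\oencoO {M_1} {} \cdots \oencoO {M_n} {}}) \). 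Inspecting this term, the free name \( p \) occurs exactly once, in the trailing wire \( \glink p {p_n} \): the permeable input on the left uses only \( p_0 \) and the fresh \( q \), and the desugaring of the permeable prefixes introduces only fresh names.

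The crux is then the shape of that one wire. For \OIwires{} the abstract wire \( \glink p {p_n} \) is \( \linkO {p_n} p = \bout {p_n} {x, q_1}. \diprfx p {y, p_1} (\cdots) \), so the (permeable) input on \( p \) is guarded by the ordinary bound-output prefix \( \bout {p_n} {x, q_1} \). Because the subject of this guarding prefix is \( p_n \) and not \( p \), no LTS derivation can emit a \( p \)-labelled action from \( \oencoO M p \) without first consuming \( \bout {p_n} {x, q_1} \), which is a separate transition. Consequently \( \oencoO M p \) has no immediate input at \( p \), and together with the first paragraph this shows that every immediate action of \( \oencoO M p \) is a \( \tau \)-action.

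The main obstacle I anticipate is making this guardedness argument fully rigorous in the presence of the permeable layers \( \oprfx {p_0} {x_1, p_1} \cdots \oprfx {p_{n-1}} {x_n, p_n} \) wrapped around the wire: one must check that these layers cannot `permeate' the \( p \)-action to the top. This is harmless, since permeation only unblocks actions whose subject is \emph{not} among the bound names of the permeable prefix, whereas the obstruction on \( p \) is the \emph{ordinary} prefix \( \bout {p_n} {\cdots} \) living inside the \OIwires{}, not any permeable layer; unfolding the desugaring and tracking the free occurrences of \( p \) confirms that \( p \) remains guarded. Finally, I would remark that the stronger claim announced just before the lemma, \( \oencoO M p \bisim \nil \), follows by iteration: Lemma~\ref{lem:gen-oenc-tau-has-mathcing-red} sends each \( \tau \)-derivative to an expansion of some \( \oencoO N p \) with \( M \snred N \), and \( N \) is again unsolvable of order \( 0 \) because the order of unsolvability is reduction-invariant, so the family of all such encodings, closed under \( \gexp \), is a bisimulation up to expansion relating each term to \( \nil \).
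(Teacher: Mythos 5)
Your proof is correct and takes essentially the same route as the paper's: identify that an order-$0$ unsolvable must have the form \( \app{(\lambda x. M_0)}{M_1 \cdots M_n} \) with \( n \ge 1 \), rule out outputs and non-\( p \) inputs via Lemmas~\ref{lem:gen-oenc-unsolvable-no-output} and~\ref{lem:gen-oenc-input-must-be-at-p}, and observe that with \OIwires{} the unique occurrence of \( p \), inside \( \glink p {p_n} = \linkO {p_n} p \), is guarded by the ordinary output prefix at the bound name \( p_n \), so no immediate input at \( p \) is possible. The paper's proof is just a terser version of this argument; your explicit guardedness and permeation checks (and the closing remark on \( \bisim \nil \)) spell out what the paper leaves implicit.
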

Therefore, we have:
\begin{restatable}{lem}{oencoOEquateUnsolvSameOrder}
\label{lem:oencoO-equate-unsolv-same-order}
  Let \( M \) and \( N \) be unsolvables of order \( m \) and \( n \) respectively, where \( 0 \le m, n \le \omega \).
Then    \( \oencoO M p  \bisim \oencoO N p  \) 
iff \( m = n \).
\end{restatable}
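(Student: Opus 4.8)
=== PROOF PROPOSAL FOR LEMMA~\ref{lem:oencoO-equate-unsolv-same-order} ===

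\textbf{Overall approach.}
The plan is to prove the two directions separately, using the three preceding lemmas about the \OIwire{} encoding of unsolvable terms as the main tools. For the ``if'' direction ($m = n$ implies $\oencoO M p \bisim \oencoO N p$), I would build an explicit bisimulation relating the encodings of unsolvables of equal order, relying on Lemma~\ref{lem:abs-unsolv-n-in} to show that the input-transitions of the two processes can match each other while decrementing the order in lockstep, and on Lemma~\ref{lem:oencoO-order-zero-only-tau} to handle the base case of order $0$. For the ``only if'' direction (bisimilarity implies $m = n$), I would argue contrapositively: if $m \ne n$, the two processes are distinguished by the number of consecutive weak input transitions they can perform before reaching a process that can do no further input.

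\textbf{The ``if'' direction.}
First I would define the candidate relation
\[
  \relR \defeq \{ (\oencoO M p, \oencoO N p) \mid M, N \text{ unsolvable of the same order } k,\ 0 \le k \le \omega \}
\]
closed under $\bisim$ on both sides (i.e.\ composing with $\bisim$ so that it is a bisimulation up-to nothing, or more cleanly showing $\relR \cup \bisim$ is a bisimulation, exploiting that $\bisim$ is transitive). The bisimulation game then splits by the order $k$. When $k = 0$, Lemma~\ref{lem:oencoO-order-zero-only-tau} tells us both processes can only perform $\tau$-actions, and since $\oencoO M p \trans \tau \gexp \oencoO {M'} p$ with $M'$ again unsolvable of order $0$ (by Lemma~\ref{lem:gen-oenc-tau-has-mathcing-red}, as a $\tau$-step corresponds to a strong reduction that preserves unsolvability and order), the $\tau$-moves are matched by the $0$-step transition on the other side, staying inside $\relR \cup \bisim$. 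When $0 < k \le \omega$, by Lemma~\ref{lem:gen-oenc-input-must-be-at-p} any input is at $p$, and by Lemma~\ref{lem:abs-unsolv-n-in} both processes admit a weak input at $p$ whose derivatives are, up-to $\gexp$, the encodings of unsolvables of order $k-1$; a $\tau$-challenge is again answered as in the order-$0$ case. Since $\gexp$ implies $\bisim$, the pair of derivatives lies in $\relR \cup \bisim$, closing the case (here I use that $\omega - 1 = \omega$, matching the convention in Lemma~\ref{lem:abs-unsolv-n-in}).

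\textbf{The ``only if'' direction and the main obstacle.}
For the converse, suppose without loss of generality $m < n$. I would associate to each unsolvable term its order as an observable invariant of the process: define the \emph{input-depth} of a process as the supremum over $k$ such that the process can perform $k$ successive weak input transitions. By iterating Lemma~\ref{lem:abs-unsolv-n-in}, the input-depth of $\oencoO M p$ is exactly the order of $M$ (each weak input strictly decrements the order, and by Lemma~\ref{lem:oencoO-order-zero-only-tau} at order $0$ no input is possible), and analogously for $N$. Since bisimilarity preserves the existence of matching weak input transitions, two bisimilar processes must have the same input-depth, forcing $m = n$. The one genuinely delicate point---the step I expect to be the main obstacle---is the $\omega$ case: I must verify that an unsolvable of order $\omega$ yields a process with \emph{unbounded} input-depth (infinitely many successive inputs), so it cannot be bisimilar to any finite-order encoding, nor can a finite-order process simulate it. This follows from Lemma~\ref{lem:abs-unsolv-n-in} under the convention $\omega - 1 = \omega$, but care is needed to phrase the distinguishing argument coinductively rather than by a single finite experiment, since no finite number of inputs separates $\omega$ from a large finite order; the cleanest formulation distinguishes $\omega$ from finite $n$ by noting that after $n{+}1$ successive inputs the order-$\omega$ process still admits an input whereas the order-$n$ process is already bisimilar to $\nil$.
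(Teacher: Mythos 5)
Your overall route is the same as the paper's: the ``only if'' direction by counting consecutive weak inputs at \( p \) (via Lemmas~\ref{lem:abs-unsolv-n-in} and~\ref{lem:oencoO-order-zero-only-tau}), and the ``if'' direction by exhibiting a relation pairing encodings of equal-order unsolvables, with the order-\( 0 \) case and the \( 0 < k \le \omega \) case argued exactly as in the paper. However, there is a genuine gap in how you close the bisimulation game. The relation \( \relR \) is not a bisimulation on the nose: when \( \oencoO M p \trans \tau \proc \), Lemma~\ref{lem:gen-oenc-tau-has-mathcing-red} only gives \( \proc \gexp \oencoO {M'} p \), and \( \proc \) itself is not literally an encoding. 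Neither of your two remedies repairs this. Showing that \( {\relR} \cup {\bisim} \) is a bisimulation is circular: the pair \( (\proc, \oencoO N p) \) is not in \( \relR \), and placing it in \( \bisim \) presupposes \( \oencoO {M'} p \bisim \oencoO N p \), i.e.\ the very statement being proved. Closing \( \relR \) under \( \bisim \) on both sides is the ``bisimulation up-to weak bisimilarity'' technique, which is well known to be unsound: for instance \( \{ (\tau.a.\nil,\; \nil) \} \) satisfies its clauses --- the \( \tau \)-challenge is answered by idling, and the derivative \( a.\nil \) is reconciled with the left-hand side through \( \bisim \) --- yet \( \tau.a.\nil \not \bisim \nil \). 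Your application instantiates exactly this dangerous pattern (a \( \tau \)-challenge, an idling response, derivatives related only modulo \( \bisim \)), so the argument as written does not establish the result.

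The repair is cheap, and it is what the paper does: every lemma you invoke delivers derivatives up to the \emph{expansion} preorder \( \gexp \), not merely up to \( \bisim \), so \( \relR \) should be run as a \emph{bisimulation up-to expansion} (Section~\ref{sec:proof-techniques}), which is a sound technique; with that single change both of your cases go through, and the paper's proof is essentially your argument phrased as two successive up-to-expansion proofs (first for order \( 0 \), then for arbitrary equal orders). Two minor further points. In the \( 0 < k \le \omega \) case you must also rule out output actions, which requires Lemma~\ref{lem:gen-oenc-unsolvable-no-output}; you cite only the input lemma. And in the ``only if'' direction your worry about the order-\( \omega \) case is unnecessary: for a \emph{fixed} pair of distinct orders \( m < n \le \omega \), the finite experiment of \( m + 1 \) consecutive weak inputs already separates the two processes (after \( m \) inputs the order-\( m \) process is expansion-related to the encoding of an order-\( 0 \) unsolvable, hence admits no further input), which is exactly the paper's counting argument; no coinductive rephrasing is needed.
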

\begin{proof}
The only if direction is proved by contraposition.
To see that unsolvable terms \( M \) and \( N \) with different orders are distinguished, we just need to count the number of consecutive inputs that  \( \oencoO M p \) and \( \oencoO N p \) can do.
By Lemmas~\ref{lem:abs-unsolv-n-in} and~\ref{lem:oencoO-order-zero-only-tau}, it follows that \( \oencoO M p \) can do \( n \) consecutive weak input transitions if the order of \( M \) is \( n \); if \( n = \omega \), the number of consecutive inputs that \( \oencoO M p \) can do is unbounded.

We now prove the if direction.
We first prove the case for \( n = m = 0\), and use that result to give the proof for arbitrary \( n \).

For the case \( n = m = 0 \), we show that the relation \( \relR \) defined as
\[
  \bigcup_p \{ (\oencoO M p, \oencoO N p) \mid \text{ \( M, N \)  unsolvables of order \( 0 \)}\}
\]
is a bisimulation up-to expansion.
Suppose that \( \oencoO M p \relR \oencoO N p \).
If \( \oencoO M p \) makes a transition \( \oencoO M p \trans \act \proc \), then \( \act = \tau \) by Lemma~\ref{lem:oencoO-order-zero-only-tau}.
Hence, we have \( \proc \gexp \oenco {M'} p \) with \( M \red M' \) by Lemma~\ref{lem:gen-oenc-tau-has-mathcing-red}.
Since a term obtained by reducing an unsolvable term of order \( 0 \) must also be an unsolvable of order \( 0 \), we have \( \proc \gexp \oencoO {M'} p \relR \oencoO N p \).
In other words, we can take \( \oencoO N p \wtrans {} \oencoO N p \) as the matching transition.

To conclude we show that the relation \( \relR \) defined as
\[
  \bigcup_p \left\{ (\oencoO M p, \oencoO N p) \; \midbar \;
     \text{ \( M, N \) are unsolvables of  the same order}
    \right\}
\]
is a bisimulation up-to expansion.
Suppose that \( \oencoO M p \relR \oencoO N p \).
The order \( 0 \) case is exactly what we proved above, so let us assume that
\( M \) and \( N \) are unsolvable terms whose order is \( n \neq 0 \) (where \( n \) may be \( \omega \)).
Assume that \( \oencoO M p \) makes a transition \( \oencoO M p \trans \act \proc \).
If \( \act = \tau \), then we can reason as we did for the order 0 case.
The only other possibility is the case where \( \act = p(x, q) \) with \( x, q \) being fresh.
Then, by Lemma~\ref{lem:abs-unsolv-n-in}, there exists \( M' \) such that \( \proc \gexp \oencoO {M'} q \) and \( M' \) is an unsolvable of order \( n - 1 \).
By Lemma~\ref{lem:abs-unsolv-n-in}, we have \( \oencoO N p \wtrans{p(x,q)} \proctwo \gexp \oencoO {N'} q \) for some unsolvable term whose order coincides with that of \( M' \).
Since \( M' \) and \( N' \) are unsolvables of the same order, we have \( \oencoO {M'} q \relR \oencoO {N'} q \).
\end{proof}

We have discussed above why, in contrast, 
\(
\QoencoIO$
 equates $\lambda x. \Omega$ and $\Omega$. 
Similarly, \(
\QoencoIO$
 equates all
 the unsolvable terms.

\begin{restatable}{lem}{oencoIUnsolvInput}
\label{lem:oencoI-unsolv-input}
\hfill
\begin{enumerate}
  \item If \( M \) is an unsolvable of order \( 0 \), then \( \oencoI M p \) can do an input at \( p \).
        Moreover, if \( \oencoI M p \trans {p(x, q)} \proc \), then \( \proc \gexp \oencoI {\app M x} q \).
        \label{it:lem:oencoI-unsolv-input:zero}
  \item If \( M \) is unsolvable, then \( \oencoI M p \) can do an input at \( p \).
        Moreover, if \( \oencoI M p \trans {p(x, q)} \proc \), then there exists an unsolvable term \( M' \) such that \( \proc \gexp \oencoI {M'} q \).
        \label{it:lem:oencoI-unsolv-input:any-ord}
\end{enumerate}
\end{restatable}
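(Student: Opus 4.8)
The plan is to analyse the top-level syntactic shape of $M$ as seen by the optimised encoding $\oencoI$ of Figure~\ref{fig:oenc}. An unsolvable term is neither a variable nor a head normal form $\app x {M_1 \cdots M_n}$, since both of these are solvable, so the only shapes left are an abstraction $\lambda x. M'$ and a head redex $\app {(\lambda x_0. M_0)} {M_1 \cdots M_n}$ with $n \ge 1$. For part~\ref{it:lem:oencoI-unsolv-input:zero} an unsolvable of order $0$ cannot be an abstraction (that would force the order to be at least $1$), so it must be a head redex; part~\ref{it:lem:oencoI-unsolv-input:any-ord} must treat both shapes. In each case I would first exhibit the enabled input at $p$ and then compute the derivative of $\oencoI M p \trans {p(x,q)} \proc$, which Lemma~\ref{lem:gen-oenc-input-must-be-at-p} already guarantees to be an input at $p$.

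For the head redex $M = \app {(\lambda x_0. M_0)} {M_1 \cdots M_n}$, the encoding is $\res {p_0} ( \diprfx {p_0}{x_0, q_0} \oencoI {M_0} {q_0} \mid \oencoN {p_0} p {\oencoI {M_1}{} \cdots \oencoI {M_n}{}} )$, and the only free input occurrence sits in the trailing \IOwire $\link p {p_n}$, which begins with $\bin p {y, p_1}$. As $p$ differs from all names bound by the enclosing permeable outputs $\oprfx {p_0}{x_1, p_1} \cdots \oprfx {p_{n-1}}{x_n, p_n}$, this input permeates to the top, so $\oencoI M p$ can indeed fire $p(x,q)$. Tracking this single action through the wire-based desugaring of the permeable prefixes, the pending bound outputs and the replicated servers are left untouched, while the residual of $\link p {p_n}$ becomes the new permeable output $\oprfx {p_n}{x', q_1'} (\link q {q_1'} \mid \link {x'} x)$. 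I would then use the structural law of Lemma~\ref{lem:permeable-scong} to push the servers $\riproc {x_i}{r_i}{\oencoI {M_i}{r_i}}$ inside this new prefix, together with the identity $\link {x'} x = \riproc {x'}{r}{\oencoI x r}$, which holds by merely unfolding the variable \IOwire and $\oencoI x {}$. This reveals that $\proc \scong \oencoI {\app M x} q$, the encoding of $\app M x = \app {(\lambda x_0. M_0)} {M_1 \cdots M_n\, x}$, whence $\proc \gexp \oencoI {\app M x} q$ since $\scong$ is contained in $\gexp$. This settles part~\ref{it:lem:oencoI-unsolv-input:zero}; for the corresponding subcase of part~\ref{it:lem:oencoI-unsolv-input:any-ord} the witness is $M' \defeq \app M x$, which is unsolvable because applying an unsolvable term to any argument yields an unsolvable term.

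For the remaining abstraction subcase of part~\ref{it:lem:oencoI-unsolv-input:any-ord} we have $M = \lambda x. M'$ with $M'$ unsolvable, and $\oencoI M p = \diprfx p {x, q} \oencoI {M'} q$ offers the permeable input at $p$ directly. Here the derivative collapses by the generic behaviour of permeable inputs recalled at the end of Section~\ref{sec:EpiI}: after $p(x,q)$ fires, the leftover variable and location wires are absorbed through the substitution laws of Definition~\ref{ax:glink-piI}, whose premises hold because $\oencoI {M'} q$ is O-respectful for the variable wire and I-respectful for the location wire by Lemma~\ref{lem:glink-subst}; thus $\proc \gexp \oencoI {M'} q$, with $M'$ the required unsolvable term. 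The step I expect to be the main obstacle is the head-redex derivative computation, where one must follow the lone input action faithfully through the desugared nested permeable outputs and then recognise that the advanced trailing wire, after the structural rearrangement and the $\link {x'} x = \riproc {x'}{r}{\oencoI x r}$ identity, reproduces exactly the $(n+1)$-argument optimised encoding of $\app M x$; the remaining ingredients are bookkeeping together with the standard fact that unsolvability is preserved under application.
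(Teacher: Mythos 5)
Your proof is correct. For part~(\ref{it:lem:oencoI-unsolv-input:zero}) it coincides with the paper's own argument: both identify the order-$0$ unsolvable as a head redex, locate the unique free input at $p$ in the trailing wire $\link p {p_n}$, fire it, and recognise the residual---after folding $\link {x'} x = \riproc {x'} r {\oencoI x r}$ back into the encoding---as structurally congruent to $\oencoI {\app M x} q$, with $\app M x$ again unsolvable of order $0$.

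For part~(\ref{it:lem:oencoI-unsolv-input:any-ord}), however, you take a genuinely different route. The paper splits on the \emph{order} of $M$: the order-$0$ case is delegated to part~(\ref{it:lem:oencoI-unsolv-input:zero}), and for order $n > 0$ it head-reduces $M \Hred \lambda x. M'$, invokes validity of $\beta$-reduction (Lemma~\ref{lem:gen-oenc-validates-beta}) to obtain $\oencoI M p \gexp \oencoI {\lambda x. M'} p$, and then transfers the input transition across this expansion so that only the abstraction case needs to be computed. You instead split on the top-level \emph{shape} of $M$ (abstraction vs.\ head redex), computing the derivative directly in both cases: for a head redex of any order (e.g.\ $\ogre$) your part-(\ref{it:lem:oencoI-unsolv-input:zero}) computation applies verbatim---it never used order-$0$-ness---with witness $\app M x$, while for an abstraction the permeable input collapses via respectfulness. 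Your version is more elementary, avoiding both $\beta$-validity and the transition-transfer-along-$\gexp$ step, at the price of having to observe explicitly that the wire computation is order-independent and that unsolvability is closed under application (a fact the paper also uses); the paper's version buys a shorter argument by confining all syntactic computation to the order-$0$ case. One small citation slip: for the respectfulness of $\oencoI {M'} q$ you should appeal to Lemma~\ref{lem:glink-subst-oenc}, the optimised-encoding analogue, rather than Lemma~\ref{lem:glink-subst}, which is stated for $\Qencoabs$.
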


The case \eqref{it:lem:oencoI-unsolv-input:zero} of Lemma~\ref{lem:oencoI-unsolv-input} is used within the proof of case \eqref{it:lem:oencoI-unsolv-input:any-ord}; the order \( 0 \) case is spelled out since it is the most interesting case in which an input action does not merely correspond to consuming a \( \lambda \).

\begin{restatable}{lem}{oencoIEquateUnsolv}
 \label{lem:oencoI-equate-unsolv}
 For any unsolvable term \( M \), we have \( \oencoI M p \bisim \oencoI \Omega p \).
\end{restatable}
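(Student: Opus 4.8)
The plan is to prove the stronger statement that \emph{all} unsolvable terms have bisimilar encodings under $\QoencoIO$, which contains the claim as the special case $N = \Omega$ (recall that $\Omega$ is unsolvable, of order $0$). Concretely, I would show that the symmetric relation
\[
\relR \defeq \bigcup_p \left\{ (\oencoI M p, \oencoI N p) \;\midbar\; M, N \text{ unsolvable} \right\}
\]
is a bisimulation up-to expansion; the soundness of this technique is the empty-context instance of the up-to context and expansion theorem recalled in Section~\ref{sec:proof-techniques}. Since both derivatives will land directly in $\relR$ after a single expansion step, no enclosing context is needed, so up-to expansion alone suffices.

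The bisimulation game splits according to the challenge issued by, say, $\oencoI M p$, the other direction being symmetric because $M$ and $N$ play identical roles in $\relR$. There are three kinds of actions to consider. \textbf{Outputs are impossible}: by Lemma~\ref{lem:gen-oenc-unsolvable-no-output} an unsolvable term admits no weak output transition, so no output challenge can arise. \textbf{Silent moves}: if $\oencoI M p \trans \tau \proc$, then Lemma~\ref{lem:gen-oenc-tau-has-mathcing-red} yields $M'$ with $M \snred M'$ and $\proc \gexp \oencoI{M'} p$; since $\snred$ is contained in $\beta$-reduction and unsolvability is preserved by $\beta$-reduction, $M'$ is again unsolvable, whence $\oencoI{M'} p \relR \oencoI N p$. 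The matching move is the empty ($0$-step) transition of $\oencoI N p$. \textbf{Inputs}: by Lemma~\ref{lem:gen-oenc-input-must-be-at-p} the only possible input is at $p$, say $\oencoI M p \trans{p(x,q)} \proc$; by Lemma~\ref{lem:oencoI-unsolv-input}(2) there is an unsolvable $M'$ with $\proc \gexp \oencoI{M'} q$. As $N$ is also unsolvable, the same lemma gives $\oencoI N p \trans{p(x,q)} \proctwo$ with $\proctwo \gexp \oencoI{N'} q$ for some unsolvable $N'$ (choosing the same fresh $x,q$ by $\alpha$-conversion), and $\oencoI{M'} q \relR \oencoI{N'} q$ closes the case.

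I expect the only genuinely delicate point to be the silent case, and it is precisely there that \emph{up-to expansion} (rather than, say, up-to weak bisimilarity) is indispensable: an unsolvable of order $0$ such as $\Omega$ can diverge through $\tau$'s, and we answer such challenges with zero steps on the other side; expansion is what keeps this move sound by controlling the $\tau$-budget. Everything else is an assembly of the three operational-correspondence lemmas of this section, whose hard content---notably the order-$0$ input behaviour, where consuming an input corresponds to forming $\app M x$ rather than stripping a leading $\lambda$---has already been discharged inside Lemma~\ref{lem:oencoI-unsolv-input}.
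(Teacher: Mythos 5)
Your proposal is correct and follows essentially the same route as the paper's own proof: the identical relation $\bigcup_p \{ (\oencoI M p, \oencoI N p) \mid M, N \text{ unsolvable} \}$, shown to be a bisimulation up-to expansion, with the same three lemmas (Lemma~\ref{lem:gen-oenc-unsolvable-no-output} to exclude outputs, Lemma~\ref{lem:gen-oenc-tau-has-mathcing-red} for $\tau$-moves answered by zero steps, and Lemma~\ref{lem:oencoI-unsolv-input} for inputs at $p$). Your additional remarks on why up-to expansion is the right technique for the silent case are accurate but not needed beyond what the paper records.
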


\begin{proof}
  We show that the relation \( \relR \) defined as
  \[
    \bigcup_p \left\{ (\oencoI M p, \oencoI N p) \mid \text{\( M \) and \( N \) are unsolvables} \right\}
  \]
  is a bisimulation up-to expansion.

  Suppose that \( \oencoI M p  \relR \oencoI N p \).
  We consider the case where \( \oencoI M p \) makes the challenge; we omit the opposite case as it is symmetrical.
  By Lemmas~\ref{lem:gen-oenc-unsolvable-no-output} and~\ref{lem:gen-oenc-input-must-be-at-p}, the only actions \( \oencoI M p \) can do is either a \( \tau \)-action or an input at \( p \).

  If \( \oencoI M p \trans {\tau} \proc \) then we can take \( \oencoI N p \wtrans{} \oencoI N p \) as the matching transition because we have \( \proc \gexp \oencoI {M'} p \relR \oencoI N p \) for some unsolvable term \( M' \) such that \( M \red M'\) by Lemma~\ref{lem:gen-oenc-tau-has-mathcing-red}.

  Assume that \( \oencoI M p \trans{p(x,q)} \proc \).
  Then, thanks to~\ref{it:lem:oencoI-unsolv-input:any-ord} of Lemma~\ref{lem:oencoI-unsolv-input}, there exists an unsolvable term \( M' \) such that \( \proc \gexp \oencoI {M'} q \).
  Similarly, by~\ref{it:lem:oencoI-unsolv-input:any-ord} of Lemma~\ref{lem:oencoI-unsolv-input}, we have \( \oencoI N p \trans{p(x, q)} \gexp \oencoI {N'} q \) for some unsolvable term \( N' \).
  The claim follows because \( \oencoI {M'} q \relR \oencoI {N'} q \).
\end{proof}

\subsection{Full abstraction proofs}
\label{sec:BT-LT:full-abst}
The goal of this subsection is to prove the following two theorems:
\begin{restatable}[Full abstraction for \qLT{}]{thm}{fullabstLT} 
\label{thm:full-abst-LT}
   \( \LT M = \LT N \) if and only if \( \encoOI M {} \bisim \encoOI N {} \).
\end{restatable}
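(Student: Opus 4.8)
The plan is to prove the two directions separately, exploiting the optimised encoding $\QoencoOI$ throughout, since Lemma~\ref{lem:gen-oenc-is-optimisation} guarantees $\encoOI M p \gexp \oencoOI M p$ and hence $\encoOI M {} \bisim \encoOI N {}$ if and only if $\oencoOI M {} \bisim \oencoOI N {}$. For the direction $\LT M = \LT N \implies \oencoOI M {} \bisim \oencoOI N {}$, I would build a bisimulation relation up-to expansion, seeded by pairs $(\oencoOI M p, \oencoOI N p)$ whose associated terms are \qLT{}-bisimilar. The key is a tight operational correspondence: by Lemma~\ref{lem:gen-oenc-tau-has-mathcing-red}, any $\tau$-move of $\oencoOI M p$ is matched by strong call-by-name reduction of $M$, which preserves the \qLT{}; and by Lemmas~\ref{lem:gen-oenc-input-must-be-at-p} and~\ref{lem:gen-oenc-output-implies-solvable}, the only visible actions are an input at $p$ (consuming a leading $\lambda$) or an output at the head variable (when $M$ is solvable). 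The \qLT{} definition distinguishes precisely three behaviours — order $\omega$ unsolvables ($\top$), order $n$ unsolvables ($\lambda\seq x.\bot$), and solvables with head normal form $\lambda\seq x. \app y \seq M$ — and these must align with the process's observable branching.

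The delicate part of the first direction is the treatment of unsolvable terms, which is exactly why Section~\ref{sec:BT-LT:unsolv} was developed. Here I would invoke Lemmas~\ref{lem:abs-unsolv-n-in}, \ref{lem:oencoO-order-zero-only-tau}, and~\ref{lem:oencoO-equate-unsolv-same-order}: the latter already establishes that $\oencoOI M p \bisim \oencoOI N p$ iff $M$ and $N$ are unsolvables of the same order, which matches the \qLT{} clauses (1) and (2) for unsolvables. For solvable $M$ with head normal form $\lambda x_1 \ldots x_n. \app y {M_1 \cdots M_k}$, the process, after $n$ input actions and one output at $y$ carrying fresh names, exposes the replicated servers for the $M_i$; I would check that the bisimulation game then recurses into pairs $(\oencoOI {M_i} {}, \oencoOI {N_i} {})$, with $M_i$ and $N_i$ \qLT{}-bisimilar by the \qLT{}-bisimulation hypothesis on the root labels and children. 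The up-to expansion technique absorbs the administrative $\tau$-steps exposed by Lemma~\ref{lem:gen-oenc-tau-has-mathcing-red}.

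For the converse, $\oencoOI M {} \bisim \oencoOI N {} \implies \LT M = \LT N$, I would show that the relation
\[
  \relR \defeq \{ (M, N) \mid \oencoOI M p \bisim \oencoOI N p \}
\]
is a \qLT{}-bisimulation (in the sense of the coinductive characterisation mentioned after Definition~\ref{def:lassen-bisim}). Given $M \relR N$, I case-split on whether $M$ is solvable. If $M$ is unsolvable, then by Lemma~\ref{lem:gen-oenc-unsolvable-no-output} $\oencoOI M p$ can never reach an output, so $\oencoOI N p$ cannot either, whence (by Corollary~\ref{c:solv}, valid for $\QoencoOI$ since its converse holds in all three concrete encodings) $N$ is also unsolvable; and Lemma~\ref{lem:oencoO-equate-unsolv-same-order} forces $M$ and $N$ to have the same order, matching the \qLT{} clauses. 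If $M$ is solvable with head normal form $\lambda\seq x.\app y \seq M$, then $\oencoOI M p$ performs the corresponding inputs followed by an output at $y$ (Corollary~\ref{c:solv} and Lemma~\ref{lem:gen-oenc-output-implies-solvable}); the bisimilar $\oencoOI N p$ must reproduce the \emph{same} number of inputs and an output at the \emph{same} subject $y$, which pins down $N$'s head normal form to have the identical leading abstractions and head variable, and equal arity. The derivatives then give $\oencoOI {M_i} {} \bisim \oencoOI {N_i} {}$, i.e.\ $M_i \relR N_i$, closing the coinduction.

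I expect the main obstacle to be the first direction's bookkeeping of administrative reductions together with the variable wires: after the head output fires, the argument servers $\riproc {x_i} r {\oencoOI {M_i} r}$ must be shown to be correctly isolated so that the bisimulation recurses cleanly into the children. Making the up-to expansion and up-to context reasoning precise here — ensuring the common context surrounding the $k$ parallel servers can be factored out and that the residual wires connecting the exported names behave as substitutions (via Lemma~\ref{lem:glink-subst}) — is where the care is needed; the unsolvable-order analysis, by contrast, is already packaged by the lemmas of Section~\ref{sec:BT-LT:unsolv}.
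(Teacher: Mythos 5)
Your proposal does not follow the paper's route: the paper proves this theorem by checking the conditions of the general Sangiorgi--Xu result (Theorem~\ref{thm:SangiorgiXu}) --- guardedness of variable contexts, existence of inverse contexts (Lemma~\ref{lem:gen-oenc-inverse-ctx}), validity of \( \beta \), the unsolvable-term lemmas, and pairwise non-equivalence of \( \Omega \), head-variable applications and abstractions, the latter established by counting consecutive actions --- and then invokes that theorem as a black box. Your plan is instead to re-prove full abstraction directly, and in doing so it hits a genuine gap at the crux of the soundness direction. The step ``the derivatives then give \( \oencoO {M_i} {} \bisim \oencoO {N_i} {} \), i.e.\ \( M_i \relR N_i \), closing the coinduction'' does not follow. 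After the head output fires, the encodings of the children \( M_i \), \( N_i \) are not stand-alone derivatives: they sit under replicated input prefixes, in parallel with wires and with one another, inside a common context sharing restricted names. From bisimilarity of two such composites one cannot conclude componentwise bisimilarity: bisimilarity has no cancellation property (\( P \mid R \bisim Q \mid R \) does not imply \( P \bisim Q \)), and \( C[\seq P] \bisim C[\seq Q] \) does not imply \( P_i \bisim Q_i \). Extracting the \( i \)-th child is exactly what the inverse-context condition of Theorem~\ref{thm:SangiorgiXu} (its condition~(2), established for these encodings in Lemma~\ref{lem:gen-oenc-inverse-ctx}) provides: one plugs both sides into a context \( D_i \), uses congruence of \( \bisim \), and reads off equivalence of the extracted components. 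Your proposal never invokes or reconstructs this argument; it is the essential missing idea, and it is precisely what lets the paper delegate the whole bisimulation construction to Theorem~\ref{thm:SangiorgiXu}.

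There is a second, more local, inaccuracy in the same direction. You assert that bisimilarity forces \( N \)'s head normal form to have ``the same number of inputs and an output at the same subject'', thereby pinning down the leading abstractions and the arity. With permeable prefixes this needs more than matching of single actions: the output at the head variable and the inputs consuming the leading \( \lambda \)'s are concurrently enabled --- for instance \( \oencoO {\lambda x. \app y x} {p} \) can perform its output at \( y \) before its input at \( p \) --- so the number of leading \( \lambda \)'s and the number of arguments must be separated by counting how many consecutive outputs can occur before an input at \( p \) becomes available, which is exactly the argument the paper uses when checking condition~(5) and the conditions under \textbf{(LT)} of Theorem~\ref{thm:SangiorgiXu}. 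By contrast, your completeness direction (a bisimulation up-to context and expansion seeded with \qLT{}-equal pairs, with unsolvables discharged by Lemma~\ref{lem:oencoO-equate-unsolv-same-order} and administrative steps absorbed via Lemma~\ref{lem:gen-oenc-tau-has-mathcing-red}) is viable in outline and close to how such results are proved in the literature; but as written, the soundness half is not a proof.
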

\begin{restatable}[Full abstraction for \qBT{}]{thm}{fullabstBT}
\label{thm:full-abst-BT}
   \( \BT M \!=\! \BT N \) if and only if \( \encoI M {} \bisim \encoI N {} \).
\end{restatable}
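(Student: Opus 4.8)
The plan is to establish both theorems simultaneously, since they differ only in the treatment of leaf (unsolvable) terms, and to work throughout with the optimised encodings. By Lemma~\ref{lem:gen-oenc-is-optimisation} we have $\Qencoabs \enco M p \gexp \oenco M p$ for every instantiation of the wires, so that $\encoOI M {} \bisim \oencoO M {}$ and $\encoI M {} \bisim \oencoI M {}$; as $\bisim$ is an equivalence, it suffices to prove the statements with $\oencoO{}{}$ and $\oencoI{}{}$ in place of $\encoOI{}{}$ and $\encoI{}{}$. The whole argument rests on the operational reading of Section~\ref{sec:oenc}: when $M$ is solvable, say with head normal form $\lambda x_1 \cdots x_m . \app y {M_1 \cdots M_n}$, the process $\oenco M p$ first performs exactly $m$ input actions at its location (consuming the leading $\lambda$s, with no inner $\lambda$ observable by Lemma~\ref{lem:gen-oenc-input-must-be-at-p}), then an output whose subject exposes the head variable $y$ (Corollary~\ref{c:solv} and Lemma~\ref{lem:gen-oenc-output-implies-solvable}) and which installs the argument encodings $\oenco {M_1} {}, \ldots, \oenco {M_n} {}$ as replicated servers inside the fixed context of wires and restrictions $\oencoN {p_0} p {\cdots}$ of Figure~\ref{fig:oenc}. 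Thus the three data recorded by a tree node --- the number of leading $\lambda$s, the head variable, and the children --- are precisely the observable shape of the first visible action of $\oenco M p$.

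For \textbf{soundness} I would start from a $\qLT$- (resp.\ $\qBT$-) bisimulation $\relR$ on $\lambda$-terms, i.e.\ the coinductive characterisation of $\qLT$- (resp.\ $\qBT$-) equality, and show that
\[
  \mathcal{S} \defeq \{ (\oencoO M p, \oencoO N p) \mid M \relR N \}
\]
(and, for the $\qBT$ case, the analogous relation built from $\oencoI$) is a bisimulation up to context and expansion (Definition~\ref{def:up-to-ctx}). On a pair with $M \relR N$ both solvable, a $\tau$-move is absorbed by expansion through Lemma~\ref{lem:gen-oenc-tau-has-mathcing-red}, while the input and output moves are matched by replaying the same moves on the other side; the common context $\oencoN {p_0} p {\cdots}$ is then factored out by ``up-to context'', leaving exactly the residual pairs $\oencoO {M_i} {} \mathrel{\mathcal{S}} \oencoO {N_i} {}$, which hold because $M_i \relR N_i$. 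The unsolvable (leaf) pairs are discharged by the dedicated lemmas: for the $\OIwire$ encoding by Lemma~\ref{lem:oencoO-equate-unsolv-same-order}, which equates unsolvables of the same order (matching the order-sensitive leaves $\lambda \seq x . \bot$ and $\top$ of $\qLT$s), and for the $\IOwire$ encoding by Lemma~\ref{lem:oencoI-equate-unsolv}, which equates all unsolvables (matching the single leaf $\bot$ of $\qBT$s).

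For \textbf{completeness} I would take
\[
  \relR \defeq \{ (M, N) \mid \oencoO M p \bisim \oencoO N p \}
\]
(and its $\oencoI$-analogue) and show it is a $\qLT$- (resp.\ $\qBT$-) bisimulation. If $M$ is unsolvable, then $\oencoO M p$ can reach no output (Lemma~\ref{lem:gen-oenc-unsolvable-no-output}); since $\bisim$ preserves this, $N$ is unsolvable too, and for $\qLT$s Lemma~\ref{lem:oencoO-equate-unsolv-same-order} additionally forces $M$ and $N$ to have the same order, read off as the number of consecutive inputs via Lemmas~\ref{lem:abs-unsolv-n-in} and~\ref{lem:oencoO-order-zero-only-tau}, matching clauses~(1)--(2) of the $\qLT$ definition; for $\qBT$s both are simply leaves $\bot$. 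If $M$ is solvable, the first-action analysis above lets me read the number of leading $\lambda$s, the head variable and the number of arguments off $\oencoO M p$; bisimilarity forces $N$ to match all three and to have bisimilar argument encodings, whence $M_i \relR N_i$ by definition of $\relR$ --- this is exactly the solvable clause of the tree bisimulation, handled uniformly for both trees.

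The \textbf{main obstacle} I expect is the bookkeeping of the solvable case, and in particular making the ``first visible action'' picture precise: tracking how the head variable is exported through the \emph{variable} wire attached to $y$, so that the output subject really identifies $y$ (including when $y$ is one of the $x_i$), and checking that the argument holes sit in a context position where the up-to-context technique of Definition~\ref{def:up-to-ctx} legitimately applies, since they appear under replication and several restrictions alongside the continuation wire $\glink p {p_n}$. A secondary delicate point, specific to the two encodings, is that the entire difference between $\qLT$s and $\qBT$s is localised in the leaf clauses and is supplied by the contrasting behaviour of $\OIwires$ and $\IOwires$ on unsolvables established in Section~\ref{sec:BT-LT:unsolv}, whereas the solvable case must be carried out by one and the same argument in both theorems.
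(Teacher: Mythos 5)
There is a genuine gap, and it sits exactly where the paper deliberately avoids a direct proof: your completeness direction, from \( \oencoI M p \bisim \oencoI N p \) to \( \BT M = \BT N \). The step ``bisimilarity forces \( N \) to match all three and to have bisimilar argument encodings'' cannot be read off the bisimulation game. After the visible actions, the encodings of the arguments \( M_i \) and \( N_i \) sit under replicated input prefixes at the \( x_i \), behind restrictions and wires, inside a common context; weak bisimilarity of the composite processes does \emph{not} imply bisimilarity of the components, since neither contexts nor parallel components can be cancelled in general. Extracting \( \oencoI {M_i} {} \bisim \oencoI {N_i} {} \) requires the \emph{inverse context} property (Lemma~\ref{lem:gen-oenc-inverse-ctx}) together with congruence of \( \bisim \), and this is precisely the role of condition~(\ref{it:thm:SangiorgiXu:inv-ctx}) of Theorem~\ref{thm:SangiorgiXu}. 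The paper's proof of Theorem~\ref{thm:full-abst-BT} never runs the game directly: it verifies the conditions of Theorem~\ref{thm:SangiorgiXu} (guardedness, inverse contexts, validity of \( \beta \), the lemmas on unsolvables, and the pairwise-distinction conditions) and lets the general result of~\cite{SangiorgiXu18} carry out the coinductive bookkeeping that your sketch treats as automatic.

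Second, your ``first visible action'' picture is false for the \IOwire{} encoding, which undermines the claim that the number of leading \( \lambda \)s, the head variable and the argument count are ``the observable shape of the first visible action''. With \IOwires{}, the process \( \oencoI {\app y {\seq M}} p \) --- a head normal form with \emph{no} leading \( \lambda \) --- can perform an input at \( p \) through the wire \( \link p {p_n} \), evolving to the encoding of \( \app y {\seq M \; x} \); order-\( 0 \) unsolvables do inputs as well (Lemma~\ref{lem:oencoI-unsolv-input}). Moreover, permeable prefixes let the output at a free head variable fire before any input at \( p \). Consequently inputs do not count \( \lambda \)s: for instance \( \oencoI {\lambda x. \app y x} p \) and \( \oencoI y p \) have exactly the same sets of initial input and output actions, and the paper separates such pairs only by counting \emph{consecutive outputs} (the argument around~\eqref{eq:thm:full-abst-BT:count-out} and the final case analysis in the proof of Theorem~\ref{thm:full-abst-BT}) --- a point your proposal never addresses, although it is where the \qBT{} case is genuinely delicate. (A smaller repair is also needed in your other direction: you cannot invoke Lemma~\ref{lem:oencoI-equate-unsolv}, a bisimilarity statement, inside a bisimulation up-to context and expansion, since up-to \( \bisim \) is unsound for weak bisimilarity; instead the pairs of encodings of unsolvables must be folded into the candidate relation, as in the proof of that lemma itself.)
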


The proofs exploit  work by Sangiorgi and Xu~\cite{SangiorgiXu18}, which sets  conditions
for obtaining full abstraction with respect to \qLTs{} and \qBTs{} in an encoding of the  $\lambda$-calculus into a process calculus.
Our proofs go through each such condition, showing that it is satisfied.

The rest of this sections is organised as follows.
We first review the general conditions given in~\cite{SangiorgiXu18} (Section~\ref{sec:conditions}).
Then we show that \( \QencoOI \) satisfies the conditions for \qLTs{} and \( \QencoIO \) satisfies that for \qBTs{} (Section~\ref{sec:checking-conditions}).

\subsubsection{General conditions}
\label{sec:conditions}

Here we present a simplified version of the conditions given
in~\cite{SangiorgiXu18}, tailored to our needs,
where the calculus is  \piI{}, and the relations to be considered are bisimilarity and the
expansion relation for \piI{} (the conditions in ~\cite{SangiorgiXu18} are parametric with
respect to the calculus and its behavioural relations).
We also show that some conditions can be proved at the level of the abstract (optimised) encoding \( \Qoenco{} \).

We begin with reviewing some needed terminology.
An \emph{abstraction context}
of an encoding \( \Qenco \) is the context obtained by encoding
the $\lambda$-calculus context \( \lambda x. \hole \), that is, \( \encoE{\lambda x. \hole}{} \) (assuming that $\lambda$-calculus holes are mapped onto identical process holes);
similarly, a
\emph{variable context} of \( \Qenco \) is the encoding of the
$\lambda$-calculus  \(
n \)-hole context \( \app x {\hole_1 \cdots \hole_n} \).  A context is \emph{guarded} if each hole appears
underneath some proper (i.e., non-permeable) prefix.
A \( n \)-hole context \( C \)
\emph{has an inverse with respect to \( \lexp \)} if, for every
\( i \in \{1, \ldots, n \} \), there exists a \( \pi \)-context \( D_i
\) such that \( D_i[C[\seq \agent]] \gexp \bout a {\seq b}.\iproc b z
{\app \agent_i \langle z \rangle} \) for fresh names \( a, b, z\) such
that \( b \in \seq b \).
Intuitively, $D_i$ allows us to bring up the content of the $i-$th hole of the
  context $C$, which can be reached and activated using the prefixes at $a $ and $b$.

\begin{thmC}[\cite{SangiorgiXu18}]
\label{thm:SangiorgiXu}
\let\oldenco\enco
\renewcommand*{\enco}[2]{\Qenco \oldenco{#1}{#2}}
Let \( \Qenco \) be an encoding of the $\lambda$-calculus into \piI.
Suppose that the encoding satisfies the following conditions.
\begin{enumerate}
\item The variable contexts of \( \Qenco \) are guarded;
\item The abstraction and variable contexts of \( \Qenco \) have
    an inverse with respect to  \( \lexp \), provided that the every abstraction \( F \) that fills the holes of the context satisfies \( F = \enco {M} {} \) for some \( \lambda \)-term \( M \);
    \label{it:thm:SangiorgiXu:inv-ctx}
\item \( \Qenco \) and \( \lexp \) validate the \( \beta \) rule;
\item If \( M \) is an unsolvable of order \( 0 \) then \( \enco M {} \bisim \enco \Omega {} \);
\item The terms \( \enco {\Omega} {} \), \( \enco {\app x {\seq  M}} {} \), \( \enco {\app x {\seq  M'}} {} \), and \( \enco {\app y {\seq   M''}} {} \)
are pairwise unrelated by \( \bisim \), assuming that \( x \neq y \)  and  that  tuples \( \seq  M \) and \( \seq  {M'} \) have different  lengths.
\end{enumerate}
Then we have:
\begin{description}
\item[(LT)] if
\label{it:thm:SangiorgiXu:LT}
\begin{enumerate}
\item \( M \), \( N \)  unsolvable of order \( \omega \) implies that \( \enco M {} \bisim  \enco N {} \)
\label{it:thm:SangiorgiXu:LT-unsolv-omega}
and
\item for any \( M \) the term \( \enco {\lambda x.M}{} \) is unrelated by \( \bisim \) to \( \enco \Omega {} \) and to any term  of the form \( \enco {\app x {\seq  M}} {} \),
        \label{it:thm:SangiorgiXu:LT-Omega}
\end{enumerate}
then \( \Qenco \) and \( \bisim \) are fully abstract for \qLT{} equality;
\item[(BT)] if
\label{it:thm:SangiorgiXu:BT}
\begin{enumerate}
  \item \( M \) solvable implies that the term \( \enco {\lambda x.M} {} \) is unrelated by \( \bisim \) to \( \enco \Omega {} \) and to any term  of the form \( \enco {\app x {\seq  M}} {} \), and
  \label{it:thm:SangiorgiXu:BT-lambda}
  \item \( \enco M {} \bisim \enco \Omega {} \) whenever \( M \) is unsolvable of order \( \omega \),
  \label{it:thm:SangiorgiXu:BT-unsolv}
\end{enumerate}
then \( \Qenco \) and \( \bisim \) are fully abstract for \qBT{} equality.
\end{description}
\end{thmC}

\begin{rem}
  The condition~(\ref{it:thm:SangiorgiXu:inv-ctx}) is weaker than the original condition used in~\cite{SangiorgiXu18} (i.e., the new condition does not imply the old one).
  The original condition did not have the assumption that `abstractions that fills the hole must be encodings of \( \lambda \)-terms'.
  We need this condition because the encodings of abstraction and variable use permeable prefixes, and these only behave well with I-/O-respectful processes, such as those resulting from the encoding of \( \lambda \)-terms (Lemma~\ref{lem:glink-subst}).
  However, this does not cause a problem because,
  in~\cite{SangiorgiXu18}, whenever this condition about the inverse
  context is used, the holes are indeed filled with encodings of \( \lambda \)-terms.
\end{rem}

The existence of the inverse context can be proved at the abstract
level (i.e., for \( \Qoenco \) and \( \Qencoabs \)), and hence, it  need not be proved independently for \( \QencoOI \) and \( \QencoIO  \).
\begin{restatable}{lem}{inverseContext}
  \label{lem:gen-oenc-inverse-ctx}
  The abstraction and variable contexts of \( \Qoenco \) have inverse with respect to \( \lexp \), \emph{under the assumption that the every abstraction \( F \) that fills the context satisfies \( F = \oenco {M} {} \) for some \( \lambda \)-term \( M \)}.
\end{restatable}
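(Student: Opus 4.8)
The plan is to treat the two context shapes separately, working throughout with the optimised encoding \( \Qoenco \) of Figure~\ref{fig:oenc}, for which the abstraction context is \( \oenco {\lambda x. \hole} {} \), i.e.\ \( (p)\,\diprfx p {x, q} \oenco{\hole}{q} \), and the variable context is \( \oenco {\app x {\hole_1 \cdots \hole_n}} {} \). In each case I would exhibit the inverse context \( D_i \) \emph{explicitly} as a \piI-context built from restrictions, one fresh output at \( a \), one fresh input at \( b \), and wires, and then compute \( D_i[C[\seq A]] \) up to \( \gexp \) using the toolkit of Section~\ref{sec:proof-techniques}: the permeable-communication law (Lemma~\ref{lem:permeable-comm}), the wire-as-substitution laws (Definition~\ref{ax:glink-piI}), and the replication theorems (Lemma~\ref{lem:rep-thm}). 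The standing hypothesis that every hole is filled with some \( \oenco {M_i}{} \) enters exactly here: it lets me discharge the respectfulness premises of these laws, since by Lemma~\ref{lem:glink-subst} the encoding of a \( \lambda \)-term is I-respectful for location wires and O-respectful for variable wires.

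For the variable context I would let \( D_i \) feed a consumer to the head name \( x \) and then successively consume the nested chain of permeable outputs \( \oprfx {p_0}{x_1, p_1} \cdots \oprfx {p_{n-1}}{x_n, p_n} \), peeling off one prefix at a time with Lemma~\ref{lem:permeable-comm} and simplifying the residual wires by transitivity (Definition~\ref{ax:glink-piI}). After this phase the \( n \) replicated argument servers \( \riproc {x_j}{r_j}{\oenco {M_j}{r_j}} \) are exposed in parallel and their triggers \( x_j \) are private to \( D_i \). I would then garbage-collect the servers for \( j \neq i \) (their trigger is never used, so Lemma~\ref{lem:rep-thm}(\ref{it:lem:rep-thm:gc}) removes them) and wrap the surviving \( i \)-th server under a \emph{single} forwarding input: \( D_i \) emits \( \bout a {\seq b} \) with \( b \in \seq b \) and then offers \( \iproc b z {\bout {x_i}{z'}. \glink z {z'}} \) (a location wire, respecting the sorts). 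A single firing at \( b \) triggers the server once via Lemma~\ref{lem:rep-thm}(\ref{it:lem:rep-thm:comm}), the now-idle private server is collected by Lemma~\ref{lem:rep-thm}(\ref{it:lem:rep-thm:gc}), and the residual wire is absorbed into the encoding by Lemma~\ref{lem:glink-subst}(\ref{it:lem:glink-subst:cont}); hence \( \res {x_i}(\riproc {x_i}{r_i}{\oenco {M_i}{r_i}} \mid \iproc b z {\bout {x_i}{z'}. \glink z {z'}}) \gexp \iproc b z {\oenco {M_i} z} \), so that \( D_i[C[\seq A]] \gexp \bout a {\seq b}. \iproc b z {\oenco {M_i} z} \) as required.

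For the abstraction context the hole sits directly under the binder, so there is no replication to thin; the difficulty is instead to strip the leading \( \lambda \) (the permeable input at the location) while re-thunking the body. I would let \( D_1 \) emit \( \bout a {\seq b} \) with \( b \in \seq b \) and, only upon a subsequent input \( b(z) \), drive the permeable input at the restricted location \( p \) by outputting there, wiring the freshly sent trigger to the free variable \( x \) of the body and the received \( z \) to the \( \lambda \)'s continuation. Consuming \( \diprfx p {x, q} \oenco M q \) this way (Lemma~\ref{lem:permeable-comm}, with the respectfulness premises met by Lemma~\ref{lem:glink-subst}) leaves the body running at \( z \) with its trigger identified, through a variable wire, with the free \( x \); the wires are absorbed by Lemma~\ref{lem:glink-subst}(\ref{it:lem:glink-subst:cont}) and~(\ref{it:lem:glink-subst:var}), yielding \( D_1[C[A]] \gexp \bout a {\seq b}. \iproc b z {\oenco M z} \). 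Because the body becomes active only after \( b(z) \) fires, and because \( x \) is passed through as a free name rather than being restricted, the result genuinely has the canonical thunk shape with \( x \) free in \( \oenco M z \), exactly as expected of \( \oenco{M}{} \) applied to \( z \).

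The main obstacle I expect is the variable context, on two counts. First, the target input \( \iproc b z \) is \emph{non-replicated}, whereas the content of each hole lives under a \emph{replicated} server inside \( C[\seq A] \); reconciling the two requires the single-firing forwarder together with a careful garbage-collection argument, and one must check that the private replicated server really can be discarded after exactly one use — this is where the linearity of location names and the replication theorems (Lemma~\ref{lem:rep-thm}) are essential, and where the direction of \( \gexp \) (we only ever remove \( \tau \)-steps) must be verified. Second, throughout the peeling phase one must respect the two-sorted wire discipline and re-verify the respectfulness premises at each application of the permeable and wire laws; keeping these side conditions satisfied — rather than the individual rewrites, which are routine — is the delicate part. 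The abstraction context, by contrast, becomes essentially routine once the \emph{delay-the-location-until-\(b\)} trick and the free forwarding of the trigger are in place.
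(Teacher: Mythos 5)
Your overall skeleton matches the paper's proof: explicit inverse contexts, computed up to \( \gexp \) with Lemma~\ref{lem:permeable-comm}, the wire laws of Definition~\ref{ax:glink-piI}, the replication theorems, and respectfulness discharged through the hypothesis \( F = \oenco M {} \); your abstraction-context argument and your single-firing forwarder \( \iproc b z {\bout {x_i}{z'}. \glink z {z'}} \) for the \( i \)-th server are essentially the paper's own steps. There is, however, a genuine gap in the variable-context case: the head variable \( x \) of \( \app x {\hole_1 \cdots \hole_n} \). For the claimed expansion \( D_i[C[\seq A]] \gexp \bout a {\seq b}. \iproc b z {\oenco {M_i} z} \) to make sense, the consumption of the context's initial output at \( x \) must be a \( \tau \)-step, so \( D_i \) must \emph{restrict} \( x \); otherwise the left-hand side has a visible output at the free name \( x \) that the right-hand side, whose only initial action is at \( a \), cannot match. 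But once \( x \) is restricted by \( D_i \), every free occurrence of \( x \) in the fillings \( \oenco {M_j}{} \) is captured as well, and such occurrences are perfectly legal (the \( M_j \) may mention the head variable). Your conclusion is then false as stated: take \( M_i = x \); the right-hand side \( \bout a {\seq b}. \iproc b z {\oenco x z} \) can eventually perform an output at the free name \( x \), whereas on the left all occurrences of \( x \) are bound by \( D_i \), so no relation contained in \( \bisim \), let alone \( \gexp \), can hold. Your two requirements — make the communication at \( x \) internal, yet keep \( x \) available inside \( \oenco {M_i} z \) — clash, and your plan contains no device to reconcile them.

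The paper's resolution is precisely such a device: \( D \) exports a fresh name \( x' \) in the bound output at \( a \) (so \( \seq b = (x', b) \), which the definition of inverse context permits), restricts \( x \) together with the context's location \( p \), and places a variable wire between \( x' \) and the restricted \( x \); the wire-as-substitution laws (Definition~\ref{ax:glink-piI} and Lemma~\ref{lem:glink-subst-oenc}) then rename the captured occurrences of \( x \) in \( \oenco{M_i}{} \) into the exported name, yielding \( \bout a {x, b}. \iproc b r {\oenco {M_i} r} \) up to \( \alpha \)-conversion. Note that the same device is what makes your abstraction case work: there \( x \) is restricted \emph{by the context itself} (the permeable input \( \diprfx p {x,q} \) binds it), so your remark that \( x \) ``is passed through as a free name rather than being restricted'' is inaccurate — it is the wire from the fresh trigger to an external name, via Lemma~\ref{lem:glink-subst-oenc}, that re-exposes it. Finally, since you work with \( \Qoenco \), the substitution lemma you need throughout is Lemma~\ref{lem:glink-subst-oenc} rather than Lemma~\ref{lem:glink-subst}; that is a citation slip, not a gap.
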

The proof is by simple algebraic reasonings such as I-{} and O-respectfulness; details are given in Appendix~\ref{sec:BT-LT-full-abst-main}.

\subsubsection{Checking conditions for \qLT{} and \qBT{}}
Thanks to the general conditions we described, to prove the full-abstraction results, we only need to show that \( \QencoOI \) and \( \QencoIO \) satisfy the required conditions.
We first show that \( \QencoOI \) indeed satisfies the conditions for \qLT{}.
\label{sec:checking-conditions}
\fullabstLT*
\begin{proof}
  We check the conditions given in Theorem~\ref{thm:SangiorgiXu}.
  It suffices to give the proof for the optimised encoding \( \QoencoOI \).

  Some conditions are trivial or have already been checked.
  The variable contexts of \( \QoencoOI \) is clearly guarded, the condition about the inverse context is Lemma~\ref{lem:gen-oenc-inverse-ctx}, the validity of \( \beta \) is Lemma~\ref{lem:gen-oenc-validates-beta} and we have proved that unsolvable terms of order \( 0 \) are equated (Lemma~\ref{lem:oencoO-equate-unsolv-same-order}).

  We first see that \( \oencoO \Omega {} \), \( \oencoO {\app x {\seq M}} {} \), \( \oencoO {\app x {\seq M'}} {} \) and \( \oencoO {\app y {\seq M''}} {} \) are pairwise unrelated under the assumption that \( x \neq y \) and \( \seq M \) and \( \seq M' \) have different lengths.
  Since \( \oencoO \Omega p \) cannot do an output action (Lemma~\ref{lem:gen-oenc-unsolvable-no-output}), this process is different from the rest of the processes.
  It is also obvious that \( \oenco {\app y {\seq M''}} p \) is different from \( \oencoO {\app x {\seq M}} {} \) and \( \oencoO {\app x {\seq M'}} {} \).
  We are left with checking that \( \oencoO {\app x {\seq M}} {} \) and \( \oencoO {\app x {\seq M'}} {} \) are not bisimilar.
  Let \( m \defeq \len{\seq M} \) and \( n \defeq \len{\seq M'} \), and without loss of generality, assume that \( m < n \).
  Then \( \oencoO {\app x \seq {M}} p \) can do an input at \( p \) after \( m + 2 \) outputs.
  More concretely, we have \( \oencoO {\app x {\seq M}} p \trans{\bout x {p_0}} \trans{\bout {p_0}{x_1, p_1}} \cdots \trans{\bout {p_n}{x_{n+1}, p_{n+1}}} \trans {p(y,q)} \proc \) for some \( \proc \).
  However, \( \oencoO {\app x \seq {M'}} p \) cannot do an input at \( p \) after \( m + 2 \) outputs, and thus these two processes are not bisimilar.

  Now we look at the conditions in \textbf{(LT)} of Theorem~\ref{thm:SangiorgiXu}.
  The condition~\ref{it:thm:SangiorgiXu:LT-unsolv-omega} holds because we have proved that unsolvable terms of order \( \omega \) are equated (Lemma~\ref{lem:oencoO-equate-unsolv-same-order}).
  It remains to show the condition~\eqref{it:thm:SangiorgiXu:LT-Omega}.
  Clearly, \( \oencoI {\lambda x. M} p \) is not bisimilar to \( \oencoI \Omega p \) or to the encoding of any term of the form \( \app x {\seq M}\) because \( \oencoI {\lambda x. M} p \) can do an input at \( p \), but the others cannot.
\end{proof}

We conclude by proving that \( \QencoIO \) is fully abstract with respect to \qBTs{}.
\fullabstBT*
\begin{proof}
Conditions 1-5 of Theorem~\ref{thm:SangiorgiXu}  are checked similarly
as in the proof of the previous theorem.
  One difference is that for the encoding \( \QoencoIO \) we use
  Lemma~\ref{lem:oencoI-equate-unsolv} to say that unsolvable terms of
  order \( 0 \) are equated.
  Another (minor) difference is how to show that \( \app x {\seq M}\) and \( \app x {\seq N} \) are different when the length of \( \seq {M} \), say \( m \), and the length of \( \seq {N} \), say \( n \), are different.
  Without loss of generality suppose that \( m < n \).
  Then \( \oencoI {\app x \seq {M}} p \) can only do \( m + 1 \) consecutive outputs:
  \begin{align}
    &\oencoI {\app x {\seq M}} p \trans{\bout x {p_0}} \trans{\bout {p_0}{x_1, p_1}} \cdots \trans{\bout {p_n}{x_{n+1}, p_{n+1}}} \nonumber \\
    &\gexp \riproc x {r_1} {\oencoI {M_1} {r_1}} \mid \cdots \mid \riproc x {r_n} {\oencoI {M_n} {r_n}} \mid \link p {p_n}
    \label{eq:thm:full-abst-BT:count-out}
  \end{align}
  because \( \link p p_n \) cannot make an output action.
  On the other hand, \( \oencoI {\app x \seq {N}} p \) can do \( n + 1 \) consecutive outputs.

  Now we check the conditions in~\textbf{(BT)} of Theorem~\ref{thm:SangiorgiXu}.
  Condition~\eqref{it:thm:SangiorgiXu:BT-unsolv} holds because we proved that all the unsolvable terms are equated (Lemma~\ref{lem:oencoI-equate-unsolv}).
  So we are left with checking condition~\eqref{it:thm:SangiorgiXu:BT-lambda}.
  Let \( M \) be a solvable term.
  We need to check that \( \lambda x. M \) is unrelated to \( \Omega \) and any term of the form \( \app w {\seq N} \).
  Since \( M \) is solvable, we have \( \lambda x . M \wred \lambda x. \lambda \seq{y} . \app z {\seq M} \), where \( \seq y \) and \( \seq M \) are possibly empty.
  Since the encoding is valid with respect to \( \beta \)-reduction (Lemma~\ref{lem:gen-oenc-validates-beta}), it suffices to show that the encoding of \( \lambda x. \lambda \seq{y} . \app {z {\seq M}} \) is not bisimilar with the encoding of \( \Omega \) and \( \app w {\seq N} \).
  The process \( \oencoI {\lambda x. \lambda \seq{y} . \app {z {\seq M}}} p \) can do an output on \( z \) after doing some inputs that correspond to the leading \( \lambda \)s.
  However, \( \oencoI \Omega p \) cannot do an output action even
  after some \( \tau \) or input actions .
  Hence, the encoding of \( \lambda x. M \) and \( \Omega \) are not bisimilar.

  Finally, we show that the encodings of \( \lambda x. \lambda \seq{y} . \app z {\seq M} \) and \( \app w {\seq N} \) are not related by \( \bisim \).
  First, note that \( z \) needs to be a free variable and equal to \( w \), otherwise \( \oencoI {\lambda x. \lambda \seq{y} . \app z {\seq M}} p \) is not bisimilar with \( \oencoI {\app w {\seq N}} p \).
  Second, \( \seq{N} \) and \( \seq {M} \) must have the same length.
  Otherwise we can show that the two processes are not bisimilar by
  investigating the number of outputs that the two process can do as we did in~\eqref{eq:thm:full-abst-BT:count-out}; the fact that we have a leading \( \lambda \) in \( \lambda x . \lambda \seq y . \app w {\seq M} \) does not change the argument, as \( w \) must be a free variable and \( \lambda x \) is encoded as a permeable input.
  So, we need to show that \( \oencoI {\lambda x \seq{y}. \app w {\seq M}} p \) and \( \oencoI {\app w {\seq N}} p \) are not equated when \( \seq M \) and \( \seq N \) have the same length.
  Observe that \( \oencoI {\lambda x \seq y. \app w {\seq M}} p \trans {p(x, q)} \gexp \oencoI {\lambda \seq{y} . \app w \seq M} q \).
  The only matching transition \( \oencoI {\app w {\seq N}} p \) can do is
  \begin{align*}
    \oencoI {\app w {\seq N}} p \trans {p(x, q)}
    &\scong \oprfx w {p_0} \oencoN {p_0} q {\oencoIO{N_1} {}, \ldots, \oencoIO{N_n}{}, \oencoIO {x}{}} \\
    &= \oencoIO {\app w {\seq N \; x}} q
  \end{align*}
  where \( \seq N = N_1, \ldots, N_n \) because
  \begin{align*}
    \link p {p_n}
    &\trans {p(x, q)} \oprfx {p_n} {x_{n + 1}, p_{n + 1}} (\link q {p_{n + 1}} \mid \link {x_{n+1}} x) \\
    &=\oprfx {p_n} {x_{n + 1}, p_{n + 1}} (\link q {p_{n + 1}} \mid \riproc {x_{n+1}} {r_{n + 1}} \oencoIO x {r_{n + 1}}).
  \end{align*}
  Once again, we look at the number of consecutive outputs these process can do.
  The process \( \oencoI {\lambda \seq {y} . \app w \seq M} q \) can do \( n + 1 \) outputs whereas \( \oencoIO {\app w {\seq M \; x}} q \) can do \( n + 2 \) outputs.
  We therefore conclude that \( \lambda x. \lambda \seq{y} . \app z {\seq M} \) can never be equated to a term of the form \( \app w {\seq N} \).
\end{proof}

\begin{rem}
Neither \( \QencoIO \) nor \( \QencoOI \) validates the $\eta$-rule (i.e., the $\lambda$-theories induced are not extensional);
this follows from  Theorems~\ref{thm:full-abst-LT} and~\ref{thm:full-abst-BT}
(specifically, conditions~\ref{it:thm:SangiorgiXu:LT-Omega} of \textbf{(LT)} and~\ref{it:thm:SangiorgiXu:BT-lambda} of \textbf{(BT)} mentioned in their proofs, respectively).
\end{rem}

\section{Full Abstraction for \texorpdfstring{\qBTinfs{}}{BTs up to infinite eta}}
\label{sec:Dinf}

In this section we show that the encoding $\QencoP$ obtained by
instantiating the wires of the abstract encoding $\QencoA$ of Section~\ref{sec:abs-enc}
  with  the parallel
wires (the \Pwires)
 yields an encoding that is fully abstract with respect to 
\qBTinfs{}
(Böhm trees with  infinite \( \eta \)-expansion). 

We begin  by showing that
 \( \QencoP \) induces an \emph{extensional} \( \lambda \)-theory.
The result is a useful stepping stone towards the full abstraction result with respect to \qBTinf{},  and may help the readers to understand the differences with respect to the other two concrete encodings examined
in Section~\ref{sec:BT-LT}.
As we know (Section~\ref{sec:wires}) that
 \( \QencoP {} {}\) induces a \( \lambda \)-theory, 
 we remain to check the validity of \( \eta \)-expansion.
\begin{thm}
\label{t:encoP-validates-eta}
  For every \( M \) and \( x \notin \fv M \), we have \( \encoP M p \lexp \encoP {\lambda x. \app M x} p \).
\end{thm}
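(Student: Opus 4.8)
The plan is to prove the stronger structural fact $\encoP {\lambda x. \app M x} p \gexp \encoP M p$, which is exactly $\encoP M p \lexp \encoP {\lambda x. \app M x} p$, by pure algebraic manipulation, ultimately reducing it to the substitution law of Lemma~\ref{lem:glink-subst}(1). The whole argument rests on one observation specific to the \Pwires{}: the replicated server that the application clause uses to carry the argument $x$ coincides, for the \Pwires, with the variable wire $\olink z x$, and the surrounding permeable input/output shell reconstructs the location wire $\ilink p {q'}$. This matching is precisely the feature that distinguishes the \Pwires{} from the sequential wires and makes $\eta$ valid.

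First I would unfold the definitions of Figure~\ref{fig:abs-enc} instantiated with \Pwires{}, obtaining
\[
  \encoP {\lambda x. \app M x} p \;=\; \diprfx p {x, q} \res {q'}\bigl(\encoP M {q'} \mid \oprfx {q'}{z, p'}(\riproc z r {\encoP x r} \mid \ilink q {p'})\bigr) .
\]
Since $\encoP x r = \oprfx x {r'} \ilink r {r'}$, the server $\riproc z r {\encoP x r}$ is, up to $\alpha$-conversion, exactly the variable \Pwire{} $\olink z x = \riproc z r {\oprfx x {r'} \ilink r {r'}}$. Substituting this and comparing with the definition $\ilink p q = \diprfx p {y, p_1}\oprfx q {x, q_1}(\ilink {p_1}{q_1} \mid \olink x y)$, I would recognise that $\diprfx p {x, q}\oprfx {q'}{z, p'}(\olink z x \mid \ilink q {p'})$ is, up to $\alpha$-conversion and commutativity of $\mid$, the location wire $\ilink p {q'}$ itself.

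Next I would relocate the permeable input by the structural laws of Lemma~\ref{lem:permeable-scong}. As $q'$ is fresh I commute $\diprfx p {x, q}$ inside $\res {q'}$; and as $x \notin \fv M$ with $q$ fresh, so that $\{x, q\} \cap \fn{\encoP M {q'}} = \emptyset$, I pull $\encoP M {q'}$ out of the permeable input. Using the wire identification above, this yields $\encoP {\lambda x. \app M x} p \scong \res {q'}(\encoP M {q'} \mid \ilink p {q'})$. Finally, applying Lemma~\ref{lem:glink-subst}(1) to the \Pwire{} $\ilink p {q'}$ — legitimate because Lemma~\ref{lem:all-wires-satisfy-ax} shows the \Pwires{} satisfy Definition~\ref{ax:glink-piI} — gives $\res {q'}(\ilink p {q'} \mid \encoP M {q'}) \gexp \encoP M p$. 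Chaining the $\scong$-steps with this expansion produces $\encoP {\lambda x. \app M x} p \gexp \encoP M p$, i.e.\ $\encoP M p \lexp \encoP {\lambda x. \app M x} p$, as desired.

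I do not anticipate a genuine obstacle: once the two wire-pattern identifications are in place, everything is structural congruence plus a single invocation of the substitution lemma. The only points needing care are the bound-name bookkeeping (checking that the $\alpha$-renamings turning $\riproc z r {\encoP x r}$ into $\olink z x$ and the surrounding shell into $\ilink p {q'}$ are mutually consistent) and the side conditions of Lemma~\ref{lem:permeable-scong}, which hold exactly because $x \notin \fv M$. It is worth observing why the same route fails for the \IOwires{} and \OIwires{}: there the location wire uses an ordinary (non-permeable) prefix on one end, so it cannot be identified with the $\eta$-expansion shell, consistent with the fact that those encodings do not validate $\eta$.
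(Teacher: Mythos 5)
Your proposal is correct and follows essentially the same route as the paper's own proof: unfold the encoding, recognise the replicated server $\riproc z r {\encoP x r}$ as the variable \Pwire{} $\olink z x$ and the surrounding permeable-prefix shell as the location \Pwire{} $\ilink p {q'}$, rearrange by structural congruence using $x \notin \fv M$, and conclude with the substitution law of Lemma~\ref{lem:glink-subst}(1). The paper's argument is exactly this computation (with different bound-name choices), so there is nothing to add beyond noting that your closing observation about why the identification fails for \IOwires{} and \OIwires{} matches the paper's remark following the theorem.
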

 \begin{proof}
The process $\encoP {\lambda x. \app M x} p$ is
\[\diprfx p {x, q} \res r (\encoP M r \mid \oprfx r {x',
        q'} (\riproc {x'} {r'} {\encoP x {r'}} \mid \ilink {q}
        {q'} )) .
\]
 As $\riproc {x'} {r'} {\encoP x {r'}} \mathrel{=} \olink {x'} x$, we have:
 \begin{align*}
   \encoP {\lambda x. \app M x} p &\scong \res r (\encoP M r \mid  \diprfx p {x, q} \oprfx r {x',
        q'} (\olink {x'} x \mid \ilink {q} {q'}) ) \\
    &= \res r (\encoP M r \mid \ilink p r ) \gexp \encoP M p %
  \end{align*}
using Lemma~\ref{lem:glink-subst}.
\end{proof}
The above result relies on the 
use of permeable prefixes, both in the encoding of
$\lambda$-abstraction, and within the 
\Pwires. 

\begin{cor}
  Let \( {=_\pi} \defeq \{ (M, N) \mid \encoP M {} \bisim \encoP N {} \} \).
  Then \( =_\pi \) is an extensional \( \lambda \)-theory;
  that is, a congruence on \( \lambda \)-terms that contains \( \beta \) and \( \eta\)-equivalence.
\end{cor}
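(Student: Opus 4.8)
The plan is to combine the two facts already available for the parallel-wire encoding. Since the \Pwires{} satisfy the axioms of Definition~\ref{ax:glink-piI} (Lemma~\ref{lem:all-wires-satisfy-ax}), the abstract development of Section~\ref{sec:abs-enc} specialises to the concrete encoding $\QencoP$; in particular Corollary~\ref{cor:lambda-theory} already tells us that $=_\pi$ is a congruence on $\lambda$-terms containing $\beta$-equivalence, this being a consequence of the compositionality of the encoding and of the fact that bisimilarity is a congruence in \piI. So the only thing left to establish is that $=_\pi$ also contains $\eta$-equivalence.

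To obtain the $\eta$-rule, I would invoke Theorem~\ref{t:encoP-validates-eta}, which gives $\encoP M p \lexp \encoP{\lambda x. \app M x} p$ for every $M$ and every $x \notin \fv M$. Since $\lexp \subseteq \bisim$ (the inclusion order recorded in Section~\ref{sec:piI}), this immediately yields $\encoP M p \bisim \encoP{\lambda x. \app M x} p$, and hence $\encoP M {} \bisim \encoP{\lambda x. \app M x} {}$ after the componentwise extension of $\bisim$ to abstractions. Thus $M =_\pi \lambda x. \app M x$, so $=_\pi$ contains every instance of the $\eta$-rule.

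Finally, because $=_\pi$ is already a congruence that contains all instances of the $\eta$-rule, it must contain the least congruence generated by that rule, namely $\eta$-equivalence; together with the $\beta$-equivalence supplied by the first step, this makes $=_\pi$ an extensional $\lambda$-theory, as required. I do not expect a genuine obstacle in this corollary: the substantive work lies in Theorem~\ref{t:encoP-validates-eta}, whose argument hinges on the permeable prefixes used both in the encoding of abstraction and inside the \Pwires. The only point worth a moment's care is the asymmetry of $\lexp$; but since $\lexp$ refines the symmetric relation $\bisim$, the one-directional expansion is enough to place the pair $(M, \lambda x. \app M x)$ into $=_\pi$.
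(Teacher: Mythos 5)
Your proposal is correct and follows essentially the same route as the paper, which treats this corollary as an immediate consequence of Corollary~\ref{cor:lambda-theory} (instantiated with \Pwires{}, legitimate by Lemma~\ref{lem:all-wires-satisfy-ax}) together with Theorem~\ref{t:encoP-validates-eta} and the inclusion \( {\lexp} \subseteq {\bisim} \). Your added remarks about passing to the congruence closure of the \( \eta \)-rule and about the asymmetry of \( \lexp \) being harmless are exactly the right points of care, and nothing further is needed.
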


We are now ready to 
  prove that \( \QencoP \) is fully abstract with respect to   \qBTinfs{}.
We focus on completeness: if \( \BTinf M = \BTinf N \) then \( \encoP M {} \bisim \encoP N {} \).
Soundness will then be essentially derived from completeness, as \qBTinf{} equality is the maximal consistent sensible \( \lambda
 \)-theory (see e.g.~\cite{Barendregt84}).
To show completeness, we rely on the `unique solution of equation'
technique, reviewed in Section~\ref{sec:proof-techniques}.

\begin{rem}[Unique solutions versus up-to techniques]
\label{r:upto}
Results about encodings of $\lambda$-calculus into process calculi, in
previous sections of this paper and in the literature,  usually employ
up-to  techniques for bisimilarity, notably 
\emph{up-to context and expansion}.
In the techniques, expansion is used
to manipulate the derivatives of two transitions so to bring up a
common context.
Such techniques do not seem powerful enough for \qBTinf{}. 
The reason is that 
some of the required transformations would violate expansion (i.e., they
would require to replace a term by  a `less efficient' one), for
instance
 `$\eta$-expanding' a term  
\( \encoP z p \) into {\( \encoP
{\app \lambda y. \app z y} p \)}.
A similar problem has been observed in the case of Milner's
call-by-value encoding~\cite{DurierHS22}.  
\end{rem}

Suppose $\R$ is a \qBTinf{}-bisimulation  (Definition~\ref{def:lassen-bisim}).
We
define a (possibly infinite) system of equations \( \Eqs_{\relR} \), solutions of which will be 
obtained from the encodings of the pairs in $\R$.
There is one equation for each pair \( (M, N) \in {\relR} \).
We describe how each equation is defined, following the clauses of \qBTinf{}-bisimulation.
Take \( (M, N) \in {\relR} \) and assume \( \seq y = \fv{M, N} \).
\begin{enumerate}
  \item \label{it:equation-unsolv}
        If \( M \) and \( N \) are unsolvable, then,
for the right-hand side of the equation,
 we pick a
  non-divergent process that is 
 bisimilar to the encoding of  \(
  \Omega \):
        \[
        \app {X_{M, N}} {\seq y} = \unsolv
        \]
        For instance, we may choose \( \unsolv  \defeq  \bind p \! \diprfx p {x, q} \appI \unsolv q \).
  \item \label{it:equation-hnf}
        If \( M \Hred \lambda x_1 \ldots x_{l + m}. \app z {M_1
      \cdots M_{n + m} } \) and \( N \Hred \lambda x_1 \ldots
    x_{l}. \app z {N_1 \cdots N_n} \), then  the equation is:
        \[
\begin{array}{l}
        \app {X_{M, N}} {\seq y \; p} \defeq \;\\
        \begin{aligned}[t]
          &\diprfx p{x_1, p_1} \cdots \diprfx {p_{l + m - 1}} {x_{l + m}, p_{l + m}}  \oprfx z {w, q} \\
          & \mathcal{O}_{\mathtt{P}}^{n+m} \left\langle
        \begin{aligned}
          & q, p_{l+m}, \appI {X_{M_1, N_1}} {\seq y_1}, \ldots, \appI {X_{M_n, N_n}} {\seq y_n}, \\
          &\appI {X_{M_{n + 1}, x_{l + 1}}}  {\seq y_{n + 1}}, \ldots, \appI {X_{M_{n + m}, x_{l + m}}} {\seq y_{n + m}}
        \end{aligned}
      \right\rangle
       \end{aligned}
\end{array}
         \]
where
\begin{tabular}[t]{l}
 \( \seq y_i = \fv{M_i, N_i} \) for \( 1 \le i \le n \), \\
 \( \seq y_i = \fv{M_i, x_{i - n + l}} \) for \( n + 1 \le  i \le n + m \).
\end{tabular}
and where $\mathcal{O}_{\mathtt{P}}^{r}$ is the instantiation with \Pwires{} of $\mathcal{O}^{r}$ in Figure~\ref{fig:oenc}.
      \item For the case symmetric to~\eqref{it:equation-hnf}, where \( N \) reduces to a
        head normal form with more leading \( \lambda \)-abstractions,
        the equation is defined similarly to~\eqref{it:equation-hnf}.
\end{enumerate}
In (\ref{it:equation-unsolv}), the use of a divergent-free term
 \( \unsolv \) allows us to meet the condition about divergence  of the
 unique-solution technique.
 The right-hand side of (\ref{it:equation-hnf}) intuitively amounts to
 having, as a  body of the equation, the process \( \oencoP {\lambda x_1 \ldots x_{l +m}. \app z {X_{M_1, N_1} \cdots X_{M_{n +m}, x_{l +m}}}} {} \).

We show that the system  \( \Eqs_{\relR} \) of equations has the desired solutions.
\begin{lem}
  \label{lem:oencoP-equates-unsolvables}
 For any \( M \) unsolvable, we have:  \( \oencoP M p \bisim  \oencoP \Omega p   \bisim \appI \unsolv p \).
\end{lem}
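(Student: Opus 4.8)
The plan is to prove the slightly stronger statement that $\oencoP M p \bisim \appI \unsolv p$ for \emph{every} unsolvable $M$; the displayed chain then follows by transitivity of $\bisim$, taking $M \defeq \Omega$. I would establish this by exhibiting
\[
  \relR \defeq \bigcup_p \left\{ (\oencoP M p,\, \appI \unsolv p) \;\middle|\; M \text{ unsolvable} \right\}
\]
(closed under symmetry) and checking that it is a bisimulation up-to expansion.

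First I would determine the behaviour of $\appI \unsolv p = \diprfx p {x,q} \appI \unsolv q$. After desugaring the permeable input, the continuation $\appI \unsolv q$ is a permeable input on the \emph{bound} name $q$, whose only matching output occurs inside a wire guarded by the outermost input prefix at $p$; hence $\appI \unsolv p$ has no $\tau$-transition and no visible action other than the input at $p$, so in particular $\unsolv$ is divergence-free, as anticipated in clause~(1) of the equation system. Since $\appI \unsolv q$ is respectful (it uses $q$ only in input and does not mention $x$), the permeable-input law yields $\appI \unsolv p \trans{p(x,q)} \gexp \appI \unsolv q$.

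Next, for unsolvable $M$, Lemmas~\ref{lem:gen-oenc-unsolvable-no-output} and~\ref{lem:gen-oenc-input-must-be-at-p} restrict the actions of $\oencoP M p$ to a $\tau$ or an input at $p$. A $\tau$ is handled uniformly at the abstract level: Lemma~\ref{lem:gen-oenc-tau-has-mathcing-red} gives $\proc \gexp \oencoP {M'} p$ with $M \snred M'$, and $M'$ is again unsolvable since reduction preserves (un)solvability, so $\proc \gexp \oencoP {M'} p \relR \appI \unsolv p$ and the right-hand side replies with the empty move. For an input, I would use the \Pwires{} analogue of Lemma~\ref{lem:oencoI-unsolv-input}: if $M$ is unsolvable and $\oencoP M p \trans{p(x,q)} \proc$, then $\proc \gexp \oencoP {M'} q$ with $M'$ unsolvable (for order $0$ one obtains $M' = \app M x$, otherwise a leading $\lambda$ is consumed). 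This is matched by $\appI \unsolv p \trans{p(x,q)} \gexp \appI \unsolv q$, landing back in $\relR$. The symmetric challenges are dual: $\appI \unsolv p$ offers only its input at $p$, which $\oencoP M p$ answers by a (possibly weak) input at $p$ to $\gexp \oencoP {M'} q$, the input end at $p$ being always available through the permeable location wire at $p$.

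The main obstacle is exactly this input lemma for \Pwires{}. Whereas the $\tau$-case is subsumed by the abstract operational-correspondence results of Section~\ref{sec:oenc}, an input at $p$ of an order-$0$ unsolvable does \emph{not} correspond to a $\beta$-step but invents a fresh argument, precisely as in the \IOwire{} computation preceding Lemma~\ref{lem:oencoI-unsolv-input}. I would prove it by the same style of algebraic calculation, now unfolding the permeable wire $\ilink p q$ at the location end: firing its permeable input at $p$ and discharging the induced substitutions via Definition~\ref{ax:glink-piI} and Lemma~\ref{lem:permeable-comm} collapses the derivative, up to $\gexp$, either to $\oencoP {\app M x} q$ (order $0$) or to the encoding of the body of $M$ (a leading $\lambda$ being consumed). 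In both cases the result is unsolvable, so $\relR$ is preserved; this closes the case, and the remaining bookkeeping for the bisimulation up-to expansion is routine.
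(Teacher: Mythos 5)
Your proposal is correct and matches the paper's intended argument: the paper's proof of this lemma is just a pointer to the proof of Lemma~\ref{lem:oencoI-equate-unsolv}, i.e., a bisimulation up-to expansion built from the abstract operational-correspondence lemmas of Section~\ref{sec:oenc} plus a \Pwires{} analogue of Lemma~\ref{lem:oencoI-unsolv-input}, which is exactly what you reconstruct (including the order-$0$ case where the input yields \( \oencoP {\app M x} q \) rather than consuming a \( \lambda \)). Folding \( \appI \unsolv p \) directly into the candidate relation, rather than first relating encodings of unsolvables pairwise, is only a cosmetic repackaging of the same technique and is the natural way to cover the \( \appI \unsolv p \) component of the statement.
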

\begin{proof}
The reasoning is similar to that in Lemma~\ref{lem:oencoI-equate-unsolv} (whose proof is given in Appendix~\ref{sec:BT-LT-appx:unsolv}).
\end{proof}

\begin{lem}
\label{lem:solutions-of-equations}
  Let \( \relR \) be a \qBTinf{}-bisimulation and \( \Eqs_{\relR} \)
  be the system of equations defined from \( \relR \) as above.
  For each \( (M, N) \in {\relR} \), we define \( F_{M, N} \defeq \bind {\seq x, p} {\oencoP M p} \) and \( G_{M, N} \defeq \bind {\seq x, p} {\oencoP N p} \), where \( \seq x = \fv{M, N} \).
  Then \( \{ F_{M, N } \}_{(M, N) \in {\relR}} \) and \( \{ G_{M, N } \}_{(M, N) \in {\relR}} \) are solutions of \( \Eqs_{\relR} \).
\end{lem}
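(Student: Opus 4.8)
The plan is to unfold Definition~\ref{d:un_sol} and verify, equation by equation, that $F_{M,N}\bisim E_{M,N}[\til F]$ and $G_{M,N}\bisim E_{M,N}[\til G]$, where $E_{M,N}$ is the body of the equation attached to $(M,N)\in\relR$. The key preliminary observation is that substituting the equation variables by the corresponding abstractions turns each body into the optimised encoding of a concrete $\lambda$-term. Since $F_{M_i,N_i}=\bind{\seq y_i,p}{\oencoP{M_i}p}$, replacing the $X$'s by the $F$'s in the clause~\eqref{it:equation-hnf} body makes each $\appI{X_{M_i,N_i}}{\seq y_i}$ collapse to $\oencoP{M_i}{}$, so that by the definitions of $\mathcal{O}_{\mathtt{P}}^{n+m}$ and of $\oencoP$ we obtain $E_{M,N}[\til F]=\oencoP{\lambda x_1 \cdots x_{l+m}.\app z {M_1 \cdots M_{n+m}}}p$, i.e.\ the encoding of the head normal form of $M$; replacing the $X$'s by the $G$'s instead substitutes $x_{l+1},\dots,x_{l+m}$ for the last $m$ arguments, yielding $E_{M,N}[\til G]=\oencoP{\lambda x_1 \cdots x_{l+m}.\app z {N_1 \cdots N_n\, x_{l+1} \cdots x_{l+m}}}p$.

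For the unsolvable clause~\eqref{it:equation-unsolv} the body has no equation variables, so both obligations are just $\oencoP M p\bisim\appI\unsolv p$ and $\oencoP N p\bisim\appI\unsolv p$, which are instances of Lemma~\ref{lem:oencoP-equates-unsolvables} applied to the unsolvable terms $M$ and $N$.

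For the head-normal-form clause~\eqref{it:equation-hnf} I would split according to which side carries the extra $m$ leading $\lambda$s. On the longer side (here $M$) only $\beta$ is needed: head reduction is contained in $\wred$, so validity of $\beta$ for the optimised encoding (Lemma~\ref{lem:gen-oenc-validates-beta}, extended to its reflexive and transitive closure since $\gexp$ is transitive) gives $\oencoP M p\gexp E_{M,N}[\til F]$, whence $F_{M,N}$ solves the equation because $\gexp \subseteq \bisim$. On the shorter side (here $N$) I would bring in extensionality: $N$ is $\beta\eta$-equal to $\lambda x_1 \cdots x_{l+m}.\app z {N_1 \cdots N_n\, x_{l+1} \cdots x_{l+m}}$ (head-reduce $N$, then $\eta$-expand $m$ times, the side condition $x_{l+1},\dots,x_{l+m}\notin\fv{\app z {N_1 \cdots N_n}}$ of \qBTinf{}-bisimulation making each expansion legitimate). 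Since the $\lambda$-theory induced by $\QencoP$ is extensional (Theorem~\ref{t:encoP-validates-eta} and the corollary following it), $\encoP N{}\bisim\encoP{\lambda x_1 \cdots x_{l+m}.\app z {N_1 \cdots N_n\, x_{l+1} \cdots x_{l+m}}}{}$; transporting this equivalence to the optimised encoding through Lemma~\ref{lem:gen-oenc-is-optimisation} yields $\oencoP N p\bisim E_{M,N}[\til G]$, so $G_{M,N}$ solves the equation. The symmetric clause is handled by swapping the roles of $M,F$ with $N,G$.

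I expect the $\eta$-step on the shorter side to be the main obstacle, since it is precisely where extensionality is genuinely used, in contrast to the purely $\beta$-based arguments of Sections~\ref{sec:BT-LT}: one must transport the $\eta$-law from $\QencoP$ to $\QoencoP$ and, crucially, identify the process $E_{M,N}[\til G]$ generated by $\mathcal{O}_{\mathtt{P}}^{n+m}$ with the optimised encoding of the $m$-fold $\eta$-expansion of $N$'s head normal form. By contrast the degenerate subcases are routine: when $m=0$ both sides reduce to the pure $\beta$ argument with no $\eta$-expansion, and the case $n=0$ (a bare head variable, whose $m$-fold $\eta$-expansion is $\lambda x_1 \cdots x_{l+m}.\app z {x_{l+1} \cdots x_{l+m}}$) is covered simply by taking the number of $\eta$-steps equal to $m$.
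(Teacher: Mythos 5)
Your proposal is correct and follows essentially the same route as the paper's proof: case analysis on the three clauses of \qBTinf{}-bisimulation, with Lemma~\ref{lem:oencoP-equates-unsolvables} for the unsolvable case, $\beta$-validity (Lemma~\ref{lem:gen-oenc-validates-beta}) alone on the side with more leading $\lambda$s, and $\beta$-validity plus $\eta$-validity (Theorem~\ref{t:encoP-validates-eta}) on the shorter side, identifying the resulting processes with the equation bodies by unfolding the definition of $\mathcal{O}_{\mathtt{P}}^{n+m}$. The only (cosmetic) difference is that you make explicit the transport of the $\eta$-law between $\QencoP$ and $\QoencoP$ via Lemma~\ref{lem:gen-oenc-is-optimisation}, a step the paper's proof applies implicitly when it invokes Theorem~\ref{t:encoP-validates-eta} directly on optimised encodings.
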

\begin{proof}
  Take \( (M, N) \in  {\relR} \).
  There are three cases to consider following  Definition~\ref{def:lassen-bisim}.

  If \( M \) and \( N \) are unsolvables then we have
  \[
    F_{M, N} \bisim \bind {\seq y, p} {\oencoP \Omega p} \bisim G_{M, N}
  \]
  by Lemma~\ref{lem:oencoP-equates-unsolvables}.

  If the second clause of Definition~\ref{def:lassen-bisim} holds, then we have
  \begin{align*}
    &\app {F_{M, N}} {\seq y \; p} \\
    &= \oencoP M p \\
    &\gexp \oencoP {\lambda x_1 \ldots x_{l + m}. \app y {M_1 \cdots M_{n + m}}} p \tag{Lemma~\ref{lem:gen-oenc-validates-beta}} \\
    &=
      \begin{aligned}[t]
        &\diprfx p {x_1, p_1} \cdots \diprfx {p_{l + m -1}} {x_{l + m}, p_{l + m}} \oprfx y {w, q} \\
        &\oencoNP[n + m] q  {p_{l + m}} {\oencoP {M_1} {}, \ldots, \oencoP{M_{n + m}} {} }
      \end{aligned} \\
    &=
      \begin{aligned}[t]
        &\diprfx p {x_1, p_1} \cdots \diprfx {p_{l + m -1}} {x_{l + m}, p_{l + m}} \oprfx y {w, q} \\
        &\oencoNP[n + m] q  {p_{l + m}} {F_{M_1, N_1} \langle \seq y_1 \rangle, \ldots, F_{M_{n + m}, x_{l + m}} \langle \seq y_{n + m} \rangle}
      \end{aligned}  \tag{by def. of \( F_{M, N} \)} \\
    &=
      \begin{aligned}[t]
        &\diprfx p {x_1, p_1} \cdots \diprfx {p_{l + m -1}} {x_{l + m}, p_{l + m}} \oprfx y {w, q} \\
        &\oencoNP[n + m] q  {p_{l + m}} {X_{M_1, N_1} \langle \seq y_1 \rangle, \ldots, X_{M_{n + m}, x_{l + m}} \langle \seq y_{n + m} \rangle}
      \end{aligned} \\
    &\quad \{F_{M_1, N_1} / X_{M_1, N_1}, \ldots, F_{M_{n + m}, x_{l + m}} / X_{M_{n + m}, x_{l + m} }\}
  \end{align*}
  as desired.
  The proof for \( G_{M, N} \) is similar, but we need validity of \( \eta \)-expansion (Theorem~\ref{t:encoP-validates-eta}).
  In detail, we have
    \begin{align*}
    &\app {G_{M, N}} {\seq y \; p} \\
    &= \oencoP N p \\
    &\gexp \oencoP {\lambda x_1 \ldots x_l . \app y {N_1 \cdots N_n}} p \tag{Lemma~\ref{lem:gen-oenc-validates-beta}} \\
    &= \diprfx p {x_1, p_1} \cdots \diprfx {p_{l - 1}} {x_l, p_l} \oprfx y {w, q} \oencoNP q  {p_l} {\oencoP {N_1} {}, \ldots, \oencoP{N_n} {} } \\
    &\bisim
      \begin{aligned}[t]
        &\diprfx p {x_1, p_1} \cdots \diprfx {p_{l - 1}} {x_l, p_l} \\
        &\diprfx {p_l}{x_{l + 1}, p_{l + 1}} \cdots \diprfx {p_{l + m -1}} {x_{l + m}, p_{l + m}}  \oprfx y {w, q} \\
        &\oencoNP[n + m] q  {p_{l + m}} {\oencoP {N_1} {}, \ldots, \oencoP{N_n} {}, \oencoP {x_{l + 1}} {}, \ldots, \oencoP {x_{l + m}} {} }
      \end{aligned} \tag{Theorem~\ref{t:encoP-validates-eta}} \\
    &=
      \begin{aligned}[t]
        &\diprfx p {x_1, p_1} \cdots \diprfx {p_{l + m -1}} {x_{l + m}, p_{l + m}} \oprfx y {w, q} \\
        &\oencoNP[n + m] q  {p_{l + m}} {G_{M_1, N_1} \langle \seq y_1 \rangle, \ldots, G_{M_{n + m}, x_{l + m}} \langle \seq y_{n + m} \rangle}
      \end{aligned} \tag{by def. of \( G_{M, N} \)} \\
    &=
      \begin{aligned}[t]
        (&\diprfx p {x_1, p_1} \cdots \diprfx {p_{l + m -1}} {x_{l + m}, p_{l + m}} \oprfx y {w, q} \\
        &\oencoNP[n + m] q  {p_{l + m}} {X_{M_1, N_1} \langle \seq y_1 \rangle, \ldots, X_{M_{n + m}, x_{l + m}} \langle \seq y_{n + m} \rangle}) \\
        &\{G_{M_1, N_1} / X_{M_1, N_1}, \ldots, G_{M_{n + m}, x_{l + m}} / X_{M_{n + m}, x_{l + m} }\}.
      \end{aligned}
  \end{align*}

  The case where the third clause of Definition~\ref{def:lassen-bisim} holds can be proved similarly.
\end{proof}

We also have to show that the system \( \Eqs_{\relR} \) of equations we defined has a \emph{unique} solution.
\begin{lem}
\label{lem:uniqueness-of-solutions}
  The system of equations \( \Eqs_{\relR} \) is guarded and the syntactic solution of \( \Eqs_{\relR} \) is divergence-free.
  Therefore, \( \Eqs_{\relR} \) has a unique solution.
\end{lem}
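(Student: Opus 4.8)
The plan is to discharge the two hypotheses of the unique-solution theorem (Theorem~\ref{thm:usol})---guardedness of $\Eqs_{\relR}$ and divergence-freeness of its syntactic solution---and then to conclude by a direct appeal to that theorem. Guardedness is the easy part. Inspecting the two shapes of equation body: in the unsolvable clause~\eqref{it:equation-unsolv} the body is $\unsolv$, which contains no equation variables at all; in the head-normal-form clause~\eqref{it:equation-hnf} every equation variable $X_{M_i,N_i}$ occurs exclusively inside a replicated-input component $\riproc{x_i}{r_i}{(\cdots)}$ of the block $\mathcal{O}_{\mathtt{P}}^{n+m}$. The replicated input $\riproc{x_i}{r_i}{}$ is an ordinary (non-permeable) prefix, and desugaring the surrounding permeable prefixes only introduces restrictions and parallel components, never lifting this prefix from above the variable occurrence. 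Hence every occurrence of a variable sits underneath a genuine prefix, so $\Eqs_{\relR}$ is guarded.

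The substantial work is divergence-freeness: I must show that $\appI{K_{M,N}}{\seq a\, p}$ has no divergence for fresh $\seq a, p$, where the $K_{M,N}$ are the syntactic solutions $K_{M,N} \defeq E_{M,N}[\widetilde K]$. I would split on the clause defining $(M,N)$. For the unsolvable clause this is immediate: $K_{M,N}$ is $\unsolv$ up to the dummy parameters $\seq a$, and $\unsolv \defeq \bind p \diprfx p{x,q}\appI\unsolv q$ is divergence-free, since after desugaring $\appI\unsolv p$ is a stack of inputs whose subjects, apart from the outermost free $p$, are all restricted with no matching output, so it admits only (visible) input transitions and no $\tau$-loop. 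This is precisely why $\unsolv$ was chosen in place of $\oencoP\Omega{}$, whose $\Omega$-style loop would inject a genuine divergence. For the head-normal-form clause, I would classify the $\tau$-transitions of $\appI{K_{M,N}}{\seq a\, p}$ and of all its derivatives into (a) administrative reductions internal to and between the wires and permeable prefixes, and (b) unfoldings of the recursive constants $K_{M_i,N_i}$. By guardedness, an unfolding in (b) is always a communication on a server channel $x_i$, which is triggered only through interaction and never arises spontaneously along a pure-$\tau$ continuation; thus no infinite $\tau$-sequence can unfold infinitely many constants, and along any putative divergence only finitely many constants are ever activated.

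This reduces the problem to ruling out an infinite $\tau$-sequence built from the administrative reductions (a) alone, which I expect to be the main obstacle. The delicate point is specific to the \Pwires: because their prefixes are permeable, composing two location wires on a restricted name produces chain-lengthening reductions (exactly the phenomenon exploited in the proof of Lemma~\ref{lem:ilink-olink-trans}), so termination of an administrative phase is not immediate and a naive measure does not work. The plan is to show that, along every reachable computation of the syntactic solution, the guarding discipline of the optimised encoding keeps each administrative phase between two consecutive visible actions finite---concretely, to exhibit a well-founded measure on reachable states that each administrative $\tau$ decreases, using the transitivity and substitution laws of Definition~\ref{ax:glink-piI} (and the $\gexp$-laws behind the correspondence of Lemma~\ref{lem:gen-oenc-tau-has-mathcing-red}) to absorb the chain-lengthening steps, so that no dangerous live composition can be iterated without first consuming a visible prefix. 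Once (a) and (b) are controlled, $\appI{K_{M,N}}{\seq a\, p}$ has no divergence, so the syntactic solution is divergence-free, and Theorem~\ref{thm:usol} yields at once that $\Eqs_{\relR}$ has a unique solution.
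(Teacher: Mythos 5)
Your guardedness argument and your handling of the unsolvable clause agree with the paper's proof. The genuine gap is in the head-normal-form clause, exactly at the step you yourself flag as the main obstacle: no well-founded measure decreasing along administrative \( \tau \)-steps can exist, because the administrative reductions you are trying to bound are genuinely infinite. As the paper itself notes in Section~\ref{sec:wires} (just before Lemma~\ref{lem:ilink-olink-trans}), a composition \( \res q (\ilink p q \mid \ilink q r) \) of two location \Pwires{} on a restricted name reduces to a configuration containing a strictly longer chain of wires, and ``indeed infinitely-many reductions may thus be produced''. At the desugared level at which your argument is pitched, such configurations are reachable here: the desugaring of a permeable input \( \diprfx p {x_1, p_1} \) contains a wire whose output end is \( p_1 \), while the body contains a wire (or an ordinary input) whose input end is \( p_1 \); after the single visible input at \( p \) both become unguarded, and the chain-lengthening interactions can then be iterated forever with no further visible action. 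So the claim that ``no dangerous live composition can be iterated without first consuming a visible prefix'' fails, and with it the measure. Nor can the \( \gexp \)-laws rescue the argument: expansion relates a process to one with \emph{fewer} internal steps, so rewriting states up to \( \gexp \) erases exactly the information that divergence-freedom is about; a divergent process can perfectly well be in the relation \( \gexp \) with a non-divergent one, so divergence must be read off the actual transitions of the syntactic solution, not its \( \gexp \)-class.

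The paper's proof is entirely different, and supplies the idea your proposal is missing: it does not bound the internal steps, it argues that none are ever enabled. In the equation bodies every name is used with a single polarity --- each name occurs only as an input subject or only as an output subject, the permeable prefixes being read as binding prefixes --- and since in \piI{} only fresh names are exchanged, this strict polarisation is preserved along transitions. Hence no complementary pair of prefixes at a common name ever arises, the syntactic solution of \( \Eqs_{\relR} \) performs no \( \tau \)-transition at all (even after visible actions), and divergence-freedom is immediate, with no phase analysis, no measure, and no separate treatment of constant unfoldings versus wire reductions. Your decomposition into administrative steps and unfoldings, and in particular the termination argument for the administrative phase, would have to be replaced by this polarisation invariant for the proof to go through.
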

\begin{proof}
The system  \( \Eqs_{\relR} \) is guarded because all the occurrences of a variable in the right-hand side of an equation are underneath a replicated input prefixing.
  Divergence-freedom follows from the fact that the use of each name
  (bound or free) is strictly polarised in the sense that a name is
  either used as an input or as an output.
  In a strictly polarised setting, no \( \tau \)-transitions can be
  performed even after some visible actions because in \piI{}
only fresh names may  be exchanged.
\end{proof}

\begin{thm}[Completeness for \qBTinf{}]
\label{thm:BE-completeness}
If \( \BTinf M = \BTinf N \) then \( \encoP M {} \bisim \encoP N {} \).
\end{thm}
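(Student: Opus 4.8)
The plan is to obtain the result as a direct consequence of the two preceding lemmas about the equation system \( \Eqs_{\relR} \), followed by a routine passage from the optimised encoding to the source encoding \( \QencoP \). Assume \( \BTinf M = \BTinf N \). By Definition~\ref{def:lassen-bisim} this means that \( M \) and \( N \) are \qBTinf{}-bisimilar, so there exists a \qBTinf{}-bisimulation \( \relR \) with \( (M, N) \in {\relR} \). First I would build the associated system \( \Eqs_{\relR} \) exactly as constructed above, with one equation per pair of \( \relR \).

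Next I would invoke Lemma~\ref{lem:solutions-of-equations}: the two families \( \{ F_{M', N'} \}_{(M', N') \in {\relR}} \) and \( \{ G_{M', N'} \}_{(M', N') \in {\relR}} \), where \( F_{M', N'} = \bind{\seq x, p}{\oencoP {M'} p} \) and \( G_{M', N'} = \bind{\seq x, p}{\oencoP {N'} p} \), are both solutions of \( \Eqs_{\relR} \) for \( \bisim \). Then I would apply Lemma~\ref{lem:uniqueness-of-solutions}, which shows that \( \Eqs_{\relR} \) is guarded with divergence-free syntactic solutions, so that by Theorem~\ref{thm:usol} the system has a \emph{unique} solution for \( \bisim \). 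Hence the two solution families coincide component-wise up to \( \bisim \); in particular \( F_{M, N} \bisim G_{M, N} \), which, after ground instantiation of the parameters \( \seq x = \fv{M, N} \) and \( p \), unfolds to \( \oencoP M p \bisim \oencoP N p \).

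Finally I would bridge from the optimised encoding to \( \QencoP \) itself. By Lemma~\ref{lem:gen-oenc-is-optimisation}, instantiated with the \Pwires{}, we have \( \encoP M p \gexp \oencoP M p \) and \( \encoP N p \gexp \oencoP N p \). Since \( {\gexp} \subseteq {\bisim} \) and \( \bisim \) is transitive, \( \encoP M p \bisim \encoP N p \); and because \( \bisim \) is a congruence that lifts to abstractions, \( \encoP M {} \bisim \encoP N {} \), as required.

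The bulk of the work lives in the two lemmas being invoked, so the assembly itself is short. The genuinely delicate point — already settled inside Lemma~\ref{lem:solutions-of-equations} — is that the \emph{same} system \( \Eqs_{\relR} \) is solved by both the \( M \)-indexed and the \( N \)-indexed encodings: matching the two sides forces a (finite) \( \eta \)-step on the \( N \)-side, which is precisely where the validity of \( \eta \)-expansion (Theorem~\ref{t:encoP-validates-eta}) — hence the behaviour specific to the \Pwires{} — is needed, and is exactly what fails for the \IOwires{} and \OIwires{}. The other subtle ingredient, also already handled, is the divergence-freedom of the syntactic solutions (Lemma~\ref{lem:uniqueness-of-solutions}), without which Theorem~\ref{thm:usol} could not be applied; this rests on the strict polarisation of names in \piI{}.
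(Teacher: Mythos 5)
Your proposal is correct and follows essentially the same route as the paper's own proof: build \( \Eqs_{\relR} \) from a \qBTinf{}-bisimulation containing \( (M,N) \), obtain both encoding families as solutions via Lemma~\ref{lem:solutions-of-equations}, conclude \( \oencoP M {} \bisim \oencoP N {} \) by uniqueness (Lemma~\ref{lem:uniqueness-of-solutions}), and transfer to \( \QencoP \) via Lemma~\ref{lem:gen-oenc-is-optimisation}. Your closing remarks correctly identify where the real work lies (the \( \eta \)-step inside Lemma~\ref{lem:solutions-of-equations} and divergence-freedom), but these are already discharged by the cited lemmas, so the assembly is exactly the paper's.
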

\begin{proof}
  Consider a \qBTinf{}-bisimulation \( \relR \) that equates \( M \) and \( N \).
Take the system of equations \( \Eqs_{\relR} \) corresponding to \( \relR \) as
  defined above.
  By Lemma~\ref{lem:solutions-of-equations}, \( \oencoP M {} \) and \( \oencoP N {} \) are (components) of the solutions of \( \Eqs_{\relR} \).
  Since the solution is unique (Lemma~\ref{lem:uniqueness-of-solutions}), we derive \( \oencoP M {} \bisim \oencoP N {} \).
  We also have \( \encoP M {} \bisim \encoP N {} \) (equivalence on the non-optimised encodings) because of Lemma~\ref{lem:gen-oenc-is-optimisation}.
\end{proof}

\begin{thm}[Soundness for \qBTinf{}]
\label{thm:BE-soundness}
  If \( \encoP M {} \bisim \encoP N {} \) then \(\BTinf M = \BTinf N \).
\end{thm}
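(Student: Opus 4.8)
The plan is to derive soundness from completeness (Theorem~\ref{thm:BE-completeness}) by appealing to the classical fact that \( \qBTinf \) equality, i.e.\ the theory \( \mathcal{H}^* \), is the maximal consistent sensible \( \lambda \)-theory~\cite{Barendregt84}. Write \( {=_\pi} \) for the relation \( \{(M,N)\mid \encoP M {}\bisim \encoP N {}\} \) induced by the parallel-wire encoding. It suffices to show that \( {=_\pi} \) is a \emph{consistent} and \emph{sensible} \( \lambda \)-theory: completeness gives \( {=_{\BTinf}}\subseteq{=_\pi} \), and then the maximality of \( \mathcal{H}^*={=_{\BTinf}} \) forces \( {=_\pi}={=_{\BTinf}} \); in particular \( {=_\pi}\subseteq{=_{\BTinf}} \), which is exactly the claim \( \encoP M {}\bisim\encoP N {}\implies\BTinf M=\BTinf N \).

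Thus I would verify three properties of \( {=_\pi} \). First, that it is a \( \lambda \)-theory: this is already established, since \( \bisim \) is a congruence in \piI{} and \( \QencoP \) validates \( \beta \) (Corollary~\ref{cor:lambda-theory}, applicable since \Pwires{} are wires by Lemma~\ref{lem:all-wires-satisfy-ax}), so \( {=_\pi} \) is a congruence containing \( \beta \)-equivalence; it is in fact extensional by Theorem~\ref{t:encoP-validates-eta}. Second, that it is \emph{sensible}, i.e.\ it equates all unsolvable terms: this is precisely Lemma~\ref{lem:oencoP-equates-unsolvables}, which yields \( \encoP M {}\bisim\encoP\Omega {} \) for every unsolvable \( M \), hence \( M =_\pi M' \) for all unsolvables \( M,M' \). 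Third, that it is \emph{consistent}, i.e.\ it does not equate all \( \lambda \)-terms: for two distinct (free) variables \( x\neq y \), the process \( \encoP x p=\oprfx x {p'}\ilink p {p'} \) can perform an output with subject \( x \), whereas every output that \( \encoP y p \) can perform has subject \( y \); since the sets of possible output subjects \( \{x\} \) and \( \{y\} \) are disjoint, the two processes are not bisimilar, so \( x\neq_\pi y \).

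Putting these together: \( {=_\pi} \) is a consistent sensible \( \lambda \)-theory that contains \( {=_{\BTinf}} \) by Theorem~\ref{thm:BE-completeness}. Since no consistent sensible \( \lambda \)-theory can strictly extend the maximal one \( \mathcal{H}^*={=_{\BTinf}} \), we conclude \( {=_\pi}={=_{\BTinf}} \), whence the soundness inclusion \( {=_\pi}\subseteq{=_{\BTinf}} \) follows.

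The verifications that \( {=_\pi} \) is a \( \lambda \)-theory and is sensible are already available as earlier results, so the only genuinely new obligation is consistency, which is mild and handled directly by the structure of the variable encoding (and, more generally, by the operational-correspondence facts of Section~\ref{sec:oenc}, valid for \( \QencoP \) by Lemma~\ref{lem:all-wires-satisfy-ax}). The main conceptual point—and the place where care is needed—is the invocation of the maximality of \( \mathcal{H}^* \): one must ensure that \( {=_\pi} \) is a \( \lambda \)-theory over \emph{all} \( \lambda \)-terms (not merely closed ones) and is genuinely consistent, so that the classical characterization of \( \mathcal{H}^* \) as the largest consistent sensible theory applies and pins down \( {=_\pi} \) exactly.
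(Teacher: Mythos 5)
Your proposal is correct and follows essentially the same route as the paper's own proof: both derive soundness from completeness (Theorem~\ref{thm:BE-completeness}) by showing that \( {=_\pi} \) is a consistent sensible \( \lambda \)-theory (via Corollary~\ref{cor:lambda-theory} and Lemma~\ref{lem:oencoP-equates-unsolvables}) and then invoking the maximality of \( \qBTinf \) equality as the largest consistent sensible \( \lambda \)-theory. The only cosmetic difference is the consistency witness: you use \( \encoP x {} \not\bisim \encoP y {} \) for distinct variables, while the paper uses \( \encoP x p \not\bisim \encoP \Omega p \); both are valid.
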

\begin{proof}
  \newcommand*{\EQ}{=_{\pi}}
  Let \( \EQ \) be the equivalence induced by \( \QencoP \) and 
 \piI\ bisimilarity.
  The equivalence \( \EQ \)~is a \emph{sensible} \( \lambda \)-theory by Corollary~\ref{cor:lambda-theory} and Lemma~\ref{lem:oencoP-equates-unsolvables}.
This theory is consistent: for example we have \( \encoP x p \not \bisim \encoP \Omega p \).
By completeness (Theorem~\ref{thm:BE-completeness}),
it contains \qBTinf equality.  Then it must be equal to \qBTinf\ equality 
because  %
the latter is %
the maximal consistent sensible \( \lambda \)-theory~\cite{Barendregt84}.
\end{proof}

\section{Concluding Remarks}
\label{sec:related-work}

In the paper we have  presented  
a refinement of Milner’s original encoding of functions as processes %
that is parametric on certain
abstract components called wires. 
Whenever wires satisfy a few algebraic properties, 
 the encoding yields
a 
$ \lambda $-theory. 
We have studied instantiations of the abstract wires with three kinds of concrete wires, 
that differ  on the direction and/or  sequentiality of the control flow produced. 
We have shown that such
 instantiations allow us to obtain 
full abstraction results for 
\qLT{}s, \qBT{}s, and  \qBTinf{}s,
(and hence for $\lambda$-models such as 
$P_\omega$,    
free lazy Plotkin-Scott-Engeler models and $\Dinf$).
In the case of
\qBTinf{}, this implies that  the encoding validates the $\eta$-rule, i.e., it yields an extensional
$\lambda$-theory. 

Following Milner's seminal paper~\cite{Milner90}, the topic of functions as processes
has produced a rich bibliography. 
Below we comment on the works that seem closest to ours.
We have mentioned, in the introduction, related work concerning \qLTs{}
and \qBTs{}. 
We are unaware of results about validity of the $\eta$-rule, let alone \qBTinf{},  in encodings of functions as
processes. 
The only exception is~\cite{BHY01}, where a type system for the
$\pi$-calculus is introduced
so to derive 
  full abstraction for an encoding of PCF (which implies that
 \( \eta \)-expansion for PCF is valid). 
However, in~\cite{BHY01}, types are used to 
 constrain process behaviours, so to remain with processes
that  represent `sequential functional computations'.
Accordingly, the behavioural  equivalence for processes is a  typed contextual equivalence
in which the legal contexts must respect the typing discipline and are
therefore `sequential'. 
In contrast, in  our work $\eta$ is validated under
ordinary (unconstrained) process equivalence in which, for instance, equalities
are preserved by arbitrary process contexts. (We still admit polyadic
communications and hence a sorting system, for readability~---
we  believe  that the same results  hold in a monadic setting.)

In the paper we have considered the theory of the pure untyped
$\lambda$-calculus.
Hence, our encodings model the
call-by-name reduction strategy.
A study of the theory induced by process  encodings of the
call-by-value strategy is~\cite{DurierHS22}.

Our definitions and proofs about
  encodings of permeable prefixes using wires
 follows, and is inspired by, encodings of
forms of permeable prefixes in asynchronous and localised variants of
the $\pi$-calculus using forwarders,
e.g.~\cite{MerroSangiorgi04,Yoshida02}. As commented in the main text,
the  technicalities are however  different, both because
our processes are not localised, and because we employ distinct kinds of wires.

We have worked with bisimilarity, as it is the standard behavioural
equivalence in \piI; moreover,  we could then use some powerful proof
techniques  for it (up-to techniques, unique solution of
equations). The results presented also hold
 for other behavioural equivalences (e.g., may testing), since 
processes encoding functions are confluent.
It would be interesting to extend our work to preorders, i.e.,
looking at  preorder relations for $\lambda$-trees and
$\lambda$-models.

In our work, we derived our abstract encoding from Milner's original encoding of functions.
It is unclear how to transport the same methodology to other variants of Milner's encoding in
 the literature, in particular those 
that closely mimics  the  CPS
translations~\cite{Sangiorgi99,Thielecke97}.

We have derived $\lambda$-tree equalities, in parametric manner, by
different instantiations of the abstract wires.
 Van Bakel et al.~\cite{VBDD02} use an intersection type system,
 parametric with respect to the subtyping relations, to (almost) uniformly characterise  \( \lambda \)-tree equalities (the trees
 considered are those  in our paper together with Böhm trees up-to \emph{finite} \( \eta \)-expansion and Beraducci trees).
 Relationships between \emph{non-idempotent} intersection type systems (or relational models) and \( \lambda \)-trees has also been investigated~\cite{ManzonettoRuoppolo14,PaoliniPR17}.
 We believe that non-idempotent intersection types would be easier to compare with our parametric encoding, but we leave a type theoretic study on the different wires as future work.

We would like to investigate the possible relationship between our work and
game semantics.
In particular, we are interested in the `HO/N style' as it is known to be related to process representations (e.g.,~\cite{HylandOng95,HondaYoshida99,CastellanYoshida19,JaberSangiorgi22}).
HO/N game semantics for the three trees considered in this paper have been
proposed~\cite{KerNO02,KerNO03,OngDiGiantonio04}.
The technical differences with our work are substantial. For instance,
the
game semantics are not given in a parametric manner; and 
 the \( \Dinf \) equality is obtained via 
 Nakajima trees rather than  \( \qBTinf{} \).
Nakajima trees are a different `infinite \( \eta \)-expansion' of
Böhm trees, in the sense that  \( \lambda \seq x . \app y {\seq M} \)
is expanded to \( \lambda \seq {x} z_0 z_1 \ldots. \app y {\seq M} \;
z_0 z_1 \cdots \); that is,  trees may be infinitely branching.
In processes, this would mean, for instance, having  input prefixes that  receive infinitely-many names at the same time.
We would like to understand  whether the three kinds of wires we considered are meaningful in game semantics.
The game semantic counterpart of process wires
are the \emph{copycat strategies}, and they intuitively correspond to \IOwires{}, in that they begin with an O-move (i.e., an input action).
This does not change even in concurrent
game semantics~\cite{MelliesMimram07,CastellanCRW17}. We are not
aware of game models that use strategies corresponding to
the \OIwires{} or the \Pwires{} studied in our paper.

Similarly, we would like to investigate  relationships with  call-by-name translations of  the \( \lambda \)-calculus into (pure) proof-nets~\cite{Danos90}.
We think that our encoding could be factorised into the translation from \( \lambda \)-calculus into proof-nets and a
variant of Abramsky translation~\cite{Abramsky94,BellinScott94}.
In this way, a \Pwire{} for location names would correspond to an infinitely \( \eta \)-expanded form of the axiom link for the type \( o \) according to its recursive equation \( o \cong (!o)^\perp \parr o \).
Infinite \( \eta \)-expansions of the identity axioms have also been considered in Girard's ludics~\cite{Girard01}, where they are called faxes.
Faxes are different from \Pwires{} because faxes satisfy an alternation condition on polarity; polarity, as observed by Honda and Laurent~\cite{HondaLaurent10}, corresponds to locality in \( \pi \)-calculus.

There are some general conditions for $\lambda$-models to be fully abstract for \qBTinfs~{}~\cite{DiGianantonioFH99,Manzonetto09,Breuvart16}.
These cover many $\lambda$-models arising from the study of intersection types, game semantics, and (semantics) of linear logic.
As described, all these areas are known to be closely related to \( \pi \)-calculus (and to each other).
Our full abstraction result for \qBTinfs~{} may be explained through these conditions as well.
We would also like to pursue a more conceptual understanding of our full abstraction results for \qLTs{} and \qBTs{}, potentially by investigating the aforementioned connections.

Processes like wires  (often called \emph{links}) 
appear in session-typed process calculi focusing on the Curry-Howard
isomorphism,  as primitive process constructs
used to represent the identity axiom~\cite{Wadler14}.
Some of our assumptions for wires, cf.\ the substitution-like behaviour when an end-point of the wire is restricted, are then given as explicit  rules of the operational
semantics of such links.

We have studied the properties of the concrete wires used in the paper on processes
encoding functions. We would like to   
establish  more general properties, on arbitrary
processes,  possibly subject to constraints on the usage of the names of the wires.
We  would also like to  see if
other kinds of wires are possible, and which properties they yield.

\section*{Acknowledgment}
We thank the referees, both those of LICS 2023, and those of LMCS,  for  insightful
comments and  valuable suggestions.
This work has been supported by the MIUR-PRIN projects `Analysis of Program Analyses' (ASPRA, ID
201784YSZ5\_004),  
`Resource Awareness in Programming: Algebra, Rewriting, and
Analysis' (RAP, ID P2022HXNSC), by 
JSPS KAKENHI Grant Number JP24K20731,
and by the European Research Council (ERC) Grant DLV-818616 DIAPASoN.

\bibliographystyle{alphaurl}

\bibliography{main}

\clearpage
\appendix
\section{List of notations}
\label{a:nota}
The following tables summarise the notations (for encodings, equivalence etc.) used in this paper. Some of the notations are only used in Appendix.
\begin{table}[h!]
\textbf{Encodings}\\[2ex]
\begin{tabular}{ccc}
  \( \Qencoabs \) & Abstract encoding & Figure~\ref{fig:abs-enc} \\
  \( \QencoIO \) & \( \Qencoabs \) instantiated with \IOwires{} & Section~\ref{sec:BT-LT} \\
  \( \QencoOI \) & \( \Qencoabs \) instantiated with \OIwires{} & Section~\ref{sec:BT-LT} \\
  \( \QencoP \) & \( \Qencoabs \) instantiated with \Pwires{} & Section~\ref{sec:Dinf} \\
  \( \Qoenco \) & (Abstract) Optimised encoding & Figure~\ref{fig:oenc} \\
  \( \Qoenco^n \) &  `Optimised encoding of arguments' & Figure~\ref{fig:oenc} \\
  \( \QoencoIO \) & \( \Qoenco \) instantiated with \IOwires{} & Section~\ref{sec:BT-LT} \\
  \( \QoencoOI \) & \( \Qoenco \) instantiated with \OIwires{} & Section~\ref{sec:BT-LT} \\
  \( \QoencoP \) & \( \Qoenco \) instantiated with \Pwires{} & Section~\ref{sec:BT-LT} \\
  \( \QencoM \) & Milner's encoding & Section~\ref{sec:abs-enc-def}\\
  \( \Qenco \) & Metavariable for encodings & --- \\
\end{tabular}

\medskip
\textbf{Wires}\\[2ex]
\begin{tabular}{ccc}
  \( \glink a b\) & (Abstract) Wire & Definition~\ref{ax:glink-piI} \\
  \( \link a b \) & \IOwire{} & Section~\ref{sec:wires} \\
  \( \linkO a b \) & \OIwire{} & Section~\ref{sec:wires} \\
  \( \ilink a b \) & \Pwire{} & Section~\ref{sec:wires} \\
\end{tabular}

\medskip
\textbf{Equivalence and Preorders}\\[2ex]
\begin{tabular}{ccc}
  \( \bisim \) & Weak bisimilarity for \piI{} & Definition~\ref{def:bisim} \\
  \( \sbisim \) & Strong bisimilarity for \piI{} & Definition~\ref{def:bisim} \\
  \( \lexp \) & Expansion relation for \piI{} & Definition~\ref{def:expansion} \\
  \( \scong \) & Structural congruence for \piI{} & Definition~\ref{def:scong} \\
  \( \equiv_{\alpha} \) & \( \alpha \)-equivalence & ---
\end{tabular}

\medskip
\textbf{\( \lambda \)-trees}\\[2ex]
\begin{tabular}{ccc}
  \( \qLT \) & Lévy-Longo tree & Section~\ref{sec:lambda} \\
  \( \qBT \) & Böhm tree & Section~\ref{sec:lambda} \\
  \( \qBTinf \) & Böhm tree up-to infinite \( \eta \)-expansion & Section~\ref{sec:lambda}
\end{tabular}

\medskip
\textbf{Reduction of \( \lambda \)-calculus}\\[2ex]
\begin{tabular}{ccc}
  \( \red \) & \( \beta \)-reduction & Section~\ref{sec:lambda} \\
  \( \snred \) & strong call-by-name reduction & Section~\ref{sec:lambda} \\
  \( \hred \) & head reduction & Section~\ref{sec:lambda}
\end{tabular}
\end{table}

\clearpage
\section{Proofs for Section~\ref{sec:oenc}}
\label{sec:oenc-appx}
We present  the proofs of properties of the abstract optimised encoding \( \Qoenco \).

\subsection{Proofs for the basic properties of \texorpdfstring{\( \Qoenco \)}{optimised encoding}}
This section first proves the properties of \( \Qoenco \) that are analogous to those for the unoptimised encoding \( \Qencoabs \).
Then, using these properties, we prove that \( \Qoenco \) is indeed an optimisation of \( \Qencoabs \).

Lemma~\ref{lem:glink-subst-oenc} and~\ref{lem:glink-subst-seq-arg} say that \( \oenco M p
\) and \( \oencoN {p_0} p {\oenco {M_1}{} \cdots \oenco {M_n}{}} \) are respectful.
\begin{lem}
\label{lem:glink-subst-oenc}
  \hfill
  \begin{enumerate}
    \item \( \res q (\glink p q \mid \oenco M q) \gexp \oenco M p \).
          \label{it:lem:glink-subst-oenc:cont}
    \item   \( \res x (\glink x y \mid \oenco M p ) \gexp \oenco {M \sub y x} p \).
          \label{it:lem:glink-subst-oenc:var}
  \end{enumerate}
\end{lem}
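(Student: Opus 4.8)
The plan is to prove parts \ref{it:lem:glink-subst-oenc:cont} and \ref{it:lem:glink-subst-oenc:var} simultaneously, by induction on the structure of \( M \), following the four clauses of Figure~\ref{fig:oenc}: variable, abstraction, head-variable application \( \app x {M_1 \cdots M_n} \), and head-redex application \( \app {(\lambda x. M_0)} {M_1 \cdots M_n} \). This mirrors the proof of Lemma~\ref{lem:glink-subst} for the unoptimised encoding: the location wire \( \glink p q \) is absorbed using transitivity and the permeable-prefix substitution laws, whereas the variable wire \( \glink x y \) is absorbed exploiting its replicated shape (\ref{it:ax:glink-piI:var-link-rep} of Definition~\ref{ax:glink-piI}) together with the replication theorem (Lemma~\ref{lem:rep-thm}). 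One cannot shortcut the argument via Lemma~\ref{lem:glink-subst}, since the comparison \( \Qencoabs \enco M p \gexp \oenco M p \) (Lemma~\ref{lem:gen-oenc-is-optimisation}) is only established afterwards and in fact relies on the present lemma.

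For \( M = x \): in part \ref{it:lem:glink-subst-oenc:cont} one rewrites \( \oenco x q = \oprfx x {q'} \glink q {q'} \), pulls \( \res q \) inside the permeable output by Lemma~\ref{lem:permeable-scong}, and collapses \( \res q (\glink p q \mid \glink q {q'}) \) to \( \glink p {q'} \) by transitivity (\ref{it:ax:glink-piI:trans} of Definition~\ref{ax:glink-piI}), obtaining \( \oenco x p \). In part \ref{it:lem:glink-subst-oenc:var}, if the variable differs from \( x \) the wire is garbage-collected (it is a replicated input at \( x \), and \( x \notin \fn{\oenco M p} \)); if \( M = x \) one applies \ref{it:ax:glink-piI:subst-dout-var}, whose premise is discharged by transitivity. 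The abstraction case \( M = \lambda w. M' \) is handled exactly as in Lemma~\ref{lem:glink-subst}: part \ref{it:lem:glink-subst-oenc:cont} is a direct application of \ref{it:ax:glink-piI:subst-din}, whose respectfulness premise on the body \( \oenco {M'} {} \) is supplied by the induction hypothesis; part \ref{it:lem:glink-subst-oenc:var} follows by commuting the restriction past the (permeable) input binder — renaming by \( \alpha \)-conversion if it clashes, by the Barendregt convention — and invoking the induction hypothesis on \( M' \).

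The application clauses are the crux. For \( \app x {M_1 \cdots M_n} \), part \ref{it:lem:glink-subst-oenc:cont} amounts to transporting \( \glink p q \) through the nested permeable outputs of \( \oencoN {p_0} q {\oenco {M_1}{} \cdots \oenco {M_n}{}} \), which leave \( q \) untouched except in the innermost wire \( \glink q {p_n} \), and then using transitivity to replace \( \glink q {p_n} \) by \( \glink p {p_n} \). Part \ref{it:lem:glink-subst-oenc:var} requires distributing the \emph{replicated} wire \( \glink x y \) across every free occurrence of \( x \): the head output at the head variable and the argument servers \( \riproc {x_i} {r_i} {\oenco {M_i} {r_i}} \). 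Using Lemma~\ref{lem:rep-thm} one pushes a private copy of \( \glink x y \) under each \( \riproc {x_i} {r_i} {-} \) and applies the induction hypothesis to rewrite \( \oenco {M_i} {r_i} \) into \( \oenco {M_i \sub y x} {r_i} \), while the head occurrence (when it is \( x \)) is rewritten by \ref{it:ax:glink-piI:subst-dout-var}. The head-redex case combines the abstraction treatment of \( \diprfx {p_0} {x, q} \oenco {M_0} q \) with the same argument-block reasoning; throughout, the respectfulness of the block \( \oencoN {p_0} p {\oenco {M_1}{} \cdots \oenco {M_n}{}} \) (Lemma~\ref{lem:glink-subst-seq-arg}, proved by a subsidiary induction on \( n \) from the respectfulness of the component encodings \( \oenco {M_i} {} \) supplied by the main induction) is what lets the permeable-prefix laws fire.

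I expect the main obstacle to be part \ref{it:lem:glink-subst-oenc:var} in the head-variable application when the substituted variable \( x \) is itself the head variable and also occurs free in some \( M_i \). Here \ref{it:ax:glink-piI:subst-dout-var} cannot be applied to the head output directly, because its premise demands that \( x \) not occur elsewhere in the body; one must therefore first duplicate the replicated wire (Lemma~\ref{lem:rep-thm}), consume the inner occurrences of \( x \) inside the \( M_i \) via the induction hypothesis, and only then rewrite the head output with the remaining copy. Getting this ordering right, together with checking that the replication theorem applies at all — i.e.\ that \( x \) occurs only in output-subject position throughout \( \oenco {\app x {M_1 \cdots M_n}} p \) — is the delicate bookkeeping of the proof.
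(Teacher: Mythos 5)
Your proposal is correct and takes essentially the same approach as the paper: the paper's own proof of this lemma is just the one-line remark ``Similar to Lemma~\ref{lem:glink-subst}'', and your simultaneous structural induction~--- wire transitivity and the substitution laws of Definition~\ref{ax:glink-piI} for the location wire, the replication theorems for the replicated variable wire, adapted to the four clauses of Figure~\ref{fig:oenc}~--- is precisely the intended elaboration of that remark. Your closing point about the delicate subcase (the head variable also occurring free in some \( M_i \), forcing one to distribute the replicated wire via Lemma~\ref{lem:rep-thm}, rewrite the arguments by the induction hypothesis, and only then apply law~\ref{it:ax:glink-piI:subst-dout-var} to the head output) is exactly the bookkeeping that the adaptation requires, so it strengthens rather than departs from the paper's argument.
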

\begin{proof}
  Similar to Lemma~\ref{lem:glink-subst} %
\end{proof}

\begin{lem}
\label{lem:glink-subst-seq-arg}
For \( n \ge 1\), we have
\begin{align*}
  \res {p_0} \left(\glink {p_0}{q} \mid \oencoN {p_0} p {\oenco {M_1} {}, \ldots, \oenco {M_n} {}} \right)
  &\gexp \oencoN q p {\oenco {M_1} {}, \ldots, \oenco {M_n} {}}.
\end{align*}
\end{lem}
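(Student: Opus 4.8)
The plan is to prove the statement by induction on $n$, using the substitution law for permeable outputs (property~\ref{it:ax:glink-piI:subst-dout} of Definition~\ref{ax:glink-piI}) as the single essential step and reserving the induction for the verification of that law's premise. Writing $\oencoN {p_0} p {\oenco {M_1}{} \cdots \oenco {M_n}{}} = \oprfx {p_0}{x_1, p_1} P_1$, where $P_1$ collects the remaining permeable outputs together with the replicated arguments and the trailing wire $\glink p {p_n}$, the goal is to push the wire $\glink {p_0} q$ through the outermost permeable output. Property~\ref{it:ax:glink-piI:subst-dout} gives exactly $\res {p_0}(\glink {p_0} q \mid \oprfx {p_0}{x_1,p_1} P_1) \gexp \oprfx q {x_1,p_1} P_1 = \oencoN q p {\oenco {M_1}{} \cdots \oenco {M_n}{}}$, provided the body $P_1$ satisfies the respectfulness premise $(\res{x_1,p_1})(\glink {x_1'}{x_1} \mid \glink {p_1}{p_1'} \mid P_1) \gexp P_1 \sub{x_1',p_1'}{x_1,p_1}$ for fresh $x_1', p_1'$.

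To discharge this premise I would first observe, using Lemma~\ref{lem:permeable-scong} and freshness of the bound names, that the first replicated argument can be pulled to the top, so that $P_1 \scong \riproc {x_1}{r_1}{\oenco {M_1}{r_1}} \mid \oencoN {p_1} p {\oenco {M_2}{} \cdots \oenco {M_n}{}}$. Since, in the premise process, the name $x_1$ occurs only in $\glink {x_1'}{x_1}$ and in the first component $\riproc {x_1}{r_1}{\oenco {M_1}{r_1}}$, while $p_1$ occurs only in $\glink {p_1}{p_1'}$ and in $\oencoN {p_1} p {\oenco {M_2}{} \cdots \oenco {M_n}{}}$, scope manipulation lets the two restrictions act on disjoint sub-processes, namely $\res {x_1}(\glink {x_1'}{x_1} \mid \riproc {x_1}{r_1}{\oenco {M_1}{r_1}})$ and $\res {p_1}(\glink {p_1}{p_1'} \mid \oencoN {p_1} p {\oenco {M_2}{} \cdots \oenco {M_n}{}})$. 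The variable part is handled by property~\ref{it:ax:glink-piI:subst-rep}, whose own premise is precisely the location respectfulness of $\oenco {M_1}{}$ established in Lemma~\ref{lem:glink-subst-oenc}; it expands to $\riproc {x_1'}{r_1}{\oenco {M_1}{r_1}}$. The location part is exactly an instance of the present lemma with $n-1$ arguments, so the induction hypothesis expands it to $\oencoN {p_1'} p {\oenco {M_2}{} \cdots \oenco {M_n}{}}$. Recombining and using that $\gexp$ is a precongruence yields $P_1 \sub{x_1',p_1'}{x_1,p_1}$, establishing the premise.

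For the base case $n = 1$ the body $P_1$ has no inner permeable output, so its location part reduces to transitivity of wires, $\res {p_1}(\glink {p_1}{p_1'} \mid \glink p {p_1}) \gexp \glink p {p_1'}$ (property~\ref{it:ax:glink-piI:trans}), while the variable part is handled exactly as above. I expect the only real difficulty to be bookkeeping rather than conceptual: one must keep the freshly introduced names distinct (Barendregt convention), justify the scope manipulations that let the two wires act on disjoint components, and confirm that renaming the input subject of $\riproc {x_1}{r_1}{\oenco{M_1}{r_1}}$ leaves $\oenco {M_1}{r_1}$ untouched. All of the behavioural content is supplied by Definition~\ref{ax:glink-piI} and Lemma~\ref{lem:glink-subst-oenc}; the induction merely threads the location substitution down the chain of permeable outputs.
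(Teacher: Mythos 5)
Your proposal is correct and follows essentially the same route as the paper's proof: induction on $n$, with law~\ref{it:ax:glink-piI:subst-dout} of Definition~\ref{ax:glink-piI} as the key step, its premise discharged by law~\ref{it:ax:glink-piI:subst-rep} together with Lemma~\ref{lem:glink-subst-oenc} for the replicated argument, and by transitivity (base case) or the induction hypothesis (inductive case) for the location part. The paper states this more tersely, but your explicit scope manipulations and freshness bookkeeping are exactly the details it leaves implicit.
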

\begin{proof}
  By induction on \( n \).
  First observe that
  \begin{equation}
    \res y (\glink x y \mid \riproc y p {\oenco M p})  \gexp \riproc x p {\oenco M p}
    \label{eq:lem:glink-subst-seq-arg}
  \end{equation}
  under the assumption that \( y \notin \fv M\).
  This is derived from~\ref{it:ax:glink-piI:subst-rep} of Definition~\ref{ax:glink-piI} together with Lemma~\ref{lem:glink-subst-oenc}.

  The base case is the case where \( n = 1 \).
  In this case, we need to show
  \begin{align*}
    &\res {p_0} \left(\glink {p_0}{q} \mid \oprfx {p_0}{x_1,  p_1}  \left( \riproc {x_1} {r_1} {\oenco {M_1} {r_1}} \mid \glink p {p_1} \right) \right) \\
    &\gexp \oprfx q {x_1,  p_1}  \left( \riproc {x_1} {r_1} {\oenco {M_1} {r_1}} \mid \glink p {p_1} \right).
  \end{align*}
  Using \eqref{eq:lem:glink-subst-seq-arg} and the transitivity of
  wires, we derive the expansion by
  applying~\ref{it:ax:glink-piI:subst-dout} of
  Definition~\ref{ax:glink-piI}. 

  Now we consider the case \( n \ge 1 \).
  We apply~\ref{it:ax:glink-piI:subst-dout} of Definition~\ref{ax:glink-piI}.
  The premise of this law is satisfied because of~\eqref{eq:lem:glink-subst-seq-arg} and the induction hypothesis.
\end{proof}

We now prove that the optimised encoding validates \( \beta \)-reduction.
\begin{lem}
\label{lem:gen-oencon-addition}
  Let \( m, n \ge 0 \).
  Then
  \begin{align*}
    &\res q
      (\oencoN[m] p q {\oenco{M_1}, \ldots, \oenco{M_m}{}} \mid \oencoN q r {\oenco{N_1}{}, \ldots, \oenco{N_n}{}}) \\
    &\gexp \oencoN[m + n] p r {\oenco{M_1}{}, \ldots, \oenco{M_m}{}, \oenco{N_1}{}, \ldots, \oenco{N_n}{}}.
  \end{align*}
\end{lem}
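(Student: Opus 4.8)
The plan is to prove the statement by induction on $m$, the length of the first argument sequence, keeping $n$ (and all the names involved) universally quantified. The fact I would use repeatedly is a \emph{decomposition} of $\mathcal{O}^{k}$ that peels its leading permeable output together with the corresponding server off the front: for $k \ge 1$,
\[
\oencoN[k]{p_0}{p}{\oenco{M_1}{}, \ldots, \oenco{M_k}{}} \scong \oprfx{p_0}{x_1, p_1}\bigl(\riproc{x_1}{r_1}{\oenco{M_1}{r_1}} \mid \oencoN[k-1]{p_1}{p}{\oenco{M_2}{}, \ldots, \oenco{M_k}{}}\bigr).
\]
This follows directly from the definition in Figure~\ref{fig:oenc} by pulling the replicated server for $M_1$ out of the inner permeable prefixes with Lemma~\ref{lem:permeable-scong}(2), whose freshness side conditions hold by the Barendregt convention.

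For the base case $m = 0$ I note that $\oencoN[0]{p}{q}{} = \glink q p$, so the left-hand side is $\res q\bigl(\glink q p \mid \oencoN[n]{q}{r}{\oenco{N_1}{}, \ldots, \oenco{N_n}{}}\bigr)$, where the restricted name $q$ is precisely the \emph{start} location of the second sequence and the input end of the wire. For $n \ge 1$ this is exactly an instance of Lemma~\ref{lem:glink-subst-seq-arg} (a wire feeding into the start location of an $\mathcal{O}^n$), which yields $\oencoN[n]{p}{r}{\ldots}$, the required right-hand side. The degenerate sub-case $n = 0$ is $\res q(\glink q p \mid \glink r q) \gexp \glink r p$, i.e. transitivity of wires (property~\ref{it:ax:glink-piI:trans} of Definition~\ref{ax:glink-piI}).

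For the inductive step I assume the claim for $m$ (and all $n$) and treat $m+1$. Applying the decomposition above to the first sequence and then the structural laws of Lemma~\ref{lem:permeable-scong}, I move the entire second sequence $\oencoN[n]{q}{r}{\ldots}$ and the restriction $\res q$ underneath the leading permeable output $\oprfx{p}{x_1,p_1}$; since $q$ is not free in the server $\riproc{x_1}{r_1}{\oenco{M_1}{r_1}}$, the restriction can be confined to the junction, giving
\[
\oprfx{p}{x_1,p_1}\bigl(\riproc{x_1}{r_1}{\oenco{M_1}{r_1}} \mid \res q(\oencoN[m]{p_1}{q}{\oenco{M_2}{}, \ldots} \mid \oencoN[n]{q}{r}{\oenco{N_1}{}, \ldots})\bigr).
\]
The induction hypothesis applies to the inner restricted parallel composition (it is the lemma for the $m$ arguments $M_2, \ldots, M_{m+1}$ and the $n$ arguments $N_1, \ldots, N_n$, with junction $q$, start $p_1$, result $r$); since $\gexp$ is a precongruence it may be used under this context. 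Reassembling with the decomposition in reverse then produces exactly $\oencoN[(m+1)+n]{p}{r}{\oenco{M_1}{}, \ldots, \oenco{M_{m+1}}{}, \oenco{N_1}{}, \ldots, \oenco{N_n}{}}$, as required.

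I expect the only real work to be the structural bookkeeping: tracking freshness so that the commutations of $\res q$ with the permeable prefixes (Lemma~\ref{lem:permeable-scong}(1)) and the relocations of servers (Lemma~\ref{lem:permeable-scong}(2)) are licensed, and checking that the start/result orientations of the nested $\mathcal{O}$'s line up so that the base case really matches Lemma~\ref{lem:glink-subst-seq-arg}. There is no genuinely hard step, because the single nontrivial communication — the wire at the junction meeting the leading output of the second sequence — is already packaged inside Lemma~\ref{lem:glink-subst-seq-arg} and is discharged entirely in the base case; the inductive step merely rearranges processes and invokes the hypothesis.
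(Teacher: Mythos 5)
Your proof is correct and takes essentially the same approach as the paper: the paper's entire proof of this lemma is the single line ``Follows from Lemma~\ref{lem:glink-subst-seq-arg}'', i.e., precisely the reduction to that lemma at the junction wire that your argument performs (with wire transitivity covering the degenerate $n=0$ case). Your induction on $m$ simply makes explicit the structural bookkeeping~--- the Lemma~\ref{lem:permeable-scong} manipulations needed to bring the restriction and the second chain under the leading permeable prefixes~--- that the paper leaves implicit.
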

\begin{proof}
  Follows from Lemma~\ref{lem:glink-subst-seq-arg}.
\end{proof}

\begin{lem}
  \label{lem:gen-oenc-subst}
  Suppose that \( x \notin \fv N \). Then \( \res x \left( \oenco M p \mid \riproc x q {\oenco N q} \right) \gexp \oenco {M \sub N x} p \)
\end{lem}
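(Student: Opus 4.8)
The plan is to prove the statement by induction on the structure of \( M \), following the syntactic cases of the optimised encoding in Figure~\ref{fig:oenc} (variable; abstraction; application with variable head \( \app y {M_1 \cdots M_n} \); application with a head redex \( \app {(\lambda z. M_0)} {M_1 \cdots M_n} \)). The argument is the exact analogue of the proof of Lemma~\ref{lem:genc-subst} for the unoptimised encoding \( \Qencoabs \): the private replicated server \( \riproc x q {\oenco N q} \) plays the role of a server owning \( N \), and each use of \( x \) (necessarily in output subject position, by the sorting) triggers a fresh copy of \( \oenco N {} \). I would stress that one cannot shortcut the proof via Lemma~\ref{lem:gen-oenc-is-optimisation} and Lemma~\ref{lem:genc-subst}: those only give that the two sides are both expanded by the common process \( \res x (\Qencoabs \enco M p \mid \riproc x q {\Qencoabs \enco N q}) \), hence \emph{bisimilar}, which is too weak since the claim is an expansion and the optimisation removes administrative steps in the same direction on both sides, so the two \( \gexp \)-inequalities do not compose into the required one.

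In the base cases, if \( M = x \) then \( M \sub N x = N \) and, using the permeable-communication law (Lemma~\ref{lem:permeable-comm}) followed by garbage collection on \( x \) (which is legal as \( x \notin \fv N \)) and then Lemma~\ref{lem:glink-subst-oenc}, one computes \( \res x (\oprfx x {p'} \glink p {p'} \mid \riproc x q {\oenco N q}) \gexp \res {p'}(\glink p {p'} \mid \oenco N {p'}) \gexp \oenco N p \); if \( M = y \ne x \) then \( x \notin \fn{\oenco y p} \) and the garbage-collection law (Lemma~\ref{lem:rep-thm}(4)) simply discards the server. For \( M = \lambda y. M' \), Lemma~\ref{lem:permeable-scong} lets us push the restriction and the server under the permeable input \( \diprfx p {y, r} \), after which the induction hypothesis on \( M' \) yields \( \oenco{(\lambda y. M')\sub N x} p \).

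The two application cases are where the work lies and will be the main obstacle. Here I would use the replication theorem (Lemma~\ref{lem:rep-thm}) to distribute the private server \( \riproc x q {\oenco N q} \) over the parallel components of \( \oencoN {p_0} p {\oenco{M_1}{}\cdots\oenco{M_n}{}} \), to push it through the nested non-replicated permeable-output prefixes \( \oprfx{p_0}{x_1,p_1}\cdots\oprfx{p_{n-1}}{x_n,p_n} \) (parts (2) and (1) of Lemma~\ref{lem:rep-thm}), and then inside each replicated-input argument server \( \riproc{x_i}{r_i}{\oenco{M_i}{r_i}} \) (part (3)); the server is discarded at the location wire \( \glink p {p_n} \), which does not mention \( x \). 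The side conditions hold because, by the sorting, the variable name \( x \) is distinct from every location name \( p_i \) and occurs only in output subject position, never as an input subject. Applying the induction hypothesis to each \( M_i \) then rebuilds \( \oencoN {p_0} p {\oenco{M_1 \sub N x}{}\cdots\oenco{M_n \sub N x}{}} \), the encoding of the arguments of \( M \sub N x \). In the redex case one additionally distributes a copy of the server into the head component \( \diprfx{p_0}{z,q}\oenco{M_0}q \); since \( \diprfx{p_0}{z,q} \) desugars to a replicated input at the location name \( p_0 \ne x \), part (3) of the replication theorem again pushes the server through, and the induction hypothesis handles \( M_0 \) (the Barendregt convention ensuring \( z \ne x \) and \( z \notin \fv N \)). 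The delicate points are the bookkeeping of these side conditions and the fact that distributing the server across the \( n \)-fold chain really requires an auxiliary induction on \( n \), equivalently a helper lemma stating \( \res x (\oencoN{p_0}p{\oenco{M_1}{}\cdots\oenco{M_n}{}} \mid \riproc x q {\oenco N q}) \gexp \oencoN{p_0}p{\oenco{M_1\sub N x}{}\cdots\oenco{M_n\sub N x}{}} \).
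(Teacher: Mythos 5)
Your proposal is correct and takes essentially the same route as the paper: the paper proves this lemma by structural induction on \( M \), observing that the base case \( M = x \) is exactly as in Lemma~\ref{lem:genc-subst} (since \( \oenco x p = \Qencoabs \enco x p \)) and that the inductive cases follow from the induction hypothesis and the replication theorems — precisely the argument you spell out, including the split of the application case into the variable-head and redex-head forms of the optimised encoding. Your extra observations (that one cannot shortcut via Lemma~\ref{lem:gen-oenc-is-optimisation} plus Lemma~\ref{lem:genc-subst}, since the two expansions point the wrong way and only yield bisimilarity, and that the \( n \)-ary chain \( \oencoN {p_0} p {\cdots} \) needs an auxiliary induction on \( n \)) are sound refinements of the same proof rather than a different approach.
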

\begin{proof}
  By induction on the structure of \( M \).
  The proof is similar to that of the unoptimised case (Lemma~\ref{lem:genc-subst}).
  Indeed the proof for the base case, namely the case \( M = x \), is exactly the same as that of Lemma~\ref{lem:genc-subst} since \( \oenco x p  = \Qencoabs \enco x p \).
  The inductive case follows from the induction hypothesis and  the replication theorems.
\end{proof}

\begin{lem}
  \label{lem:gen-oenc-app-oenco-oencon}
  If \( n \ge 1 \), then
  \begin{align*}
    \res q ( \oenco{M_0} q \mid \oencoN {q} {p} {\oenco {M_1}{}, \ldots, \oenco{M_n}{}}) \gexp  \oenco{\app {M_0} {M_1 \cdots M_n}} p.
  \end{align*}
\end{lem}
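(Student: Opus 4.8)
The plan is to proceed by case analysis on the structure of $M_0$, following the four clauses that define the optimised encoding $\oenco{\cdot}{}$ in Figure~\ref{fig:oenc}: $M_0$ is either a variable $x$, an abstraction $\lambda x. M_0'$, a variable-headed application $\app x {\seq L}$, or an abstraction-headed application $\app{(\lambda y. N_0)}{\seq L}$ (with $\seq L = L_1 \cdots L_k$, $k \ge 1$). In each case the strategy is the same: use the structural laws for permeable prefixes (Lemma~\ref{lem:permeable-scong}) together with scope extrusion and commutation of restrictions to bring the private location $q$ into position between the two argument blocks, and then collapse those two blocks with the concatenation law Lemma~\ref{lem:gen-oencon-addition}. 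The only genuine computation is supplied by that lemma (and, in the variable case, by Lemma~\ref{lem:glink-subst-seq-arg}), so the proof is essentially a repackaging of them.

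The abstraction case $M_0 = \lambda x. M_0'$ is immediate: after $\alpha$-renaming the restricted location $p_0$ of the defining expression to $q$, the process $\res q (\diprfx q {x, r} \oenco{M_0'} r \mid \oencoN q p {\ldots})$ is literally $\oenco{\app{(\lambda x. M_0')}{M_1 \cdots M_n}} p$, so the two sides are structurally congruent and hence related by $\gexp$. For $M_0 = x$, I will first float the permeable output out of the restriction with Lemma~\ref{lem:permeable-scong}, obtaining $\oprfx x {q'} \res q (\glink q {q'} \mid \oencoN q p {\ldots})$; the inner process is exactly the left-hand side of Lemma~\ref{lem:glink-subst-seq-arg} (instantiated at $p_0 := q$, $q := q'$), which rewrites it to $\oencoN {q'} p {\ldots}$, so by precongruence of $\gexp$ the whole term expands to $\oenco{\app x {M_1 \cdots M_n}} p$. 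The variable-headed case $M_0 = \app x {\seq L}$ is analogous: floating the permeable output at $x$ outward yields $\oprfx x {q_0} \res q (\oencoN {q_0} q {\ldots} \mid \oencoN q p {\ldots})$, and Lemma~\ref{lem:gen-oencon-addition} merges the two argument sequences into $\oencoN[k + n] {q_0} p {\oenco{L_1}{} \cdots \oenco{M_n}{}}$, which under the prefix is precisely $\oenco{\app x {\seq L\, M_1 \cdots M_n}} p$.

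The abstraction-headed application $M_0 = \app{(\lambda y. N_0)}{\seq L}$ is the case requiring the most bookkeeping, and I expect the structural rearrangement here to be the main (if mild) obstacle. Now $\oenco{M_0} q$ itself carries a restricted location $q_0$ under which sit both the permeable input $\diprfx{q_0}{y, s} \oenco{N_0} s$ and the argument block $\oencoN {q_0} q {\ldots}$. The step is to use scope extrusion and commutation of restrictions (structural congruence, with freshness of $q_0$ and $q$ guaranteed by the Barendregt convention) to reshape the left-hand side into $\res{q_0}(\diprfx{q_0}{y, s}\oenco{N_0}s \mid \res q(\oencoN {q_0} q {\ldots} \mid \oencoN q p {\ldots}))$, thereby isolating the two argument blocks under a single $\res q$ so that Lemma~\ref{lem:gen-oencon-addition} applies and produces $\oencoN[k + n]{q_0}p{\ldots}$; the result is $\oenco{\app{(\lambda y. N_0)}{\seq L\, M_1 \cdots M_n}} p$ after renaming $q_0$ to $p_0$. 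Beyond checking the freshness side-conditions and the applicability of precongruence of $\gexp$ under the surrounding permeable-prefix and restriction contexts, no new argument is needed.
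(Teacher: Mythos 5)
Your proposal is correct and follows essentially the same route as the paper's proof: a case analysis on the four clauses of the optimised encoding, with the abstraction case handled by structural congruence (up to $\alpha$-renaming of the restricted location), the variable case by floating the permeable output outward and applying Lemma~\ref{lem:glink-subst-seq-arg}, and the application cases by merging the two argument blocks with Lemma~\ref{lem:gen-oencon-addition}. Your spelled-out treatment of the abstraction-headed application (scope extrusion of $q_0$ to isolate the two blocks under $\res q$, then the concatenation lemma under the precongruence of $\gexp$) is precisely what the paper abbreviates as ``similar to the previous case.''
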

\begin{proof}
  \mylabel{Case \( M_0 = x \)}
  This case follows from Lemma~\ref{lem:glink-subst-seq-arg}.
  More precisely, we have
  \begin{align*}
    &\res q ( \oenco{x} q \mid \oencoN q p {\oenco {M_1}{}, \ldots, \oenco{M_n}{}}) \\
    &= \res q(\oprfx x {q'} \glink q {q'} \mid \oencoN q p {\oenco {M_1}{}, \ldots, \oenco{M_n}{}}) \\
    &\scong \oprfx x {q'} \res q(\glink q {q'} \mid \oencoN q p {\oenco {M_1}{}, \ldots, \oenco{M_n}{}}) \\
    &\gexp \oprfx x {q'} \oencoN {q'} {p}{\oenco {M_1}{}, \ldots, \oenco{M_n}{}}  \tag{Lemma~\ref{lem:glink-subst-seq-arg}}. \\
    &= \oenco {\app x {M_1 \cdots M_n}} p.
  \end{align*}

  \mylabel{Case \( M_0 =  \lambda x . M \)}
  By definition,
  \begin{align*}
    &\res q ( \oenco{\lambda x. M } q \mid \oencoN q p {\oenco {M_1}{}, \ldots, \oenco{M_n}{}}) \\
    &\equiv \oenco {\app  {(\lambda x. M)} {M_1 \cdots M_n}} p
  \end{align*}

  \mylabel{Case \( M_0 = \app x {N_1 \cdots N_m} \) with \( m \ge 1\)}
  In this case, we have
  \begin{align*}
    &\res q ( \oenco{M_0} q \mid \oencoN q p {\oenco {M_1}{}, \ldots, \oenco{M_n}{}}) \\
    &= \res q
        (\oprfx {x} {q_0} \oencoN[m] {q_0} q {\oenco {N_1}{}, \ldots, \oenco {N_m}{}} \mid \oencoN q p {\encoN {M_1}{}, \ldots, \encoN{M_n}{}}) \\
    &\scong \oprfx {x} {q_0} \res q
        (\oencoN[m] {q_0} q {\oenco {N_1}{}, \ldots, \oenco {N_m}{}} \mid \oencoN q p  {\oenco {M_1}{}, \ldots, \oenco{M_n}{}}) \\
    &\gexp \oprfx {x} {q_0} \oencoN[m + n] {q_0} {p}{\oenco {N_1}{}, \ldots, \oenco {N_m}{}, \encoN {M_1}{}, \ldots,  \encoN{M_n}{}} \tag{Lemma~\ref{lem:gen-oencon-addition}} \\
    &= \oenco {\app x {N_1 \cdots N_m \, M_1 \cdots M_n}} p
  \end{align*}

  \mylabel{Case \( M_0 = \app {(\lambda x. N_0)} {N_1 \cdots N_m} \) with \( m \ge 1\)}
  Similar to the previous case.
\end{proof}

\genOencValidatesBeta*
\begin{proof}
  It suffices to consider the case where \( M = \app{(\lambda x. M_0)} {M_1 \ldots M_n} \), where \( n \ge 1 \), because the other cases follow from the precongruence of \( \lexp \).
  We only consider the case where \( n \ge 2 \) because the case \( n = 1 \) can be proved as in the case for the unoptimised encoding.
  By definition, \( \oenco M p \) is
  \[
    \res {p_0} \left(\diprfx {p_0} {x, q} \oenco {M_0} q \mid \oencoN {p_0} p {\oenco{M_1}{} \cdots \oenco{M_n}{}} \right)
  \]
  By interaction on \( p_0 \) (Lemma~\ref{lem:permeable-comm}), we have
  \begin{align*}
    \oenco M p
    &\gexp \resb{x, q}
      (\oenco{M_0} q  \mid \riproc x {r_1}{\oenco {M_1 } {r_1}} \mid \oencoN[n - 1] {q} p {\oenco{M_2}{} \cdots \oenco{M_n}{}}).
  \end{align*}
  Note that the assumption of Lemma~\ref{lem:permeable-comm} is satisfied by Lemma~\ref{lem:glink-subst-oenc}.
  The claim follows because
   \begin{align*}
    &\resb{x, q}
        (\oenco{M_0} q  \mid \riproc x {r_1}{\oenco {M_1 } {r_1}} \mid \oencoN[n - 1] {q} p {\oenco{M_2}{} \cdots \oenco{M_n}{}}) \\
     &\gexp \res{q} (\oenco{M_0 \sub {M_1} x} q \mid \oencoN[n - 1] {q} p {\oenco{M_2}{} \cdots \oenco{M_n}{}}) \tag{Lemma~\ref{lem:gen-oenc-subst}} \\
     &\gexp \oenco {\app {M_0 \sub {M_1} x} {M_2 \cdots M_n}} p \tag{Lemma~\ref{lem:gen-oenc-app-oenco-oencon}}
  \end{align*}
\end{proof}

Finally, we prove that \( \Qoenco \) is indeed an optimisation.
\genOencIsOptimisation*
\begin{proof}
  By induction on the structure of \( M \).
  The cases of variables and abstraction are straightforward.
  Consider now \( M = \app y {N_1 \cdots N_n} \).
  We use induction on \( n \).
  For the base case, i.e.~\( M = \app y N_1 \) we have:
\begin{align*}
  &\Qencoabs \encoN {\app y {N_1}} p \\
  &= \res q \left(\Qencoabs \enco y q \mid \oprfx q {x, p'} \left(\riproc x r {\Qencoabs \enco {N_1} r} \mid  \glink {p} {p'} \right) \right)  \\
  &= \res q \left(\oprfx y {q'} {\glink q {q'}} \mid \oprfx q {x, p'} \left(\riproc x r {\Qencoabs  \enco {N_1} r} \mid  \glink {p} {p'} \right) \right)  \\
  &\scong \oprfx y {q'} \res q \left(\glink q {q'} \mid \oprfx q {x, p'} \left(\riproc x r {\Qencoabs \enco {N_1} r} \mid  \glink {p} {p'} \right) \right)  \\
  &\gexp \oprfx y {q'} \res q \left(\glink q {q'} \mid \oprfx q {x, p'} \left(\riproc x r {\oenco {N_1} r} \mid  \glink {p} {p'} \right) \right)  \tag{i.h.} \\
  &\gexp \oprfx y {q'} \oprfx {q'} {x, p'} \left(\riproc x r {\oenco {N_1} r} \mid  \glink {p} {p'} \right)  \tag{Lemma~\ref{lem:glink-subst-seq-arg}} \\
  &\equiv_\alpha \oprfx {y}{p_0} \oprfx {p_0} {x, p_1} \left(\riproc x r {\oenco {N_1} r} \mid  \glink {p} {p_1} \right) \\
  &= \oenco {\app y {N_1}} p
  \end{align*}
  The inductive case for \( n \) can be proved similarly using the induction hypothesis and Lemma~\ref{lem:gen-oencon-addition}.

  The case of \( M = \app {(\lambda x.  N)} {N_1 \cdots N_n}  \) is also handled in the same manner.
\end{proof}
\subsection{Properties about transitions}
Now we prove the properties about the transitions \( \oenco M p \) can do.
The first thing we prove is the operational correspondence for  \( \tau \)-transitions.
\genOencTauHasMatchingRed*
\begin{proof}
  By induction on the structure of \( M \).
  The case where \( M = \app x {\seq M} \), where \( \seq M \) is a possibly empty sequence of terms, is trivial since \( \oenco M p \) cannot make any \( \tau \)-action.
  The case for \( M = \lambda x. M_0 \) is also straightforward: it follows from the induction hypothesis.

  We now consider the remaining case where \( M = \app {(\lambda x. M_0)} {M_1 \cdots M_n} \) and \( n \ge 1\).
  Recall that \( \oenco  {\app {(\lambda x. M_0)} {M_1 \cdots M_n}} p \) is
  \[
    \res {p_0}
    \begin{aligned}[t]
      (&\diprfx {p_0}{x, q} \oenco {M_0} q \mid \oprfx {p_0}{x_1,  p_1} \cdots \oprfx {p_{n - 1}}{x_n, p_n}\\
      &(\riproc {x_1} {r_1} {\oenco {M_1} {r_1}} \mid \cdots \mid \riproc {x_n} {r_1} {\oenco {M_n} {r_n}}  \mid \glink p {p_n} ))
    \end{aligned}
  \]
  There are two cases to consider: (1) the case where the \( \tau \)-action originates from the \( \tau \)-action on \( \oenco {M_0} q \) and (2) the case where the \( \tau \)-action is caused by the interaction at \( p_0 \).
  The former case can be easily proved by using the induction hypothesis.
  The latter case is the most important case.
  Since \( \oenco {M_0} q \) is I-respectful with respect to \( \glink {q'} q \) and O-respectful with respect to \( \glink x {x'} \) (Lemma~\ref{lem:glink-subst-oenc}), we can use the communication law for the permeable prefixes on \( p_0 \).
  Therefore, we have
  \begin{align*}
    &\res {p_0}
    \begin{aligned}[t]
      (&\diprfx {p_0}{x, q} \oenco {M_0} q \mid \oprfx {p_0}{x_1,  p_1} \cdots \oprfx {p_{n - 1}}{x_n, p_n}\\
      &( \riproc {x_1} {r_1} {\oenco {M_1} {r_1}} \mid \cdots \mid \riproc {x_n} {r_1} {\oenco {M_n} {r_n}}  \mid \glink p {p_n} ))
    \end{aligned} \\
    &\gexp \resb{x_1, p_1}
      \begin{aligned}[t]
      (&\oenco {M_0 \sub{x_1} x} q \mid \riproc {x_1} {r_1} {\oenco {M_1} {p_1}} \mid \oprfx {p_1}{x_2,  p_2} \cdots \oprfx {p_{n - 1}}{x_n, p_n}\\
      &( \riproc {x_2} {r_2} {\oenco {M_2} {p_2}} \mid \cdots \mid \riproc {x_n} {r_1} {\oenco {M_n} {r_n}}  \mid \glink p {p_n}))
    \end{aligned} \tag{Lemma~\ref{lem:permeable-comm}} \\
    &\gexp \res {p_1}
      \begin{aligned}[t]
      (&\oenco {M_0 \sub {M_1} x} {p_1} \mid \oprfx {p_1}{x_2,  p_2} \cdots \oprfx {p_{n - 1}}{x_n, p_n}\\
      & (\riproc {x_2} {r_2} {\oenco {M_2} {r_2}} \mid \cdots \mid \riproc {x_n} {r_1} {\oenco {M_n} {r_n}}  \mid \glink p {p_n} ))
    \end{aligned}  \tag{Lemma~\ref{lem:gen-oenc-subst}}
  \end{align*}
  If \( n \ge 2 \), we can apply Lemma~\ref{lem:gen-oenc-app-oenco-oencon} and obtain \( \proc \gexp \oenco {\app{M_0 \sub {M_1} x} {M_2 \cdots M_n}} p \).
  If \( n = 1 \), the subprocess of the form \( \oprfx {p_1}{x_2,  p_2} \cdots \) is simply \( \glink p {p_1} \).
  Hence, by Lemma~\ref{lem:glink-subst-oenc}, we have
  \begin{align*}
    \oenco {\app {(\lambda x. M_0)} {M_1}} p
    &\gexp \res {p_1} \left( \oenco {M_0 \sub {M_1} x} {p_1}  \mid \glink p {p_1}\right) \\
    &\gexp  \oenco {M_0 \sub {M_1} x} {p_1}
  \end{align*}
  as desired.
\end{proof}

The next thing we prove is the property about input actions.
\genOencInputMustBeAtp*
\begin{proof}
  By induction on \( M \) with a case analysis on the shape of \( M \).

  \mylabel{Case \( M = x \)}
  In this case, \( \oenco M p = \oprfx x {p'} \glink p {p'} \).
  Since the only free name that appears in an input occurrence is \( p \) (because of~\ref{it:ax:glink-piI:fn} of   Definition~\ref{ax:glink-piI}), the only possible input action \( \oenco M p \) can do is an input on \( p \).
  (Note that whether the process can do an input on \( p \) will depend on the concrete instantiation of \( \glink p q \).)

  \mylabel{Case \( M = \lambda x . M_0 \)}
  Since \( \oenco M p = \diprfx {p} {x , q} \oenco {M_0} q \), if \( \oenco M p \trans {\act} \proc \) and \( \act \) is an input action, then this action must either be an input on \( p \) or an input that originates from \( \oenco {M_0} q \).
  In the latter case, \( \act \) must be an input on \( q \) by the induction hypothesis.
  Since \( q \) is bound by \( \diprfx p {x, q} \), this action cannot induce an input action of \( \oenco M p \).

  \mylabel{Case \( M = \app x {M_1 \cdots M_n} \)}
  In this case, \( \oenco M p \) is
  \begin{align*}
    &\oprfx x {p_0} \oprfx {p_0}{x_1,  p_1} \cdots \oprfx {p_{n - 1}}{x_n, p_n} \\
    &\left( \riproc {x_1} {r_1} {\oenco {M_1}   {r_1}} \mid \cdots \mid \riproc {x_n} {r_n} {\oenco {M_n} {r_n}}  \mid \glink p {p_n} \right)
  \end{align*}
  Since the only free name that may appear in an input occurrence which is not guarded by a (non-permeable) prefixing is \( p \), the only possible input action \( \oenco M p \) can do is an input on \( p \).

  \mylabel{Case \( M = \app {(\lambda x. M_0)} {M_1 \cdots M_n} \)}
  By combining the argument we made in the previous two cases.
\end{proof}

We now consider the relationship between output actions and head normal forms.
As an auxiliary definition, we introduce a special form of a context.
\begin{defi}
\emph{H-contexts} are contexts defined by the following grammar:
\[
  \hctx \Coloneqq \hole \mid \app {(\lambda x. H)} {M_1 \cdots M_n} \quad (n \ge 0 )
\]
\end{defi}

\begin{lem}
\label{lem:hctx-red}
  Let \( M \defeq \hctx[\app x \seq{M}] \), where \( \seq{M} \) is a possibly empty sequence of terms, and assume that \( x \in \fv M \).
  Then \( M \Hred \lambda \seq{y} . \app x {\seq N} \) for some possibly empty sequences of variables \( \seq{y}\) and terms \( \seq{N} \).
\end{lem}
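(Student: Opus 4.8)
The plan is to proceed by well-founded induction on the nesting depth of the H-context, where I define $\mathrm{dp}(\hole) = 0$ and $\mathrm{dp}(\app{(\lambda z. H)}{\seq L}) = 1 + \mathrm{dp}(H)$. To avoid clashes I rename the binders occurring in $\hctx$ so that, by the Barendregt convention, they are all distinct from the head variable $x$; since $x \in \fv M$ this is automatic, and it guarantees that $x$ stays free and uncaptured throughout. Two auxiliary observations will be used repeatedly. First, substitution commutes with hole-filling and preserves the H-context skeleton: if $z \neq x$ then $(H'[\app x \seq M])\sub{M_1}{z} = (H'\sub{M_1}{z})[\app x {\seq M \sub{M_1}{z}}]$, and $H'\sub{M_1}{z}$ is again an H-context with $\mathrm{dp}(H'\sub{M_1}{z}) = \mathrm{dp}(H')$, by a trivial induction on $H'$. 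Second, appending arguments on the outside of a filled H-context yields another filled H-context of the same depth: for every H-context $G$ and sequence $\seq L$ one has $\app{(G[\app x \seq{M'}])}{\seq L} = G'[\app x \seq{M''}]$ for some H-context $G'$ with $\mathrm{dp}(G') = \mathrm{dp}(G)$, by cases on whether $G$ is $\hole$ (absorb $\seq L$ into the arguments of $x$) or $\app{(\lambda w. K)}{\seq P}$ (absorb $\seq L$ into $\seq P$).

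With these in hand I would distinguish the three shapes of $\hctx$. If $\hctx = \hole$ then $M = \app x \seq M$ is already a head normal form $\lambda \seq y . \app x {\seq N}$ with $\seq y$ empty and $\seq N = \seq M$, so $M \Hred M$ in zero steps. If $\hctx = \app{(\lambda z. H')}{M_1 \cdots M_n}$ with $n = 0$, then $M = \lambda z . (H'[\app x \seq M])$; since $x$ is still free in $H'[\app x \seq M]$ and $\mathrm{dp}(H') < \mathrm{dp}(\hctx)$, the induction hypothesis gives $H'[\app x \seq M] \Hred \lambda \seq y . \app x {\seq N}$, and because head reduction proceeds under leading abstractions we conclude $M \Hred \lambda z \seq y . \app x {\seq N}$, which has the required form.

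The remaining case $n \geq 1$ is the crux. Here the head redex of $M$ is $\app{(\lambda z. H'[\app x \seq M])}{M_1}$, and contracting it gives $M \hred \app{((H'[\app x \seq M])\sub{M_1}{z})}{M_2 \cdots M_n}$. Using the first observation this equals $\app{(H''[\app x \seq{M'}])}{M_2 \cdots M_n}$ with $H'' = H'\sub{M_1}{z}$ and $\seq{M'} = \seq M \sub{M_1}{z}$, and then the second observation rewrites it as $\hat H[\app x \seq{M''}]$ for an H-context $\hat H$ with $\mathrm{dp}(\hat H) = \mathrm{dp}(H'') = \mathrm{dp}(H') = \mathrm{dp}(\hctx) - 1$. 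The head variable is preserved as $x$ precisely because $z \neq x$, and $x$ remains free since none of the context binders capture it. As $\mathrm{dp}(\hat H) < \mathrm{dp}(\hctx)$, the induction hypothesis applies to $\hat H[\app x \seq{M''}]$ and yields $\hat H[\app x \seq{M''}] \Hred \lambda \seq y . \app x {\seq N}$; composing with the single head step above gives $M \Hred \lambda \seq y . \app x {\seq N}$. I expect the main difficulty to lie exactly in this step — verifying that one head-reduction step returns the term to the shape `H-context applied to $\app x {(\cdots)}$' with strictly smaller depth — which is what the two auxiliary observations are designed to handle; the rest is routine bookkeeping with the Barendregt convention to keep $x$ free and uncaptured.
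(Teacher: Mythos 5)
Your proof is correct. Note, however, that the paper itself states Lemma~\ref{lem:hctx-red} \emph{without any proof} --- it is treated as a routine fact about head reduction --- so there is no argument of the paper's to compare yours against; what you have written is a sound elaboration of exactly the step the paper leaves implicit. The one genuinely non-trivial point, which you identified and resolved correctly, is the choice of induction measure: plain structural induction on \( \hctx \) would fail, because after contracting the head redex the relevant context is \( H'\sub{M_1}{z} \), which is not a subcontext of \( \hctx \). Your observation that substitution preserves both the H-context shape and the nesting depth \( \mathrm{dp} \), together with the second observation that appending outer arguments also preserves depth, is precisely what makes the induction go through; as a by-product it gives the quantitative fact that \( M \) reaches its head normal form in at most \( \mathrm{dp}(\hctx) \) head-reduction steps. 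Your appeals to the Barendregt convention (no binder of \( \hctx \) is named \( x \), so the head occurrence of \( x \) remains free and uncaptured through substitution) are also the right way to discharge the freeness side conditions, and they are consistent with the convention the paper adopts globally.
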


\begin{lem}
\label{lem:gen-oenc-output-implies-hctx}
  Let \( M \) be a \( \lambda \)-term and suppose that \( \oenco M p \trans \act P \) for an output action \( \act \).
  Then the action \( \act \) must be of the form \( \bar{x}(p)\) for a fresh \( p \) and a variable  \( x \in \fv{M}\), and there exists a H-context \( \hctx \) such that \( H[\app x {\seq M}] = M \) for some possibly empty sequence \(   \seq M = M_1, \ldots M_n\).
\end{lem}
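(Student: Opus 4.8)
The plan is to argue by induction on the structure of $M$, following the four clauses that define the optimised encoding $\oenco{\cdot}{}$ in Figure~\ref{fig:oenc}: a variable, an abstraction, an application with a variable head, and an application with an abstraction (head-redex) head. In each case I would desugar the topmost permeable prefix into ordinary prefixes and wires and then read off, using Definition~\ref{ax:glink-piI}, exactly which free output transitions are available in a single step. The recurring observation is that a wire $\glink a b$ only ever acts at $a$ in input and at $b$ in output (condition~\ref{it:ax:glink-piI:fn}), so a wire never contributes a free output at a name the surrounding context has restricted; moreover the object carried by any output is a freshly bound name, hence automatically distinct from every name that could block it.

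First I would treat the two non-inductive cases. For $M = x$ the encoding is $\oprfx x {p'} \glink p {p'}$; desugaring shows that the only free output is the one at the subject, namely $\bout x {p''}$ for fresh $p''$, since the remaining capabilities are an input at $p$ and an output at the restricted object of the prefix. This gives $x \in \fv M$ and $M = \hole[x]$, i.e.\ $H = \hole$ with empty argument sequence. For $M = \app x {M_1 \cdots M_n}$ the encoding is $\oprfx x {p_0} \oencoN {p_0} p {\cdots}$; here the subject $p_0$ of the nested permeable outputs (bound as the object of the outer prefix), the subjects of the replicated servers, and the output end of the trailing wire are all bound, so the unique free output is again one at the head variable $x \in \fv M$, and we take $H = \hole$.

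The inductive cases are $M = \lambda x. M_0$ and $M = \app{(\lambda x. M_0)}{M_1 \cdots M_n}$, both of which place $\oenco{M_0}{q}$ underneath a permeable input whose bound object names are $x$ and $q$. Desugaring this permeable input restricts $x$ and $q$, so any output of $\oenco{M_0}{q}$ with subject $x$ is suppressed, whereas an output at a free variable other than $x$ escapes unchanged; the servers and wires inside the trailing block contribute no free output for the reasons above, and the head-redex interaction at the restricted $p_0$ is a $\tau$, never an output. Hence any output $\act$ of $\oenco{M}{p}$ is in fact an output of $\oenco{M_0}{q}$, and by the induction hypothesis $\act = \bout z r$ with $z \in \fv{M_0}$, $z \neq x$, and $M_0 = H_0[\app z {\seq N}]$ for some H-context $H_0$. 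Since $z \neq x$ we get $z \in \fv M$, and I would set $H = \lambda x. H_0$ in the abstraction case and $H = \app{(\lambda x. H_0)}{M_1 \cdots M_n}$ in the redex case; both are H-contexts (the former being the degenerate $n = 0$ instance of the production), and in each $H[\app z {\seq N}] = M$.

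The hard part will be the bookkeeping around the permeable input: I must verify precisely that desugaring it makes the bound variable $x$ restricted, so an output at $x$ cannot escape, and that the internal head-redex communication at $p_0$ yields only $\tau$-steps. This is what forces the escaping output to carry a genuinely free head variable and allows the inductive H-context to be extended by one leading $\lambda$ or one leading redex. The argument is the exact output-side analogue of the reasoning already carried out for input actions in Lemma~\ref{lem:gen-oenc-input-must-be-at-p}, and Lemma~\ref{lem:hctx-red} then confirms that the resulting H-context indeed reduces $M$ to a head normal form with head variable $z$.
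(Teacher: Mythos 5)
Your proposal is correct and takes essentially the same route as the paper's proof: a structural induction over the four encoding clauses, where the two head cases yield $H = \hole$ directly, and in the two binding cases the output is shown to originate from the encoding of the body with subject distinct from the bound variable, so the H-context is extended to $\lambda x. H_0$ (the $n=0$ instance of the production) or $\app{(\lambda x. H_0)}{M_1 \cdots M_n}$. The desugaring bookkeeping you flag as the hard part is exactly what the paper discharges via condition~\ref{it:ax:glink-piI:fn} of Definition~\ref{ax:glink-piI}, namely that a wire uses only its two end-points and hence contributes no output at a restricted or foreign name.
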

 \begin{proof}
  By induction on \( M \) with a case analysis on the shape of \( M \).

  \mylabel{Case \( M = x \)}
  In this case, \( \oenco M p = \oprfx x {p'} \glink p {p'} \), and the only output action \( \oenco M p \) can do is \( \bout x {p'} \) (cf.~\ref{it:ax:glink-piI:fn} of Definition~\ref{ax:glink-piI}).
  We can take the empty context \( \hole \) for \( \hctx \) and the empty sequence for \( \seq M \).

  \mylabel{Case \( M = \lambda x . M_0 \)}
  Since we have \( \oenco M p = \diprfx {p} {x , q} \oenco {M_0} q \), if \( \oenco M p \trans {\bout y r} \proc \) then this action must originate from \( \oenco {M_0} q \) and we must have \( x \neq y \).
  Hence, by the induction hypothesis, there is a H-context \( \hctx' \) and a sequence of terms \( \seq M \) such that  \( M_0 = \hctx'[\app y {\seq M}] \) with \( y \in \fv {M_0}\).
  We can take \( \hctx \) as \( \lambda x. \hctx' \), and since \( y \neq x \) we also have \( y \in \fv M \).

  \mylabel{Case \( M = \app x {M_1 \cdots M_n} \)}
  In this case, \( \oenco M p \) is
  \begin{align*}
    &\oprfx x {p_0} \oprfx {p_0}{x_1,  p_1} \cdots \oprfx {p_{n - 1}}{x_n, p_n} \\
    &\left( \riproc {x_1} {r_1} {\oenco   {M_1} {r_1}} \mid \cdots \mid \riproc {x_n} {r_n} {\oenco {M_n} {r_n}}  \mid \glink p {p_n} \right).
  \end{align*}
  Obviously, the only output \( \oenco M p \) can do is \( {\bout x {p_0}} \).
  Hence, the claim holds by taking \( \hctx = \hole \) and \( \seq M = M_1, \ldots, M_n \).

  \mylabel{Case \( M = \app {(\lambda x. M_0)} {M_1 \cdots M_n} \)}
  Recall that \( \oenco M p \) is
  \[
    \res {p_0}
    \begin{aligned}[t]
      (&\diprfx {p_0}{x, q} \oenco {M_0} q \mid \oprfx {p_0}{x_1,  p_1} \cdots \oprfx {p_{n - 1}}{x_n, p_n}\\
        &\left( \riproc {x_1} {r_1} {\oenco {M_1} {r_1}} \mid \cdots \mid \riproc {x_n} {r_1} {\oenco {M_n} {r_n}}  \mid  \glink p {p_n} \right) )
      \end{aligned}
  \]
  If \( \oenco M p \) does an output action, then this action must originate from \( \oenco {M_0} q \) and the subject of the action must be different from \( x \).
  Assume that \( \oenco {M_0} q \trans \act \proc' \) for an output action \( \act \) whose subject is not \( x \).
  By the induction hypothesis, \( \act = \bout y {r} \) and there is a H-context \( \hctx' \) and a sequence of terms \( \seq M' \) such that \( M_0 = \hctx'[\app y {\seq M'}] \) with \( y \in \fv {M_0}\).
  We can take \( H = \app {(\lambda x . \hctx')} {M_1 \cdots M_n} \).
  Since \( M = H [\app y {\seq M'}] \) and \( y \in \fv M \) the claim follows.
\end{proof}

\genOencOutputImpliesSolvable*
\begin{proof}
  By Lemma~\ref{lem:hctx-red} and Lemma~\ref{lem:gen-oenc-output-implies-hctx}.
\end{proof}

\genOencUnsolvableNoOutput*
\begin{proof}
  Since \( M \) is an unsolvable term, by Lemma~\ref{lem:gen-oenc-output-implies-solvable}, \( \oenco M p \) cannot do an output action.
  Hence, if \( \oenco M p \wtrans \act \proc \), where \( \act \) is an output action, we must have \( \oenco M p {(\trans \tau)}^n \proc \trans \act \proc' \) for \( n \ge 1 \).
  Assume that such an \( n \) exist.
  Then, by repeatedly applying Lemma~\ref{lem:gen-oenc-tau-has-mathcing-red}, we get \( \proc \gexp \oenco {M'} p \) for some term \( M' \) such that \( M \red^n M' \).
  Note that \( M' \) is also an unsolvable term.
  However, this is a contradiction since \( \proc \trans \act \proc' \) for an output action \( \act \), but \( \oenco {M'} p \not \trans \act \) by Lemma~\ref{lem:gen-oenc-output-implies-solvable}.
\end{proof}

\section{Supplementary Materials for Section~\ref{sec:wires}}
\label{sec:wires-appx}
In this section, we prove that the three concrete wires we introduced satisfy the properties of Definition~\ref{ax:glink-piI}.
As explained in the main text, we first show that the wires are transitive, and then the other laws are proved by algebraic reasoning exploiting the transitivity of wires.

\subsection{Proofs for transitivity}
We prove that \IOwires{} and  \OIwires{} are transitive.
The reasoning is similar to that of the \Pwires{} which we saw in Section~\ref{sec:wires}; we use  bisimulation up-to context and expansion.
The proofs for the \IOwires{} and \OIwires{} are slightly simpler than that of the \Pwires{} since these wires have fewer permeable prefixes, and the two proofs are essentially the same
because of their `duality'.

Chains of \IOwires{} and \OIwires{}, denoted by \( \chain {\mathtt{IO}} n a b \) and \( \chain {\mathtt{OI}} n a b \), respectively, are defined similar to the chains of \Pwires{}.
(Only for the \( \chain {\mathtt{OI}} n p q \), we assume that \( p \) is an output name and \( q \) is an input name so that the `direction of the chain' becomes from output to input.)

\begin{lem}
  \label{lem:linkI-trans}
  The wires \( \link p q \) and \( \link x y \) are transitive.
  That is, we have \( \res q (\link p q \mid \link q r) \gexp \link p r \) and \( \res y (\link x y \mid \link y z) \gexp \link x z \).
\end{lem}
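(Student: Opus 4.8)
The plan is to follow the template of the proof of Lemma~\ref{lem:ilink-olink-trans} for the \Pwires{}, strengthening the statement to chains of wires of arbitrary length \( n \ge 2 \) and proving transitivity for chains; the case \( n = 2 \) then yields the two stated laws. Concretely, I would set
\begin{align*}
  \relR_1 &\defeq \left\{ \left(\link {p_0} {p_n}, \chain {\mathtt{IO}} n {p_0} {p_n} \right) \midbar n \ge 2 \right\}, \\
  \relR_2 &\defeq \left\{ \left(\link {x_0} {x_n}, \chain {\mathtt{IO}} n {x_0} {x_n} \right) \midbar n \ge 2 \right\},
\end{align*}
and show that \( {\relR_1} \cup {\relR_2} \) is an expansion up-to expansion and context, exactly the proof technique used for the \Pwires{}.

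The argument is in fact \emph{simpler} than the \Pwire{} case, and the reason is a structural observation peculiar to \IOwires{}: since every location \IOwire{} begins with an \emph{ordinary} input, the chain \( \chain{\mathtt{IO}}{n}{p_0}{p_n} \scong \res{p_1, \ldots, p_{n-1}}(\link{p_0}{p_1} \mid \cdots \mid \link{p_{n-1}}{p_n}) \) has \emph{no} spontaneous internal reduction: each \( \link{p_i}{p_{i+1}} \) is blocked on its input at \( p_i \), and the output at an internal \( p_i \) is freed only after the head input at \( p_0 \) has fired. Thus the only initial action of the chain is the input \( p_0(y, p_1) \), which coincides with the leading action of the single wire \( \link{p_0}{p_n} \trans{p_0(y, p_1)} \oprfx{p_n}{x, q_1}(\link{p_1}{q_1} \mid \link x y) \). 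This avoids altogether the infinitely-many-reductions phenomenon that complicated the \Pwire{} proof.

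The heart of the proof is then the analogue of equation~\eqref{e:ichain}: after the head input at \( p_0 \) and the ensuing deterministic cascade of communications at the internal names \( p_1, \ldots, p_{n-1} \) (all at restricted linear location names, hence each a \( \gexp \)-step), the chain derivative expands to a process of the form
\[
  \oprfx{p_n}{x, q_1}\bigl(\chain{\mathtt{IO}}{m}{p_1}{q_1} \mid \chain{\mathtt{IO}}{m'}{x}{y}\bigr),
\]
for suitable lengths \( m, m' \ge 2 \) (growing linearly in \( n \), as in the \Pwire{} case). This matches the derivative of the single wire up to the common context \( \oprfx{p_n}{x, q_1}(\hole \mid \hole) \), whose holes are filled on one side by the location chain \( \chain{\mathtt{IO}}{m}{p_1}{q_1} \) and the variable chain \( \chain{\mathtt{IO}}{m'}{x}{y} \), and on the other side by the single wires \( \link{p_1}{q_1} \) and \( \link x y \); since these pairs lie in \( \relR_1 \) and \( \relR_2 \), the up-to context step closes the case. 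Because the single wire can perform no \( \tau \), and the chain can perform none before its head input, both the input challenge (from either side) and the idle case are discharged uniformly.

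The variable-wire transitivity \( \res y(\link x y \mid \link y z) \gexp \link x z \) is handled in the same way over \( \relR_2 \): here \( \link x y = \riproc x p{\oprfx y q \link p q} \) is a \emph{replicated} input at \( x \), so again the chain can only start by an input at \( x_0 \); after receiving a location name, the replication exposes a permeable output at the internal name which interacts with the next wire, and the cascade produces (up to \( \gexp \)) a derivative of the form \( \chain{\mathtt{IO}}{n}{x_0}{x_n} \mid \oprfx{x_n}{p_n}(\text{location chain}) \), matched by \( \link{x_0}{x_n} \trans{x_0(p_0)} \link{x_0}{x_n} \mid \oprfx{x_n}{p_n}\link{p_0}{p_n} \) and closed with up-to context. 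The persistence of the replicated inputs is managed via the replication theorems (Lemma~\ref{lem:rep-thm}), precisely as in the \Pwire{} proof. The main obstacle is the bookkeeping of the cascade: one must check that after the head input the residual inner wires reassemble exactly into chains of length \( \ge 2 \) (so the up-to context step lands back inside \( {\relR_1} \cup {\relR_2} \)), and that every cascade step is genuinely a \( \gexp \)-step, which it is because all intermediate communications occur at restricted linear location names.
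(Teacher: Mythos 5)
Your proposal is correct and follows essentially the same route as the paper's own proof: the same strengthening to chains $\chain{\mathtt{IO}}{n}{a}{b}$, the same relations $\relR_1, \relR_2$ shown to be an expansion up-to $\lexp$ and context, the same key observation that a chain of \IOwires{} can only begin with the head input (so the cascade of communications at the restricted linear internal names is triggered only after that input and yields $\gexp$-steps), and the same treatment of variable wires via the replication theorems. The paper merely makes the bookkeeping explicit, computing the residual chain lengths as $2n-1$ via an induction on $n$, which your sketch leaves as "growing linearly in $n$".
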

\begin{proof}
  \renewcommand*{\chain}[3]{\mathrm{chain}_{\mathtt{IO}}^{#1}(#2, #3)}
  We strengthen the statement and prove the transitivity for chains of wires of any length.
  The relations we consider for the  proof are
  \begin{align*}
    {\relR_1} &\defeq \left\{ (\link {p_0} {p_n}, \chain n {p_0} {p_n}) \mid n \ge 2 \right\} \\
    {\relR_2} &\defeq \left\{ (\link {x_0}{x_n}, \chain n {x_0}{x_n})  \mid n \ge 2 \right\}.
  \end{align*}
  We show that \( {\relR_1} \cup {\relR_2}\) is an expansion up-to \( \lexp \) and context %

    We first consider the case for location names.
    Suppose \( \link {p_0} {p_m} \relR_1 \chain m {p_0} {p_m} \) for some \( m \ge 2 \).
    We only consider the case where the process on the right-hand side makes the challenge; the opposite direction can be proved similarly.
    There is only one possible actions the process can do, namely an input at \( p_0 \).

    First we prove the following auxiliary statement by induction on \( n \).
    \begin{quote}
      For any \( n \ge 2 \), if \( \chain n {p_0} {p_n}  \trans{p_0{(x_0, q_0)}} \proc \), then we have \( \proc \gexp \oprfx {p_n} {x_n, q_n} (\chain {2n - 1} {x_n}{x_0} \mid \chain {2n - 1} {q_0}{q_n} ) \)
    \end{quote}
  The base case is \( n = 2 \).
  Recall that \( \link {p_0} {p_1} \) and  \( \link {p_1}{p_0} \) are of the form
  \begin{align*}
    \bin {p_0}{x_0, q_0}. (\res {x_1', q_1'})
    \begin{aligned}[t]
      (&\bout {p_1} {x_1, q_1}.(\link {x_1} {x_1'} \mid \link{q_1'}{q_1} ) \\
      & \mid \link {x_1} {x_0} \mid \link {q_0} {q_1'})
    \end{aligned}
  \end{align*}
  and
  \begin{align*}
  \bin {p_1}{x_1, q_1}. (\res {x_2', q_2'})
    \begin{aligned}[t]
      (&\bout {p_2} {x_2, q_2}.(\link {x_2} {x_2'} \mid \link{q_2'}{q_2} ) \\
      & \mid \link {x_2'} {x_1} \mid \link {q_1} {q_2'})).
    \end{aligned}
   \end{align*}
  Hence the derivative of the transition \( \trans{p_0(x_0, q_0)}\) is
   \begin{align*}
     &
    \begin{aligned}[t]
      &\resb {x_1', q_1', p_1} \\
      &(\bout {p_1} {x_1, q_1}.(\link {x_1} {x_1'} \mid \link{q_1'}{q_1} ) \mid \link {x_1'} {x_0} \mid \link {q_0} {q_1'} \\
      &\mid \bin {p_1}{x_1, q_1}. (\res {x_2', q_2'})(
      \begin{aligned}[t]
      &\bout {p_2} {x_2, q_2}.(\link {x_2} {x_2'} \mid \link{q_2'}{q_2} ) \mid \link {x_2'} {x_1} \mid \link {q_1} {q_2'}))
    \end{aligned}
  \end{aligned} \\
    &\gexp
    \begin{aligned}[t]
      &\resb {x_1', x_1, x_2', q_1', q_1, q_2'}\\
      &(\link {x'_2}  {x_1} \mid \link {x_1} {x_1'}  \mid  \link {x_1'} {x_0} \\
      &\mid \link {q_0} {q_1'} \mid \link{q_1'}{q_1} \mid \link {q_1} {q_2'} \\
      &\mid \bout {p_2} {x_2, q_2}.(\link {x_2} {x_2'} \mid \link{q_2'}{q_2}))
    \end{aligned} \tag{interaction at \( p_1\)} \\
     &\scong \resb {x_2', q_2'}
         (\chain 3 {x_2'} {x_0} \mid \chain 3 {q_0} {q_2'} \mid \bout {p_2} {x_2, q_2}.(\link {x_2} {x_2'} \mid \link{q_2'}{q_2})) \\
     &= \oprfx {p_2} {x_2, q,_2} (\chain 3 {x_2}{x_0} \mid \chain 3 {q_0} {q_2})
   \end{align*}
  as desired.
  The inductive case can be proved similarly.

  Hence if \( \chain m {p_0} {p_{m}} \trans{p_0(x_0, q_0)} \proc \), we have \( \proc \gexp \oprfx {p_m}{x_m, q_m} (\chain {2m - 1}{x_m}{x_0} \mid \chain {2m - 1}{q_0}{q_m} ) \).
  For the matching transition, we pick
  \[
    \link {p_0} {p_m} \trans{p_0(x_0, q_0)} \oprfx {p_m}{x_m, q_m}( \link {x_m} {x_0} \mid \link {q_0} {q_m} ).
  \]
  We can take \( C \defeq \oprfx {p_m}{x_m, q_m} (\hole \mid \hole) \) as the common context and conclude this case because \( \link{q_0} {q_m} \relR_1 \chain {2m - 1}{q_0}{q_m}  \) and \( \link {x_m} {x_0} \relR_2 \chain {2m - 1}{x_m}{x_0} \).

  The case for the variable name is proved similarly.
  Suppose \( \link {x_0} {x_m} \relR_2 \chain m {x_0} {x_m} \) for some \( m \ge 2 \).
  The only action the two processes can do is the input at \( x_0 \).
  As in the case for location names, we can show that
  \begin{quote}
    For any \( n \ge 2 \), if \( \chain n {x_0} {x_n}  \trans{x_0{(p_0)}} \proc \), then \( \proc \gexp  \chain n {x_0} {x_n} \mid \oprfx {x_n} {p_n} \chain {2n - 1} {p_0}{p_n} \)
  \end{quote}
  by induction on \( n \).
  We omit the proof as it is similar to the case for location
  names; instead of the expansion relation for
  interactions among linear names, the proof uses replication theorems
  (the laws~\eqref{it:lem:rep-thm:par},~\eqref{it:lem:rep-thm:gc}
  and~\eqref{it:lem:rep-thm:comm} of  Lemma~\ref{lem:rep-thm}). 
  So if \( \chain m {x_0} {x_m}  \trans {x_0(p_0)} \proc \), we can take \( \link {x_0} {x_m}  \trans{x_0(p_0)}  \link {x_0} {x_m} \mid \oprfx {x_m} {p_m} \link {p_0}{p_m} \) as the matching transition.
  We have \( \proc  \gexp  \chain n {x_0} {x_m} \mid \oprfx {x_m} {p_m} \chain {2n - 1} {p_0}{p_m} \); \( \link{x_0}{x_m} \relR_2 \chain m {x_0} \); and \( \link{p_0}{p_m} \relR_1 \chain {2m - 1} {p_0}{p_m} \).
  We can apply the up-to context technique with the context being \(
  \hole \mid \oprfx {x_m} {p_m} \hole \) to conclude the case.
\end{proof}

Now we prove the transitivity for the \OIwires{}.
Since the proof is almost identical to that of the \IOwires{}, we omit the details and only present the key points.
\begin{lem}
  \label{lem:linkO-trans}
  The wires \( \linkO p q \) and \( \vlinkOI x y \) are transitive.
  That is, we have \( \res q (\linkO p q \mid \linkO q r) \gexp \linkO p r \) and \( \res y (\vlinkOI x y \mid \vlinkOI y z) \gexp \vlinkOI x z \).
\end{lem}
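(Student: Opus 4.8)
The plan is to reuse, essentially verbatim modulo the input/output swap recorded in Remark~\ref{rem:OI-wires-prefix}, the argument used for the \IOwires{} in Lemma~\ref{lem:linkI-trans}. As there, a single instance of the law is not closed under the reductions it generates, so I would first strengthen the statement to chains of \OIwires{} of arbitrary length and prove that the two relations
\begin{align*}
  \relR_1 &\defeq \left\{ \left(\linkO {p_0} {p_n}, \chain {\mathtt{OI}} {n} {p_0} {p_n}\right) \;\midbar\; n \ge 2 \right\}, \\
  \relR_2 &\defeq \left\{ \left(\vlinkOI {x_0} {x_n}, \chain {\mathtt{OI}} {n} {x_0} {x_n}\right) \;\midbar\; n \ge 2 \right\}
\end{align*}
jointly form an expansion up-to context and \( \lexp \); the lemma is then the case \( n = 2 \). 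Because a variable \OIwire{} embeds a location \OIwire{} and conversely, the two relations are mutually recursive and must be handled together, exactly as for the \IOwires{}.

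The heart of the proof is a shape lemma for the derivative of a chain, established by induction on \( n \) and dual to the one inside Lemma~\ref{lem:linkI-trans}. For the \IOwires{} the chain answers an \emph{input} at its head \( p_0 \); here, since each \( \linkO {p_i}{p_{i+1}} \) leads with a bound output, the only visible head move of \( \chain {\mathtt{OI}} {n} {p_0} {p_n} \) is the output \( \bout {p_0}{x_0, q_0} \). I would show that after firing this output and then executing the \( n-1 \) internal communications at the intermediate \( p_i \)'s (all at restricted linear names, so each step is an expansion), the residual is \( \gexp \) to a single permeable \emph{input} \( \diprfx {p_n}{y, p_1}(\hole \mid \hole) \) guarding a location chain and a variable chain each of length \( 2n-1 \). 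This is precisely the dual of the \IOwire{} derivative \( \oprfx {p_n}{x_n,q_n}(\chain {\mathtt{IO}}{2n-1}{x_n}{x_0} \mid \chain {\mathtt{IO}}{2n-1}{q_0}{q_n}) \), with the permeable output replaced by a permeable input. The matching move is the single-wire step \( \linkO {p_0}{p_n} \trans{\bout{p_0}{x_0,q_0}} \diprfx {p_n}{y, p_1}(\linkO {q_0} {p_1} \mid \vlinkOI {x_0} y) \); the common context \( \diprfx {p_n}{y, p_1}(\hole \mid \hole) \) is discharged by up-to context, the two residual pairs lying in \( \relR_1 \) and \( \relR_2 \). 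For the variable wires the challenge is instead a replicated input at \( x_0 \), and I would argue exactly as in Lemma~\ref{lem:linkI-trans}, using the replication theorems (Lemma~\ref{lem:rep-thm}) in place of the linear-name expansion steps, since \( \vlinkOI x y \) is a replicated input of the same shape as \( \link x y \).

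The step I expect to be the main obstacle is keeping the bookkeeping of the mutually-recursive chains consistent under the swap, rather than any single routine calculation. Concretely, I must verify that reducing the internal communications still yields an \emph{expansion} (not merely a bisimulation) in the output-first regime, and that the respectfulness side-conditions needed to consume the permeable prefixes occurring \emph{inside} the wires (cf.\ Lemma~\ref{lem:permeable-comm}) remain available after dualisation. These do hold, because passing from \IOwires{} to \OIwires{} only exchanges input and output particles, which preserves linearity of location names, replication of variable names, and the direction of each \( \gexp \); but checking that the dualised shape lemma actually closes its induction is the point that genuinely requires care, as opposed to a mechanical appeal to duality.
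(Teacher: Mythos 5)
Your proposal is correct and takes essentially the same route as the paper's proof: the paper likewise strengthens the statement to chains of \OIwires{} of arbitrary length, uses exactly your relations \( \relR_1 \) and \( \relR_2 \), and shows they form an expansion up-to \( \lexp \) and context via the same induction-on-\( n \) shape lemma for the derivative after the head output at \( p_0 \) (and, for variable wires, after the replicated input at \( x_0 \), replacing linear-name expansion steps by the replication theorems), concluding with the same single-wire matching move and up-to-context step. The paper merely presents this tersely, by dualising the \IOwires{} proof of Lemma~\ref{lem:linkI-trans}, exactly as you propose.
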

\begin{proof}
\renewcommand*{\chain}[3]{\mathrm{chain}_{\mathtt{OI}}^{#1}(#2, #3)}
As in the case of the \IOwires{}, we consider the following relations.
\begin{align*}
    {\relR_1} &\defeq \left\{ (\linkO {p_0} {p_n}, \chain n {p_0} {p_n}) \mid n \ge 2 \right\} \\
    {\relR_2} &\defeq \left\{ (\vlinkOI {x_0}{x_n}, \chain n {x_0}{x_n})  \mid n \ge 2 \right\}.
\end{align*}
We show that \( {\relR_1} \cup {\relR_2}\) is an expansion up-to \( \lexp \) and context.

Observe that \( \chain n {p_0} {p_n} \) and  \( \chain n {x_0}{x_n} \) can only do an output at \( p_0 \) and input at \( x_0 \), respectively.
We can show that, for any \( n \ge 2 \),
\begin{enumerate}
  \item if \( \chain n {p_0} {p_n}  \trans{\bout {p_0} {x_0, q_0}} \proc \), \( \proc \gexp \diprfx {p_n} {x_n, q_n} (\chain {2n - 1} {x_0}{x_n} \mid \chain {2n - 1} {q_n}{q_0} ) \)
  \item if \( \chain n {x_0} {x_n}  \trans{x_0{(p_0)}} \proc \), then \( \proc \gexp  \chain n {x_0} {x_n} \mid \oprfx {x_n} {p_n} \chain {2n - 1} {p_n}{p_0} \)
\end{enumerate}
by induction on \( n \).

The rest of the proof follows that of Lemma~\ref{lem:linkI-trans}.
\end{proof}

\subsection{Proofs for laws other than transitivity}
We now show the remaining properties holds for all the three concrete wires.
Again, the reasoning is similar in all the three cases, though not identical.

\begin{lem}
\label{lem:link-satifies-ax}
  The \IOwires \( \link p q \) and \( \link x y \) satisfy the laws of Definition~\ref{ax:glink-piI}.
\end{lem}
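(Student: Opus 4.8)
The plan is to discharge the eight requirements of Definition~\ref{ax:glink-piI} in three groups, with the transitivity of the \IOwires{} serving as the workhorse. Conditions~\ref{it:ax:glink-piI:fn},~\ref{it:ax:glink-piI:no-tau} and~\ref{it:ax:glink-piI:var-link-rep} are purely syntactic. First I would inspect the defining equations: in $\link p q = \bin p {y, p_1}. \oprfx q {x, q_1} (\link {p_1} {q_1} \mid \link x y)$ the top construct is the input at $p$ and, once the permeable output $\oprfx q {x, q_1}(\cdots)$ is desugared, the only names left unbound are $p$ (in input) and $q$ (in output), every name introduced by the recursive calls $\link {p_1}{q_1}$ and $\link x y$ being bound; the same reading shows the only initial action of $\link p q$ is the visible input at $p$, so no immediate $\tau$ is possible, and $\link x y = \riproc x p {\oprfx y q \link p q}$ is manifestly a replicated input at $x$. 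Condition~\ref{it:ax:glink-piI:trans} is exactly Lemma~\ref{lem:linkI-trans}, already established.

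What remains are the four substitution laws~\ref{it:ax:glink-piI:subst-din},~\ref{it:ax:glink-piI:subst-dout},~\ref{it:ax:glink-piI:subst-rep} and~\ref{it:ax:glink-piI:subst-dout-var}, which I would prove by algebraic reasoning. For the two location-wire laws the scheme is: desugar the permeable prefix appearing in the statement, perform the single communication that becomes enabled once the shared name is restricted, and then collapse the resulting chains of wires by transitivity (Condition~\ref{it:ax:glink-piI:trans}) together with the communication law for permeable prefixes (Lemma~\ref{lem:permeable-comm}). In Law~\ref{it:ax:glink-piI:subst-din}, for $\res q (\link p q \mid \diprfx q {x, r} \proc)$, the wire drives $q$ in output while the permeable prefix listens at $q$, so firing the top input $\bin p {y, p_1}$ exposes the guarded output at $q$, which then meets the permeable input; after transitivity the top input at $p$ reassembles $\diprfx p {x, r} \proc$. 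The respectfulness premise on $\proc$ is not invoked in the \IOwire{} proof, whereas it is needed for the dual Law~\ref{it:ax:glink-piI:subst-dout}: this is precisely the asymmetry flagged in the duality remark, and it reflects whether the wires produced by the interaction land directly on the continuation of a permeable prefix or must instead be pushed into the running process $\proc$ as a substitution.

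For the variable-wire laws~\ref{it:ax:glink-piI:subst-rep} and~\ref{it:ax:glink-piI:subst-dout-var} I would exploit that $\link x y$ is a replicated input (Condition~\ref{it:ax:glink-piI:var-link-rep}), which makes the replication theorems of Lemma~\ref{lem:rep-thm} applicable. In Law~\ref{it:ax:glink-piI:subst-rep}, restricting $y$ forces the wire's guarded output at $y$ to feed the private replicated server $\riproc y p \proc$; distributing the private replication and then using the substitution premise $\res p(\glink {p'} p \mid \proc) \gexp \proc \sub {p'} p$ together with transitivity rewires every invocation of $x$ directly to a fresh copy of $\proc$, yielding $\riproc x p \proc$. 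Law~\ref{it:ax:glink-piI:subst-dout-var} is similar: the single output of $\oprfx x p \proc$ consumes one copy of the wire, after which the garbage-collection law (Lemma~\ref{lem:rep-thm}\eqref{it:lem:rep-thm:gc}) discards the now-unused replication and the premise leaves $\oprfx y p \proc$.

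I expect the main obstacle to be the bookkeeping of bound names through the permeable prefixes, which are themselves defined in terms of \IOwires{}; one must unfold them carefully to determine which communications are enabled and to verify that every consumed interaction occurs at a restricted, linear name, so that the preorder $\gexp$ (and not merely $\bisim$) is preserved at each step. A secondary delicate point is discharging the side-conditions of Lemma~\ref{lem:permeable-comm} and of the replication theorems, in particular the respectfulness hypotheses, which, as the duality remark predicts, enter asymmetrically between the input-oriented and output-oriented substitution laws.
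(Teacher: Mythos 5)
Your proposal is correct and takes essentially the same route as the paper's proof: the syntactic conditions~\ref{it:ax:glink-piI:fn}, \ref{it:ax:glink-piI:no-tau}, \ref{it:ax:glink-piI:var-link-rep} by inspection, transitivity via Lemma~\ref{lem:linkI-trans}, and the four substitution laws by algebraic reasoning combining desugaring, Lemma~\ref{lem:permeable-comm}, the replication theorems, garbage collection and wire transitivity, invoking the respectfulness premise on \( \proc \) exactly where the paper does (in law~\ref{it:ax:glink-piI:subst-dout} but not in law~\ref{it:ax:glink-piI:subst-din}). The only difference is presentational: you narrate the location-wire laws as firing the guarding input at \( p \), whereas the paper performs the restricted communication at \( q \) underneath that prefix and concludes by precongruence of \( \gexp \).
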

\begin{proof}

  Requirements~\ref{it:ax:glink-piI:fn}, \ref{it:ax:glink-piI:no-tau}, and~\ref{it:ax:glink-piI:var-link-rep} hold by definition.
Transitivity of the wires has already been proved (Lemma~\ref{lem:linkI-trans}).
  Hence, we only check the remaining laws.

  We start by checking laws~\ref{it:ax:glink-piI:subst-din} and~\ref{it:ax:glink-piI:subst-dout}.
  Law~\ref{it:ax:glink-piI:subst-din} holds because
  \begin{align*}
    &\res q ( \link p q \mid \diprfx q {x, r} \proc )\\
    &\scong \resb {q, x, r}
      \begin{aligned}[t]
        (&\bin p {x'', r''} .\oprfx q {x', r'} (\link {x'} {x''} \mid \link {r''} {r'}) \\
        &\mid \iproc q {x', r'} {(\link x {x'} \mid \link {r'} r)} \mid \proc )
      \end{aligned} \\
    &\sbisim \resb {x, r}
      \begin{aligned}[t]
        (&\bin p {x'', r''} . \res q (\oprfx q {x', r'} (\link {x'} {x''} \mid \link {r''} {r'}) \\
        &\mid \iproc q {x', r'} {(\link x {x'} \mid \link {r'} r)})  \mid \proc )
      \end{aligned} \\
    &\gexp \resb {x, r}
      \begin{aligned}[t]
        (&\bin p {x'', r''} . \resb {x', r'} (\link {x'} {x''} \mid \link x {x'} \\
        &\mid  \link {r''} {r'} \mid \link {r'} r )  \mid \proc)
      \end{aligned} \tag{Lemma~\ref{lem:permeable-comm}} \\
    &\gexp \resb {x, r}  ( \iproc p {x'', r''} {(\link x {x''} \mid  \link {r''} r )}  \mid \proc ) \tag{transitivity of wires} \\
    &= \diprfx p {x, r} \proc
  \end{align*}
  Note that the assumption of Lemma~\ref{lem:permeable-comm} (interaction of permeable prefixes) is fulfilled by the transitivity of wires.
Next we consider law~\ref{it:ax:glink-piI:subst-dout}:
  \begin{align*}
    &\res p ( \link p q \mid \oprfx p {x,r} \proc ) \\
    &\scong \resb {p, x, r}
      \begin{aligned}[t]
        (&\bin p {x', r'} . \oprfx q {x'', r''} (\link {x''} {x'} \mid \link {r'} {r''}) \\
        &\mid \bout p {x', r'}. (\link {x'} x \mid \link r {r'}) \mid \proc)
      \end{aligned} \\
    &\gexp \resb {x, r, x', r'}
      \begin{aligned}[t]
        (&\oprfx q {x'', r''} (\link {x''} {x'} \mid \link {r'} {r''}) \\
        &\mid \link {x'} x \mid \link r {r'} \mid \proc)
      \end{aligned} \tag{Lemma~\ref{lem:permeable-comm}} \\
    &\gexp \resb {x, r, x', r', x'', r''}
      \begin{aligned}[t]
        (&\bout q {x''', r'''} . (\link {x'''} {x''} \mid \link {r''} {r'''}) \\
        &\mid  \link {x''} {x'} \mid \link {x'} x \\
        &\mid \link{r'}{r''} \mid \link r {r'} \mid \proc)
     \end{aligned} \tag{def. of \( \oprfx q {x'', r''}\)} \\
    &\gexp \resb{x, r, x'', r''}
      \begin{aligned}[t]
        (&\bout q {x''', r'''} . (\link {x'''} {x''} \mid \link {r''} {r'''}) \\
        &\mid  \link {x''} {x} \mid \link{r}{r''} \mid \proc )
        \end{aligned} \tag{transitivity of wires} \\
    &\gexp \resb {x'', r''} (\bout q {x''', r'''}. (\link {x'''} {x''} \mid \link {r''} {r'''}) \mid \proc \sub {x'', r''}{x, r})
        \tag{assumption on \( \proc \)} \\
    &\equiv_\alpha \resb {x, r} (\bout q {x', r'}. (\link {x'} x \mid \link r {r'}) \mid \proc ) \\
    &= \oprfx q {x, r} \proc
  \end{align*}

  We conclude by checking  laws~\ref{it:ax:glink-piI:subst-rep} and~\ref{it:ax:glink-piI:subst-dout-var}.
  We have
  \begin{align*}
    \res y ( \link x y \mid \riproc y p \proc )
    &= \res y ( \riproc x {p'} {\oprfx y p \link {p'} p} \mid \riproc y p \proc ) \\
    &\sbisim \riproc x {p'} {\res y (\oprfx y p \link {p'} p \mid \riproc y p \proc )}  \tag{replication theorem} \\
    &\gexp  \riproc x {p'} {\res p (\link p {p'} \mid \proc )} \tag{Lemma~\ref{lem:permeable-comm} and garbage collection on \( y \)} \\
    &\gexp \riproc x {p'} {\proc \sub p {p'}} \tag{assumption on \( \proc \)} \\
    &\equiv_\alpha \riproc x p \proc
  \end{align*}
  and
  \begin{align*}
    &\res x ( \link x y \mid \oprfx x p \proc ) \\
    &\scong \resb {x, p} ( \riproc x {p'} {\oprfx y {p''}  \link {p'} {p''}} \mid \bout x {p'} . \link p {p'} \mid \proc ) \\
    &\gexp \resb {p, p'} ( \oprfx y {p''} \link {p'} {p''} \mid \link p {p'} \mid \proc ) \tag{interaction at \( x \) and garbage collection on} \\
    &\gexp \resb {p, p', p''} ( \bout y {p'''}. \link {p''} {p'''} \mid \link{p'}{p''} \mid \link p {p'} \mid \proc ) \tag{def.~of \( \oprfx y {p''} \)} \\
    &\gexp \resb {p, p''} ( \bout y {p'''}. \link {p''} {p'''} \mid \link{p}{p''} \mid \proc ) \tag{transitivity of wires}  \\
    &\gexp \res {p''} ( \bout y {p'''} . \link {p''} {p'''} \mid \proc \sub {p''} p ) \tag{assumption on \( \proc \)}\\
    &\equiv_\alpha \res p ( \bout y {p'} . \link p {p'} \mid \proc ) \\
    &= \oprfx y p \proc. \tag*{\qedhere}
  \end{align*}
\end{proof}

The proof for the \OIwires{} is `symmetric' to that of the \IOwires{}.
\begin{lem}
\label{lem:linkO-satisfy-ax}
  The \OIwires \( \linkO q p \) and \( \vlinkOI x y \) satisfy the laws of Definition~\ref{ax:glink-piI}.
\end{lem}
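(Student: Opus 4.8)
The plan is to follow the same two-phase strategy used for the \IOwires{} in Lemma~\ref{lem:link-satifies-ax}, exploiting the duality between the two wire families recorded in Remark~\ref{rem:OI-wires-prefix}. Transitivity, i.e.~law~\ref{it:ax:glink-piI:trans}, has already been established in Lemma~\ref{lem:linkO-trans}, so that law can simply be cited. The purely syntactic conditions~\ref{it:ax:glink-piI:fn} (free names and their polarity), \ref{it:ax:glink-piI:no-tau} (no immediate $\tau$), and~\ref{it:ax:glink-piI:var-link-rep} (the variable wire is a replicated input) hold by inspection of the definitions of $\linkO q p$ and $\vlinkOI x y$; note in particular that $\vlinkOI x y$ has exactly the same outermost shape $\riproc x p (\oprfx y q \linkO q p)$ as the \IOwire{} variable wire, differing only in the inner location wire being used.

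The substance lies in the four substitution laws~\ref{it:ax:glink-piI:subst-din}--\ref{it:ax:glink-piI:subst-dout-var}, which I would discharge with the same algebraic toolkit as in the \IOwire{} case: desugaring the permeable prefixes into ordinary prefixes plus wires, pushing restrictions inward with the structural laws (Lemma~\ref{lem:permeable-scong}), performing the internal interaction at the shared restricted name through the communication law for permeable prefixes (Lemma~\ref{lem:permeable-comm}), whose side condition is met by transitivity, then collapsing the resulting chains of wires by transitivity (law~\ref{it:ax:glink-piI:trans}); and for the variable-wire laws~\ref{it:ax:glink-piI:subst-rep} and~\ref{it:ax:glink-piI:subst-dout-var} additionally using the replication theorems (Lemma~\ref{lem:rep-thm}) to distribute the private replicated wire and garbage-collect it. Since $\vlinkOI x y$ is a replicated input of the same form as in the \IOwire{} case, the calculations for laws~\ref{it:ax:glink-piI:subst-rep} and~\ref{it:ax:glink-piI:subst-dout-var} are essentially unchanged in shape (only the inner location wire, and hence the appeal to transitivity, is now the \OIwire{} version). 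By contrast, only the location-wire laws~\ref{it:ax:glink-piI:subst-din} and~\ref{it:ax:glink-piI:subst-dout} require genuine reworking, because in $\linkO q p$ the control flow is `output-before-input' and so the permeable component of the wire sits at the input end rather than at the output end.

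The main obstacle, and the place where the argument genuinely differs from a mechanical transcription of Lemma~\ref{lem:link-satifies-ax}, is the handling of the respectfulness premise on $\proc$. As anticipated by the duality discussion following Lemma~\ref{lem:ilink-olink-trans}, the two location-wire laws swap roles relative to the \IOwire{} case: I expect the premise to be consumed in the proof of one of laws~\ref{it:ax:glink-piI:subst-din}/\ref{it:ax:glink-piI:subst-dout} and to be unnecessary in the other, opposite to what happened for $\link p q$. Concretely, because the \OIwire{} begins with the bound output $\bout q {x, q_1}$, it is the interaction against a surrounding permeable \emph{input} that forces the reduction to drive the wire through an ordinary prefix, and it is there that the hypothesis $(\res{x,r})(\glink{x'}x \mid \glink r {r'} \mid \proc) \gexp \proc \sub{x',r'}{x,r}$ should become necessary to re-establish the expected derivative, whereas in the dual law the permeable input of the wire allows the interaction to go through without it. Verifying precisely which law needs the premise, and checking that it indeed suffices there while being superfluous in the dual law, is the only step I would not expect to be routine; everything else should follow the template of Lemma~\ref{lem:link-satifies-ax} line for line, modulo the systematic interchange of input and output prefixes in the location wires.
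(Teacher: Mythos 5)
Your proposal is correct and follows essentially the same route as the paper's proof: transitivity is cited from Lemma~\ref{lem:linkO-trans}, the syntactic conditions are checked by inspection, laws~\ref{it:ax:glink-piI:subst-rep} and~\ref{it:ax:glink-piI:subst-dout-var} are handled exactly as for the \IOwires{}, and the location-wire laws are discharged by the same desugar/communicate/collapse-by-transitivity computation, with the respectfulness premise consumed precisely where you predict~--- in law~\ref{it:ax:glink-piI:subst-din} (wire against a surrounding permeable input) and not in law~\ref{it:ax:glink-piI:subst-dout}, the reverse of the \IOwire{} situation. The only blemish is notational: the hypothesis you quote, \( (\res {x, r})(\glink {x'} x \mid \glink r {r'} \mid \proc) \gexp \proc \sub {x', r'} {x, r} \), is the premise of law~\ref{it:ax:glink-piI:subst-dout}, whereas the one actually needed at that point is the premise of law~\ref{it:ax:glink-piI:subst-din}, namely \( (\res {x, r})(\glink x {x'} \mid \glink {r'} r \mid \proc) \gexp \proc \sub {x', r'} {x, r} \).
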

\begin{proof}
  We already proved transitivity in Lemma~\ref{lem:linkO-trans}.
  The requirement~\ref{it:ax:glink-piI:fn}
  and~\ref{it:ax:glink-piI:var-link-rep} immediately 
follow from the definition.
Law~\ref{it:ax:glink-piI:subst-din} holds because
  \begin{align*}
    &\res q (\linkO q p \mid \diprfx q {x, r} \proc ) \\
    &= \res q (\linkO q p \mid (\res {x ,r}) ( \iproc q {x', r'} {(\vlinkOI x {x'} \mid  \linkO r {r'})} \mid \proc ) ) \\
    &\scong \resb {q, x, r}
      \begin{aligned}[t]
        (&\bout q {x', r'}.\diprfx p {x'', r''} (\vlinkOI {x'} {x''} \mid \linkO{r'} {r''})  \\
        &\mid \iproc q {x', r'} {(\vlinkOI x {x'} \mid  \linkO r {r'})} \mid \proc)
      \end{aligned} \\
    &\gexp \resb {x, x',  r, r'}
      \begin{aligned}[t]
        (&\diprfx p {x'', r''} (\vlinkOI {x'} {x''} \mid \linkO{r'} {r''})  \\
        &\mid \vlinkOI x {x'} \mid  \linkO r {r'} \mid \proc)
      \end{aligned} \tag{communication on a linear name} \\
    &\gexp
      \begin{aligned}[t]
        &\resb {x, x', x'', r, r', r''} \\
        &(\iproc p {x''', r'''} {(\vlinkOI {x''} {x'''} \mid \linkO{r''} {r'''})}  \\
        &\mid \vlinkOI x {x'} \mid \vlinkOI {x'}{x''} \mid \linkO r {r'} \mid \linkO {r'} {r''} \mid \proc)
      \end{aligned} \tag{def. of \( \diprfx p {x'', r''} \)} \\
    &\gexp \resb {x, x'',  r, r''}
      \begin{aligned}[t]
        (&\iproc p {x''', r'''} {(\vlinkOI {x''} {x'''} \mid \linkO{r''} {r'''})}  \\
        &\mid \vlinkOI x {x''}  \mid \linkO r {r''} \mid \proc)
      \end{aligned} \tag{transitivity of wires} \\
    &\gexp \resb {x'', r''}
      \begin{aligned}[t]
        (&\iproc p {x''', r'''}{(\vlinkOI {x''} {x'''} \mid \linkO{r''} {r'''})}  \\
        &\mid \proc \sub{x'', r''}{x, r} )
      \end{aligned} \tag{assumption on \( \proc \)} \\
    &\equiv_\alpha \resb {x, r} ( \iproc p {x', r'}{(\vlinkOI {x} {x'} \mid \linkO {r} {r'})}  \mid \proc ) \\
    &= \diprfx p {x, r} \proc
  \end{align*}

  The proof of law~\ref{it:ax:glink-piI:subst-dout} is similar to that of~\ref{it:ax:glink-piI:subst-din}, but does not use the respectfulness of \( \proc\) :
  \begin{align*}
    &\res p (\linkO q p \mid \oprfx p {x, r} \proc ) \\
    &\scong (\res {p, x, r})
      \begin{aligned}[t]
        (&\bout q {x'', r''}. \diprfx p {x', r'} (\vlinkOI {x''} {x'} \mid \linkO {r''} {r'}) \\
        &\mid \bout p {x', r'}.(\vlinkOI {x'} {x} \mid \linkO {r'} r ) \mid \proc)
      \end{aligned} \\
    &\sbisim (\res {x, r})
      \begin{aligned}[t]
        (\bout q {x'', r''}. \res p (&\diprfx p {x', r'} (\vlinkOI {x''} {x'} \mid \linkO {r''} {r'}) \\
        &\mid \bout p {x', r'}.(\vlinkOI {x'} {x} \mid \linkO {r'} r )) \mid \proc)
      \end{aligned}\\
    &\gexp (\res {x, r})
      (\bout q {x'', r''}. (\res {x', r'})
      (\vlinkOI {x''} {x'} \mid \vlinkOI {x'} {x} \mid \linkO {r''} {r'} \mid \linkO {r'} r) \mid \proc)
      \tag{Lemma~\ref{lem:permeable-comm}} \\
    &\gexp (\res {x, r}) ( \bout q {x'', r''}. (\vlinkOI {x''} {x} \mid \linkO {r''} {r} ) \mid \proc ) \tag{transitivity of wires} \\
    &= \oprfx q {x, r} \proc
  \end{align*}

  The proofs for laws~\ref{it:ax:glink-piI:subst-rep} and~\ref{it:ax:glink-piI:subst-dout-var} are the same as those for \IOwires.
Law~\ref{it:ax:glink-piI:subst-rep} holds because:
  \begin{align*}
    \res y (\vlinkOI x y \mid \riproc y p \proc)
    &= \res y (\riproc x {p'} {\oprfx y p \linkO p {p'}} \mid \riproc y p \proc) \\
    &\sbisim \riproc x {p'} {\res y (\oprfx y p \linkO p {p'} \mid \riproc y p \proc)} \tag{replication theorem} \\
    &\gexp \riproc x {p'} {\res p (\linkO p {p'} \mid  \proc)} \tag{Lemma~\ref{lem:permeable-comm} and garbage collection} \\
    &\gexp \riproc x {p'} {\proc \sub{p'} p}. \tag{assumption on \( \proc \)}
  \end{align*}
  Finally, law~\ref{it:ax:glink-piI:subst-dout-var} holds because:
  \begin{align*}
    &\res x (\vlinkOI x y \mid \oprfx x {p} \proc) \\
    &\scong  (\res {x, p}) (\riproc x {p'} {\oprfx y  {p''} \linkO{p''} {p'}} \mid \bout x {p'}. \linkO {p'} p  \mid \proc) \\
    &\gexp ({\res p, p'}) (\oprfx y  {p''}  \linkO{p''} {p'} \mid \linkO {p'} p  \mid \proc) \tag{reduction and garbage collection on \( x \)} \\
    &\gexp \resb {p, p', p''} (\bout y  {p'''}.\linkO{p'''}{p''} \mid  \linkO{p''} {p'} \mid \linkO {p'} p  \mid \proc) \tag{def. of \( \oprfx y {p''} \)} \\
    &\gexp \resb {p, p''} (\bout y  {p'''}.\linkO{p'''}{p''} \mid  \linkO{p''} p \mid \proc) \tag{transitivity} \\
    &\gexp \res {p''} (\bout y  {p'''}. \linkO{p'''} {p''} \mid \proc \sub {p''}{p}) \tag{assumption on \( \proc \)} \\
    &\equiv_\alpha  \res p (\bout y  {p'}. \linkO{p'} {p} \mid \proc) \\
    &= \oprfx y {p} \proc \tag*{\qedhere}
  \end{align*}
\end{proof}

We conclude the section by checking that also \Pwires{}  satisfy the desired properties.
\begin{lem}
  The \Pwires{} \( \ilink q p \) and \( \olink x y \) satisfy the laws of Definition~\ref{ax:glink-piI}.
\end{lem}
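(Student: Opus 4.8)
The plan is to verify each of the eight laws of Definition~\ref{ax:glink-piI} in turn, reusing the structure of the analogous proofs for the \IOwires{} (Lemma~\ref{lem:link-satifies-ax}) and \OIwires{} (Lemma~\ref{lem:linkO-satisfy-ax}). Laws~\ref{it:ax:glink-piI:fn}, \ref{it:ax:glink-piI:no-tau} and~\ref{it:ax:glink-piI:var-link-rep} are immediate from the syntactic definitions: in the de-sugared form of \( \ilink p q \), the name \( p \) occurs free only in input position and \( q \) free only in output position, the remaining subterms \( \ilink {p_1} {q_1} \) and \( \olink x y \) sit under the restrictions on \( p_1, q_1, x, y \) so that no free-name clash enables an immediate communication (hence no initial \( \tau \)), and \( \olink x y \) is by definition a replicated input at \( x \). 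Transitivity, Law~\ref{it:ax:glink-piI:trans}, is exactly Lemma~\ref{lem:ilink-olink-trans}, which has already been established.

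It then remains to check the four substitution laws~\ref{it:ax:glink-piI:subst-din}--\ref{it:ax:glink-piI:subst-dout-var}. For each I would argue by direct algebraic calculation, as in the \IOwires{}/\OIwires{} cases: de-sugar the permeable prefixes, commute restrictions with prefixes using the structural laws (Lemma~\ref{lem:permeable-scong}), fire the interaction at the restricted common name using either the permeable-prefix communication law (Lemma~\ref{lem:permeable-comm}) or the replication-theorem communication law (Lemma~\ref{lem:rep-thm}(\ref{it:lem:rep-thm:comm})), and then collapse the resulting chains of wires via transitivity (Lemma~\ref{lem:ilink-olink-trans}) and garbage collection (Lemma~\ref{lem:rep-thm}(\ref{it:lem:rep-thm:gc})). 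For the two laws about variable wires, \ref{it:ax:glink-piI:subst-rep} and~\ref{it:ax:glink-piI:subst-dout-var}, the calculation coincides with that for the \IOwires{} and \OIwires{}, since all three kinds of wire share the same shape of variable wire (a replicated input guarding a permeable output); these proofs therefore transfer essentially verbatim, distributing the private replicated wire over \( \riproc y p \proc \) with the replication theorems and finishing with transitivity and garbage collection.

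The genuinely new difficulty lies in Laws~\ref{it:ax:glink-piI:subst-din} and~\ref{it:ax:glink-piI:subst-dout}, which involve the location \Pwire{}. Unlike the \IOwires{} and \OIwires{}, whose location wires sequence one ordinary prefix before one permeable prefix, \( \ilink p q \) uses \emph{permeable} prefixes on both ends; consequently, after the interaction at the restricted name both the input-end and the output-end of the wire stay active simultaneously, and one must track the respectfulness hypotheses on \( \proc \) on both sides at once when appealing to Lemma~\ref{lem:permeable-comm}. I expect this to be the main obstacle: ensuring that the side conditions of the communication law hold despite the extra concurrency, i.e.\ that the wire fragments created by the interaction can still be absorbed into \( \proc \) using the supplied premise \( \resb{x,r}(\glink {x'} x \mid \glink r {r'} \mid \proc) \gexp \proc \sub {x',r'} {x,r} \) together with transitivity. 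Since the \Pwire{} is self-dual, I anticipate the proofs of~\ref{it:ax:glink-piI:subst-din} and~\ref{it:ax:glink-piI:subst-dout} to be mirror images of one another, with the respectfulness premise entering both proofs symmetrically rather than asymmetrically as in the \IOwires{}/\OIwires{} situation recorded in the Remark following Lemma~\ref{lem:ilink-olink-trans}.
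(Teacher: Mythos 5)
Your proposal follows essentially the same route as the paper's proof: the syntactic requirements and the shape of \( \olink x y \) are read off the definitions, transitivity is discharged by Lemma~\ref{lem:ilink-olink-trans}, the substitution laws are verified by de-sugaring, firing the communication at the restricted name, and collapsing the resulting wire chains via transitivity, garbage collection and the respectfulness assumption on \( \proc \), with the variable-wire laws carried over verbatim from the \IOwires{}/\OIwires{} cases. You also correctly anticipate the point the paper exploits, namely that by self-duality law~\ref{it:ax:glink-piI:subst-dout} is the mirror image of law~\ref{it:ax:glink-piI:subst-din}, with the premise on \( \proc \) entering both symmetrically.
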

\begin{proof}
Transitivity has already been proved in
Lemma~\ref{lem:ilink-olink-trans}, and the requirements on free names
and the shape of \( \glink x y \) follow
 from the definition.
  So we only check the remaining laws.

  We now consider law~\ref{it:ax:glink-piI:subst-din}.
  Although the definition of \( \ilink p q \) is  more complex than
  that for the other wires, the proof is similar; we can remove the wires using transitivity and the assumption on \( \proc \).
  \begin{align*}
    &\res q (\ilink p q \mid \diprfx q {x, r} \proc) \\
    &=
      \begin{aligned}[t]
        &(\res {x, y, q, r, s}) \\
        &(\iproc p {x', r'} {(\olink x {x'} \mid \ilink {r'} r)} \\
        &\mid \bout q {y', s'}.(\olink {y'} y \mid \ilink s {s'}) \\
        &\mid \olink y x \mid \ilink r s  \\
        &\mid (\res x, r)( \iproc q {x', r'} {(\olink x {x'} \mid  \ilink {r'} r)} \mid \proc ))
      \end{aligned} \\
    &\gexp (\res {x, y, y', r, s, s'})
      \begin{aligned}[t]
        (&\iproc p {x', r'} {(\olink x {x'} \mid \ilink {r'} r)} \\
        &\mid \olink {y'} y \mid \ilink s {s'} \mid \olink y x \mid \ilink r s  \\
        &\mid (\res x, r) (\olink x {y'} \mid  \ilink {s'} r \mid \proc ))
      \end{aligned} \tag{communication on \( q \)}\\
    &\gexp (\res {x, y', r, s'})
      \begin{aligned}[t]
        (&\iproc p {x', r'} {(\olink x {x'} \mid \ilink {r'} r)} \\
        &\mid \olink {y'} x \mid \ilink r {s'} \\
        &\mid (\res x, r) (\olink x {y'} \mid  \ilink {s'} r \mid \proc ))
      \end{aligned} \tag{transitivity of wires} \\
    &\equiv_\alpha (\res {x'', y', r'', s'})
      \begin{aligned}[t]
        (&\iproc p {x', r'} {(\olink {x''} {x'} \mid \ilink {r'} {r''})} \\
        &\mid \olink {y'} {x''} \mid \ilink {r''} {s'} \\
        &\mid (\res x, r) (\olink x {y'} \mid  \ilink {s'} r \mid \proc ))
      \end{aligned} \\
    &\gexp (\res {x,  x'', r, r''})
      \begin{aligned}[t]
        (&\iproc p {x', r'} {(\olink {x''} {x'} \mid \ilink {r'} {r''})} \\
        &\mid \olink x {x''} \mid  \ilink {r''} r \mid \proc)
    \end{aligned} \tag{transitivity of wires} \\
    &\gexp  (\res {x'', r''}) ( \iproc p {x', r'} {(\olink {x''} {x'} \mid \ilink {r'} {r''})} \mid  \proc \sub {x'', r''} {x, r}) \tag{assumption on \( \proc \)} \\
    &\equiv_\alpha (\res {x, r}) ( \iproc p {x', r'} {(\olink x {x'} \mid \ilink {r'} r)} \mid  \proc ) \\
    &= \diprfx p {x, r} \proc
  \end{align*}

  The proof for law~\ref{it:ax:glink-piI:subst-dout} is the dual of the previous case.

  Now we check  law~\ref{it:ax:glink-piI:subst-rep}.
  The reasoning is identical to the case for \IOwires{} and \OIwires{}.
  \begin{align*}
    \res y (\olink x y \mid \riproc y p {\proc})
    &= \res y ( \riproc x {p'} {\oprfx y {p} \ilink {p'} p} \mid \riproc y p \proc ) \\
    &\sbisim \riproc x {p'} {\res y (\oprfx y {p} \ilink {p'} p \mid \riproc y p \proc )} \tag{replication theorem} \\
    &\gexp \riproc x {p'} {\res p (\ilink {p'} p \mid \proc)} \tag{Lemma~\ref{lem:permeable-comm} and garbage collection} \\
    &\gexp \riproc x {p'} {\proc \sub {p'} p }  \tag{assumption on \( \proc \)} \\
    &\equiv_\alpha \riproc {x} {p} \proc
  \end{align*}

  The last thing to check is  law~\ref{it:ax:glink-piI:subst-dout-var}.
  Again, the reasoning is identical to the case for \IOwires{} and \OIwires{}.
  \begin{align*}
    &\res x (\olink x y \mid \oprfx x {p} \proc) \\
    &= \res x ( \riproc x p {\oprfx y {q} \ilink p q} \mid \res p ( \bout x {p'}. \ilink p {p'}  \mid \proc ) ) \\
    &\gexp (\res {p, p' q}) (\bout y {q'}. \ilink q {q'} \mid \ilink {p'} q \mid \ilink p {p'}  \mid \proc ) \tag{interaction at \( x \) and garbage collection} \\
    &\gexp  (\res {p, q}) (\bout y {q'}. \ilink q {q'} \mid \ilink p q \mid \proc ) \tag{transitivity} \\
    &\gexp  \res q (\bout y {q'}. \ilink q {q'} \mid \proc \sub q p ) \tag{assumption on \( \proc \)} \\
    &\equiv_\alpha \res p (\bout y {p'}. \ilink p {p'} \mid \proc ) \\
    &= \oprfx y {p} \proc \tag*{\qedhere}
  \end{align*}
\end{proof}

\section{Supplementary Materials for Section~\ref{sec:BT-LT}}
\label{sec:BT-LT-appx}
We present the proofs that were omitted from the main text.

\subsection{Proofs for the properties of encoding of unsolvable terms}
\label{sec:BT-LT-appx:unsolv}
In the main text, we have seen what kind of transition \( \oencoOI M p \) and \( \oencoI M p \) can do when \( M \) is an unsolvable term.
Notably, the only visible action these processes can do is an input at \( p \), and the behaviour of \( \QoencoOI \) differs from that of \( \QoencoIO \).
We give the proofs for these results.

\oencoOZeroTau*
\begin{proof}
If \( M \) is unsolvable of order \( 0 \) then it must be of the form \( \app {(\lambda x. M_0)} {\seq M \)}, with \( \seq M \) non-empty.
Therefore, by  definition of \( \QoencoOI \), if \( M \) is unsolvable of order \( 0 \), \( \oencoO M p \) cannot do an input at \( p \).
This implies that the only transition \( \oencoO M p \) can do is a \( \tau \)-transition.
\end{proof}

\absUnsolvnIn*
\begin{proof}
  By definition of the order of  unsolvables, we have \( M \Hred \lambda x. M' \) for some \( M' \) whose order is \( n - 1\).
  By  validity of \( \beta \)-reduction (Lemma~\ref{lem:gen-oenc-validates-beta}), we have \( \oencoO M p \gexp \oencoO {\lambda x. M'} p \).

  First, we show that \( M \) can do a weak input transition at \( p \).
  Since \( \oencoO {\lambda x. M'} p \trans {p(x, q)} \oencoO {M'} q \), we must have a matching transition  \( \oencoO M p \wtrans{p(x, q)} \proc \) for some \( \proc \).

  We remain to show that for every \( \oencoO M p \wtrans{p(x, q)} \proc \) there is a suitable \( N \) with \( \proc \gexp \oencoO {N} q \).
  Assume that \( \oencoO M p (\trans{\tau})^n \proctwo \trans{p(x, q)} \proc \).
  Then we have \( \oencoO{\lambda x. M'} p (\trans{\hat \tau})^n \proctwo' \) such that \( \proctwo \gexp \proctwo' \).
  By Lemma~\ref{lem:gen-oenc-tau-has-mathcing-red}, we have \( \proctwo' \gexp  \oencoO {\lambda x. M''} p \) for a \( \lambda \)-term \( \lambda x. M'' \) such that \( \lambda x. M' \wred \lambda x. M'' \).
  Since \( M' \) is unsolvable of order \( n - 1\), so is \( M'' \).
  We also have \( \proc \gexp \oenco {M''} q \), because \( \proctwo
  \gexp \oencoO {\lambda x. M''} p \) and \( \oencoO {\lambda x. M''}
  p \trans{p(x, q)} \oencoO {M''} q \) is the only input transition
  that  \( \oencoO {\lambda x. M''} p \) can do.
\end{proof}

\oencoIUnsolvInput*
\begin{proof}
To prove~\ref{it:lem:oencoI-unsolv-input:zero}, first observe that \( M \), an unsolvable term of order \( 0 \), must be of the form \( \app {(\lambda x . M_0)} {M_1 \cdots M_n} \) for \( n \ge 0 \).
Hence, we have
\begin{align*}
  &\oencoI {M} p \\
  &= \res {p_0}
  \begin{aligned}[t]
  (&\diprfx {p_0}{x, r} \oencoI {M_0} r \mid \oprfx {p_0}{x_1,  p_1} \cdots \oprfx {p_{n - 1}}{x_n, p_n} \\
  & ( \riproc {x_1} {r_1} {\oencoI {M_1} {r_1}} \mid \cdots \mid \riproc {x_n} {r_n} {\oencoI {M_n} {r_n}} \mid \link p {p_n}))
  \end{aligned} \\
  &\trans{p(x, q)}\equiv \res {p_0}
    \begin{aligned}[t]
      &(\diprfx {p_0}{x, r} \oencoI {M_0} r  \\
      & \mid \oprfx {p_0}{x_1,  p_1} \cdots \oprfx {p_{n - 1}}{x_n, p_n} \oprfx {p_n}{x_{n + 1}, {p_{n + 1}}} \\
      & (\riproc {x_1} {r_1} {\oencoI {M_1} {r_1}} \mid \cdots \mid \riproc {x_n} {r_n} {\oencoI {M_n} {r_n}} \mid \link {x_{n + 1}} {x} \mid \link q {p_{n + 1}} ))
    \end{aligned} \\
  &= \res {p_0}
    \begin{aligned}[t]
    &( \diprfx {p_0}{x, r} \oencoI {M_0} r \mid \oprfx {p_0}{x_1,  p_1} \cdots \oprfx {p_{n - 1}}{x_n, p_n} \oprfx {p_n}{x_{n + 1}, {p_{n + 1}}} \\
    &(( \riproc {x_1} {r_1} {\oencoI {M_1} {r_1}} \mid \mkern-1.5mu \cdots \mkern-1.5mu \mid \riproc {x_n} {r_n} {\oencoI {M_n} {r_n}} \mid \riproc {x_{n + 1}} {r_{n + 1}} {\oencoI x {r_{n + 1}}} \mid \link q {p_{n + 1}} ))
\end{aligned} \tag{since \( \link x y = \riproc x p {\oencoI{y} p}\)} \\
&= \oencoI {\app {(\lambda x. M_0)} {M_1 \cdots M_n \; x}} p
\end{align*}
as desired.

We prove~\eqref{it:lem:oencoI-unsolv-input:any-ord} by a case analysis on the order of \( M \).
If \( M \) is an unsolvable of order \( 0 \), the claim follows from~\eqref{it:lem:oencoI-unsolv-input:zero} because \( \app M x \) is also an unsolvable of order \( 0 \).

Now assume that \( M \) is an unsolvable of order \( n > 0 \) (\( n \) may be \( \omega \)).
The fact that \( \oencoI M p \) can do an input at \( p \) follows by the definition of \( \QoencoIO \).
We remain to prove the latter claim.
Observe that we have \( M \Hred \lambda x . M' \), and \( M' \) must be an unsolvable.
By Lemma~\ref{lem:gen-oenc-validates-beta}, we have \( \oencoI M p \gexp \oencoI {\lambda x.M'} p \).
Thus, if \( \oencoI M p \trans{p(x, q)} \proc \), we have a matching transition \( \oencoI {\lambda x. M'} p \trans{p(x, q)} \proctwo \) such that \( \proc \gexp \proctwo \).
By  definition of \( \QoencoIO \) and of permeable inputs, and by Lemma~\ref{lem:glink-subst-oenc}, we have \( \proctwo \gexp \oencoI {M'} q \).
Hence, \( \proc \gexp \oencoI {M'} q  \) as desired.
\end{proof}

\subsection{Proof for the inverse context lemma}
\label{sec:BT-LT-full-abst-main}
\inverseContext*
\begin{proof}
   \mylabel{Abstraction context}
   For our encoding, the abstraction context is defined by
   \[
     C^x_{\lambda} \defeq \bind p  {\resb{x,q} ( \inp p {x',q'}.  (\glink {q'} q | \glink x x' ) \mid \appI \hole q )}.
   \]
   We define the inverse context by
   \[
     D \defeq \bout a {b, x}. \iproc b r {\res p  (\appI \hole p \mid  \bout p {x', q'}.(\glink {x'} {x} \mid \glink r  {q'}) )}.
   \]

   Then
   \begin{align*}
     &D[C[F]] \\
     &=
       \begin{aligned}[t]
         &\bout a {b, x}. \iproc b r {\res p  (\resb{x,q} ( \inp p {x',q'}.  (\glink {q'} q \mid \glink x {x'} ) \\
         &\mid \appI F q )  \mid  \bout p {x', q'}.(\glink {x'} x \mid \glink r {q'}) )}
       \end{aligned} \\
     &\equiv_\alpha
       \begin{aligned}[t]
         &\bout a {b, x}. \bin b r. \res p  (\resb{z,q} ( \inp p {x',q'}.  (\glink {q'} q \mid \glink z {x'}  ) \\
         &\mid (\appI F q) \sub z x )  \mid  \bout p {x', q'}.(\glink {x'} x \mid \glink r q) )
       \end{aligned} \tag{\( z \) fresh} \\
     &\gexp
       \begin{aligned}[t]
         &\bout a {b, x}. \bin b r . \resb {z, x', q, q'}  (\glink q {q'} \mid \glink z {x'}  \\
         &\mid (\appI F q) \sub z  x   \mid  \glink {x'} {x}  \mid \glink r q)
         \end{aligned}\tag{communication on \( p \)} \\
     &\gexp \bout a {b, x}. \iproc b r {(\res {z,  q'})  (\glink r {q'} \mid \glink z x  \mid (\appI F {q'})  \sub z x )}  \tag{transitivity of wires} \\
     &\gexp \bout a {b, x}. \iproc b r {\appI F r} \tag{Lemma~\ref{lem:glink-subst-oenc} and \( F = \oenco M {} \)}
   \end{align*}
   as desired.

   \mylabel{Variable context}
   For a variable context obtained by translating \( \app x {\hole_1 \cdots \hole_n} \) the inverse context \( D \) for the \( i \)-th hole can be defined by
  \[
     \begin{aligned}[t]
       &\bout a {x', b}. b(r). (\res {x, p}) (\appI \hole p \\
       &\mid x(p_0).p_0 (x_1, p_1). \ldots p_{n - 1}(x_n,  p_n).(\glink {x'} x \mid \bout {x_i}{r'}. \glink r {r'}))
     \end{aligned}.
   \]
   Now let us consider the process \( D[C[\seq F]] \).
   Since the communication on \( p_i \) is a communication on linear names, we can safely execute theses communications.
   (Note that since permeable input prefixes are encoded using wires, there will be unguarded wires after the reductions)
   With this in mind, we get
   \begin{align*}
     &D[C[\seq F]] \\
     &\gexp
       \begin{aligned}[t]
         &\bout a {x', b}. b(r).(\res {x, x_1, \ldots, x_n, x'_1, \ldots, x'_n, p_n}) \\
         (&\riproc {x_1} {r_1} {\appI {F_1} {r_1}} \mid \cdots \mid \riproc {x_n} {r_n}{\appI {F_n} {r_n}} \mid \\
         &\glink {x'_1} {x_1} \mid \cdots \mid \glink {x'_n}{x_n} \mid \\
         &\glink p {p_n} \mid \glink {x'} x \mid \bout {x_i'}{r'}. \glink r {r'})
         \end{aligned} \\
     &\gexp  \begin{aligned}[t]
       &\bout a {x', b}. b(r).(\res {x, x'_1, \ldots, x'_n, p_n}) \\
       (&\riproc {x'_1} {r_1} {\appI {F_1} {r_1}} \mid \cdots \mid \riproc {x'_n} {r_n}{\appI {F_n} {r_n}} \mid   \\
       &\glink p {p_n} \mid \glink {x'} x \mid \bout {x_i'}{r'}. \glink r {r'} )
      \end{aligned} \tag{\ref{it:ax:glink-piI:subst-rep} of Definition~\ref{ax:glink-piI} and \( F_j =  \oenco   {M_j} {} \)}\\
     &\sbisim  \bout a {x', b}. b(r).(\res {x, x'_i})
       \begin{aligned}[t]
         (&\riproc {x'_i} {r_i} {\appI {F_i} {r_i}} \mid \glink {x'} x \\
         &\mid \bout {x'_i}{r'}. \glink r {r'})
       \end{aligned} \tag{garbage collection} \\
     &\gexp  \bout a {x', b}. b(r).(\res {x, r'})\left(\appI {F_i} {r'} \mid \glink {x'} x \mid \glink r {r'} \right) \tag{communication on \( x'_i \) and garbage collection} \\
     &\gexp   \bout a {x', b}. b(r).(\appI {F_i} {r})\sub {x'} {x} \tag{Lemma~\ref{lem:glink-subst-oenc} and \( F_i = \oenco {M_i} {}\)} \\
     &\equiv_{\alpha}   \bout a {x, b}. b(r).\appI {F_i} {r} \tag*{\qedhere}
   \end{align*}
 \end{proof}

\end{document}